\definecolor{db}{RGB}{66, 126, 147}
\newtheorem{theorem}{Theorem}
\newtheorem{proposition}{Proposition}
\newtheorem{corollary}{Corollary}
\newtheorem{lemma}{Lemma}
\newtheorem{definition}{Definition}
\newtheorem{problem}{Problem}
\newtheorem*{problem*}{Problem}
\newtheorem*{question*}{Question}
\newtheorem{example}{Example}
\newcommand{\xmark}{\ding{55}}
\newcommand{\zz}{\mathbb Z}
\newcommand{\R}{\mathbb R}
\newcommand{\C}{\mathbb C}
\newcommand{\tr}{\mathrm{tr}}
\newcommand{\one}{\mathbbm{1}}
\newcommand{\One}{\mathbb{I}}
\newcommand{\Sym}{\mathrm{Sym}}
\newcommand{\Val}{\mathrm{Val}}
\newcommand\mc[1]{\mathcal{#1}}
\newcommand{\Ad}{\mathrm{Ad}}
\newcommand{\spec}{\mathrm{sp}}
\newcommand{\im}{{\mathrm{im}}}
\newcommand{\sa}{{\mathrm{sa}}}
\newcommand{\supp}{{\mathrm{supp}}}
\newcommand{\fl}{\mathfrak{l}}
\newcommand{\cG}{\mc{G}}
\newcommand{\cI}{\mc{I}}
\newcommand{\cM}{\mc{M}}
\newcommand{\cP}{\mc{P}}
\newcommand{\cS}{\mc{S}}
\newcommand{\cX}{\mc{X}}
\newcommand{\rC}{\mathrm{C}}
\newcommand{\trC}{{\widetilde{\rC}}}
\newcommand{\rw}{\mathrm{w}}
\newcommand{\GO}{{G(\cO)}}
\newcommand{\cA}{{\mc{A}}}
\newcommand{\cAsa}{{\cA_\mathrm{sa}}}
\newcommand{\cB}{{\mc{B}}}
\newcommand{\BC}{{\cB(C)}}
\newcommand{\BO}{{\cB(\cO)}}
\newcommand{\cC}{{\mc{C}}}
\newcommand{\tC}{{\widetilde{C}}}
\newcommand{\ttC}{{\widetilde{\widetilde{C}}}}
\newcommand{\tG}{{\widetilde{G}}}
\newcommand{\cO}{{\mc{O}}}
\newcommand{\rcO}{{\cO^\circ}}
\newcommand{\cOdown}{{\cO_\cap}}
\newcommand{\cOup}{{\cO_\cup}}
\newcommand{\tO}{{\widetilde{O}}}
\newcommand{\tcO}{{\widetilde{\cO}}}
\newcommand{\cOr}{{\cO_\mathrm{cg}}}
\newcommand{\CO}{{\cC(\cO)}}
\newcommand{\sCO}{{\cC(\cO^*)}}
\newcommand{\tCO}{{\cC(\tcO)}}
\newcommand{\rCO}{{\cC(\cO^\circ)}}
\newcommand{\COdown}{{\cC_\cap(\cO)}}
\newcommand{\COup}{{\cC_\cup(\cO)}}
\newcommand{\COm}{{\cC_\mathrm{max}(\cO)}}
\newcommand{\COmin}{{\cC_\mathrm{min}(\cO^\circ)}}
\newcommand{\rCOm}{{\cC_\mathrm{max}(\cO^\circ)}}
\newcommand{\sCOm}{{\cC_\mathrm{max}(\cO^*)}}
\newcommand{\CH}{{\cC(\cH)}}
\newcommand{\CHm}{{\cC_\mathrm{max}(\cH)}}
\newcommand{\mC}{{\mathrm{C}}}
\newcommand{\tp}{{\widetilde{p}}}
\newcommand{\tv}{{\widetilde{v}}}
\newcommand{\tI}{{\widetilde{I}}}
\newcommand{\trho}{{\widetilde{\rho}}}
\newcommand{\tH}{{\widetilde{\cH}}}
\newcommand{\tSH}{{\cS(\tH)}}
\newcommand{\tg}{{\widetilde{\gamma}}}
\newcommand{\oG}{{\overline{G(\pi)}}}
\newcommand{\CG}{\overline{G(d)}}
\newcommand{\ov}{{\overline{v}}}
\newcommand{\op}{{\overline{p}}}
\newcommand{\cH}{\mc{H}}
\newcommand{\cN}{\mc{N}}
\newcommand{\cNsa}{\mc{N}_\sa}
\newcommand{\LH}{\mc{L}(\cH)}
\newcommand{\LHsa}{\LH_\sa}
\newcommand{\BH}{\mc{B}(\cH)}
\newcommand{\UH}{{\mc{U}(\cH)}}
\newcommand{\SH}{\mc{S}(\cH)}
\newcommand{\PH}{{\mc{P}(\cH)}}
\newcommand{\PHone}{{\mc{P}_1(\cH)}}
\newcommand{\PO}{{\mc{P}(\cO)}}
\newcommand{\POone}{{\mc{P}_1(\cO)}}
\newcommand{\POm}{{\mc{P}_\mathrm{min}(\cO)}}
\newcommand{\PC}{{\mc{P}(C)}}
\newcommand{\VO}{{\mc{V}(\cO)}}
\newcommand{\cL}{\mc{L}}
\newcommand\tU{{\tilde{U}}}
\newcommand\tpi{{\widetilde{\pi}}}
\newcommand\tV{{\widetilde{V}}}
\newcommand\tE{{\widetilde{E}}}
\newcommand\Stab{{\mathrm{Stab}}}
\newcommand\QStab{{\mathrm{QStab}}}
\newcommand\Th{{\mathrm{Th}}}
\newcommand\id{{\mathrm{id}}}
\newcommand\ra{\rightarrow}
\newcommand\Llra{\Longleftrightarrow}
\newcommand\hra{\hookrightarrow}
\begin{document}
\title{Coming full circle}
\subtitle{\Large{A unified framework for Kochen-Specker contextuality}}
\author{Markus Frembs\\\small{Institut f\"ur Theoretische Physik, Leibniz Universit\"at Hannover,}\\[-0.2cm]\small{Appelstraße 2, 30167 Hannover, Germany}\\\small{Okinawa Institute of Science and Technology Graduate University,}\\[-0.2cm]\small{Onna, Okinawa 904 0495, Japan}\\
\small{Centre for Quantum Dynamics, Griffith University,}\\[-0.2cm]\small{Yugambeh Country, Gold Coast, QLD 4222, Australia}}
\date{}
\maketitle

\vspace{-1.2cm}

\begin{abstract}
%ISLYFT
    Contextuality is a key distinguishing feature between classical and quantum physics. It expresses a fundamental obstruction to describing quantum theory using classical concepts. In turn, when understood as a resource for quantum computation, it is expected to hold the key to quantum advantage. Yet, despite its long recognised importance in quantum foundations and, more recently, in quantum computation, the mathematics of contextuality has remained somewhat elusive - different frameworks address different aspects of the phenomenon, yet their precise relationship often is unclear. In fact, there is a glaring discrepancy already between the original notion of contextuality introduced by Kochen and Specker on the one side [J. Math. Mech., 17, 59, (1967)], and the modern approach of studying contextual correlations on the other [Rev. Mod. Phys., 94, 045007 (2022)].

    In a companion paper [arXiv:2408.16764], we introduce the conceptually new tool called ``context connections'', which allows to cast and analyse Kochen-Specker (KS) contextuality in new form. Here, we generalise this notion, and based on it prove a complete characterisation of KS contextuality for finite-dimensional systems. To this end, we develop the framework of ``observable algebras". We show in detail how this framework subsumes the marginal and graph-theoretic approaches to contextuality, and thus that it offers a unified perspective on KS contextuality. In particular, we establish the precise relationships between the various notions of ``contextuality" used in the respective settings, and in doing so, generalise a number of results on the characterisation of the respective notions in the literature.
    
    Given the central role contextuality plays in quantum foundations and quantum computation, the unified framework presented here will be widely applicable. In particular, we expect the new tools provided by it to play a key role in quantifying contextuality as a resource, which remains an important open problem towards the efficient use of quantum computers.
\end{abstract}

\newpage

\tableofcontents

\section{Introduction}\label{sec: motivation}

Planck's derivation of the spectrum of black body radiation, Einstein's explanation of the photo-electric effect and Bohr's phenomenological model for the hydrogen atom all employ the idea of quantised units of energy. The respective phenomena seemed otherwise hard to reconcile with classical physics, and soon ushered in the development of quantum mechanics, whose mathematical form was set in stone by von Neumann \cite{vonNeumann1932}. Yet, while those and many more theoretical breakthroughs - paired with overwhelming experimental confirmation of quantum-mechanical predictions - provide very strong evidence that nature is quantum at its core, they do not rule out the existence of a similarly ingenious description within the realms of classical physics: indeed, a description in terms of classical ``hidden variables" is always possible in principle \cite{Bohm1952,BeltramettiBugajski1995}.

Still, a decisive blow to the hidden-variable programme \cite{EPR1935} was dealt by Bell's theorem \cite{Bell1964}, and the much more recent loophole-free verification of stronger than classical correlations in Bell experiments \cite{CHSH1969,AspectEtAl1981,ZeilingerEtAl2015,ShalmEtAl2015}, that is, of correlations described in terms of ``classical states" - in the form of probability measures - on a ``classical state space" - in the form of a measurable space, under the additional assumption of relativistic causality \cite{WisemanCavalcanti2017}. The study of Bell inequality violations \cite{BrunnerEtAl2014} and the theory of quantum entanglement \cite{Horodeckisz2009} form the cornerstone of quantum information science, and lie behind the promise of a (second) technological revolution based on quantum effects. Indeed, entanglement is at the heart of quantum cryptography \cite{Ekert1991,BennettEtAl1993,Renner2008,BennettBrassard2014}, quantum information processing \cite{Dieks1982,WoottersZurek1982,BennettWiesner1992,BennettEtAl1993,Barrett2007} and has been shown to be a necessary resource for quantum computation \cite{Deutsch1985,JozaaLinden2003,RaussendorfBriegel2001,BriegelRaussendorf2001,RaussendorfBrowneBriegel2003}.

An even stronger candidate for the resource underlying quantum computers is \emph{contextuality} \cite{Raussendorf2013,HowardEtAl2014,FrembsRobertsBartlett2018,FrembsRobertsCampbellBartlett2023}, a concept closely related to Bell nonlocality. Indeed, a common slogan has that contextuality is a generalisation of Bell nonlocality. This statement refers to contextuality as understood in the ``marginal approach" \cite{BudroniEtAl2022}, which - similar to and motivated by Bell inequalities - studies noncontextuality inequalities \cite{KlyachkoEtAl2008,Cabello2008b,BadziagBengtssonCabelloPitowsky2009,YuOh2012,KleinmannEtAl2012,AraujoEtAl2013,CabelloSeveriniWinter2014,XuCabello2019}, and seeks to show their experimental violation by quantum theory \cite{BartosikEtAl2009,KirchmairEtAl2009,AmselemEtAl2009,GuehneEtAl2010,MoussaEtAl2010,LapkiewiczEtAl2011,Winter2014}. Yet, this approach is only partially related with the Kochen-Specker theorem \cite{YuOh2012,YuTong2014,YuGuoTong2015,Frembs2024a}; to make matters worse, there exists by now a patchwork of different approaches, each studying its own notion of ``contextuality" \cite{IshamButterfieldI,IshamButterfieldII,IshamButterfieldIII,IshamButterfieldIV,Spekkens2005,IshamDoeringI,IshamDoeringII,IshamDoeringIII,IshamDoeringIV,HLS2009,HLS2010,AbramskyBrandenburger2011,AbramskyEtAl2015,AbramskyEtAl2017,ChavesFritz2012,FritzChaves2013,CabelloSeveriniWinter2014,AcinFritzLeverrierSainz2015,BartlettRaussendorf2016,DzhafarovKujalaCervantes2020,OkayTyhurstRaussendorf2018,OkayRoberts2017}, whose precise relationship often remains unclear. This not only makes cross-talk difficult, but also hinders progress towards an in-depth, quantitative analysis of contextuality as a resource in quantum computation. In this work, we (begin to) set this record straight. 

To this end, it is instructive to compare the original notion of Kochen-Specker (KS) noncontextuality with Bell locality. Both concepts represent minimal, operationally motivated assumptions, under which a separation between quantum and classical physics can be obtained. Yet, while Bell's theorem shows that certain (entangled) quantum \emph{states} are incompatible with a classical state space description, the Kochen-Specker theorem \cite{KochenSpecker1967} is concerned with the problem whether quantum \emph{observables} admit a classical representation (regardless of whether composite or not). Both results demonstrate that quantum theory does not admit an interpretation in terms of classical (local or noncontextual) hidden variables, yet the respective conclusions differ: Bell's theorem shows that the state space of quantum theory is not (fully) classical, while the Kochen-Specker theorem asserts that there can be no \emph{classical embedding} $\epsilon:\cNsa\ra L_\infty(\Lambda)$ of quantum observables into the space of measurable functions on some measurable space $\Lambda$ (the ``state" or ``phase space" of a physical theory), such that functional relations between compatible observables are preserved (see Eq.~(\ref{eq: KSNC})). As we will see in detail in Sec.~\ref{sec: algebraic vs marginal approach}, for quantum (sub)systems the latter is a strictly stronger result.

Proving the Kochen-Specker theorem therefore is to show that no such embedding exists. The original proof of the Kochen-Specker theorem considers triples of squared spin-$1$ observables $S^2_x,S^2_y,S^2_z$ in three orthogonal directions in $\R^3$, which commute $[S^2_x,S^2_y] =[S^2_y,S^2_z] = [S^2_z,S^2_x] = 0$, share the same spectrum $\spec(S^2_x) = \spec(S^2_y) = \spec(S^2_z) = \{0,1\}$ and satisfy the algebraic constraint $S^2_x + S^2_y + S^2_z = 2\mathbbm{1}$, where $\mathbbm{1}$ denotes the identity in the space of linear operators $\cL(\C^3)$. Under the assumption that a classical embedding for spin-$1$ observables exists, these properties imply that for any three orthogonal directions in space, one spin component must attain spin value 0, while the other two attain spin value 1. Such an assignment is also known as a \emph{Kochen-Specker (KS)-colouring} (see Sec.~\ref{sec: KS-colouring vs colouring}). Ref.~\cite{KochenSpecker1967} constructs a finite set of spin-$1$ observables (defined via their spectral projections, represented by 117 rays in $\R^3$) that does not admit a KS-colouring. Subsequent proofs have reduced this number: the current record in three dimensions stands at 31 rays (and corresponding spin-$1$ observables) \cite{ConwayKochen2006}, and for more than three dimensions at 18 rays \cite{CabelloEstebanzGarcia-Alcaine1997}; the latter is known to be optimal, the former is not \cite{Zhen-PengChenGuehne2020}.

The choice of spin-$1$ observables allows to reduce the problem of finding (an obstruction to the existence of) a classical embedding to a KS-colouring. However, the existence of a KS-colouring is only \emph{necessary but not sufficient} for the existence of a classical embedding \cite{Frembs2024a}. This discrepancy between KS-colourings and proofs of the Kochen-Specker theorem has been thrown into sharp relief by the search for experimental tests of contextuality within the marginal approach. More precisely, it was found that there exist noncontextuality inequalities which are violated by quantum theory, yet which do admit a KS-colouring \cite{YuOh2012}. This observation has been met with some surprise, which we take to express the fact that the precise relationship between the constraints on classical embeddings and the restriction to KS-colourings, as well as other similar notions of ``contextuality" in the literature remain poorly understood \cite{BudroniEtAl2022}. In this work, we provide a careful analysis of and resolve various of these differences.\\

\textbf{Contributions.} In Ref.~\cite{Frembs2024a}, we obtain a reformulation of Kochen-Specker contextuality (Eq.~(\ref{eq: KSNC})), based on the notion of ``context connections'' (see Def.~\ref{def: context connection}), and constraints encoded by these along so-called ``context cycles" (see Def.~\ref{def: context cycle}) on the partial order of contexts corresponding to an observable algebra (see Sec.~\ref{sec: OAs}). There, we focus on its conceptual motivation and, for comparison with Ref.~\cite{KochenSpecker1967}, discuss the case of spin-$1$ systems only. Here, we generalise the formalism introduced in Ref.~\cite{Frembs2024a} and show that it applies universally, in particular, to any (subsystem of a) finite-dimensional quantum system. In short, our first main contribution in Sec.~\ref{sec: algebraic KSNC}, is to prove a
\begin{itemize}
    \item[\textbf{(i)}] \textbf{new, complete characterisation of Kochen-Specker (KS) contextuality.}
\end{itemize}
To this end, we introduce the framework of ``observable algebras", which is motivated both by the partial algebras in Kochen and Specker's original work \cite{KochenSpecker1967,Kochen2015,AbramskyBarbosa2020}, as well as the more recent topos-theoretic formalisation in Ref.~\cite{IshamButterfieldI,IshamDoeringI,IshamDoeringII,IshamDoeringIII,IshamDoeringIV,HLS2009,HLS2010,DoeringFrembs2019a}. This generality enables us to compare KS contextuality with various notions of contextuality in the marginal approach (Sec.~\ref{sec: algebraic vs marginal approach}), and the closely related (hyper)graph-theoretic approaches \cite{ChavesFritz2012,FritzChaves2013,CabelloSeveriniWinter2014,AcinFritzLeverrierSainz2015,AmaralCunha2018} (Sec.~\ref{sec: algebraic vs graph-theoretic approach}); our second key contribution then consists in establishing
\begin{itemize}
    \item[\textbf{(ii)}] \textbf{observable algebras as a unified framework for KS contextuality.}
\end{itemize}

In particular, we show in detail how existing frameworks embed within that of observable algebras and we precisely differentiate the respective notions of contextuality used in those. A summary of this comparison is given in Tab.~\ref{tab: summary}. This part both complements and substantially extends the recent review on the subject \cite{BudroniEtAl2022}.

Finally, by combining our characterisation of KS contextuality with the comparison between different notions of contextuality, in particular, with the notion of state-independent contextuality (SI-C) in the graph-theoretic approach, we identify a
\begin{itemize}
    \item[\textbf{(iii)}] \textbf{necessary criterion for an orthogonality graph to be state-independently contextual (see Def.~\ref{def: SI-C set} and Def.~\ref{def: SI-C graph}), in terms of its chromatic number.}
\end{itemize}
This complements existing partial results in Ref.~\cite{Cabello2012,CabelloKleinmannBudroni2015} and resolves an open conjecture in Ref.~\cite{RamanathanHorodecki2014}. The main results of this paper are summarised in the following list.

\vspace{-0.1cm}

\begin{tikzpicture}[remember picture, overlay]
    \node (rightenum) at (.9\textwidth,0) {};
    \draw [decorate, decoration={brace}, thick] ($({pic cs:top1} -| rightenum) + (0, 1em)$) -- ({pic cs:bot1} -| rightenum) node [midway, right=0.2cm] {\textbf{Sec.}~\ref{sec: algebraic KSNC}};
    \draw [decorate, decoration={brace},thick] ($({pic cs:top2} -| rightenum) + (0, 1em)$) -- ({pic cs:bot2} -| rightenum) node [midway, right=0.2cm] {\textbf{Sec.}~\ref{sec: algebraic vs marginal approach}};
    \draw [decorate, decoration={brace},thick] ($({pic cs:top3} -| rightenum) + (0, 1em)$) -- ({pic cs:bot3} -| rightenum) node [midway, right=0.2cm] {\textbf{Sec.}~\ref{sec: algebraic vs graph-theoretic approach}};
\end{tikzpicture}
\begin{itemize}[leftmargin=0.75cm]
    \item \textbf{Thm.~\ref{thm: CNC for OAs} - new, complete characterisation of KS contextuality}\tikzmark{top1}\\[0.1cm]
    $\ra$ generalisation of Thm.~2 in Ref.~\cite{Frembs2024a}
    \item \textbf{Thm.~\ref{thm: KSNC = n-colourability}, Thm.~\ref{thm: n-colourability for non-maximal OAs} - KS contextuality as a colouring problem}\\[0.1cm]
    $\ra$ distinction from KS-colourings of KS sets\tikzmark{bot1}
    \vspace{.3cm}
    \item \textbf{Thm.~\ref{thm: KSNC = marginal KSNC} - KS contextuality in marginal approach}\tikzmark{top2}\\[0.1cm]
    $\ra$ KS noncontextuality in Meyer-Clifton-Kent models \cite{Meyer1999,Kent1999,CliftonKent2000}
    \item \textbf{Thm.~\ref{thm: KSNC of truncated OAs}, Cor.~\ref{cor: acyclic = KS} - acyclicity implies KS noncontextuality}\\[0.1cm]
    $\ra$ comparison with Vorob'ev theorem \cite{Vorob1962} (Thm.~\ref{thm: Vorob'ev theorem})
    \item \textbf{Thm.~\ref{thm: single cycles in d=3 are KS noncontextual} - KS contextuality requires multiple context cycles}\\[0.1cm]
    $\ra$ distinction from n-cycle contextuality scenario (see Def.~\ref{def: n-cycle scenario}) \cite{AraujoEtAl2013,XuCabello2019}\tikzmark{bot2}
    \vspace{.3cm}
    \item \textbf{Thm.~\ref{thm: KSNC = NCOG}, Thm.~\ref{thm: RH conjecture} - KS contextuality in graph-theoretic approach}\tikzmark{top3}\\[0.1cm]
    $\ra$ proof of a restricted version of the conjecture in Ref.~\cite{RamanathanHorodecki2014}
    \item \textbf{Thm.~\ref{thm: n-colourable implies USI-C}, Thm.~\ref{thm: partial RH conjecture} - characterisation of SI-C graphs and SI-C sets}\\[0.1cm]
    $\ra$ distinction between related notions of ``contextuality"
    \tikzmark{bot3}
\end{itemize}
\vspace{0.4cm}

\begin{table}[htb]
    \centering
    \begin{tabular}{c|ccc}
        \toprule[0.05cm]
        notion of & partial order & states/correlations & orthogonality \\
        ``contextuality'' & of contexts $\CO$ & $\Gamma[\CO]/\QStab(G)$ & graph $G=G(\cO)$ \\
        \midrule\\[-0.3cm]
        $\cO$ is KS & $\nexists$ flat context & $\Gamma_\mathrm{cl}[\CO]$ not & \multirow{2}*{\begin{minipage}{3cm}\centering\vspace{0.35cm} $\chi(G^*)>\dim(\One)$\\ (Thm.~\ref{thm: KSNC = n-colourability},\ref{thm: n-colourability for non-maximal OAs})\end{minipage}} \\
        contextual & connection on $\cC(\cO^*)$ & separating & \\
        (Def.~\ref{def: KSNC}) & (Thm.~\ref{thm: CNC for OAs}) & (Thm.~\ref{thm: KSNC = marginal KSNC}) & \\[0.4cm]
        $\cO$ violates & \multirow{2}*{\begin{minipage}{3cm}\centering\vspace{0.35cm}
        $\CO$ not acyclic\\(Thm.~\ref{thm: Vorob'ev theorem})\end{minipage}} & \multirow{2}*{\begin{minipage}{3.5cm}\centering\vspace{0.35cm}
        $\Gamma_\mathrm{cl}[\CO]\subsetneq\Gamma[\CO]$\\ (Def.~\ref{def: fully classically correlated})\end{minipage}} & \\
        NC inequality & & & \\
        (Def.~\ref{def: fully classically correlated}) & & & \\[0.3cm]
        $\pi(G)$\footnotemark\ is a & $\nexists$ flat context & $\nexists\rho\in\SH$ s.t. & \multirow{2}*{\begin{minipage}{3cm}\centering\vspace{0.35cm}
        $\chi(G^*)>\dim(\cH)$\\ (Cor.~\ref{cor: unital SI-C set})\end{minipage}} \\
        SI-C set & connection on $\cC(\cO^*)$ & $\rho|_{\pi(G)}\in\Stab(G)$ & \\
        (Def.~\ref{def: SI-C graph}) & (Thm.~\ref{thm: CNC for OAs}) & (Def.~\ref{def: SI-C graph}) & \\[0.1cm]
        \bottomrule[0.05cm]
    \end{tabular}
    \caption{Comparison of different notions of ``contextuality", as defined in the algebraic, marginal and graph-theoretic approaches, together with their respective (partial) characterisations in terms of the partial order of contexts $\CO$ of an observable algebra $\cO$, (classical) states $(\Gamma_\mathrm{cl}[\CO]\subset)\Gamma[\CO]$ on $\cO$ (see Def.~\ref{def: state}), respectively (classical) correlations $(\Stab(G)\subset)\QStab(G)$ on $G$ (see Eq.~\ref{eq: cl-qu-nd correlations}), as well as the chromatic number $\chi(G^*)$ of the orthogonality graph $G^*=G(\cO^*)$, where $\cO^*$ is a maximal extension of $\cO$ (see Def.~\ref{def: max extension}). Here, $\pi:G\ra\PH$ is a unital realisation (see Def.~\ref{def: unital realisation}), that is, $G=G(\cO)$ and $\cO=\cO(\pi(V))\subset\LHsa$.}
    \label{tab: summary}
\end{table}
\footnotetext{For the present comparison, we restrict to orthogonality graphs $G=\GO$ (with unital realisation, see Def.~\ref{def: unital realisation}), since all notions of contextuality in Tab.~\ref{tab: comparison} readily apply in this case. In Sec.~\ref{sec: SI-C sets and graphs}, we further generalise to orthogonality graphs with freely completable realisations (see Tab.~\ref{tab: KSC vs SI-C}).\label{fn: unital only}}

Finally, our goal is to apply the tools introduced here (and in Ref.~\cite{Frembs2024a}) towards an in-depth analysis of KS contextuality, e.g. as a resource in quantum computation. We discuss some avenues for future research along those lines in the outlook (Sec.~\ref{sec: conclusion}).

\section{Algebraic characterisation of Kochen-Specker contextuality}\label{sec: algebraic KSNC}

In Ref.~\cite{Frembs2024a} we reformulated the notion of Kochen-Specker (KS) contextuality \cite{KochenSpecker1967} in new algebraic form, for the special case of spin-$1$ quantum systems. This reformulation revealed stronger constraints than those arising from the existence of valuations, which reconciles the notion of KS contextuality with proofs of the KS theorem in the form of violations of noncontextuality inequalities \cite{Pitowsky1991,Larsson2002,KlyachkoEtAl2008,BadziagBengtssonCabelloPitowsky2009,YuOh2012} in those cases. Here, we lift this reformulation to a complete characterisation of KS contextuality, and generalise it to (and beyond) the setting of (finite-dimensional) partial algebras \cite{KochenSpecker1967}. Since Kochen-Specker contextuality was originally defined in this setting, this shows that the tools introduced in Ref.~\cite{Frembs2024a} capture the full extent of KS contextuality. Moreover, it will allow us to carefully compare and contrast it with various related notions of contextuality in the literature, specifically with the ``marginal approach" \cite{AbramskyBrandenburger2011,AcinFritzLeverrierSainz2015,BudroniEtAl2022} in Sec.~\ref{sec: algebraic vs marginal approach} and with the graph-theoretic approach \cite{CabelloSeveriniWinter2014} in Sec.~\ref{sec: algebraic vs graph-theoretic approach}.

\subsection{Observables, events, contexts}\label{sec: OAs}

Various notions of ``contextuality" exist, making it rather difficult to differentiate between them. One contribution of this work is a thorough comparison between the most commonly used notions, applied to ideal measurements (see Sec.~\ref{sec: algebraic vs marginal approach} below). In order to do so, we first establish the relevant notation and terminology we will be using.\\

\textbf{(Abstract) Observable algebras.} We will introduce the relevant structures by comparison with the familiar setting of quantum theory. There, observables are represented by self-adjoint (Hermitian) operators $O\in\LHsa$ acting on a finite-dimensional Hilbert space $\cH$.\footnote{More generally, an observable in quantum mechanics is a self-adjoint operator in a von Neumann algebra $\cN\subset\BH$, that is, a weakly closed subalgebra of the algebra of bounded operators on a (separable) Hilbert space $\cH$. Presently, we will only consider finite-dimensional (quantum) systems.} The commutator equips $\LHsa$ with a reflexive, symmetric, yet generally not transitive relation. Physically, the commutator encodes a notion of \emph{joint measurability} or \emph{compatibility} between quantum observables. Indeed, Heisenberg's uncertainty relation prohibits an experimenter from measuring non-commuting observables jointly with arbitrary precision. As a consequence, the physical interpretation of algebraic operations between noncommuting operators remains somewhat opaque. However, these operations have a well-defined meaning when restricted to commuting operators.\footnote{Indeed, algebraic operations between jointly measurable observables may be considered part of an experimenter's ability to analyse her experimental data. In this view, they become consistency conditions for formal representations of the observables of a system, as we will emphasise below.} This suggests to restrict algebraic relations to compatible observables.

\begin{definition}\label{def: abstract observable algebra}
    An \emph{abstract observable algebra $\cO$ over a field $K$} is a set with a binary relation $\odot \subset \cO\times\cO$ (``joint measurability'' or ``compatibility''), an identity element $\One\in\cO$ such that $\One\odot O$ for all $O\in\cO$, and such that addition, multiplication and scalar multiplication are defined and close on the domain of $\odot$, that is, for all $O_1,O_2,O_3 \in \cO$ with $O_1\odot O_2$, $O_2\odot O_3$ and $O_1\odot O_3$, one also has $(O_1+O_2)\odot O_3$, $(O_1O_2)\odot O_3$ and $\lambda O_1\odot O_3$ for all $\lambda \in K$.\footnote{Throughout, we will take $K = \R$, and thus usually omit specifying it explicitly.}
\end{definition}

$\LHsa$ with the anti-commutator defines an \emph{(abstract) observable algebra}, where the algebraic operations between noncommuting observables are forgotten. A key motivation to generalise from this example is to consider observable subalgebras of quantum theory. For instance, restrictions to subalgebras appear naturally in the context of quantum computation, and experimental setups more generally. Quantum observable algebras satisfy an additional property known as \emph{Specker's principle} \cite{Specker1960,LiangSpekkensWiseman2011,Henson2012,FritzEtAl2013,Cabello2012}: any triple of pairwise compatible elements $O_1,O_2,O_3 \in \cO$ with $O_1\odot O_2$, $O_2\odot O_3$ and $O_1\odot O_3$ is fully compatible, that is, $O_1\odot O_2\odot O_3$. With quantum theory in mind, Kochen and Specker define partial algebras as observable algebras that satisfy Specker's principle \cite{KochenSpecker1967}. Yet, our formalism applies without this restriction.\footnote{Another motivation to study general observable algebras is the search for possible extensions to quantum theory, as well as to identify (physically motivated) principles that single out quantum theory amongst such extensions. For instance, Specker's principle has been suggested as such a principle \cite{Specker1960,Henson2012,FritzEtAl2013,Cabello2012}. The tools developed in this work and in Ref.~\cite{Frembs2024a} might therefore also prove useful towards the reconstruction paradigm in quantum foundations \cite{Hardy2001,Hardy2001a,Grinbaum2007,Jaeger2019,Mueller2021,MuellerGarner2023}.}

Note that Def.~\ref{def: abstract observable algebra} is abstract in the sense that it does not specify the elements of $\cO$ explicitly. To do so, we will represent observables $O\in\cO$ in an abstract observable algebra $\cO$ as random variables (measurable functions) $O:\Sigma_O\ra\R$ on some measurable spaces $\Sigma_O$, such that whenever $O\odot O'$ there exists a measurable space $\Sigma_{O,O'}\supset\Sigma_O,\Sigma_{O'}$ (hence, $O:\Sigma_{O,O'}|_{\Sigma_O}\ra\R$ and $O':\Sigma_{O,O'}|_{\Sigma_{O'
}}\ra\R$). In this view, the algebraic structure between compatible observables is that of an algebra of measurable functions.

\begin{definition}\label{def: OA}
    An \emph{observable algebra $\cO$} is an abstract observable algebra whose elements are given as random variables $O:\Sigma_O\ra\R$ with $\Sigma_O$ a measurable space for every $O\in\cO$.
\end{definition}

Every observable algebra thus defines an abstract observable algebra; however, different observable algberas may correspond to the same abstract observable algebra. Below, we will restrict to observables with finitely many outcomes, thus leaving topological issues for future work. As a consequence, the measurable spaces $\Sigma_C\supset\{\Sigma_O\}_{O\in C}$ corresponding to any subset of mutually compatible observables $C\subset\cO$, that is, with $O\odot O'$ for all $O,O'\in C$, will always have finite cardinality.\footnote{In the general case, one may want to restrict to measurable spaces such that the space of measurable functions on it has nice algebraic properties. A natural choice is to consider Hyperstonean spaces, which are categorically equivalent to (the dual of) commutative von Neumann algebras \cite{Pavlov2021}. An alternative choice is to consider (locally) compact Hausdorff spaces and continuous functions (vanishing at infinity), which are dually equivalent to unital (non-unital) commutative $C^*$-algebras.}\\

\textbf{Contexts and their order.} Next, we take an alternative perspective: instead of considering an observable algebra $\cO$ as a set of individual observables, we collect observables into \emph{(measurement) contexts}, that is, subsets $C\subset \cO$ of jointly measurable ones. What type of subsets? Recall that every self-adjoint operator $O\in\LHsa$ has a spectral decomposition $O = \sum_{i=1}^{\dim(\cH)} \lambda_ip_i$ for $\lambda_i\in\R$, $p_i\in\PH$ with $p_ip_j=\delta_{ij}$ and $\PH$ the (lattice of) projections on $\cH$. We denote by $\spec(O)=\{\lambda_i\}_i$ the \emph{spectrum}, that is, the set of eigenvalues of $O$. Given any function $h:\R\ra\R$ we obtain another self-adjoint operator by $h(O) = \sum_{i=1}^{\dim(\cH)} h(\lambda_i)p_i$.\footnote{For observables with continuous spectra, $h$ needs to be a measurable function.} Functional relations of this form may be considered part of the side-processing of our experimental setup, and should thus be preserved in any formalisation of $\cO$.\footnote{For instance, an experimenter is free to relabel the outcomes of observables in her experiment.} This motivates to include all such relations in our description, that is, we represent a \emph{(measurement) context} in quantum theory as a commutative subalgebra $C\subset\LHsa$, and collect all contexts into the partial order $\CH = (\{C \subset \LHsa \mid C \mathrm{\ commutative}\},\subset)$.\footnote{For infinite-dimensional quantum systems, a natural topology on observable algebras is the weak operator topology, such that contexts are represented by commutative von Neumann subalgebras of $\cO=\cNsa$ of some von Neumann algebra $\cN$. Alternatively, one may consider compact Hausdorff spaces which are dually equivalent to the category of unital commutative $C^*$-algebras \cite{HLS2009,HLS2010,VanDenBergHeunen2012,Heunen2014,HardingHeunen2021}. Here, we restrict to finite-dimensional algebras, for which all relevant operator topologies coincide. We find it useful to henceforth explicitly distinguish this in our notation by denoting finite-dimensional commutative algebras (contexts) by $C$ instead of $V$, and the corresponding partial order of contexts by $\CO$ instead of $\VO$, as was used in related previous work \cite{IshamDoeringI,IshamDoeringII,IshamDoeringIII,IshamDoeringIV,DoeringFrembs2019a,FrembsDoering2022a,Frembs2022a,FrembsDoering2023}.}

$\cO$ is generally not an algebra itself. Yet, since addition, multiplication and scalar multiplication close on the domain of $\odot$ by Def.~\ref{def: abstract observable algebra}, we may still collect jointly measurable observables into contexts in the form of commutative algebras (over $K=\R$), generated by collections of jointly measurable observables.\footnote{Under the restriction to random variables with finitely many outcomes, the respective commutative algebra arises with respect to the closure under addition, multiplication and scalar multiplication of its generating elements. More generally, one further has to impose topological constraints.} The definition of the partial order of contexts $\CH$ in Ref.~\cite{DoeringIsham2011,DoeringFrembs2019a} thus admits the following generalisation.

\begin{definition}\label{def: context category}
    Let $\cO$ be an observable algebra over the field $K$. The set of commutative algebras $C\subset\cO$ over $K$, ordered by inclusion, is called the \emph{partial order of contexts} (or \emph{context category}\footnote{A partial order is a category with objects the elements in the partial order and morphisms its relations.}) $\CO$ of $\cO$.

    We call a context $C\in\CO$ \emph{maximal} if $C \subset C'$ for $C' \in \CO$ implies $C=C'$, and write $\COm$ for the set of maximal contexts in $\CO$.
\end{definition}

We obtain a ``spectral decomposition" for any observable $\cO\ni O:\Sigma_C\ra\R$ of the form $O=\sum_\sigma\sigma p_\sigma$ with $\sigma\in\R$ and where $p_\sigma:=\chi_{\Sigma_\sigma}:\Sigma_C\ra\{0,1\}$ is the indicator function of the subset $\Sigma_\sigma\subset\Sigma_C$ defined by $\chi_{\Sigma_\sigma}(\sigma)=1$ if and only if $\sigma\in\Sigma_\sigma\subset\Sigma_C$. We will write $\PC$ for the (representation of the) Boolean algebra $\BC=2^{\Sigma_C}$ in terms of indicator functions on $\Sigma_C$, and write the algebraic relations as addition and multiplication (instead of set union and intersection), in analogy with the quantum case. Stressing this analogy, we will similarly refer to the set $\spec(O)=\{\sigma_i\}_i$ in the spectral decomposition $O=\sum_i\sigma_i p_i$ of an observable $O\in\cO$ as its \emph{spectrum} and to the elements $p_i\in\PC$ as \emph{projections} or \emph{events}.

Since every commutative algebra $C\subset\cO$ contains the identity element $\One$, we have $C_\One:=\{\lambda\One\mid\lambda\in K\} \subset C$ for all $C\in\CO$. Hence, $C_\One$ is the least element of $\CO$. Conversely, any partial order of finite-dimensional commutative algebras with least element $C_\One$ defines an observable algebra $\cO = \bigcup_{C\in\CO} C$, with identity given by (identifying) the maximal indicator functions $\One:=\chi_{\Lambda_C}$ in every maximal context $C\in\COm$, and by setting $O\odot O'$ if and only if there exists $C \in \CO$ such that $O,O' \in C$ \cite{CannonDoering2018}.\\

\textbf{Event algebras.} Next, we consider the restriction of an observable algebra $\cO$ to its elementary dichotomic (``yes-no") observables, equivalently to its \emph{events} or \emph{projections}. These correspond to the elements $p^O_i$ in the spectral decomposition $O=\sum_{i=1}^{\spec(O)}\sigma p^O_\sigma$ for $\sigma\in\R$ of observables $O\in\cO$. Note that $\{p^O_\sigma\}_\sigma$ generate the Boolean subalgebra $\PC\cong 2^{\Sigma_C}$, given by the restriction of the minimal commutative subalgebra $C\subset\cO$ containing $O$ to its projections.\footnote{As noted above, we identify the elements $p\in\PC$ with indicator functions $\chi:\Sigma_C\ra\R$ and thus write the algebraic operations in $\PC$ additively and multiplicatively.}

It is a defining property of a classical system that all observables are jointly measurable, hence, are contained in a single measurement context. In other words, every classical observable can be represented as a measurable function $O\in L_\infty(\Lambda)$ with respect to a single measurable space $(\Lambda,\sigma)$ (the ``state space'' of the theory). In this case, the events are defined by the $\sigma$-algebra of measurable subsets of $\Lambda$ which defines a Boolean algebra $\cB(\Lambda)$ under union, intersection and complement.\footnote{One further obtains a complete Boolean algebra after quotienting out the $\sigma$-ideal of null sets.}

On the other hand, the spectral decomposition of observables $O,O'\in\cO$ in a general observable algebra $\cO$ may differ, that is, may involve different events. Still, whenever two observables are jointly measurable $O,O'\in C$, they have a common spectral decomposition, hence, describe events in a common Boolean algebra (generated by the elements in) $\PC$. Instead of a single Boolean algebra, the events of an observable algebra are thus described by a collection of Boolean algebras \cite{KochenSpecker1967,HardingNavara2011,Kochen2015,DoeringHarding2016,CannonDoering2018,HardingEtAl2019,AbramskyBarbosa2020}.

\begin{definition}\label{def: event space}
    Let $\cO$ be an observable algebra. The \emph{event algebra $\PO$ of $\cO$} is the collection of Boolean algebras $\PC$ for all contexts $C\in\CO$, that is, $\PO=\bigcup_{C\in\CO}\PC$.
\end{definition}

Our notation $\PO$ is supposed to indicate that its elements are represented as indicator functions on the event spaces $\Sigma_C$ corresponding to sets of jointly measurable observables. In analogy with Def.~\ref{def: abstract observable algebra}, we may also define an abstract event algebra $\BO$. An (abstract) event algebra is a partial Boolean algebra if it satisfies Specker's principle. In particular, every orthomodular lattice gives rise to a partial Boolean algebra that is moreover transitive \cite{Gudder1972}. For instance, the event algebra $\cP(\LHsa)$ arises from the lattice of projections $\PH$ on $\cH$ by restricting operations to commuting elements only.

Finally, we remark that the information coded into an event algebra can also be cast as a hypergraph, as defined in Ref.~\cite{FritzChaves2013,AcinFritzLeverrierSainz2015}, with vertices the atoms and hyperedges given non-minimal projections in $\BO$ (equivalently, by Boolean subalgebras $\BC\in\BO$). Conversely, a hypergraph defines an abstract event algebra $\BO$ once we identify the identity element as $\One=\sum_{v\in e} v$ for every maximal hyperedge $e\in E$. From this perspective (and by Lm.~\ref{lm: embeddings vs orthomorphisms vs order-preserving maps} below), both formalisms are thus concerned with the same underlying structure. However, whereas the latter studies contextuality in terms of correlations (see also Sec.~\ref{sec: algebraic vs marginal approach}), our approach is entirely algebraic and thus complementary to it.\footnote{In addition, we point out that the hypergraph approach \cite{ChavesFritz2012,AcinFritzLeverrierSainz2015} is specifically concerned with the study of correlations on composite systems. In this regard, the perspective of contexts seems helpful in that specifying an identity element (hence, a minimal context) for local subsystems automatically ensures that product correlations are non-signalling. Indeed, the Cartesian product of context categories allows for a natural description of Bell's theorem \cite{DoeringFrembs2019a}; moreover, Ref.~\cite{FrembsDoering2022a,Frembs2022a} characterise both quantum and separable quantum states from general non-signalling correlations, under the additional assumption that observables (hence, their partial order of contexts) are locally quantum \cite{KlayRandallFoulis1987,Wallach2002,BarnumEtAl2010}.\label{fn: identity element in hypergraphs}}\\

\textbf{Finite-dimensional and maximal observable algebras.}

\begin{definition}\label{def: dim}
    Let $\cO$ be an observable algebra with event algebra $\PO$. An additive function $\dim:\PO\ra\mathbb{N}$ with $\dim(p)+\dim(p')=\dim(p+p')$ whenever $p,p'\in\PC$ with $pp'=0$ for some $C\in\CO$ is called a \emph{dimension function on $\cO$}.

    $\cO$ is called \emph{finite-dimensional} if there exists a dimension function on $\cO$.
\end{definition}

A dimension function assigns an integral size to every projection $p\in\PO$, and we will write $\cP_k(\cO)=\{p\in\PO\mid\dim(p)=k\}$ with $1\leq k\leq d$ for the set of $k$-dimensional projections in $\PO$. Often, we will not be interested in the exact choice of dimension function, and simply say $\cO$ has dimension $d$ if there exists a dimension function such that  $\dim(\One)=d$. For instance, the (minimal) dimension function on $\PH$ is given by $\dim(p)=\mathrm{rk}(p)$ for every $p\in\PH$. Clearly, by restricting to finite-dimensional observable algebras, we ensure that observables have only finitely many outcomes.\footnote{However, the converse is generally not true: there exist observable algebras whose elements are finite random variables, yet which do not admit a dimension function (see App.~\ref{app: On realisations}).} Notably, this is a restriction of the type of observable algebra in Def.~\ref{def: abstract observable algebra}. Nevertheless, since we will restrict to finite-dimensional observable algebras throughout we will sometimes omit stating this explicitly.

\begin{definition}\label{def: max OA}
    Let $\cO$ be a finite-dimensional observable algebra of dimension $\dim(\One)=d$. Then $\cO$ is called \emph{maximal} if every maximal context $C\in\COm$ is minimally generated by $d$ elements, equivalently if there exists a resolution of the identity $\One=\sum_{i=1}^d p^C_i$ for minimal projections $p^C_i\in\PC$ with $\dim(p^C_i)=1$.\footnote{Since $p^C_i\in\POone$ are minimal, this implies that $p^C_ip^C_j=0$ whenever $i\neq j$.}
\end{definition}

We will write $|C|:=|\cP_\mathrm{min}(C)|$ for the cardinality of a minimal generating set of $C$.\footnote{Here, $p\in\PC$ is called minimal, and denoted $p\in\mc{P}_\mathrm{min}(C)$, if $p'p=p'$ implies $p'=p$ for all $p'\in\PC$. A set of observables, e.g. dichotomic observables, equivalently events is called a minimal generating set if removing any element of the set will generate a strictly smaller context.} Clearly, if an observable algebra $\cO$ is maximal then all maximal contexts $C\in\COm$ are isomorphic. However, the converse is not true, since a maximal observable algebra also specifies a dimension function which may not exist (see App.~\ref{app: On realisations}). Moreover, not every finite-dimensional observable algebra is maximal, since the (discrete) spectra of observables in an observable algebra $\cO$ generally do not have the same cardinality. Consequently, they may generate commutative algebras that are not isomorphic. Note that quantum observable algebras $\cO=\oplus_i\cL(\cH_i)$ are maximal, with minimal projections given by rank-$1$ projections. While we will mostly restrict the discussion to maximal observable algebras in the main text, in App.~\ref{app: non-maximal OAs} we show that the characterisation of Kochen-Specker contextuality for (finite-dimensional) observable algebras can be reduced to that of maximal ones.

\subsection{Kochen-Specker contextuality for observable algebras}\label{sec: KSNC for OAs}

Having introduced the necessary terminology in the previous section, it is straightforward to generalise the notion of Kochen-Specker (KS) noncontextuality in Ref.~\cite{KochenSpecker1967,Frembs2024a} to observable algebras. We reiterate that in their original work, Kochen and Specker considered partial algebras which are observable algebras satisfying in addition Specker's principle \cite{KochenSpecker1967}. We briefly review the motivation behind the original definition of KS contextuality, before reformulating it in terms of an embedding problem for observable algebras. For a more detailed introduction, see Ref.~\cite{KochenSpecker1967,DoeringFrembs2019a,Frembs2024a}.\\

\textbf{Kochen-Specker contextuality as an embedding problem.} Recall that an observable $O\in C\subset\cO$ has a spectral decomposition $O = \sum_{i=1}^{|C|}\sigma_ip_i$ for $\sigma_i\in\spec(O)$, $p_i\in\PC$ and $|C|=|\cP_\mathrm{min}(C)|$. Moreover, $h(O)=\sum_{i=1}^{|C|} h(\sigma_i)p_i\in\cO$ for any function $h:\R\ra\R$. We call an observable \emph{non-degenerate (for $C\in\CO$)} if $|\spec(O)|=|C|$ and \emph{degenerate} otherwise. It is easy to see that every context is generated by a single non-degenerate observable,\footnote{More generally, every abelian von Neumann algebra is generated by a single Hermitian operator.}
\begin{align}\label{eq: generating observables}
    C(O) = \{h(O) \mid h: \R \ra \R\}\; .
\end{align}
Note that a generating observable for $C$ is not unique, however any two generating observables $O,O'$ for $C$ are related by an invertible function $h: \R\ra\R$ such that $O' = h(O)$. If $h$ is non-invertible on $\spec(O)$, then $\spec(h(O))\subset\spec(O)$ and $h(O)$ is degenerate.\footnote{Observable algebras thus generalise the situation in quantum mechanics, where two self-adjoint operators $O,O'\in\LHsa$ commute if and only if there exists another self-adjoint operator $O^*\in\LHsa$ and (measurable) functions $h,h':\R\ra\R$ such that $O=h(O^*)$ and $O'=h'(O^*)$ \cite{BuschLahtiPellonpaa_QuantumMeasurement}.}

It is natural to require maps between observable algebras $\epsilon:\cO\ra\cO'$ to preserve such functional relationships between observables in the same context, that is,\footnote{Again, if we take these functional relations to be part of an experimenter's ability to reason about and analyse her measurement data, they should be independent of our theoretical model for $\cO$.}\footnote{One may raise the objection that we should not identify the observables $g(O)$ and $g'(O')$ if $O,O'$ are generating observables of two distinct contexts $C,C'\in\CO$, since by our interpretation of such functional relations, the observables $g(O)$ and $g'(O')$ a priori also correspond to different measurement contexts (that of $O$ and $O'$, respectively). The response to this objection is that it is natural to identify observables (and their measurement contexts) mathematically whenever they cannot be distinguished experimentally, that is, whenever both yield the same statistics in any state (cf. Ref.~\cite{Spekkens2007}).}
\begin{equation}\label{eq: KSNC}
    \forall O \in \cO, \forall g:\R\ra\R: \quad \epsilon(g(O))=g(\epsilon(O))\; .
\end{equation}
Following Ref.~\cite{KochenSpecker1967,Frembs2024a}, we call any faithful map $\epsilon:\cO\ra\cO'$ that preserves the functional constraints in Eq.~(\ref{eq: KSNC}) an \emph{embedding of $\cO$}. Specifically, Kochen-Specker noncontextuality is concerned with the existence of a \emph{classical embedding of $\cO$}, that is, an embedding into a classical observable algebra of the form $\cO'=L_\infty(\Lambda)$ for a measurable space $\Lambda$.\footnote{If $\cO$ is a finitely generated observable algebra, then $\Lambda$ can be taken to be a finite set. Note also that a finitely generated observable algebra with a classical embedding is automatically finite-dimensional (see Fn.~\ref{fn: fin gen and classical implies fin dim}). However, for general finite-dimensional observable algebras $\Lambda$ need not be finite.}

\begin{definition}\label{def: KSNC}
    Let $\cO$ be an observable algebra. Then $\cO$ is called \emph{Kochen-Specker (KS) noncontextual} if it admits a classical embedding. Otherwise, $\cO$ is called \emph{KS contextual}.\footnote{The notion of KS noncontextuality is arguably a minimal requirement for a classical representation of $\cO$. It only refers to the algebraic structure, in particular, it does not also demand that states on $\cO$ be represented classically (cf. Def.~\ref{def: fully classically correlated}), which generically results in a stronger condition (see Sec.~\ref{sec: algebraic vs marginal approach}).}
\end{definition}

In fact, any (not just a classical) embedding encodes a notion of \emph{noncontextuality}, namely the representation $\epsilon(O)\in\cO'$ of $O\in\cO$ (e.g. in terms of a measurable function on some measurable space) does not depend on the context, that is, on other jointly measurable observables: if $\tO\in C,C'$ with $C\neq C'$, generating observables $C=C(O)$ and $C'=C(O')$ and functions $\tO=g(O)=g'(O')$, then $g(\epsilon(O))=\epsilon(\tO)=g'(\epsilon(O'))$. Since the identity map is a valid (classical) embedding, every classical system is Kochen-Specker noncontextual. In contrast, the famous Bell-Kochen-Specker theorem demonstrates that quantum theory is Kochen-Specker contextual \cite{Bell1966,KochenSpecker1967}. As one of the most crucial insights into the structure of quantum mechanics, it has sparked a vast array of research in the almost six decades since it was first formulated. In particular, contextuality and nonlocality fuel the promise that quantum computation will one day outperform present-day classical computers \cite{Raussendorf2013,HowardEtAl2014,FrembsRobertsBartlett2018,BravyiGossetKoenig2018,FrembsRobertsCampbellBartlett2023}. In this setting, the embedding problem in Def.~\ref{def: KSNC} (for quantum subsystems) is closely related to the existence of a classical simulation \cite{AaronsonGottesman2004,Gross2006,JozsaMiyake2008,PashayanBartlettGross2020,XuEtAl2023}. Yet, despite significant progress both on the foundational and computational side, the mathematical structure underlying Kochen-Specker contextuality has remained obscure \cite{YuOh2012,BudroniEtAl2022,JokinenEtAl2024}. In order to further unveil it, we first cast the notion of Kochen-Specker noncontextuality into context form.\\

\textbf{Kochen-Specker contextuality as order structure.} The different structures in Sec.~\ref{sec: OAs} - observable algebras, event algebras and the partial order of contexts - have given rise to various definitions of contextuality, e.g. for measurement scenarios \cite{AbramskyBrandenburger2011}, (hyper)graphs \cite{CabelloSeveriniWinter2014,ChavesFritz2012,AcinFritzLeverrierSainz2015}, partial orders of contexts etc. \cite{IshamButterfieldI,IshamDoeringI,DoeringIsham2011,HLS2010,DoeringFrembs2019a}. This easily creates the impression of various inequivalent notions of contextuality. In Sec.~\ref{sec: algebraic vs marginal approach} (and Sec.~\ref{sec: algebraic vs graph-theoretic approach}), we will indeed identify two distinct notions of contextuality, namely the algebraic notion of KS noncontextuality and the notion based on (demanding) correlations (to be classical). Despite this distinction, the underlying structures are equivalent (up to redundancies, see App.~\ref{app: unified perspective}). In particular, they are independent of the representation in terms of observable algebras, event algebras or context categories.

\begin{lemma}\label{lm: embeddings vs orthomorphisms vs order-preserving maps}
    Let $\cO,\cO'$ be finite-dimensional observable algebras with context categories $\CO,\cC(\cO')$ and event algebras $\PO,\cP(\cO')$, respectively. Then $\epsilon:\cO\ra\cO'$ is an embedding if and only if $\varphi:\PO\ra\cP(\cO')$ is a faithful orthomorphism\footnote{Here, by an orthomorphism we mean a map $\varphi:\PO\ra\cP(\cO')$ such that $\varphi(\One)=\One'$, $\varphi(\One-p)=\One'-\varphi(p)$ for all $p\in\PO$ as well as $\varphi(p+q)=\varphi(p)+\varphi(q)$ for all $p,q\in\PO$, $pq=0$.} if and only if $\eta:\CO\ra\cC(\cO')$ is a faithful order-preserving map.
\end{lemma}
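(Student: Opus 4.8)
The plan is to leverage the fact, recalled in Sec.~\ref{sec: OAs}, that for a finite-dimensional observable algebra the three structures $\cO$, $\PO$ and $\CO$ encode the same data: $\PO$ is the set of idempotents of $\cO$ and $\CO$ the poset of its commutative subalgebras, while conversely each context $C\in\CO$ is the $\R$-span of $\cP(C)$ with the induced product, $\cO=\bigcup_{C\in\CO}C$ and $\PO=\bigcup_{C\in\CO}\cP(C)$. The lemma lifts this dictionary to the level of morphisms; I will prove $(1)\Leftrightarrow(2)$ and $(2)\Leftrightarrow(3)$, the three maps being related by restriction/extension. A useful preliminary is that an embedding $\epsilon$, restricted to any context $C=C(O)$, is a unital $\R$-algebra homomorphism $C\ra C(\epsilon(O))\subset\cO'$: for compatible $O_1=g_1(O),O_2=g_2(O)$ one has $\epsilon(O_1+O_2)=(g_1+g_2)(\epsilon(O))=\epsilon(O_1)+\epsilon(O_2)$, and likewise for products and scalars, because spectral calculus $g\mapsto g(\epsilon(O))$ is a unital $\R$-algebra homomorphism; in particular $\epsilon(\One)=\One'$. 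Conversely, a coherent family of such context-wise homomorphisms, jointly injective, assembles into an embedding.

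For $(1)\Rightarrow(2)$: idempotency, complementation $p\mapsto\One-p$ and orthogonal addition $(p,q)\mapsto p+q$ (for $pq=0$, hence $p,q$ in a common context) are all instances of the functional relations in Eq.~(\ref{eq: KSNC}) within a single context, so $\epsilon$ maps $\PO$ into $\cP(\cO')$ and $\varphi:=\epsilon|_{\PO}$ satisfies the orthomorphism axioms; $\varphi$ is faithful since $\epsilon$ is. For $(2)\Rightarrow(1)$: given a faithful orthomorphism $\varphi$, use the (canonical, by Lagrange interpolation) spectral decomposition $O=\sum_i\sigma_i p_i$ and set $\epsilon(O):=\sum_i\sigma_i\varphi(p_i)$; the orthomorphism axioms make $\{\varphi(p_i)\}_i$ an orthogonal resolution of $\One'$ lying in a single context of $\cO'$, so this is a genuine observable with spectral projections $\varphi(p_i)$, $\epsilon|_{\PO}=\varphi$, and — writing $g(O)=\sum_i g(\sigma_i)p_i$ and using additivity of $\varphi$ over finite orthogonal sums — $\epsilon(g(O))=\sum_i g(\sigma_i)\varphi(p_i)=g(\epsilon(O))$, i.e.\ Eq.~(\ref{eq: KSNC}); faithfulness of $\epsilon$ follows from uniqueness of spectral decompositions together with faithfulness of $\varphi$.

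For $(2)\Leftrightarrow(3)$: a faithful orthomorphism $\varphi$ maps each block $\cP(C)$ isomorphically onto a Boolean subalgebra of a single block $\cP(C')$ (same fact as above), so setting $\eta(C):=C'$, the context generated by $\varphi(\cP(C))$, gives a map $\CO\ra\cC(\cO')$; it is order-preserving since $C_1\subseteq C_2$ forces $\cP(C_1)\subseteq\cP(C_2)$ hence $\varphi(\cP(C_1))\subseteq\cP(\eta(C_2))$ hence $\eta(C_1)\subseteq\eta(C_2)$, and faithful because a generating Boolean subalgebra of $\cP(C')$ is all of it, so $\eta(C_1)=\eta(C_2)$ would give $\varphi(\cP(C_1))=\cP(C')=\varphi(\cP(C_2))$ and then $\cP(C_1)=\cP(C_2)$ (as $\varphi$ is injective), i.e.\ $C_1=C_2$. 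For the converse, one reconstructs $\varphi$ from $\eta$ by observing that every $p\in\PO$ is determined within $\CO$: it is the atom of (any maximal) $C\ni p$ singled out, inside the finite interval $[C_\One,C]$, by the sub-poset of contexts refining it (a partition-lattice computation, using finite-dimensionality so this interval is finite); transporting this combinatorial datum through the injective order-preserving $\eta$ pins down $\varphi(p)\in\cP(\eta(C(p)))$, and the orthomorphism axioms for $\varphi$ are read off from the inclusions among the relevant sub-contexts.

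The routine content is the axiom checks; the two places needing care are (a) in $(2)\Rightarrow(1)$, confirming that $\{\varphi(p_i)\}$ lies in a \emph{common} context of $\cO'$, not merely that its members are pairwise compatible — here one uses the intended reading of the orthomorphism axioms (orthogonal sums are taken inside $\cO'$) together with finite-dimensionality, which makes each block a finite Boolean algebra determined by its atoms — and (b) the reconstruction of individual events from the bare poset $\CO$ in $(3)\Rightarrow(2)$, consistently across overlapping contexts. I expect (b) to be the main obstacle: it rests on showing that in finite dimensions an atom of a context is uniquely identified by its pattern of refinements within the (finite) interval below that context, so that faithfulness of $\eta$ transports this identification unambiguously; getting the bookkeeping of this transport right across the whole of $\CO$ is the crux of the argument.
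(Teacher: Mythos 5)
Your proof of $(1)\Leftrightarrow(2)$ and of the forward direction $(2)\Rightarrow(3)$ coincides with the paper's: restriction of $\epsilon$ to $\PO$ gives the orthomorphism, linear extension over spectral decompositions gives the converse, and $\eta(C)$ is the context generated by the image of $\cP(C)$. The only genuine divergence is in $(3)\Rightarrow(2)$. There the paper does not attempt to recover each individual projection $p\in\PO$ from the poset; instead it builds, by induction on the length of chains in $\CO$, a \emph{fresh} representation of every context $C$ as a Boolean algebra with a chosen generating set $\{p_C,\One\}$, does the same for $\cC(\cO')$, and then defines $\varphi(p_C):=p_{C'}$ whenever $\eta(C)=C'$. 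This sidesteps the delicate point in your reconstruction: a projection $p$ and its complement $\One-p$ generate the \emph{same} subcontext, so the "pattern of refinements" of a subcontext inside the interval $[C_\One,C]$ only identifies the unordered pair $\{p,\One-p\}$ (more generally, the partition of the atoms), not the individual event — and the poset carries no dimension data to single out which member is the atom. Your approach is not thereby broken, because the lemma is an existence statement (any consistent choice of orthomorphism suffices, and it need not map atoms to atoms), but the bookkeeping you correctly flag as the crux is exactly the consistent resolution of these binary choices across overlapping contexts; the paper's generator-labelling by induction is the cleaner way to discharge it, and you should adopt it rather than the atom-identification route.
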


\begin{proof}
    Clearly, an embedding of observable algebras $\epsilon:\cO\ra\cO'$ restricts to a faithful orthomorphism $\varphi=\epsilon|_\PO:\PO\ra\cP(\cO')$. Conversely, every faithful orthomorphism $\varphi:\PO\ra\cP(\cO')$ generates an embedding of observable algebras $\epsilon:\cO\ra\cO'$ by setting $\epsilon(O)=\sum_{i=1}^{|C|} \sigma_i\varphi(p_i)$ for every $\cO\ni O=\sum_{i=1}^{|C|}\sigma_ip_i$ with $p_i\in\PC$ and $|C|=|\cP_\mathrm{min}(C)|$.

    Since the order relations in $\CO$ are defined by the functional constraints in $\cO$, every embedding of observable algebras $\epsilon:\cO\ra\cO'$ induces a map $\eta:\CO\ra\cC(\cO')$ of the form $\eta(C)=\epsilon[C]$. Clearly, $\eta$ is faithful. Moreover, let $\tC,C\in\CO$ be contexts in $\cO$ with generating observables $\tO,O\in\cO$, that is, $\tC=\tC(\tO)$ and $C=C(O)$ as in Eq.~(\ref{eq: generating observables}), and such that $\tO=g(O)$. Then it follows from Eq.~(\ref{eq: KSNC}) that $\eta$ preserves order relations:
    \begin{align*}
        \eta(\tC)=\epsilon[\tC]
        &=\epsilon[\{\widetilde{h}(\tO)\mid \widetilde{h}:\R\ra\R\}] \\
        &=\epsilon[\{\widetilde{h}(g(O))\mid \widetilde{h}:\R\ra\R\}] \\
        &\subset\epsilon[\{h(O)\mid h:\R\ra\R\}]
        =\epsilon[C] = \eta(C)\; .
    \end{align*}
    
    Conversely, from a faithful order-preserving map $\eta:\CO\ra\cC(\cO')$, we construct a faithful orthomorphism (and thus a faithful embedding by the first assertion) as follows. First, we represent the minimal element $C_\One\in\CO$ by the unique two-element Boolean algebra $\cB(C_\One)$ generated by a single element which we denote by $\One$ and $0=\One^\perp$. Second, for every context $C_\One\neq C\in\CO$ such that $C\geq C'>C_\One$ implies $C=C'$ (here, we use that $\cO$ is finite-dimensional), define a Boolean algebra $\BC$ generated by two elements $p_C$ and $\One$. Proceeding in this manner, by induction in the length of chains in $\CO$, we thus obtain representations of the elements $C\in\CO$ in terms of Boolean algebras $\BC$ and their minimal generating set. Similarly, we obtain a representation of the contexts $C'\in\cC(\cO')$ in terms of Boolean algebras $\cB(C')$ and their generating elements.
    
    Now, define $\varphi:\PO\ra\cP(\cO')$ by $\varphi(p_C)=p_{C'}$ whenever $\eta(C)=C'$, in particular, we have $\varphi(\One)=\One'$. By construction, we further have $\varphi(\One-p_C)=\One'-\varphi(p_C)$ for all $C\in\CO$, hence, for all $p_C\in\PO$, as well as $\varphi(p_\tC+p_C) = \varphi(p_\tC)+\varphi(p_C)$ whenever $\tC\subset C$, hence for all $p_\tC,p_C\in\PC$ with $p_\tC p_C=0$. Consequently, $\varphi$ is an orthomorphism, it is faithful since $\eta$ is, and extends to an embedding by the first assertion.
\end{proof}

Lm.~\ref{lm: embeddings vs orthomorphisms vs order-preserving maps} shows that even though we formulated Kochen-Specker noncontextuality in terms of observable algebras in Def.~\ref{def: KSNC}, we could have equivalently defined it in terms of context categories. In other words, the functional constraints in Eq.~(\ref{eq: KSNC}) are fully contained in, and thus a property of the partial order of contexts $\CO$ only. What is more, this is true not only for classical embeddings, but applies to embeddings into observable algebras of other forms, too, e.g. observable algebras in quantum theory $\cO\subset\LHsa$. This will become important when we compare the notion of Kochen-Specker contextuality in Def.~\ref{def: KSNC} with other notions of contextuality, specifically with those in the marginal and graph-theoretic approach. We thus echo, yet generalise the insight from Ref.~\cite{IshamButterfieldI,IshamDoeringI,IshamDoeringII,IshamDoeringIII,IshamDoeringIV,DoeringFrembs2019a}: \emph{(Kochen-Specker) contextuality is order structure.}

But how do we infer whether $\cO$ is Kochen-Specker contextual from $\CO$ only? Observe that the same function $g:\R\ra\R$ generally induces different context inclusions for different choices of generating observables of a context $C\in\CO$, that is, $C(g(O)) \neq C(g(O'))$ even if $C=C(O)=C(O')$. In particular, when $O,O'\in\cO$ share the same spectrum $\spec(O)=\spec(O')$, yet their spectral decomposition is distinct (related by a permutation of the generating projections), then the action of $g$ yields observables generating different subcontexts. As we will see (in Thm.~\ref{thm: CNC for OAs}) in the next section, the interplay of such relations across contexts fully characterises KS contextuality.

\subsection{Complete characterisation of Kochen-Specker contextuality}\label{sec: CNC for OAs}

Lm.~\ref{lm: embeddings vs orthomorphisms vs order-preserving maps} asserts that whether an observable algebra $\cO$ is Kochen-Specker (non)contextual is a property of its partial order of contexts $\CO$. In order to extract from $\CO$ the relevant information to solve the classical embedding problem in Def.~\ref{def: KSNC}, we generalise the tools introduced in Ref.~\cite{Frembs2024a} from quantum to maximal observable algebras.\\

\textbf{Context connection.} We first generalise Def.~1 in Ref.~\cite{Frembs2024a}.

\begin{definition}\label{def: context connection}
    Let $\cO$ be a maximal observable algebra of dimension $\dim(\One)=d$ with partial order of contexts $\CO$. A \emph{(context) connection $\fl = (l_{C'C})_{C,C'\in\COm}$ on $\CO$} is a collection of bijective maps $l_{C'C}:\mc{P}_1(C)\ra\mc{P}_1(C')$ with $l_{CC'} = l^{-1}_{C'C}$ and such that
    \begin{align}\label{eq: noncontextual context connection}
        l_{C'C}|_{\mc{P}_1(C\cap C')} = \id \quad \forall C,C' \in \COm\; .
    \end{align}
\end{definition}

Any context connection thus respects the identification of projections $\mc{P}(C \cap C')$ in subcontexts $C \cap C'$ for all $C,C'\in\COm$. Note that since $|\mc{P}_1(C)| = |\mc{P}_1(C')|=d$ by assumption, $l_{C'C}|_{\cP_1(C)}$ is a bijection (see Fig.~\ref{fig: context connection}).\\

\begin{figure}
    \centering
    \scalebox{1.3}{\begin{tikzpicture}[node distance=2.75cm, every node/.style={scale=0.7}]
    %nodes
    \node(Vp1)                        {};
    \node(Va2c)   [above= 1cm of Vp1]        {$C$};
    \node(V1bc)   [left=0.5cm of Va2c]        {};
    \node(Vp2)   [right of=Vp1]        {$C\cap C'$};
    \node(Vp3)   [right of=Vp2]        {$C' \cap C''$};
    \node(Vab3)   [above= 1cm of Vp3]        {};
    \node(V123)   [above= 1cm of Vp2]        {$C'$};
    
    \node(Vp'2)   [right of=Vp3]      {$C'' \cap C'''$};
    \node(Vp'1)   [right of=Vp'2]       {};
    \node(Va2'c)   [above=1cm of Vp'1]        {$C'''$};
    \node(V1'bc)   [right=0.5cm of Va2'c]        {};
    \node(V1'2'3)   [above= 1cm of Vp'2]      {$C''$};
    
    \node(Vone)  [below= 1cm of Vp3]  {$C_\One$};
    
    \node(3DotsUpLeft)     [left=0.3cm of V1bc]    {$\cdots$};
    \node(3DotsUpRight)     [right=0.3cm of V1'bc]    {$\cdots$};
    \node(3DotsDownLeft)   [left=0.3cm of Vp1]    {$\cdots$};
    \node(3DotsDownRight)   [right=0.3cm of Vp'1]    {$\cdots$};

    %arrows
    \draw [->,dashed](Vone) -- (Vp1);
    \draw [->](Vone) -- (Vp2);
    \draw [->](Vone) -- (Vp3);
    \draw [->](Vone) -- (Vp'2);
    \draw [->,dashed](Vone) -- (Vp'1);
    
    \draw [->,dashed](Vp1) -- (V123);
    \draw [->,dashed](Vp1) -- (V1bc);
    \draw [->](Vp2) -- (V123);
    \draw [->](Vp2) -- (Va2c);

    \draw [->](Vp3) -- (V123);
    \draw [->](Vp3) -- (V1'2'3);
    \draw [->,dashed](Vp3) -- (Vab3);
    
    \draw [->](Vp'2) -- (V1'2'3);
    \draw [->](Vp'2) -- (Va2'c);
    \draw [->,dashed](Vp'1) -- (V1'2'3);
    \draw [->,dashed](Vp'1) -- (V1'bc);

    %context connection
    \draw [->,db,thick] (Va2c) to[bend left] node[midway,below] {\large\textbf{$l_{C'C}$}} (V123);
    \draw [->,db,thick] (V123) to[bend left] node[midway,below] {\large\textbf{$l_{C''C'}$}} (V1'2'3);
    \draw [->,db,thick] (Va2c) to[bend left] node[midway,below] {\large\textbf{$l_{C''C}\ \ \ \ \ $}} (V1'2'3);
    \draw [->,db,thick] (V1'2'3) to[bend left] node[midway,below] {\large\textbf{$l_{C'''C''}$}} (Va2'c);
    \draw [->,db,thick] (Va2c) to[bend left] node[midway,above] {\large\textbf{$l_{C'''C}$}} (Va2'c);
    \draw [->,db,thick,dashed] (Va2'c) to[bend left] (Vab3);
    \draw [->,db,thick] (V123) to[bend left] node[midway,below] {\large\textbf{$\ \ \ \ \ \ \ l_{C'''C'}$}} (Va2'c);

    \draw [->,db,thick,dashed] (Vab3) to[bend left] (V123) ;
    \draw [->,db,thick,dashed] (V1'2'3) to[bend left] (Vab3) ;
    \draw [->,db,thick,dashed] (Vab3) to[bend left] (Va2c);

    \draw [-,db,thick,dotted] (V1bc) -- (Va2c) ;
    \draw [-,db,thick,dotted] (Va2'c) -- (V1'bc) ;
\end{tikzpicture}}
    \caption{Schematic of a context connection $\fl = (l_{C'C})_{C,C' \in \COm}$ (reprinted from Ref.~\cite{Frembs2024a}). Context connections preserve elements in subcontexts: $l_{C'C}|_{\mc{P}_1(C\cap C')} = \mathrm{id}$ for all maximal contexts $C,C' \in \COm$.}
    \label{fig: context connection}
\end{figure}

\textbf{Context cycles.} Below, we will consider the noncontextuality constraints imposed by Eq.~(\ref{eq: KSNC}), and encoded in the inclusion relations of $\CO$ by Lm.~\ref{lm: embeddings vs orthomorphisms vs order-preserving maps}, across multiple contexts at the same time. To this end, we generalise Def.~2 from Ref.~\cite{Frembs2024a}.

\begin{definition}\label{def: context cycle}
    A \emph{context cycle $\gamma$ in $\COm$} is a tuple $\gamma=(C_0,\cdots,C_{n-1})$ of 
    maximal contexts $C_i\in\COm$, together with subcontexts $C_{i\cap i+1}=C_i \cap C_{i+1}$ for $i\in\zz_n$.\footnote{Here, $\zz_n$ denotes the abelian group under addition modulo $n$, in particular, $(n-1)+1 = 0 \in \zz_n$.}
\end{definition}

Context cycles have been considered elsewhere in similar form, e.g. in the study of sheaf cohomology \cite{AbramskyEtAl2015,Caru2017,BeerOsborne2018,Aasnass2020}. However, our formalism based on context connections and the results we obtain differ from those in these works. In particular, our approach is algebraic in nature, and not in terms of states (empirical models defined on measurement scenarios, see App.~\ref{app: measurement scenarios vs OAs}), and the characterisation of KS contextuality in Thm.~\ref{thm: CNC for OAs} is complete, not restricted to cyclic or finitely generated models. In Sec.~\ref{sec: algebraic vs marginal approach}, we will compare the algebraic and marginal approach to contextuality in more detail.\\

\begin{figure}
    \centering
    \scalebox{1.3}{\begin{tikzpicture}[node distance=3cm, every node/.style={scale=0.75}]
    %nodes
    \node(V-2)   {$C_{n-2}$};
    \node(V-1)   [right=1cm of V-2]        {$C_{n-1}$};
    \node(V0)   [right=1.1cm of V-1]        {$C_0$};
    \node(V1)   [right=1.2cm of V0]        {$C_1$};
    \node(V2)   [right=1.2cm of V1]        {$C_2$};
    
    \node(V-21)   [below=1cm of V-2, xshift=1.1cm]   {$C_{n-2}\cap C_{n-1}$};
    \node(V-10)   [below=1cm of V-1, xshift=1.1cm]   {$C_{n-1}\cap C_0$};
    \node(V01)   [below=1cm of V0, xshift=1.1cm]   {$C_0\cap C_1$};
    \node(V12)   [below=1cm of V1, xshift=1.1cm]   {$C_1\cap C_2$};
    
    \node(3DotsUpLeft)     [left=0.3cm of V-2]    {$\cdots$};
    \node(3DotsUpRight)     [right=0.3cm of V2]    {$\cdots$};
    \node(3DotsDownLeft)   [left=0.3cm of V-21]    {$\cdots$};
    \node(3DotsDownRight)   [right=0.3cm of V12]    {$\cdots$};

    %arrows
    \draw [->,thick] (V-21) -- (V-2);
    \draw [->,thick] (V-21) -- (V-1);
    \draw [->,thick] (V-10) -- (V-1);
    \draw [->,thick] (V-10) -- (V0);
    \draw [->,thick] (V01) -- (V0);
    \draw [->,thick] (V01) -- (V1);
    \draw [->,thick] (V12) -- (V1);
    \draw [->,thick] (V12) -- (V2);

    %context connection
    \draw [->,db,thick] (V-2) to[bend left] node[midway,above] {\large$l_{C_{n-1}C_{n-2}}$} (V-1);
    \draw [->,db,thick] (V-1) to[bend left] node[midway,above] {\large$l_{C_0C_{n-1}}$} (V0);
    \draw [->,db,thick] (V0) to[bend left] node[midway,above] {\large$l_{C_1C_0}$} (V1);
    \draw [->,db,thick] (V1) to[bend left] node[midway,above] {\large$l_{C_2C_1}$} (V2);

    \node(V-2x)   [left=-0.1cm of V-2,yshift=0.2cm]   {};
    \draw [<-,db,thick] (V-2x) arc (60:90:1);
    \node(V2x)   [right=-0.15cm of V2,yshift=0.2cm]   {};
    \draw [->,db,thick] (V2x) arc (120:90:1);
\end{tikzpicture}}
    \caption{Schematic of (black) a context cycle $\gamma = (C_0,\cdots,C_{n-1})$, with $C_i \in \COm$ and $C_{(i+1)\cap i} = C_{i+1} \cap C_i$ for all $i \in \zz_n$, and (blue) elements of a context connection $\fl$ (reprinted from Ref.~\cite{Frembs2024a}).}
    \label{fig: context cycle}
\end{figure}

\textbf{Algebraic characterisation of Kochen-Specker noncontextuality.} The main result of this section is a generalisation of Thm.~2 in Ref.~\cite{Frembs2024a} from quantum to general finite-dimensional, but not necessarily maximal observable algebras (see Tab.~\ref{tab: summary}).

\begin{theorem}\label{thm: CNC for OAs}
    Let $\cO$ be a finite-dimensional observable algebra with partial order of contexts $\CO$. Then $\cO$ is Kochen-Specker noncontextual, that is, admits a classical embedding $\epsilon: \cO \ra L_\infty(\Lambda)$, if and only if there exists a context connection $\fl = (l_{C'^*C^*})_{C^*,C'^*\in\sCOm}$ on the partial order of contexts $\sCO\supset\CO$ of its maximal extension in Lm.~\ref{lm: min extension to max observable algebra}, which satisfies the \emph{triviality} constraints,
    \begin{equation}\label{eq: CNC}
        %P_\gamma(\fl) =
        \circ_{i=0}^{n-1} l_{C^*_{(i+1)}C^*_i} = \id\; ,
    \end{equation}
    for every context cycle $\gamma = (C^*_0,\cdots,C^*_{n-1})$ in $\cC_\mathrm{max}(\cO^*)$.
\end{theorem}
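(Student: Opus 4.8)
The plan is first to invoke Lemma~\ref{lm: min extension to max observable algebra}, which provides the maximal extension $\cO^*$ and reduces the claim to the statement that a \emph{maximal} observable algebra is Kochen--Specker noncontextual if and only if its partial order of maximal contexts $\sCOm$ carries a context connection satisfying the triviality constraints~(\ref{eq: CNC}); so from now on assume $\cO$ maximal and write $d=\dim(\One)$. By Lemma~\ref{lm: embeddings vs orthomorphisms vs order-preserving maps} a classical embedding is the same datum as a faithful orthomorphism $\varphi\colon\PO\ra\cB(\Lambda)$, which (all spectra being finite) we may take to be into the power set of a finite set $\Lambda$; for each maximal context $C$ the blocks $\{\varphi(p)\}_{p\in\cP_1(C)}$ then form a partition of $\Lambda$ into $d$ nonempty parts, and $\varphi(p)$ does not depend on the context in which the rank-one projection $p$ is read off.

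\textbf{Context connections versus colourings.} The next step is to trade the connection for a colouring of the rank-one projections. The triviality constraints~(\ref{eq: CNC}) say exactly that $(l_{C'C})$ is a coboundary: fixing a base context $C_0$ and setting $\psi_C:=l_{C_0C}$, the three-cycle $(C,C_0,C')$ yields $l_{C'C}=\psi_{C'}^{-1}\circ\psi_C$, and conversely any such coboundary telescopes to $\id$ around every cycle. The defining condition~(\ref{eq: noncontextual context connection}) then forces $\psi_C(p)=\psi_{C'}(p)$ for all $p\in\cP_1(C\cap C')$, so the maps $\psi_C$ glue to a single map $\Psi\colon\cP_1(\cO)\ra\cP_1(C_0)\cong\{1,\dots,d\}$ whose restriction to every maximal context is a bijection; conversely any such $\Psi$ defines a connection with the triviality property via $l_{C'C}:=(\Psi|_{\cP_1(C')})^{-1}\circ(\Psi|_{\cP_1(C)})$. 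Hence the theorem reduces to the assertion: \emph{$\cO$ is KS noncontextual if and only if there is a map $\Psi\colon\cP_1(\cO)\ra\{1,\dots,d\}$ that is bijective on the rank-one projections of each maximal context} (equivalently, a proper $d$-colouring of the orthogonality graph $G(\cO)$).

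\textbf{From a colouring to an embedding.} Given such a $\Psi$, take $\Lambda=\{1,\dots,d\}$, set $\varphi(p):=\{\Psi(p)\}$ for rank-one $p$, and extend additively by $\varphi(q):=\Psi\bigl[\{p\in\cP_1(C):p\leq q\}\bigr]$ for $q\in\cP(C)$. It is routine that $\varphi$, if well defined, is a faithful orthomorphism ($\varphi(\One)=\Lambda$, complements and orthogonal sums are respected, and any $q\neq0$ dominates a rank-one $p$ with $\Psi(p)\in\varphi(q)$), hence extends to a classical embedding by Lemma~\ref{lm: embeddings vs orthomorphisms vs order-preserving maps}. The crux is well-definedness: independence of $\varphi(q)$ from the maximal context $C\supseteq C(q)$. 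This is immediate when $q$ or $\One-q$ is rank one, and by induction on $\dim(q)$ it reduces to the case $C\cap C'=\{0,q,\One-q,\One\}$ with both $q,\One-q$ of rank $\geq2$ -- the ``fat overlap'' case, in which the purely one-dimensional information carried by a context connection must be reconciled with the higher-dimensional additive relations of $\cO$. I expect this to be the main obstacle; I plan to settle it using the explicit form of the maximal extension from Lemma~\ref{lm: min extension to max observable algebra} (which constrains how maximal contexts of $\cO^*$ can intersect) together with the residual freedom in $\Psi$, showing that a colouring can always be adjusted on such overlaps to become additively consistent.

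\textbf{From an embedding to a colouring.} Conversely, starting from a faithful orthomorphism $\varphi\colon\PO\ra\cB(\Lambda)$, any choice function $c\colon\cP_1(\cO)\ra\Lambda$ with $c(p)\in\varphi(p)$ already colours $G(\cO)$ (disjointness of $\varphi(p),\varphi(p')$ for $pp'=0$), but with $|\Lambda|$ colours; the task is to cut this down to $d$. The plan is to pass to a minimal faithful sub-orthomorphism -- restrict $\varphi$ to a smallest $\Lambda_0\subseteq\Lambda$ still meeting every block $\varphi(p)$ -- and to show that minimality forces every rank-one block $\varphi(p)\cap\Lambda_0$ to be a singleton, so $|\Lambda_0|=d$ and $c$ becomes the desired $\Psi$. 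Proving $|\Lambda_0|=d$ (i.e.\ that a minimal classical embedding of a maximal observable algebra uses exactly $\dim(\One)$ points) is the same obstacle as above, again to be handled through the overlap analysis afforded by Lemma~\ref{lm: min extension to max observable algebra}. Combining these steps with the connection--colouring correspondence of the second step completes the proof.
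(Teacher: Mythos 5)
Your reduction to the maximal case via Lemma~\ref{lm: min extension to max observable algebra} and your dictionary between flat context connections and maps $\Psi\colon\cP_1(\cO)\ra\{1,\dots,d\}$ that are bijective on each maximal context are both correct; the latter is essentially the combinatorial core of the paper's Theorem~\ref{thm: KSNC = n-colourability}, so your plan amounts to proving Theorem~\ref{thm: CNC for OAs} by proving ``KS noncontextual $\Leftrightarrow$ $d$-colourable'' directly, whereas the paper proves Theorem~\ref{thm: CNC for OAs} first (App.~\ref{app: proof of CNC for OAs}) and derives the colouring statement from it. That inversion would be legitimate if the two implications were actually established — but both are left as announced ``obstacles,'' and the constructions you sketch for them do not work as stated.

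In the direction ``colouring $\Rightarrow$ embedding,'' taking $\Lambda=\{1,\dots,d\}$ and $\varphi(p)=\{\Psi(p)\}$ cannot yield a \emph{faithful} orthomorphism whenever $\cO$ has more than $d$ rank-one projections (e.g.\ three maximal contexts overlapping only in $C_\One$ for $d=3$): all projections of the same colour are identified, so $\varphi$ is not injective and Lemma~\ref{lm: embeddings vs orthomorphisms vs order-preserving maps} does not apply. Your claim that faithfulness is ``routine'' conflates KS noncontextuality with the strictly weaker existence of a classical state — exactly the distinction drawn in Def.~\ref{def: existence of classical correlations} versus Def.~\ref{def: KSNC} and Lemma~\ref{lm: solutions to marginal problem do not imply KSNC}. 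The paper avoids this by taking $\Lambda$ to be built from \emph{all} flat connections together with all valuations of a base context (Eq.~(\ref{eq: state space from context connecion})), which is what separates distinct projections; the well-definedness on coarser projections that worries you is also genuinely delicate for general maximal observable algebras, since the quantum trick of completing two decompositions of $q$ by a common decomposition of $\One-q$ presupposes mixed contexts that need not exist outside $\LHsa$. In the direction ``embedding $\Rightarrow$ colouring,'' the minimal-transversal argument is circular: an inclusion-minimal $\Lambda_0$ meeting every block $\varphi(p)$ corresponds to an inclusion-minimal cover of $\cP_1(\cO)$ by valuations, and such a cover has cardinality $d$ with singleton blocks precisely when $d$ pairwise disjoint covering valuations — i.e.\ a $d$-colouring — exist, which is the statement to be proven; inclusion-minimality alone gives only $|\Lambda_0|\geq d$. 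The paper's forward direction instead extracts the connection directly from the partitions $\Lambda=\dot\cup_\sigma\Lambda^O_\sigma$ induced by the embedding and the invertible maps $\phi_{O'O}$, using that composition around a cycle must preserve the non-degenerate generating function $f_0$. As it stands, the proposal establishes only the (true, but comparatively easy) connection--colouring correspondence and leaves the substantive content of the theorem unproven.
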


A context connections satisfying the triviality constraints in Eq.~(\ref{eq: CNC}) will be called \emph{flat}.

\begin{proof}
    (Sketch) The proof of Thm.~2 in Ref.~\cite{Frembs2024a} only uses that maximal commutative algebras have the same cardinality $|C|=|\cP_1(C)|=d$ for all $C\in\cC_\mathrm{max}(\cO)$, where for quantum observable algebras $\cO=\LHsa$ the dimension $d=\dim(\cH)$ is determined by the dimension of the Hilbert space $\cH$. The result thus carries over easily to maximal observable algebras. While Thm.~2 in \cite{Frembs2024a} established that the constraints in Eq.~(\ref{eq: CNC}) are necessary for KS noncontextuality, here we also show that they are sufficient. We provide the details of the proof for the case of maximal observable algebras in App.~\ref{app: proof of CNC for OAs}.

    Moreover, by Lm.~\ref{lm: min extension to max observable algebra} in App.~\ref{app: non-maximal OAs}, every finite-dimensional observable algebra $\cO$ can be extended to a maximal observable algebra $\cO^*$ and, by Lm.~\ref{lm: KSNC inherits from maximality}, $\cO$ is Kochen-Specker noncontextual if and only if its maximal extension $\cO^*$ is.
\end{proof}

Numerous proofs of the Kochen-Specker theorem construct so-called Kochen-Specker sets (see Def.~\ref{def: KS-colouring}) \cite{KochenSpecker1967,Mermin1990,Peres1991,ZimbaPenrose1993,Kernaghan1994,KernaghanPeres1995,CabelloEstebanzGarcia-Alcaine1997}, many of which are easily seen to directly exploit the cyclic nature of the constraints in Eq.~(\ref{eq: CNC}). Moreover, ``cyclic contextuality scenarios" have been studied elsewhere \cite{AraujoEtAl2013,Caru2017,BeerOsborne2018,DzhafarovKujalaCervantes2020}, in particular, Ref.~\cite{AraujoEtAl2013,XuCabello2019} characterise cyclic contextuality scenarios in the marginal approach to contextuality. However, as we will see in Sec.~\ref{sec: algebraic vs marginal approach}, the relevant notion of contextuality differs from Kochen-Specker noncontextuality in Def.~\ref{def: KSNC}. Indeed, somewhat surprisingly, we will show that cyclic scenarios are never Kochen-Specker noncontextual in Sec.~\ref{sec: single cycles}.

Yet, before turning to a detailed comparison between the algebraic and marginal approach in Sec.~\ref{sec: algebraic vs marginal approach}, we first cast Thm.~\ref{thm: CNC for OAs} as a colouring problem, which will allow us to contrast it with the notion of Kochen-Specker colourings which feature in traditional proofs of the Kochen-Specker theorem. Moreover, this reformulation will serve as a preparation for our characterisation of state-independent contextuality graphs in Sec.~\ref{sec: algebraic vs graph-theoretic approach}.

\subsection{Kochen-Specker colouring vs d-colouring}\label{sec: KS-colouring vs colouring}

Following the original argument by Kochen and Specker \cite{KochenSpecker1967}, many proofs of the Kochen-Specker (BKS) theorem\footnote{As in Ref.~\cite{Frembs2024a}, by the ``Kochen-Specker theorem" for an observable algebra $\cO$ we mean the assertion that $\cO$ does not admit a classical embedding, hence, a proof that $\cO$ is Kochen-Specker (KS) contextual. We recall that Ref.~\cite{Bell1966} first proved that finite-dimensional quantum systems of dimension at least three are KS contextual, while Ref.~\cite{KochenSpecker1967} inferred this result by constructing a finite KS set, equivalently a finitely generated observable subalgebra $\cO\subset\LHsa$ with $\dim(\cH)=3$.\label{fn: "KS-theorem"}} show the nonexistence of \emph{valuations} $v:\cO\ra\R$ such that (i) $v(O)\in\spec(O)$ for all $O\in\cO$ (``spectrum rule") and (ii) $v(g(O))=g(v(O))$ for all functions $g:\R\ra\R$ (``functional composition (FUNC) principle''). It is easy to see that a valuation defines a \emph{Kochen-Specker (KS)} or \emph{$2$-colouring} by restriction.

\begin{definition}\label{def: KS-colouring}
    Let $\cO$ be a maximal observable algebra of dimension $\dim(\One)=d$ with event algebra $\PO$ and partial order of contexts $\CO$. A \emph{Kochen-Specker (KS-)colouring $\zeta_\mathrm{KS}$ of $\cO$} is a map $\zeta_\mathrm{KS}:\POone\ra\{0,1\}$ such that for every maximal context $C\in\COm$ there exists exactly one one-dimensional projection $p\in\cP_1(C)$ with $\zeta_\mathrm{KS}(p) = 1$.
\end{definition}

Conversely, by linear extension any Kochen-Specker colouring induces a map $v:\cO\ra\R$ satisfying (i) and (ii), hence, defines a valuation. Any subset of projections not admitting a valuation, equivalently a KS colouring, is called a \emph{Kochen-Specker (KS) set}.

However, the existence of valuations is only a necessary but not sufficient condition for KS noncontextuality. This was emphasised in the marginal approach to contextuality \cite{BudroniEtAl2022} (see also Ref.~\cite{Frembs2024a} and Sec.~\ref{sec: algebraic vs marginal approach} below), where it was shown that there exist noncontextuality inequalities that are violated by every quantum state, hence, yield a proof of the KS theorem, yet for which the involved observables, equivalently their generating one-dimensional projections $\cP\subset\PHone$, do not constitute a KS set \cite{YuOh2012}. In Ref.~\cite{Frembs2024a}, we showed that this is not in contradiction with the notion of KS contextuality, but that the proof of the KS theorem in Ref.~\cite{YuOh2012} also follows from Eq.~(\ref{eq: KSNC}). Motivated by this result, and fostered by Thm.~\ref{thm: CNC for OAs}, we now lift this argument and cast the full set of KS noncontextuality constraints in Eq.~(\ref{eq: KSNC}) as a colouring problem.

\begin{definition}\label{def: n-colouring}
    Let $\cO$ be a maximal observable algebra of dimension $\dim(\One)=d$, event algebra $\PO$ and partial order of contexts $\CO$. A \emph{$d$-colouring of $\cO$} is a map $\zeta: \mc{P}_1(\cO) \ra [d]$ such that for every maximal context $C\in\COm$ there exists exactly one one-dimensional projection $p\in\cP_1(C)$ with $\zeta(p)=k$ for every $k\in[d]=\{0,\cdots,d-1\}$.
\end{definition}

We say that $\cO$ is $d$-colourable if it admits a $d$-colouring. Clearly, a $d$-colouring (for $d\geq 2$) induces a KS-colouring (hence, a valuation) by identifying any one colour with $1$ and the remaining $d-1$ colours with $0$. Yet, a KS-colouring can generally not be extended to a $d$-colouring. For an explicit example, see Ref.~\cite{YuOh2012,Frembs2024a}.

For ease of comparison with the graph-theoretic approach in Sec.~\ref{sec: algebraic vs graph-theoretic approach}, the following definition and terminology will be useful.

\begin{definition}[associated orthogonality graph]\label{def: associated orthogonality graph}
    Let $\cO$ be an observable algebra with event algebra $\PO$. Then its \emph{associated orthogonality graph $\GO=(V,E)$} is defined by $V=\POm$, where $(p,q)\in E$ if and only if $p\neq q$ and $p,q\in\PC$ for some context $C\subset\cO$.
\end{definition}

It is easy to see that the existence of a $d$-colouring of a maximal observable algebra $\cO$ is equivalent to $\chi(G(\cO))=d$.

\begin{theorem}\label{thm: KSNC = n-colourability}
    Let $\cO$ be a maximal observable algebra with $\dim(\One)=d$, event algebra $\PO$ and partial order of contexts $\CO$. Then $\cO$ is Kochen-Specker noncontextual if and only if $\cO$ admits a $d$-colouring, equivalently $\chi(G(\cO))=d$.
\end{theorem}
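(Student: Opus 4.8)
The plan is to establish the equivalence in two stages. First I would observe that the claimed chromatic-number reformulation is essentially a restatement of the $d$-colouring condition: by Def.~\ref{def: associated orthogonality graph}, edges of $G(\cO)$ connect exactly those distinct minimal projections that lie in a common context, and since $\cO$ is maximal each maximal context $C\in\COm$ carries a resolution of the identity $\One=\sum_{i=1}^d p^C_i$ into $d$ pairwise orthogonal one-dimensional projections, so the vertices $\cP_1(C)$ form a $d$-clique in $G(\cO)$. A proper vertex colouring of $G(\cO)$ with $d$ colours must therefore assign all $d$ colours, one each, within every such clique; conversely any $d$-colouring in the sense of Def.~\ref{def: n-colouring} gives a proper colouring of $G(\cO)$ because adjacent vertices lie in a common context and hence receive distinct colours. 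Since every vertex sits in at least one maximal clique of size $d$, we always have $\chi(G(\cO))\geq d$, and $\chi(G(\cO))=d$ precisely when a $d$-colouring exists. This handles the ``equivalently'' clause and reduces the theorem to: $\cO$ is KS noncontextual $\iff$ $\cO$ admits a $d$-colouring.

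For the substantive equivalence I would route through Thm.~\ref{thm: CNC for OAs}. The backward direction is the easier one: given a $d$-colouring $\zeta:\cP_1(\cO)\ra[d]$, for each maximal context $C$ the colouring induces a bijection $\zeta_C:\cP_1(C)\ra[d]$, and one defines a context connection by $l_{C'C}:=\zeta_{C'}^{-1}\circ\zeta_C:\cP_1(C)\ra\cP_1(C')$. Condition~(\ref{eq: noncontextual context connection}) holds because $\zeta$ is a single global function, so on $\cP_1(C\cap C')$ the maps $\zeta_C$ and $\zeta_{C'}$ agree, and the composite $l_{C'C}$ restricts to the identity there; moreover this connection is manifestly flat, $\circ_{i=0}^{n-1} l_{C^*_{(i+1)}C^*_i}=\circ_i \zeta_{C_{i+1}}^{-1}\zeta_{C_i}=\id$ around any cycle by telescoping. (One should first pass to a maximal extension $\cO^*$ as in Lm.~\ref{lm: min extension to max observable algebra}, noting that a $d$-colouring of $\cO$ extends to one of $\cO^*$, or equivalently that $G(\cO^*)$ has the same chromatic number; I'd treat $\cO$ maximal without loss by the reduction already used in Thm.~\ref{thm: CNC for OAs}.) Hence a flat context connection exists and $\cO$ is KS noncontextual.

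For the forward direction, suppose $\cO$ (maximal) is KS noncontextual, so by Thm.~\ref{thm: CNC for OAs} there is a flat context connection $\fl=(l_{C'C})$. Fix a base maximal context $C_0\in\COm$ and fix a bijection $\zeta_{C_0}:\cP_1(C_0)\ra[d]$ arbitrarily. For any other $C\in\COm$, choose a path of maximal contexts from $C_0$ to $C$ in the graph whose edges are pairs sharing a nonminimal intersection (if $\CO$ is disconnected in this sense, work on each connected component independently, picking a base context in each) and define $\zeta_C:=\zeta_{C_0}\circ l_{C_0 C_1}\circ l_{C_1 C_2}\circ\cdots\circ l_{C_{k}C}$; flatness of $\fl$ along cycles is exactly what guarantees this is path-independent, hence well-defined. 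The compatibility condition~(\ref{eq: noncontextual context connection}) on $\fl$ then ensures that if a one-dimensional projection $p$ lies in both $C$ and $C'$ then $\zeta_C(p)=\zeta_{C'}(p)$, so the $\zeta_C$ glue to a single well-defined map $\zeta:\cP_1(\cO)\ra[d]$, which by construction is bijective on each $\cP_1(C)$ and therefore a $d$-colouring.

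\textbf{Main obstacle.} The delicate point is the well-definedness of $\zeta$ in the forward direction — both the path-independence (which must be derived from flatness along \emph{all} context cycles, not merely around ``small'' ones, and which requires care when $\CO$ has nontrivial connectivity, e.g. components linked only through contexts with trivial pairwise intersections) and the consistent gluing of the local bijections $\zeta_C$ along shared subcontexts of dimension possibly greater than one. Making the path-independence argument precise amounts to showing that the $2$-cocycle condition implicit in Eq.~(\ref{eq: CNC}) trivialises the relevant torsor of colourings; I expect the cleanest route is to argue by induction on chain length in $\CO$, exactly as in the construction of the orthomorphism in the proof of Lm.~\ref{lm: embeddings vs orthomorphisms vs order-preserving maps}, rather than invoking cohomology explicitly. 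The reduction from general finite-dimensional to maximal $\cO$ is routine given Lm.~\ref{lm: KSNC inherits from maximality} and the observation that maximal extension does not change the chromatic number of the associated orthogonality graph.
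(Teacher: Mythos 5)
Your proposal is correct and follows essentially the same route as the paper: both directions go through Thm.~\ref{thm: CNC for OAs}, building the flat connection from a colouring via $l_{C'C}=\zeta_{C'}^{-1}\circ\zeta_C$ and recovering the colouring by transporting a base colouring $\zeta_{C_0}$ along the connection. The only place you over-complicate matters is the converse direction: since a context connection (Def.~\ref{def: context connection}) already supplies a bijection $l_{C_0C}$ for \emph{every} pair of maximal contexts, not merely nontrivially intersecting ones, the paper simply sets $\zeta_C:=\zeta_{C_0}\circ l_{C_0C}$ and verifies pairwise compatibility from the triviality constraint on the $3$-cycle $(C_0,C,C')$ together with $l_{C'C}|_{\cP_1(C\cap C')}=\id$, so the path-independence and connectivity worries raised in your ``main obstacle'' paragraph never arise.
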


\begin{proof}
    Assume first that $\cO$ admits a $d$-colouring $\zeta$, and denote by $\zeta_C: \cP_1(C) \ra [d]$ the respective colouring in maximal contexts $C\in\COm$. From these we define  bijections $l_{C'C}: \cP_1(C) \ra \cP_1(C')$ by $\zeta_{C'} \circ l_{C'C} = \zeta_C$, that is, by identifying those (one-dimensional) projections in $C$ and $C'$ that are assigned the same colour by $\zeta$. Lifting the maps $l_{C'C}$ to orthomorphisms defines a context connection on $\cO$: the condition $l_{C'C}|_{\cP(C\cap C')} = \id$ holds since $\zeta$ is a $d$-colouring which implies $\zeta_C|_{\cP_1(C\cap C')} = \zeta_{C'}|_{\cP_1(C\cap C')}$. Moreover, since $\zeta=(\zeta_C)_{C\in\COm}$ is a $d$-colouring, $\fl=(l_{C'C})_{C,C'\in\COm}$ clearly satisfies the triviality constraints in Eq.~(\ref{eq: CNC}), hence, $\cO$ is KS noncontextual by Thm.~\ref{thm: CNC for OAs}.

    Conversely, assume that there exists a flat context connection $\fl = (l_{C'C})_{C,C'\in\COm}$ (that is, $
    \fl$ satisfies the triviality constraints in Eq.~(\ref{eq: CNC})). Pick any maximal context $C_0\in\COm$ and any $d$-colouring $\zeta_{C_0}:\cP_1(C_0)\ra[d]$. For every other context $C_0 \neq C\in\COm$, fix a $d$-colouring $\zeta^{C_0}_C:\cP_1(C)\ra[d]$ by $\zeta^{C_0}_C \circ l_{CC_0} = \zeta_{C_0}$. Clearly, $\zeta^{C_0}_C$ is compatible with $\zeta_{C_0}$, that is, $\zeta^{C_0}_C|_{\cP_1(C\cap C_0)} = \zeta_{C_0}|_{\cP_1(C \cap C_0)}$ since $l_{CC_0}|_{\cP_1(C\cap C_0)} = \id$ (by definition of a context connection). Now, $(\zeta^{C_0}_C)_{C\in\COm}$ defines a $d$-colouring if the colourings $\zeta^{C_0}_C$ and $\zeta^{C_0}_{C'}$ are compatible for all $C,C'\in\COm$. But this follows since
    \begin{align*}
        \zeta^{C_0}_{C'} \circ l_{C'C_0}
        = \zeta_{C_0}
        = \zeta^{C_0}_C \circ l_{CC_0}
        = \zeta^{C_0}_C \circ (l_{C_0C'} \circ l_{C'C})^{-1}
        = \zeta^{C_0}_C \circ l_{CC'} \circ l_{C'C_0}\; ,
    \end{align*}
    where we used $l_{C_0C'} \circ l_{C'C} \circ l_{CC_0} = \id$ from Eq.~(\ref{eq: CNC}) for the context cycle $(C_0,C,C')$ and $l^{-1}_{CC'}=l_{C'C}$. Indeed, multiplication by $l^{-1}_{C'C_0}$ from the right yields $\zeta^{C_0}_{C'} = \zeta^{C_0}_C \circ l_{CC'}$ for every $C,C'\in\COm$ which implies compatibility since $l_{C'C}|_{\cP_1(C\cap C')}=\id$.
\end{proof}

Thm.~\ref{thm: KSNC = n-colourability} makes explicit that the restriction to valuations (equivalently, KS-colourings) in traditional proofs of the KS theorem is a severe restriction on the constraints imposed by Kochen-Specker noncontextuality (see Tab.~\ref{tab: colourings for KS}).

Finally, we generalise Thm.~\ref{thm: KSNC = n-colourability} to non-maximal observable algebras.

\begin{theorem}\label{thm: n-colourability for non-maximal OAs}
    Let $\cO$ be an observable algebra with maximal extension $\cO^*$ for $\dim(\One)=d$. Then $\cO$ is Kochen-Specker noncontextual if and only if $\cO^*$ admits a $d$-colouring, equivalently $\chi(G(\cO^*))=d$.
\end{theorem}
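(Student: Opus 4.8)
The plan is to reduce Theorem~\ref{thm: n-colourability for non-maximal OAs} to the already-established results for maximal observable algebras, namely Theorem~\ref{thm: CNC for OAs} and Theorem~\ref{thm: KSNC = n-colourability}. The key observation is that both of these, taken together, already give a chain of equivalences for the \emph{maximal} observable algebra $\cO^*$: by Lm.~\ref{lm: min extension to max observable algebra} we may form $\cO^*$, it is maximal of dimension $\dim(\One)=d$, and so Theorem~\ref{thm: KSNC = n-colourability} applies to it directly, yielding that $\cO^*$ is Kochen-Specker noncontextual if and only if $\cO^*$ admits a $d$-colouring, equivalently $\chi(G(\cO^*))=d$. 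Thus the only thing left to prove is that $\cO$ is KS noncontextual if and only if $\cO^*$ is.

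First I would invoke the sketch already given in the proof of Theorem~\ref{thm: CNC for OAs}: there it is asserted (via Lm.~\ref{lm: KSNC inherits from maximality}, resp.\ Lm.~\ref{lm: min extension to max observable algebra} in App.~\ref{app: non-maximal OAs}) that every finite-dimensional observable algebra $\cO$ can be extended to a maximal observable algebra $\cO^*$, and that $\cO$ is KS noncontextual if and only if $\cO^*$ is. So the forward direction of the present theorem is immediate: if $\cO$ is KS noncontextual then $\cO^*$ is KS noncontextual, hence by Theorem~\ref{thm: KSNC = n-colourability} applied to the maximal algebra $\cO^*$, the latter admits a $d$-colouring and $\chi(G(\cO^*))=d$. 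Conversely, if $\cO^*$ admits a $d$-colouring then $\chi(G(\cO^*))=d$ and, again by Theorem~\ref{thm: KSNC = n-colourability}, $\cO^*$ is KS noncontextual; since KS noncontextuality of $\cO^*$ is equivalent to that of $\cO$, we conclude $\cO$ is KS noncontextual. This closes the loop.

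The substantive content — and the step I expect to be the real obstacle — is precisely the two lemmas deferred to the appendix: (a) constructing the maximal extension $\cO^*\supset\cO$ in a way that is canonical enough to be useful (one must add new one-dimensional projections and resolve each maximal context of $\cO$ into exactly $d$ minimal projections, while respecting the dimension function and not introducing spurious compatibility relations), and (b) showing that a classical embedding $\epsilon:\cO\ra L_\infty(\Lambda)$ extends to one of $\cO^*$ and, conversely, that restricting a classical embedding of $\cO^*$ to $\cO$ stays faithful. Direction (b) in the ``extend'' sense is the delicate one: a classical embedding must assign to each newly introduced minimal projection $q\in\cP_1(\cO^*)\setminus\cP_1(\cO)$ a characteristic function on $\Lambda$ consistent with all contexts it lies in; one builds this by refining $\Lambda$ — e.g.\ passing to a finite product or fibered refinement — so that every maximal context of $\cO^*$ has a full system of $d$ disjoint events. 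The ``restrict'' direction is easy since faithfulness and preservation of functional relations are inherited by subalgebras. Since these lemmas are stated as available earlier (in App.~\ref{app: non-maximal OAs}), the proof in the main text is then just the two-line deduction above; I would write it exactly that way, citing Lm.~\ref{lm: min extension to max observable algebra}, Lm.~\ref{lm: KSNC inherits from maximality}, and Theorem~\ref{thm: KSNC = n-colourability}.

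\begin{proof}
    By Lm.~\ref{lm: min extension to max observable algebra} in App.~\ref{app: non-maximal OAs}, $\cO$ admits a maximal extension $\cO^*$, which is a maximal observable algebra of dimension $\dim(\One)=d$. By Lm.~\ref{lm: KSNC inherits from maximality}, $\cO$ is Kochen-Specker noncontextual if and only if $\cO^*$ is. Since $\cO^*$ is maximal, Thm.~\ref{thm: KSNC = n-colourability} applies to it and gives that $\cO^*$ is Kochen-Specker noncontextual if and only if $\cO^*$ admits a $d$-colouring, equivalently $\chi(G(\cO^*))=d$. Combining the two equivalences yields the claim.
\end{proof}
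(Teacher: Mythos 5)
Your proof is correct and follows exactly the paper's own route: reduce to the maximal extension via Lm.~\ref{lm: KSNC inherits from maximality} and then apply Thm.~\ref{thm: KSNC = n-colourability} to $\cO^*$. The extra commentary on where the real work lies (the appendix lemmas on constructing $\cO^*$ and transferring classical embeddings) accurately identifies the deferred content, but the main-text deduction you give matches the paper's two-line proof.
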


\begin{proof}
    Since $\cO$ is KS contextual if and only if $\cO^*$ is KS contextual for any (hence, every) maximal extension by Lm.~\ref{lm: KSNC inherits from maximality}, the result follows immediately from Thm.~\ref{thm: KSNC = n-colourability}.
\end{proof}

Clearly, generally $\chi(G(\cO))<\chi(G(\cO^*))$ since $G(\cO)$ arises from $G(\cO^*)$ by subsequently deleting the vertices corresponding to minimal projections in $\cO^*$, and in each step the chromatic number either remains the same or is reduced by one. Nevertheless, in Sec.~\ref{sec: SI-C sets and graphs} we will see that it is sufficient to consider the chromatic number of subgraphs $G\subset G(\cO^*)$ whenever $\overline{\pi(V)}=\cO^*$ for $\pi:G\ra\cP_1(\cO^*)$ is a faithful completion (see Lm.~\ref{lm: restricted chromatic extension}).\\

In summary, the chromatic number of the orthogonality graph associated with any maximal extension of an observable algebra is a complete invariant of Kochen-Specker contextuality (see Tab.~\ref{tab: summary}).

\begin{table}
    \centering
    \begin{tabular}{c|ccc}
        \toprule[0.05cm]\\[-0.3cm]
        observable algebra $\cO$ & \multicolumn{3}{c}{(partial) invariants of Kochen-Specker contextuality} \\[0.1cm]
        \midrule\\[-0.3cm]
        \multirow{3}{4cm}{\centering maximal} & $\nexists \chi_\mathrm{KS}\ \ (\Leftrightarrow \cI_\mathrm{Val}(G(\cO))=\emptyset$) & $\stackrel{\mathrm{Ref.}~\cite{KochenSpecker1967}}{\Longrightarrow:}$ & $\cO$ KS contextual \\
        & $\Downarrow$ $\not\Uparrow$ & & \\
        & $\chi(G(\cO))>\dim(\One)$ & $\stackrel{\mathrm{Thm.}~\ref{thm: KSNC = n-colourability}}{\Longleftrightarrow}$ & $\cO$ KS contextual \\[0.2cm]
        $\cap$ & & $\Uparrow$ & \\[.2cm]
        finite-dimensional & $\chi(G(\cO^*))>\dim(\One)$ & $\stackrel{\mathrm{Thm.~\ref{thm: n-colourability for non-maximal OAs}}}{\Longleftrightarrow}$ & $\cO$ KS contextual \\[0.1cm]
        \bottomrule[0.05cm]
    \end{tabular}
    \caption{As first argued by Kochen and Specker, the existence of valuations is a necessary condition for KS noncontextuality \cite{KochenSpecker1967}. In turn, traditional proofs of KS contextuality show the nonexistence of valuations, equivalently Kochen-Specker colourings (see Def.~\ref{def: KS-colouring}) \cite{KochenSpecker1967,Mermin1990,Peres1991,ZimbaPenrose1993,Kernaghan1994,KernaghanPeres1995,CabelloEstebanzGarcia-Alcaine1997}. However, constructing KS sets is not necessary to prove KS contextuality \cite{YuOh2012}. By Thm.~\ref{thm: KSNC = n-colourability}, a necessary and sufficient condition for KS contextuality for the case of maximal observable algebras is the existence of a $d$-colouring. This generalises Thm.~4 in Ref.~\cite{Frembs2024a} which considers the case of maximal quantum observable algebras of dimension three. What is more, the characterisation extends to non-maximal observable algebras with respect to their maximal extensions by Thm.~\ref{thm: n-colourability for non-maximal OAs}.}
    \label{tab: colourings for KS}
\end{table}

\clearpage

\section{Algebraic vs marginal approach}\label{sec: algebraic vs marginal approach}

In this section, we relate the characterisation of Kochen-Specker contextual observable algebras in Thm.~\ref{thm: CNC for OAs} to some of the more commonly used formalisms for studying contextuality in the literature. Our focus will be on the marginal approach to contextuality \cite{BudroniEtAl2022}, which is based on the study of noncontextuality inequalities \cite{Pitowsky1991,Larsson2002,KlyachkoEtAl2008,Cabello2008b,BadziagBengtssonCabelloPitowsky2009,KleinmannEtAl2012,YuOh2012,XuCabello2019} and has been the dominating approach in recent years; it comprises the sheaf-theoretic \cite{AbramskyBrandenburger2011,AbramskyEtAl2015,AbramskyEtAl2017} and (hyper)graph-theoretic \cite{ChavesFritz2012,FritzChaves2013,CabelloSeveriniWinter2014,AcinFritzLeverrierSainz2015,AmaralCunha2018} variants. In App.~\ref{app: unified perspective}, we provide a comprehensive formal comparison between the respective frameworks, which permits us to cast the basic ingredients needed for the present exposition in the language of observable algebras defined in Sec.~\ref{sec: basics of marginal approach}. Based on this, our first objective is to precisely disambiguate the two notions of contextuality employed in the algebraic and marginal setting, respectively. In particular, we will translate the notion of KS contextuality into the marginal setting in Sec.~\ref{sec: marginal approach to KSC}.

Our second objective is to export the insights gained from Thm.~\ref{thm: CNC for OAs}, and to demonstrate their utility by explicitly showing how they allow us to easily recover and go beyond a number of existing key results in the literature. Specifically, we will prove that for acyclic observable algebras the constraints in Thm.~\ref{thm: CNC for OAs} are always satisfiable. In this way, we will obtain a new explanation for the existence of noncontextual models such as those devised in Ref.~\cite{Meyer1999,Kent1999,CliftonKent2000}. Moreover, in Sec.~\ref{sec: (a)cycles} we will relate this result with Vorob'ev's theorem (Thm.~\ref{thm: Vorob'ev theorem}), which states that acyclicity fully characterises classicality in the marginal approach. In contrast, we highlight that KS noncontextuality does not imply acyclicity, which we demonstrate by studying the general $n$-cycle contextuality scenario \cite{AraujoEtAl2013}, as well as an explicit example in the form of the well-known CHSH scenario \cite{CHSH1969}.

Nevertheless, we will obtain a full characterisation of KS contextuality in terms of graph-colourings in Sec.~\ref{sec: algebraic vs graph-theoretic approach}, which may be seen as complementary to this section. Together, these results provide a unified perspective between the algebraic and marginal approach to Kochen-Specker contextuality. They also highlight the fruitfulness of the concept of \emph{context connection}, and thus of the framework laid out here and in Ref.~\cite{Frembs2024a}.\\

\textbf{Finitely generated observable algebras and minimal generating sets.} The main motivation for the generalisation from quantum to more general observable algebras is that in experimental setups one is usually concerned with (finite-size) subsets of measurements only.\footnote{Importantly, this comment applies to various architectures in quantum computation and quantum error correction which utilise finite gate sets and measurements.} In other words, we will mainly be interested in observable subalgebras $\cO\subset\LHsa$ in quantum theory. The various formalisms in the marginal approach usually restrict to finite measurement scenarios by default (see App.~\ref{app: measurement scenarios vs OAs} for details). In order to compare the characterisation of Kochen-Specker noncontextualy in Thm.~\ref{thm: CNC for OAs} to these, we will henceforth also restrict to the ``finiteness'' of experimental setups.\footnote{Most proofs of the Kochen-Specker theorem \cite{KochenSpecker1967} (and related formalisations of contextuality) restrict to finitely many observables or events. Nevertheless, proofs involving infinitely many observables do exist \cite{Bell1966,Doering2004}, and are based on continuity arguments following Gleason's theorem \cite{Gleason1975}. We leave the interesting problem of adding topology to the study of Kochen-Specker contextuality for future work.}

Of course, already a single context, that is, a single commutative algebra $C\subset\cO$ contains infinitely many observables. However, the observables in a context $C$ are not independent, but are functionally related. More precisely, recall from Eq.~(\ref{eq: generating observables}) that every context is generated by a single (although generally not unique) observable with spectral decomposition $O = \sum_{i=1}^{|C|} \sigma_ip_i$ for $\sigma_i\in\spec(O)$, $p_i\in\PC$ and $|C|=|\cP_\mathrm{min}(C)|=|\spec(O)|$.

Recall that we interpret the relations between a generating observable $O$ of $C$, and other observables $\tO=g(O)\in C$ for $g:\R\ra\R$ as part of the side-processing of an experimental arrangement, e.g. a quantum computer. In this view, an experimental setup formally corresponds with infinitely many observables, yet with finitely many generating observables. In other words, experimental setups correspond with observable algebras whose partial order of contexts is finite.

\begin{definition}\label{def: finitely generated OAs}
    An observable algebra $\cO$ is called \emph{finitely generated (fin. gen.)} if the set of commutative algebras in $\CO$ is finite, equivalently if $\COm$ is a finite set.

    Similarly, $\cO$ is called countably infinitely generated if $\COm$ is a countable set.
\end{definition}

At first sight, by removing part of the algebraic structure in commutative algebras $C\in\CO$, measurement scenarios in the sheaf-theoretic approach (see Def.~\ref{def: measurement scenario}) and orthogonality graphs in the graph-theoretic approach (see Def.~\ref{def: orthogonality graph}) achieve a significant reduction in the representation of ``contextuality scenarios" \cite{FritzChaves2013}. Yet, as it turns out this seeming reduction generally results in a redundant representation from the perspective of Kochen-Specker contextuality in the first case, and leads to different notions of contextuality in the second (see Sec.~\ref{sec: algebraic vs graph-theoretic approach}). In a nutshell, the measurement labels in measurement scenarios are generally not minimal as generating sets for $\cO$ (see App.~\ref{app: measurement scenarios vs OAs}), while orthogonality graphs lack a notion of identity element which results in states on orthogonality graphs being incompatible with subsystem embedding (see App.~\ref{app: orthogonality graphs vs OAs}).\footnote{To a lesser extent this is also true for the hypergraph approach \cite{ChavesFritz2012,AcinFritzLeverrierSainz2015}. However, for single systems hypergraphs are equivalent to event algebras, and thus equivalent to $\cO$ by Lm.~\ref{lm: embeddings vs orthomorphisms vs order-preserving maps}.} In any case, up to a choice of generating set the correspondence between observable algebras and measurement scenarios is one-to-one; while the correspondence between observable algebras and orthogonality graphs is one-to-one up to adding an identity element. For the precise statements, we refer to App.~\ref{app: unified perspective}.\footnote{We reiterate that in this work we only consider the case of ideal observables.} Without loss of generality, and emphasising this unifying trait of our formalism, we will therefore keep the subsequent presentation in terms of finitely generated (and finite-dimensional) observable algebras.

\subsection{Basic elements of the marginal approach}\label{sec: basics of marginal approach}
 
The key tenet of the marginal approach is to study contextuality in terms of states.

\begin{definition}\label{def: state}
    Let $\cO$ be an observable algebra with partial order of contexts $\CO$. A \emph{state $\gamma$ on $\cO$} is a collection of probability distributions $\gamma = (\gamma_C)_{C\in\CO}$ such that
    \begin{align}\label{eq: no-disturbance}
        \gamma_\tC = \gamma_C|_\tC \quad\quad \forall \tC,C\in\CO \mathrm{\ with\ } \tC\subset C\; ,
    \end{align}
    where $\gamma_C|_\tC$ denotes the marginalised distribution with respect to $\tC$. The set of all states on $\cO$ is denoted by $\Gamma[\CO]$.\footnote{Equivalently, $\Gamma[\CO]$ is the set of global sections of the probabilistic presheaf over $\CO$. Here, we will use an abbreviated version of the notation in Ref.~\cite{DoeringFrembs2019a,FrembsDoering2023}, leaving the choice of (probabilistic) presheaf over $\CO$ implicit. More generally, the topos- and sheaf-theoretic approaches to contextuality study the existence of global sections of other presheaves \cite{IshamButterfieldI,Doering2008,DoeringFrembs2019a}. Amongst other things, this results in a fine-grained notion and corresponding hierarchy of contextuality as shown in Ref.~\cite{AbramskyBrandenburger2011}.}
\end{definition}

The partial order of contexts $\CO$ thus encodes marginalisation constraints on states. The terminology in Def.~\ref{def: state} is standard in the topos approach to quantum theory \cite{IshamDoeringI,IshamDoeringII,IshamDoeringIII,IshamDoeringIV,AbramskyBrandenburger2011,DoeringFrembs2019a}, the hypergraph approach \cite{ChavesFritz2012,AcinFritzLeverrierSainz2015,AmaralCunha2018} as well as in the literature on quantum logic which studies properties of orthomodular lattices \cite{BirkhoffVonNeumann1936,Putnam1979,Kalmbach1983}.\footnote{A partial Boolean algebra $\cP$ is an orthomodular lattice \cite{Kalmbach1983} if and only if it is \emph{transitive}, that is, if and only if for all $a,b,c\in\cP$, $a<b$ and $b<c$ implies $a<c$ \cite{Gudder1972}.} In contrast, when contextuality is viewed as a generalisation of nonlocality, a state is often referred to as a \emph{correlation} \cite{PopescuRohrlich1994,Popescu2014}; in the sheaf-theoretic approach, a state is also called an \emph{empirical model} \cite{AbramskyBrandenburger2011}. Similarly, the noncontextuality condition in Eq.~(\ref{eq: no-disturbance}) is referred to by various names, including \emph{no-disturbance} \cite{RamanathanEtAl2012,FrembsDoering2022a,Frembs2022a} and the \emph{sheaf condition} \cite{AbramskyBrandenburger2011}.

\begin{definition}\label{def: marginal problem}
    Let $\cO$ be a (finitely generated) observable algebra with partial order of contexts $\CO$ and event algebra $\PO$. A state $\gamma = (\gamma_C)_{C\in\CO}\in\Gamma[\CO]$ is said to admit a \emph{solution to the marginal problem over $\CO$} if there exists an orthomorphism $\phi_\gamma:\PO\ra\cB(\Lambda)$ for some measure space $\Lambda=\Lambda_{\phi_\gamma}$ with Boolean $\sigma$-algebra $\cB(\Lambda)$ and probability measure $\mu_\gamma\in L^{\geq 0}_1(\Lambda)$ such that for all $C\in\CO$ and $p\in\PC$,
    \begin{align}\label{eq: marginal problem}
        \gamma_C(p)
        = \int_\Lambda d\lambda\ \mu_\gamma(\lambda) \phi_\gamma(p)(\lambda)
        = \int_\Lambda d\lambda\ \mu_\gamma(\lambda) \chi(\lambda\mid p)
        = \mu_\gamma(\Lambda_p)\; ,
    \end{align}
    where $\phi_\gamma(p)=\chi(\cdot\mid p)$ denotes the indicator function for the (measurable) subset $\Lambda\supset\Lambda_p\in\cB(\Lambda)$ representing $p\in\PC$ under $\phi_\gamma$, defined by $\chi(\lambda\mid p) = 1$ if and only if $\lambda\in\Lambda_p$.
\end{definition}

The representation of the marginal problem in terms of the second equality in Eq.~(\ref{eq: marginal problem}) is common in the literature on the noncontextuality polytope \cite{Pitowsky1991,KleinmannEtAl2012}. There, $\Lambda$ is taken to represent a space of ``hidden variables'', in analogy with (the assumption of local hidden variables in) Bell's theorem \cite{Bell1964}. Often $\chi(\lambda\mid p)$ is only required to be a probability distribution over $\Lambda$ for every $p\in\PO$. Yet, by Fine's theorem \cite{Fine1982a,SuppesZanotti1981,AbramskyBrandenburger2011} if $\gamma\in\Gamma[\CO]$ admits a solution to the marginal problem with $\chi$ a probability distribution then it also admits a solution to the marginal problem with $\chi$ an indicator function.\footnote{For unsharp measurements, Eq.~(\ref{eq: marginal problem}) needs to be modified, see e.g. Ref.~\cite{Spekkens2005,YuOh2013,KunjwalSpekkens2015,KunjwalSpekkens2018}.}

Making this simplification lets us rephrase the definition of solutions to the marginal problem in terms of the existence of orthomorphisms $\phi_\gamma:\PO\ra\cB(\Lambda)$. (Possibly up to a $\sigma$-ideal of negligible sets)\footnote{For finitely generated observable algebras $\Lambda$ can be taken to be a finite set. Moreover, in this case an optimal representation in terms of valuations is constructed in Lm.~\ref{lm: optimal embedding} below.} $\phi_\gamma$ satisfies $\supp(\chi(\cdot\mid\One))=\Lambda$, $\supp(\chi(\cdot\mid 0))=\emptyset$ and whenever $p,p'\in\PC$ for $C\in\CO$ then $\supp(\chi(\cdot\mid pp'))=\supp(\chi(\cdot\mid p))\cap\supp(\chi(\cdot\mid p'))$ and $\supp(\chi(\cdot\mid p+p')) = \supp(\chi(\cdot\mid p))\cup\supp(\chi(\cdot\mid p'))$. Since these constraints all pertain to individual contexts, they can be justified operationally, for instance, by demanding that the statistics of repeated measurements of compatible observables be reproduced, as is usually the case in experimental tests of contextuality \cite{Cabello2008b,GuehneEtAl2010,Winter2014}.

Finally, note that phrasing Def.~\ref{def: marginal problem} in terms of orthomorphisms automatically ensures that the indicator functions are independent of contexts, that is, $\chi(\lambda\mid p)=\chi(\lambda\mid p,C)$ for all contexts $C\in\CO$ with $p\in\PC$. Eq.~(\ref{eq: marginal problem}) thus incorporates the assumption that the representation in terms of hidden variables $\Lambda$ is independent of contexts.

We emphasise that $\phi_\gamma$ is not required to be faithful. This will become important in the comparison between different notions of contextuality in the marginal approach below.

\subsection{Contextuality in the marginal approach}\label{sec: marginal approach to KSC}

In this section we compare the marginal approach to contextuality with the algebraic one, as defined in Def.~\ref{def: KSNC}. We will first highlight that the respective notions of contextuality differ, that is, that they capture different aspects of ``(non-)classicality''. Specifically, we find that the more commonly adopted notion concerns the existence of non-classical (``contextual") correlations which is generically stronger than that defined by the classical embedding problem underlying Kochen-Specker noncontextuality. In a second step, we will then provide a reformulation of the latter in terms of a refined marginal problem.

\subsubsection{Two distinct notions of classicality}\label{sec: inequivalent notions of contextuality}

The algebraic approach to Kochen-Specker contextuality is concerned with the existence of a classical (or other type of) embedding for an observable algebra. In contrast, the marginal approach to contextuality studies the subset of states on observable algebras that is compatible with classical (or other types of) embeddings.\\

\textbf{Classical correlations.} Def.~\ref{def: marginal problem} characterises a subset of \emph{classical states} $\Gamma_\mathrm{cl}[\CO] \subset \Gamma[\CO]$ on $\cO$, as those states that are compatible with, that is, arise as states under some orthomorphism $\phi:\PO\ra\cB(\Lambda_\phi)$ for some measurable space $\Lambda_\phi$ with Boolean $\sigma$-algebra $\cB(\Lambda_\phi)$. More precisely, $\Gamma_\mathrm{cl}[\CO]=\bigcup_{\phi} L^{\geq 0}_1(\Lambda_\phi)|_{\phi(\PO)}$, where $|_{\phi(\PO)}$ denotes marginalisation as in Def.~\ref{def: state}. Similarly, we can define $\Gamma_\mathrm{qm}[\CO]\subset\Gamma[\CO]$ as those states on $\cO$ that are compatible with orthomorphisms into a quantum system. Together, this defines the following hierarchy of states (see also Sec.~\ref{sec: basics of graph-theoretic approach}):
\begin{align}\label{eq: cl-qu-nd states}
    \Gamma_\mathrm{cl}[\CO]\subset \Gamma_\mathrm{qm}[\CO]\subset \Gamma[\CO] =: \Gamma_\mathrm{nd}[\CO]\; , 
\end{align}
where the latter state space contains all non-disturbing states.\footnote{In analogy with the definition of classical and quantum states, $\Gamma_\mathrm{nd}[\CO]$ can be defined as the set of states compatible with orthomorphisms of $\PO$ into general observable algebras. Since the identity map defines such an orthomorphism we have $\Gamma[\CO]=\Gamma_\mathrm{nd}[\CO]$. Similarly, by restricting to observable algebras with additional structure, for instance, within the setting of general probabilistic theories \cite{Plavala2023,MuellerGarner2023,CataniGalleyGonda2024}, one may further refine the hierarchy of states in Eq.~(\ref{eq: cl-qu-nd states}).}

In general, the inclusions in Eq.~(\ref{eq: cl-qu-nd states}) are strict and give rise to the well-known and much studied problems of characterising the respective state spaces \cite{Vorob1962,Pitowsky1991,KleinmannEtAl2012,AcinFritzLeverrierSainz2015,XuCabello2019}. Notably, all sets in Eq.~(\ref{eq: cl-qu-nd states}) are convex, hence, can be described in terms of linear inequalities \cite{BadziagBengtssonCabelloPitowsky2009,KleinmannEtAl2012}. What is more, $\Gamma_\mathrm{cl}[\CO]$ is a polytope \cite{Pitowsky1991}, hence, it is characterised by its finitely many facet-defining \emph{noncontextuality inequalities}; in the Bell-local case, these are known as \emph{Bell inequalities}. In turn, the violation of such inequalities, can thus be used to rule out a classical state space model for $\cO$.\footnote{Maximal violations of noncontextuality inequalities by quantum mechanics generalise (``Tsirelson"-) bounds on Bell inequalities \cite{Tsirelson1980} from nonlocal to contextual correlations, see e.g. Ref.~\cite{AraujoEtAl2013}.} Bell's theorem \cite{Bell1964} and the (loophole-free) experimental violation of Bell inequalities \cite{AspectEtAl1981,ZeilingerEtAl2015,ShalmEtAl2015} irrefutably show that quantum physics is non-local: its state space is not a classical state space that is compatible with composition (in terms of products of measure spaces \cite{DoeringFrembs2019a}). A similar approach has been taken for contextuality \cite{KlyachkoEtAl2008,Cabello2008b,YuOh2012,KleinmannEtAl2012,AraujoEtAl2013,CabelloSeveriniWinter2014,XuCabello2019}; in analogy with the (Bell) local case, the following notion of classicality is therefore often adopted.

\begin{definition}\label{def: fully classically correlated}
    Let $\cO$ be an observable algebra with partial order of contexts $\CO$. Then $\cO$ is said to be \emph{(fully) classically correlated} if every state $\gamma\in\Gamma[\CO]$ on $\cO$ admits a solution to the marginal problem defined by $\CO$, that is, if $\Gamma[\CO]=\Gamma_\mathrm{cl}[\CO]$.
\end{definition}

If an observable algebra is classically correlated, then it cannot violate any noncontextuality inequality, and vice versa. Conversely, if an observable algebra is not fully classically correlated, then there exists some state that cannot be explained in terms a classical model, hence, is contextual in the sense of the marginal approach. If such correlations can be observed, they amount to an ``experimental test of contextuality" \cite{BartosikEtAl2009,KirchmairEtAl2009,AmselemEtAl2009,GuehneEtAl2010,MoussaEtAl2010,LapkiewiczEtAl2011,Winter2014}. It is with respect to Def.~\ref{def: fully classically correlated}, that the often quoted relation ``nonlocality implies contextuality" is to be understood.\\

\textbf{Classical correlations vs classical embeddings.} In turn, the above implication is not true when contextuality is understood in the sense of Eq.~(\ref{eq: KSNC}). Indeed, the existence of a contextual state does \emph{not} refute the existence of a classical embedding.

Since this goes against the commonly held dictum stated above, we take a moment to provide some intuition by means of a well-known example. It is straightforward to define a classical embedding of the observable algebra $\cO_\mathrm{CHSH}$ defined by the CHSH scenario \cite{CHSH1969} (for details, see Ex.~\ref{ex: classical embedding of CHSH} below). Yet, clearly this does not imply that every state on $\cO_\mathrm{CHSH}$ can be represented in terms of a classical state, that is, by a probability distribution on the state space $\Lambda$ under the embedding. In particular, (most) entangled states are not classical (see Fig.~\ref{fig: classical embedding for CHSH} (b)). We remark that in the Bell-local case of composite systems at spacelike separation, the violation of Bell inequalities requires entanglement, hence, is a state-dependent notion; separable states do not violate Bell inequalities. More generally, whenever an observable algebra admits a classical embedding, the violation of a noncontextuality inequality is necessarily state-dependent. In contrast, the violation of noncontextuality inequalities can also be state-independent (see Tab.~\ref{tab: state-(in)dependence} and Sec.~\ref{sec: SI-C sets and graphs}).

\begin{proposition}\label{prop: nonclassical correlations in KS noncontextual OAs}
    There exist observable algebras $\cO$ which are Kochen-Specker noncontextual, yet which are not fully classically correlated, hence, which exhibit non-classical (e.g. quantum) correlations, that is, for which $\Gamma[\CO]\neq\Gamma_\mathrm{cl}[\CO]$.\footnote{Analogously, there exist observable algebras that admit quantum embeddings while also exhibiting super-quantum correlations \cite{MuellerGarner2023}. This leads to the fascinating problem of characterising quantum correlations \cite{PopescuRohrlich1994,PawloskiEtAl2009,BarnumEtAl2010,Henson2012,FritzEtAl2013,Cabello2013,Popescu2014,FrembsDoering2022a} and the related research programme on reconstructions of quantum theory, e.g. from information-theoretic principles \cite{Rovelli1996,Hardy2001,Hardy2001a,Grinbaum2007,ChiribellaDArianoPerinotti2011,Hoehn2017,HoehnWever2017,Mueller2021}.}
\end{proposition}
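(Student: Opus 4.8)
The plan is to use the CHSH scenario \cite{CHSH1969} as an explicit witness, exactly as anticipated in the discussion preceding the statement. I would take $\cO = \cO_\mathrm{CHSH}$, the observable algebra obtained from the CHSH measurement scenario via the dictionary of App.~\ref{app: measurement scenarios vs OAs}, and establish two things: (a) $\cO_\mathrm{CHSH}$ admits a classical embedding, hence is KS noncontextual by Def.~\ref{def: KSNC}; (b) there is a non-disturbing state on $\cO_\mathrm{CHSH}$ with no solution to the marginal problem, hence $\Gamma[\cC(\cO_\mathrm{CHSH})] \neq \Gamma_\mathrm{cl}[\cC(\cO_\mathrm{CHSH})]$.

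For (a), recall that $\cO_\mathrm{CHSH}$ has four maximal contexts $C_{ij}$ ($i,j \in \{0,1\}$), each minimally generated by a compatible pair of dichotomic observables $A_i, B_j$, with pairwise intersections $C_{ij} \cap C_{ij'} = \langle A_i\rangle$ and $C_{ij} \cap C_{i'j} = \langle B_j\rangle$ the single-observable contexts (and least element $C_\One$). Each $C_{ij}$ has $|C_{ij}| = 4$ atoms $p^{A_i}_a p^{B_j}_b$, so $\cO_\mathrm{CHSH}$ is maximal with $\dim(\One) = 4$. Since $A_0, A_1, B_0, B_1$ are pairwise distinct, no one-dimensional projection of $\cO_\mathrm{CHSH}$ is shared between two maximal contexts, so the associated orthogonality graph $G(\cO_\mathrm{CHSH})$ is a disjoint union of four copies of $K_4$, with chromatic number $4 = \dim(\One)$. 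By Thm.~\ref{thm: KSNC = n-colourability}, $\cO_\mathrm{CHSH}$ is therefore $d$-colourable and KS noncontextual; equivalently, the explicit classical embedding $\epsilon: \cO_\mathrm{CHSH} \ra L_\infty(\{0,1\}^{\{A_0,A_1,B_0,B_1\}})$ sending each $A_i, B_j$ to its coordinate function and extending by functional calculus does the job (this is the content of Ex.~\ref{ex: classical embedding of CHSH}).

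For (b), I would exhibit the PR box \cite{PopescuRohrlich1994}: $\gamma_{C_{ij}}(a,b) = \tfrac12$ if $a \oplus b = i \cdot j$ and $0$ otherwise. Its single-observable marginals are uniform and independent of the complementary setting, so the no-disturbance conditions of Eq.~(\ref{eq: no-disturbance}) hold and $\gamma \in \Gamma[\cC(\cO_\mathrm{CHSH})]$. On the other hand, by Fine's theorem \cite{Fine1982a,AbramskyBrandenburger2011} the polytope $\Gamma_\mathrm{cl}[\cC(\cO_\mathrm{CHSH})]$ is cut out precisely by the CHSH inequalities, and $\gamma$ attains the algebraic value $4 > 2$ of one of them; hence $\gamma \notin \Gamma_\mathrm{cl}[\cC(\cO_\mathrm{CHSH})]$. (A quantum realisation at the Tsirelson value $2\sqrt{2}$ would serve equally well, placing the witnessing state in $\Gamma_\mathrm{qm}[\cC(\cO_\mathrm{CHSH})] \subset \Gamma[\cC(\cO_\mathrm{CHSH})] \setminus \Gamma_\mathrm{cl}[\cC(\cO_\mathrm{CHSH})]$.) Combining (a) and (b) proves the proposition.

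The step needing the most care — though not a genuine obstacle — is the bookkeeping in the translation between the combinatorial CHSH scenario and the observable algebra $\cO_\mathrm{CHSH}$: one has to verify that a state on $\cO_\mathrm{CHSH}$ in the sense of Def.~\ref{def: state} is the same data as a non-signalling box on the CHSH scenario (so that the PR box is admissible), and that the marginal problem of Def.~\ref{def: marginal problem} for $\cC(\cO_\mathrm{CHSH})$ reduces to the familiar Bell marginal problem. Both are immediate from App.~\ref{app: measurement scenarios vs OAs} together with Lm.~\ref{lm: embeddings vs orthomorphisms vs order-preserving maps}, so the heavy lifting is really done by those results; the conceptual content of the proof is simply that a faithful embedding of the \emph{algebra} need not reproduce all its \emph{states} classically.
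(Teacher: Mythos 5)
Your proposal is correct and follows essentially the same route as the paper, which establishes the proposition via the CHSH scenario: Ex.~\ref{ex: classical embedding of CHSH} constructs the explicit classical embedding $\epsilon:\cO_\mathrm{CHSH}\ra L_\infty(\Lambda)$ and then exhibits the Bell-state marginals at the Tsirelson value $2\sqrt{2}$ as a state in $\Gamma[\cC(\cO_\mathrm{CHSH})]\setminus\Gamma_\mathrm{cl}[\cC(\cO_\mathrm{CHSH})]$. Your only deviations — invoking Thm.~\ref{thm: KSNC = n-colourability} as an alternative to the explicit embedding, and using the PR box rather than the quantum state as the witnessing non-classical state — are cosmetic and equally valid.
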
 

Consequently, nonlocality in the sense of violations of Bell inequalities does \emph{not} imply KS contextuality as defined by the embedding problem in Eq.~(\ref{eq: KSNC})!

What about the reverse implication? On a formal level this turns out to be false, too (see Lm.~\ref{lm: single state}). However, it holds under the mild assumption that states on $\cO$ are separating (for details, see App.~\ref{app: OAs with separating states}). For separating observable algebras, Def.~\ref{def: fully classically correlated} is strictly stronger than Def.~\ref{def: KSNC}. In particular, finitely generated classically correlated observable subalgebras $\cO\subset\LHsa$ in quantum theory are KS noncontextual.

In summary, the notions of classicality in Def.~\ref{def: KSNC} and Def.~\ref{def: fully classically correlated} differ, hence, define two different problems with different interpretations and consequences. While KS noncontextuality in Eq.~(\ref{eq: KSNC}) clearly implies the existence of \emph{some} classical states, Def.~\ref{def: fully classically correlated} demands that \emph{every} state on $\cO$ is represented by a classical state under the embedding $\epsilon:\cO\ra L_\infty(\Lambda)$.
Note that both constraints appear in Ref.~\cite{KochenSpecker1967},\footnote{Eq.~(\ref{eq: KSNC}) is Eq.~(4) in Ref.~\cite{KochenSpecker1967}, while Def.~\ref{def: fully classically correlated} generalises Eq.~(2) in Ref.~\cite{KochenSpecker1967} from quantum to general observable algebras.} yet the KS theorem$^{\ref{fn: "KS-theorem"}}$ only addresses Eq.~(\ref{eq: KSNC}), and proves more than the mere existence of contextual correlations. For a concise comparison between the two approaches, see Tab.~\ref{tab: comparison}.

\begin{table}[hbt!]
    \centering
    \begin{tabular}{c|cc}
        \toprule[0.05cm]
        & notion of ``classicality" & characterisation \\
        \midrule\\[-.4cm]
        algebraic & KS noncontextuality (Def.~\ref{def: KSNC}) & $\exists$ flat context connection $\fl$ on $\CO$ \\
        $\CO$ & $\forall g:\R\ra\R:\ \epsilon(g(O))=g(\epsilon(O))$ & $\circ_{i=0}^{n-1} l_{C_{i+1}C_i} = \mathrm{id}\quad \forall (C_0,\cdots,C_{n-1})$ \\[0.2cm] & \rotatebox{90}{\color{red}{$\mathbf{\neq}$}} & \rotatebox{90}{\color{red}{$\mathbf{\neq}$}} \\[0.2cm]
        marginal & classically correlated (Def.~\ref{def: fully classically correlated}) & determine all facet-defining \\
        $\Gamma[\CO]$ & $\Gamma[\CO]=\Gamma_\mathrm{cl}[\CO]$ & NC inequalities \cite{Pitowsky1991,Cabello2008b,BadziagBengtssonCabelloPitowsky2009,KleinmannEtAl2012} \\[0.1cm]
        \bottomrule[0.05cm]
    \end{tabular}
    \caption{Comparison of approaches to noncontextuality: Kochen-Specker (KS) noncontextuality in Eq.~(\ref{eq: KSNC}) is algebraic, that is, it is defined in terms of functional relations between observables $\cO$ (first row), and characterised by the triviality constraints in Eq.~(\ref{eq: CNC}); the marginal approach studies states $(\gamma_C)_{C\in\CO}\in\Gamma[\CO]$ on $\cO$, which satisfy no-disturbance by definition and are ``classical" if they can be represented as in Eq.~(\ref{eq: marginal problem}) (second row). The respective notions of classicality generally differ (see Prop.~\ref{prop: nonclassical correlations in KS noncontextual OAs} and Lm.~\ref{lm: single state}); nevertheless, in Thm.~\ref{thm: KSNC = marginal KSNC} we translate KS noncontextuality into the marginal approach.}
    \label{tab: comparison}
\end{table}

While KS noncontextuality differs from the notion of classicality in Def.~\ref{def: fully classically correlated}, in the next section we show that it can be given a reformulation as a marginal problem.

\subsubsection{Kochen-Specker contextuality in the marginal approach}\label{sec: characterising KSNC using marginal approach}

The notion of classicality in the marginal approach is concerned with characterising the state spaces in Eq.~(\ref{eq: cl-qu-nd states}). In contrast, Kochen-Specker noncontextuality is concerned with the underlying classical embedding problem defined in Def.~\ref{def: KSNC}. In this section, we will obtain a reformulation of the latter within the marginal approach. It is instructive to first consider the notion of contextuality obtained by exchanging the universal quantifier in Def.~\ref{def: fully classically correlated} with an existential one.

\begin{definition}\label{def: existence of classical correlations}
    Let $\cO$ be an observable algebra with partial order of contexts $\CO$. Then $\cO$ \emph{exhibits classical correlations} if there exists a state $\gamma\in\Gamma[\CO]$ on $\cO$ that admits a solution to the marginal problem defined by $\CO$, that is, if $\emptyset\neq\Gamma_\mathrm{cl}[\CO]\subset\Gamma[\CO]$.
\end{definition}

In particular, the converse of Def.~\ref{def: existence of classical correlations} implies that every state on $\cO$ violates some noncontextuality inequality.\footnote{We will address the more subtle distinction between this notion and the one where all states violate the same noncontextuality inequality in Sec.~\ref{sec: SI-C sets and graphs}.} Clearly, if $\cO$ is KS noncontextual, then it exhibits classical correlations. Moreover, Def.~\ref{def: existence of classical correlations} is weaker than Def.~\ref{def: fully classically correlated} which (under the restriction separating observable algebras, see App.~\ref{app: OAs with separating states}) is stronger than KS noncontextuality. Indeed, Def.~\ref{def: fully classically correlated} proves too weak to imply KS noncontextuality.

\begin{lemma}\label{lm: solutions to marginal problem do not imply KSNC}
    There exist observable algebras $\cO$ which are Kochen-Specker contextual, yet which exhibit classical correlations (see Def.~\ref{def: existence of classical correlations}), that is, for which $\Gamma_\mathrm{cl}[\CO]\neq\emptyset$,
\end{lemma}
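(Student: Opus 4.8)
The plan is to exhibit a concrete Kochen-Specker contextual observable algebra $\cO$ for which the classical state space $\Gamma_\mathrm{cl}[\CO]$ is nonempty, so that $\cO$ exhibits classical correlations in the sense of Def.~\ref{def: existence of classical correlations}. The natural candidate is a quantum observable subalgebra $\cO\subset\LHsa$ generated by a finite Kochen-Specker set: by construction such $\cO$ is KS contextual (no classical embedding exists), while it clearly supports genuine states---e.g.\ the restriction of any density matrix $\rho\in\SH$ to the projections in $\PO$ yields a global section $\gamma^\rho=(\gamma^\rho_C)_{C\in\CO}\in\Gamma[\CO]$. The key point is therefore not existence of \emph{some} state but existence of a \emph{classical} one, i.e.\ a $\gamma\in\Gamma[\CO]$ admitting a solution to the marginal problem over $\CO$ (Def.~\ref{def: marginal problem}).

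First I would recall that a classical state on $\cO$ arises from \emph{any} orthomorphism $\varphi:\PO\ra\cB(\Lambda)$ (not required to be faithful, as stressed after Def.~\ref{def: marginal problem}) together with a probability measure on $\Lambda$. So it suffices to produce one non-faithful orthomorphism of $\PO$ into a Boolean algebra. The cheapest way: pick a single maximal context $C_0\in\COm$ and a point $\lambda_0$ in its spectrum, and define $\varphi$ to send a projection $p\in\PC$ to $1$ or $0$ according to whether the atom indexed by $\lambda_0$ (suitably transported) lies below $p$; more robustly, one can observe that for a KS set the obstruction is precisely the \emph{global} incompatibility of local $\{0,1\}$-assignments along context cycles, whereas a \emph{single} context always admits a $\{0,1\}$-valuation and hence a state supported on one ``hidden variable''. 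Concretely, I would take $\Lambda=\{\lambda_0\}$ a one-point space, fix a maximal context $C_0$ with resolution $\One=\sum_{i=1}^d p^{C_0}_i$, set $\varphi(p^{C_0}_1)(\lambda_0)=1$ and $\varphi(p^{C_0}_i)(\lambda_0)=0$ for $i\neq 1$, and extend multiplicatively/additively within $\PC[C_0]$; for projections living in other contexts $C$, assign $\varphi(q)(\lambda_0)=1$ iff $q\geq p$ for the image $p$ of $\lambda_0$'s atom under the unique inclusion path---but since $\Lambda$ is a single point this is just a consistent $\{0,1\}$-valuation on each $\PC$ separately, and consistency across inclusions $\tC\subset C$ is automatic because $\varphi$ restricted to each $\PC$ is an orthomorphism to $\{0,1\}$ and restriction of a valuation is a valuation. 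The associated state $\gamma_C(p):=\varphi(p)(\lambda_0)$ then satisfies the marginalisation constraint~(\ref{eq: no-disturbance}) and Eq.~(\ref{eq: marginal problem}) with $\mu_\gamma=\delta_{\lambda_0}$, so $\gamma\in\Gamma_\mathrm{cl}[\CO]$ and in particular $\Gamma_\mathrm{cl}[\CO]\neq\emptyset$.

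The main obstacle is the apparent tension with the Kochen--Specker theorem: one might worry that ``a consistent $\{0,1\}$-valuation on each context'' is exactly what a KS set forbids. The resolution---which I would make explicit---is that a global \emph{section} of the probabilistic presheaf (a state) only requires agreement of \emph{probability distributions} on overlaps, not a single globally defined $\{0,1\}$-assignment; equivalently, a classical state may be a convex mixture of deterministic assignments, and even a single deterministic assignment that is consistent on overlaps need not be a KS-colouring in the sense of Def.~\ref{def: KS-colouring} unless it assigns value $1$ to exactly one atom in \emph{every} maximal context. A valuation supported at $\lambda_0$ does pick out one atom in $C_0$, but it is free to assign $1$ to a non-atomic projection, or to pick out an atom inconsistently-from-the-KS-perspective, in other maximal contexts---what fails for a KS set is precisely that no such \emph{orthomorphism into the two-point Boolean algebra} can be simultaneously atom-selecting in all maximal contexts, i.e.\ no \emph{faithful} one exists mapping $\cO$ classically; non-faithful orthomorphisms, which suffice for $\Gamma_\mathrm{cl}[\CO]\neq\emptyset$, always exist. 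I would close by noting that this already shows Def.~\ref{def: existence of classical correlations} is strictly weaker than Def.~\ref{def: KSNC}, as claimed in the text, and that the CHSH example of Ex.~\ref{ex: classical embedding of CHSH} (or, dually, any small KS set such as the 18-ray construction of Ref.~\cite{CabelloEstebanzGarcia-Alcaine1997}) furnishes an explicit witness.
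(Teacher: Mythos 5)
Your choice of example is exactly the one that fails. If $\cO$ is generated by a Kochen--Specker set then, by definition of a KS set, $\cO$ admits no valuation, and this already forces $\Gamma_\mathrm{cl}[\CO]=\emptyset$. The reason is that a classical state requires an orthomorphism $\phi_\gamma:\PO\ra\cB(\Lambda)$ (Def.~\ref{def: marginal problem}), and \emph{every} point $\lambda\in\Lambda$ then induces a valuation: for each maximal context $C\in\COm$ with resolution $\One=\sum_i p_i$, additivity together with $\phi_\gamma(\One)=\Lambda$ gives $\Lambda=\dot\cup_i\Lambda_{p_i}$, so $\lambda$ lies in exactly one cell $\Lambda_{p_i}$; and since $\phi_\gamma$ is a single map on $\PO$, the assignment $p\mapsto\chi(\lambda\mid p)$ is context-independent. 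That is precisely a KS-colouring (Def.~\ref{def: KS-colouring}). The ``freedom'' you invoke in your third paragraph --- that the assignment might select only a non-atomic projection, or select atoms ``inconsistently'' in other maximal contexts --- does not exist: additivity within each context forces atom-selection in \emph{every} maximal context, and well-definedness on $\PO$ forces global consistency. Non-faithfulness relaxes injectivity of $\phi_\gamma$, not these two constraints. Hence for a KS set there is no orthomorphism into any $\cB(\Lambda)$ with $\Lambda\neq\emptyset$, and your one-point construction cannot be completed. (Your closing appeal to Ex.~\ref{ex: classical embedding of CHSH} also misses for the opposite reason: $\cO_\mathrm{CHSH}$ is KS \emph{non}contextual, so it cannot witness the lemma.)

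The statement is nevertheless true, and there are two ways to repair the argument. The paper's proof takes a product $\cO=\cO_1\times\cO_2$ with $\cO_1$ KS contextual and $\cO_2$ KS noncontextual: the product is still KS contextual, but projection onto the second factor is a non-faithful orthomorphism, so every classical state of $\cO_2$ lifts to a (non-separating) element of $\Gamma_\mathrm{cl}[\CO]$. Alternatively, your single-valuation idea works verbatim provided you pick a KS-contextual algebra that still \emph{admits} valuations --- e.g.\ the algebra generated by the Yu--Oh rays, which is KS contextual \cite{YuOh2012,Frembs2024a} yet KS-colourable; then $\Lambda=\{v\}$ for any valuation $v$, with $\phi(p)=\{v\}$ if and only if $v(p)=1$, is a legitimate non-faithful orthomorphism and yields a classical state via $\mu=\delta_v$. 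The one class of examples you must avoid is exactly the one you chose: KS sets are precisely the algebras for which even Def.~\ref{def: existence of classical correlations} fails.
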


\begin{proof}
    Let $\cO_1,\cO_2$ be two observable algebras such that $\cO_1$ is KS contextual and $\cO_2$ is KS noncontextual. Then their product $\cO=\cO_1\times\cO_2$ is also KS contextual.\footnote{Similarly, if both $\cO_1$ and $\cO_2$ are KS noncontextual, their product is KS noncontextual, too. For extensions of the notion of products of observable and partial Boolean algebras, see Ref.~\cite{DoeringFrembs2019a,AbramskyBarbosa2020}.} However, since $\cO_2$ admits a classical embedding $\epsilon:\cO_2\ra L_\infty(\Lambda)$ by assumption, there exist states $\gamma\in L_1(\Lambda)$ which define (non-faithful) states on $\cO$, hence, $\Gamma_\mathrm{cl}[\CO]\neq\emptyset$.
\end{proof}

Consequently, the notions of contextuality in Def.~\ref{def: KSNC} and Def.~\ref{def: existence of classical correlations} also differ. Mathematically, the reason is that the orthomorphism, defined by a state $\gamma\in\Gamma_\mathrm{cl}[\CO]$ as by Def.~\ref{def: marginal problem}, is not required to be faithful, hence, does not induce an embedding by Lm.~\ref{lm: embeddings vs orthomorphisms vs order-preserving maps}. Yet, this immediately suggests how to strengthen the existence of classical states in Def.~\ref{def: existence of classical correlations}: to the existence of (sets of) classical states that separate events in $\PO$.

\begin{definition}\label{def: marginal noncontextuality}
    Let $\cO$ be an observable algebra with partial order of contexts $\CO$. Then $\cO$ is called \emph{marginally Kochen-Specker (KS) noncontextual} if for all projections $p,p'\in\PO$, $p\neq p'$, there exists a classical state $\gamma\in\Gamma_\mathrm{cl}[\CO]\subset\Gamma[\CO]$ such that
    $0\neq\gamma(p)\neq\gamma(p')\neq 0$ on $\cO$.
\end{definition}

We call the set of states in Def.~\ref{def: marginal noncontextuality} a \emph{separating set} (see also Def.~\ref{def: separating set}). For finitely generated observable algebras a separating set $\{\gamma_i\}_{i=1}^N$ is finite and its existence thus equivalent to the existence of a single state with that property. Indeed, by choosing appropriate mixing parameters $\omega_i\in[0,1]$ with $\sum_{i=1}^N \omega_i=1$, $\gamma = \sum_{i=1}^N \omega_i\gamma_i$ satisfies $0\neq\gamma(p)\neq\gamma(p')\neq 0$ for all $0\neq p,p'\in\PO$, and clearly $\gamma\in\Gamma_\mathrm{cl}[\CO]$.

As the terminology suggests, Def.~\ref{def: marginal noncontextuality} is equivalent to (the classical embedding problem defined by) Kochen-Specker noncontextuality in Def.~\ref{def: KSNC}, as we now prove.

\begin{theorem}\label{thm: KSNC = marginal KSNC}
    Let $\cO$ be a finitely generated observable algebra. Then $\cO$ is Kochen-Specker noncontextual if and only if it is marginally Kochen-Specker noncontextual.
\end{theorem}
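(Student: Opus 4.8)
The plan is to route both notions through Lm.~\ref{lm: embeddings vs orthomorphisms vs order-preserving maps}, which identifies a classical embedding $\epsilon:\cO\ra L_\infty(\Lambda)$ with a faithful orthomorphism $\varphi:\PO\ra\cB(\Lambda)\cong\cP(L_\infty(\Lambda))$. Since $\cO$ is finitely generated, $\CO$ and hence $\PO$ is finite, and the space $\Lambda$ in any classical representation of a state may be taken finite (Lm.~\ref{lm: optimal embedding}); as observed directly after Def.~\ref{def: marginal noncontextuality}, this means marginal KS noncontextuality is equivalent to the existence of a \emph{single} classical state $\gamma\in\Gamma_\mathrm{cl}[\CO]$ with $0\neq\gamma(p)\neq\gamma(p')\neq 0$ for all distinct nonzero $p,p'\in\PO$. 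It therefore suffices to prove: a faithful orthomorphism of $\PO$ into a finite Boolean algebra exists if and only if such a separating classical state exists.

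For the forward direction I would start from a faithful orthomorphism $\varphi:\PO\ra 2^\Lambda$ with $\Lambda$ finite (so $\varphi(0)=\emptyset$, $\varphi(\One)=\Lambda$, and $\varphi$ injective), and pick a probability measure $\mu$ on $\Lambda$ with all point masses strictly positive and with all subset-masses pairwise distinct (e.g.\ masses proportional to distinct powers of two, or generic masses avoiding the finitely many equalities $\mu(\varphi(p))=\mu(\varphi(p'))$). Then $\gamma(p):=\mu(\varphi(p))$ defines a state in $\Gamma_\mathrm{cl}[\CO]$ by Def.~\ref{def: marginal problem}, and it is separating: $p\neq 0$ gives $\varphi(p)\neq\emptyset$ and hence $\gamma(p)>0$, while injectivity of $\varphi$ together with the choice of $\mu$ gives $\gamma(p)\neq\gamma(p')$ whenever $p\neq p'$.

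The converse is where the real work is. Given a separating classical state $\gamma$, by definition of $\Gamma_\mathrm{cl}[\CO]$ it is represented by an orthomorphism $\phi:\PO\ra\cB(\Lambda)$ and a probability measure $\mu$ on a finite $\Lambda$ with $\gamma(p)=\mu(\phi(p))$. This $\phi$ need \emph{not} be faithful in general — that very gap is what makes a solvable marginal problem strictly weaker than KS noncontextuality (cf.\ Lm.~\ref{lm: solutions to marginal problem do not imply KSNC}). My plan is to first pass to $\Lambda':=\supp(\mu)$ and to the orthomorphism $\phi'(p):=\phi(p)\cap\Lambda'$, which still represents $\gamma$ and for which $\mu$ has full support; then $\phi'(p)=\phi'(p')$ implies $\gamma(p)=\gamma(p')$, which by the separating property forces $p=p'$ once one treats the zero projection separately (only $0$ is sent to $\emptyset$, and only $0$ has $\gamma$-value $0$). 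Hence $\phi'$ is a faithful orthomorphism into $\cP(L_\infty(\Lambda'))$, and Lm.~\ref{lm: embeddings vs orthomorphisms vs order-preserving maps} turns it into a classical embedding, so $\cO$ is KS noncontextual.

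The main obstacle is thus the non-faithfulness of the orthomorphism underlying an arbitrary classical state; the key move that resolves it is the two-step reduction — first restrict the representing space to the support of the measure, then use the separating property to exclude all remaining identifications of projections — with the zero projection handled as a special case. The remaining ingredients (finiteness of $\PO$ and of the representing space for finitely generated $\cO$, and the passage from the per-pair separating condition of Def.~\ref{def: marginal noncontextuality} to a single separating state by convex combination) are routine.
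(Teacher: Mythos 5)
Your proposal is correct and follows essentially the same route as the paper's proof: both directions pass through Lm.~\ref{lm: embeddings vs orthomorphisms vs order-preserving maps}, the forward direction turns the faithful orthomorphism induced by a classical embedding into separating classical states, and the converse derives faithfulness of the representing orthomorphism directly from the separating property of a single classical state (using the convex-combination remark after Def.~\ref{def: marginal noncontextuality}). The only minor differences are harmless: you construct one generic measure (distinct subset masses) where the paper exhibits a per-pair family of states, and your restriction to $\supp(\mu)$ in the converse is not actually needed, since $\gamma(p)\neq\gamma(p')$ already forces $\phi(p)\neq\phi(p')$ for the original orthomorphism.
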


\begin{proof}
    ``$\Rightarrow$": If $\cO$ is KS noncontextual, then there exists a classical embedding $\epsilon:\cO\ra L_\infty(\Lambda)$ for some measurable space $\Lambda$, which restricts to a faithful orthomorphism of event algebras $\varphi=\epsilon|_\PO:\PO\ra\cB(\Lambda)$. Let $\widetilde{\Lambda} = (\Lambda,\sigma(\epsilon))$ denote the measurable space with $\sigma$-algebra generated by $\mathrm{im}(\varphi(\PO))$. Every state $\mu\in L^{\geq 0}_1(\widetilde{\Lambda})\subset L^{\geq 0}_1(\Lambda)$ then defines a state $\gamma\in\Gamma_\mathrm{cl}[\CO]$ on $\cO$ via Eq.~(\ref{eq: marginal problem}). Finally, since $\epsilon$ is faithful by assumption, for every $p,p'\in\PO$ with $p\neq p'$ we can also find a state $\mu\in L_1(\widetilde{\Lambda})$ such that $0\neq\mu(\varphi(p))\neq\mu(\varphi(p'))\neq 0$, hence, $\cO$ is marginally KS noncontextual.

    ``$\Leftarrow$": We need to show the existence of a classical embedding $\epsilon: \cO \ra L_\infty(\Lambda)$. Since $\cO$ is marginally KS noncontextual, there exists a classical state $\gamma=(\gamma_C)_{C\in\CO}\in\Gamma_\mathrm{cl}[\CO]$, that is, $\gamma$ admits a solution $\mu_\gamma\in L_1(\Lambda)$ to the marginal problem defined by $\CO$ (see Def.~\ref{def: marginal problem}) for some measurable space $\Lambda$, that is, for all $C\in\CO$ and $p\in\PC$,
    \begin{align*}
        \gamma_C(p)
        = \int_\Lambda d\lambda\ \mu_\gamma(\lambda)\phi_\gamma(p)(\lambda)
        = \int_\Lambda d\lambda\ \mu_\gamma(\lambda) \chi(\lambda\mid p)
        = \int_{\Lambda_p} d\lambda\ \mu_\rho(\lambda) =\mu(\Lambda_p)\; ,
    \end{align*}
    where $\phi_\gamma:\PO\ra\cB(\Lambda)$ is an orthomorphism, which represents every event $p\in\PO$ by (the indicator function with respect to) the measurable subset $\Lambda_p\subset\Lambda$. Furthermore, by assumption we may assume that $0\neq\gamma(p)\neq\gamma(p')\neq 0$ for all $p,p'\in\PO$ with $p\neq p'$. Consequently, $\varphi(p)=\chi(\lambda\mid p)\neq\chi(\lambda\mid p')=\varphi(p')$, that is, $\varphi:\PO\ra\cB(\Lambda)$ is faithful, hence, lifts to an embedding of observable algebras $\epsilon:\cO\ra L_\infty(\Lambda)$ by Lm.~\ref{lm: embeddings vs orthomorphisms vs order-preserving maps}.
\end{proof}

By Thm.~\ref{thm: KSNC = marginal KSNC}, Def.~\ref{def: marginal noncontextuality} translates Kochen-Specker noncontextuality into the language of the marginal approach, and thus completes the respective characterisation in Tab.~\ref{tab: summary}. Together with the preceding lemmata, it clarifies the subtle differences between similar notions in Def.~\ref{def: fully classically correlated} and Def.~\ref{def: existence of classical correlations}, in particular, with regards to state-(in)dependence (see Tab.~\ref{tab: state-(in)dependence}). We will further discuss the notion of state-independent contextuality (SI-C) in Sec.~\ref{sec: SI-C sets and graphs}, specifically the distinction between SI-C sets and SI-C graphs.

\begin{table}[htb!]
    \centering
    \renewcommand{\arraystretch}{1.75}
    \begin{tabular}{c|c}
        \toprule[0.05cm]
        $\cO$ (finitely generated, separating) & state-independent \\
        \midrule
        fully classically correlated (Def.~\ref{def: fully classically correlated}) & \checkmark \\
        (Prop.~\ref{prop: non-empty fully classical implies KS noncontextual}) $\Downarrow$ $\not\Uparrow$ (Prop.~\ref{prop: nonclassical correlations in KS noncontextual OAs}) & \\
        \hspace{.4cm} KSNC (Def.~\ref{def: KSNC}) $\stackrel{\mathrm{Thm}.~\ref{thm: KSNC = marginal KSNC}}{\Longleftrightarrow}$ marginal KSNC (Def.~\ref{def: marginal noncontextuality}) \hspace{.4cm} & \xmark \\
        \hspace{1.25cm} $\Downarrow$ $\not\Uparrow$ (Lm.~\ref{lm: solutions to marginal problem do not imply KSNC}) & \\
        exhibits classical correlations (Def.~\ref{def: existence of classical correlations}) & \xmark \\[0.1cm]
        \midrule
        violation of some NC inequality ($\neg$ Def.~\ref{def: fully classically correlated}) & \xmark \\
        (Prop.~\ref{prop: non-empty fully classical implies KS noncontextual}) $\not\Downarrow$ $\Uparrow$ (Prop.~\ref{prop: nonclassical correlations in KS noncontextual OAs}) & \\
        KSC ($\neg$ Def.~\ref{def: KSNC}) $\stackrel{\mathrm{Thm}.~\ref{thm: KSNC = marginal KSNC}}{\Longleftrightarrow}$ marginal KSC ($\neg$ Def.~\ref{def: marginal noncontextuality}) & \xmark \\
        \hspace{1.25cm} 
        $\not\Downarrow$ $\Uparrow$ (Lm.~\ref{lm: solutions to marginal problem do not imply KSNC}) & \\
        every state violates some of NC inequality ($\neg$ Def.~\ref{def: existence of classical correlations}) & \checkmark \\[0.1cm]
        \bottomrule[0.05cm]
    \end{tabular}
    \caption{Relation between KS noncontextuality (top) and KS contextuality (bottom) with closely related notions of contextuality in the marginal approach \cite{BudroniEtAl2022}. While both the nonexistence of violations of noncontextuality inequalities (Def.~\ref{def: fully classically correlated}) as well as the state-independent violation of some noncontextuality inequality (converse of Def.~\ref{def: existence of classical correlations}) are inherently state-independent notions, the reformulation of both KS contextuality and KS noncontextuality in the marginal approach (see Def.~\ref{def: marginal noncontextuality}) are state-dependent.}
    \label{tab: state-(in)dependence}
\end{table}

Beyond its conceptual significance, Thm.~\ref{thm: KSNC = marginal KSNC} opens up the possibility to relate the algebraic and marginal approach to contextuality. Marginal noncontextuality is analysed in terms of noncontextuality inequalities (see Tab.~\ref{tab: comparison}). As a consequence of Thm.~\ref{thm: KSNC = marginal KSNC}, these need to encode the triviality constraints in Thm.~\ref{thm: CNC for OAs}. A careful analysis using context connections and context cycles thus seems a promising avenue to better characterise the noncontextuality polytope, which in general remains an open problem \cite{BudroniEtAl2022}.

\subsection{Acyclic observable algebras and Vorob'ev theorem}\label{sec: (a)cycles}

In this section, we further expand on the comparison between the marginal and algebraic approach to Kochen-Specker contextuality. We begin in Sec.~\ref{sec: truncated OAs} by proving that for \emph{acyclic observable algebras} (see Def.~\ref{def: acyclic context category}) the triviality constraints in Thm.~\ref{thm: CNC for OAs} can always be satisfied. In Sec.~\ref{sec: Vorob'ev theorem}, we further relate this property with Vorob'ev's theorem \cite{Vorob1962} (see also Ref.~\cite{XuCabello2019}), which shows that acyclicity fully characterises classically correlated observable algebras (see Def.~\ref{def: fully classically correlated}). In Sec.~\ref{sec: single cycles}, we then turn to KS contextuality, specifically we make explicit that KS noncontextuality does not also imply acyclicity by proving that a single context cycle is never KS contextual - in stark contrast with the characterisation of the $n$-cycle contextuality scenario in the marginal approach \cite{AraujoEtAl2013}.

\subsubsection{Trivial context cycles and Meyer-Clifton-Kent models}\label{sec: truncated OAs}

In this section, we will identify a property of (finitely generated) observable algebras and their partial orders of contexts under which the constraints in Thm.~\ref{thm: CNC for OAs} are always satisfiable, hence, which implies KS noncontextuality. In doing so, we will obtain a new explanation for the KS noncontextuality of the models constructed in Ref.~\cite{Meyer1999,Kent1999,CliftonKent2000}.

Thm.~\ref{thm: CNC for OAs} recasts KS noncontextuality as constraints on context connections along context cycles. It is thus instructive to study individual cycles first.

\begin{definition}\label{def: n-cycle scenario}
    A maximal observable algebra $\cO$ with $\dim(\One)=d$ and maximal contexts $\COm = \{C_0,\cdots,C_{n-1}\}$ such that $C_\One \neq C_{i\cap j} := C_i \cap C_j$ if and only if $j=i-1,i,i+1$ is called an \emph{$n$-cycle contextuality scenario in dimension $d$}, and denoted by $\cO=\cO(n,d)$.\footnote{Note that this implies $C_{ij}\cap C_{kl} = C_\One$ unless $i=k$, $j=l$ or $i=l$, $j=k$.}
\end{definition}

The terminology is clear: $\cO(n,d)$ contains a single non-trivial context cycle. Def.~\ref{def: n-cycle scenario} has appeared before in Ref.~\cite{AraujoEtAl2013,FritzChaves2013,XuCabello2019};\footnote{Cyclic scenarios of different types have also been studied in various other approaches \cite{AbramskyEtAl2015,Caru2017,BeerOsborne2018,DzhafarovKujalaCervantes2020}.} here, we adapted it to our notation. Note also that we do not assume that subcontexts $C_{i\cap j}$ have isomorphic generating sets.\\

\textbf{Truncated observable algebras.} We start with the following easy observation.

\begin{lemma}\label{lm: trivial cycle}
    Let $\cO(n,d)$ be a $n$-cycle contextuality scenario in dimension $d$ with non-trivial context cycle $(C_0,\cdots,C_{n-1})$. If $C_{(i+1)i} = C_i \cap C_{i+1} = C_\One = \R\One$ for at least one $i \in \zz_n$, then $\cO(n,d)$ is Kochen-Specker noncontextual.
\end{lemma}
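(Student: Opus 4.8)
Since $\cO(n,d)$ is maximal by definition, we have $\cO^*=\cO(n,d)$, so by Thm.~\ref{thm: CNC for OAs} it suffices to construct a \emph{flat} context connection on $\COm=\{C_0,\dots,C_{n-1}\}$. The plan is to exploit the hypothesis to ``cut'' the cycle: after a cyclic relabelling (which preserves the intersection pattern of Def.~\ref{def: n-cycle scenario}) we may assume that the trivial edge is $C_{n-1}\cap C_0=C_\One$, so that $C_i\cap C_j\neq C_\One$ holds only for consecutive pairs $\{i,i+1\}$ with $0\le i\le n-2$, whereas every other pair of maximal contexts — including $\{C_{n-1},C_0\}$ — meets only in $C_\One$. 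In other words, the intersection pattern of the cover $\{C_i\}$ is now a path rather than an $n$-cycle, and a connection built over a tree-shaped cover is automatically flat.

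I would then construct $\fl$ as a ``coboundary''. Since $|\cP_1(C_i)|=d$ for every maximal context, one can choose, inductively for $i=0,\dots,n-2$, bijections $l_{C_{i+1}C_i}\colon\cP_1(C_i)\to\cP_1(C_{i+1})$ that fix $\cP_1(C_i\cap C_{i+1})$ pointwise (keep the shared one-dimensional projections fixed and match the remaining ones in any way). Setting $\psi_0:=\id_{\cP_1(C_0)}$ and $\psi_i:=l_{C_iC_{i-1}}\circ\cdots\circ l_{C_1C_0}$, one then \emph{defines} $l_{C_jC_i}:=\psi_j\circ\psi_i^{-1}$ for all $i,j$ (in particular $l_{C_iC_i}=\id$ and $l_{C_iC_j}=l_{C_jC_i}^{-1}$). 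The first thing to verify is that $\fl=(l_{C'C})$ is a context connection: for $|i-j|=1$ the composite $\psi_j\circ\psi_i^{-1}$ telescopes back to the chosen $l_{C_{i+1}C_i}$, hence restricts to the identity on $\cP_1(C_i\cap C_{i+1})$; for every other pair $C_i\cap C_j=C_\One$, so the constraint is vacuous, since $\cP_1(C_\One)=\emptyset$ whenever $d\ge 2$ (the degenerate case $d=1$ gives the classical algebra $\R\One$ and is trivially KS noncontextual).

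Flatness is then immediate from the coboundary form: for any context cycle $(D_0,\dots,D_{m-1})$ in $\COm$, writing $D_k=C_{a_k}$ with $a_m=a_0$ in $\zz_m$, the composite telescopes,
\begin{align*}
    \circ_{k=0}^{m-1} l_{D_{k+1}D_k}
    =\circ_{k=0}^{m-1}\bigl(\psi_{a_{k+1}}\circ\psi_{a_k}^{-1}\bigr)
    =\psi_{a_m}\circ\psi_{a_0}^{-1}
    =\id\;,
\end{align*}
so the triviality constraints of Eq.~(\ref{eq: CNC}) hold and $\cO(n,d)$ is KS noncontextual by Thm.~\ref{thm: CNC for OAs}. (Alternatively, via Thm.~\ref{thm: KSNC = n-colourability} one could argue directly, greedily $d$-colouring $\cP_1(C_0),\cP_1(C_1),\dots,\cP_1(C_{n-1})$ along the path; this never produces a conflict precisely because each $C_i$ shares one-dimensional projections with at most one previously coloured context.) I do not expect a genuine obstacle here: the only step needing care — and the only place the hypothesis is used — is the observation that removing a single trivial edge turns the cyclic intersection pattern into a tree, after which extending the connection flatly is elementary.
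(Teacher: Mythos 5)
Your proof is correct, and it rests on exactly the same key observation as the paper's: because $C_{n-1}\cap C_0=C_\One$ and $\cP_1(C_\One)=\emptyset$ (for $d\geq 2$), the constraint of Eq.~(\ref{eq: noncontextual context connection}) on the link $l_{C_0C_{n-1}}$ is vacuous, so that link is a free parameter which can absorb the holonomy of the cycle. The implementations differ slightly. The paper starts from an \emph{arbitrary} context connection $\fl$, computes the holonomy $\pi=\prod_{i=0}^{n-1}l_{(i+1)i}\in\Sym(\cP_1(C_0))$ of the distinguished cycle, and replaces only $l_{0(n-1)}$ by $\pi^{-1}\circ l_{0(n-1)}$; you instead build the connection globally as a coboundary $l_{C_jC_i}=\psi_j\circ\psi_i^{-1}$ of a ``potential'' $\psi$ defined by transporting along the cut-open path. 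The two constructions are interchangeable here, but your version has a small advantage worth keeping: Thm.~\ref{thm: CNC for OAs} demands Eq.~(\ref{eq: CNC}) for \emph{every} context cycle in $\COm$ (including, e.g., $(C_0,C_2,C_4)$, whose pairwise overlaps are all $C_\One$), and the telescoping identity $\circ_k(\psi_{a_{k+1}}\circ\psi_{a_k}^{-1})=\id$ verifies all of them at once, whereas the paper's modification of a single link only manifestly trivialises the distinguished cycle and leaves the remaining (trivial) cycles to the reader. Your handling of the degenerate case $d=1$ and of the vacuity of the subcontext constraints for non-adjacent pairs is also correct. No gap.
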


\begin{proof}
    After possibly relabeling the indices in the context cycle, we may assume that $C_{0(n-1)}=C_0 \cap C_{n-1}=C_\One$. Let $\fl$ be any context connection on $\cC(\cO(n,d))$ and define $\pi:=\prod_{i=0}^{n-1} l_{(i+1)i}\in\Sym(\cP_1(C_0))\cong S_d$. Then $\fl'$ with $l'_{(i+1)i} = l_{(i+1)i}$ for all $i \in \{0,\cdots,n-2\}$ and $l'_{0(n-1)} = \pi^{-1} \circ l_{0(n-1)}$ is also a context connection (it satisfies Eq.~(\ref{eq: noncontextual context connection}) in Def.~\ref{def: context connection}), for which Eq.~(\ref{eq: CNC}) is clearly satisfied, hence, $\cO(n,d)$ is KS noncontextual by Thm.~\ref{thm: CNC for OAs}.
\end{proof}

We call a context cycle $(C_0,\cdots,C_{n-1})$ \emph{trivial} if $C_{(i+1)i}=C_\One$ for some $0\leq i \leq n-1$. Lm.~\ref{lm: trivial cycle} suggests to remove trivial context cycles from $\cO$. We can do so by removing the least element $C_\One$ from $\CO$, that is, by removing (functions of) the trivial observable $\One$ from $\cO$. We denote by $\rcO=\cO/\{\lambda\One\mid\lambda\in\R\}$ the resulting \emph{truncated observable algebra} with \emph{truncated partial order of contexts} $\rCO=\CO/\{C_\One\}$.

While Lm.~\ref{lm: trivial cycle} shows that a single trivial context cycle is always KS noncontextual, it does a priori not apply to several trivial cycles: the reduction from $\cO$ to $\cO^\circ$ results in fewer constraints in Eq.~(\ref{eq: CNC}), in other words, the constraints between multiple trivial context cycles may be incompatible with Eq.~(\ref{eq: CNC}). However, the following theorem asserts that for finitely generated observable algebras this is never the case.

\begin{theorem}\label{thm: KSNC of truncated OAs}
    Let $\cO$ be a countably infinitely generated observable algebra with partial order of contexts $\CO$. Then $\cO$ is Kochen-Specker noncontextual if and only if $\rcO$ is.
\end{theorem}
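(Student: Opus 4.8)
The plan is to establish both implications through the context-connection characterisation of Thm.~\ref{thm: CNC for OAs}, using that passing from $\cO$ to $\rcO$ leaves the maximal contexts --- and therefore the raw data of a context connection --- untouched, and only changes which context cycles are constrained by Eq.~(\ref{eq: CNC}). First I would record the preliminary observation that truncation commutes with maximal extension: the minimal projections adjoined in forming $\cO^*$ are unaffected by deleting the scalars $\lambda\One$, and $\One$ remains the top of every maximal context, so $\cC_\mathrm{max}(\cO^*)$ and the maximal-extension poset of $\rcO$ share the same set of maximal contexts, the latter being obtained by deleting the least element $C_\One$. Assuming $d\geq 2$ (the case $d=1$ being classical and hence noncontextual on both sides), we have $\cP_1(C_\One)=\emptyset$, so along every \emph{trivial edge} --- a pair $C,C'\in\COm$ with $C\cap C'=C_\One$ --- the defining condition Eq.~(\ref{eq: noncontextual context connection}) is vacuous and the map $l_{C'C}$ is entirely free. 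The net effect of truncation is thus precisely to sever the trivial edges: the context cycles surviving in $\rcO$ are exactly those cycles of $\cO$ that avoid all trivial edges.

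The forward implication is then immediate. A classical embedding $\epsilon:\cO\ra L_\infty(\Lambda)$ restricts to $\rcO=\cO\setminus\{\lambda\One\}$, and the restriction stays faithful and preserves every functional relation $g(O)$ with $g$ non-constant on $\spec(O)$ --- the constant functions produce scalar multiples of $\One$ and impose no constraint on $\rcO$. Dually, a flat connection on $\cC_\mathrm{max}(\cO^*)$ satisfies Eq.~(\ref{eq: CNC}) in particular on the sub-family of trivial-edge-free cycles, and hence restricts to a flat connection on the truncated poset, so $\rcO$ is KS noncontextual by Thm.~\ref{thm: CNC for OAs}.

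For the converse --- the substantive direction --- I would convert a flat connection $\fl^\circ=(l^\circ_{C'C})$ on the truncated poset into a global gauge and use it to fill in the severed trivial edges. Let $H$ be the graph on the maximal contexts whose edges are the nontrivial overlaps $C\cap C'\neq C_\One$, labelled by the bijections $l^\circ_{C'C}$. On each connected component of $H$ I would fix a base context $C_*$ together with a reference bijection $g_{C_*}:\cP_1(C_*)\ra[d]$, and propagate it along paths, setting $g_C$ to be $g_{C_*}$ composed with the holonomy of $\fl^\circ$ from $C_*$ to $C$; this is well defined precisely because flatness of $\fl^\circ$ along trivial-edge-free cycles forces any two paths to agree. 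Defining $l_{C'C}:=g_{C'}^{-1}\circ g_C$ on \emph{all} pairs $C,C'\in\COm$ then yields a context connection on $\cC_\mathrm{max}(\cO^*)$: on nontrivial edges it reproduces $\fl^\circ$ and so fixes $\cP_1(C\cap C')$, while on trivial edges the fixing requirement is vacuous. Since this connection is a coboundary, every context cycle telescopes,
\begin{equation*}
    \circ_{i=0}^{n-1} l_{C_{i+1}C_i} = \circ_{i=0}^{n-1} g_{C_{i+1}}^{-1}\circ g_{C_i} = \id\; ,
\end{equation*}
so Eq.~(\ref{eq: CNC}) holds for \emph{all} cycles and $\cO$ is KS noncontextual by Thm.~\ref{thm: CNC for OAs}.

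The main obstacle is the one flagged in the text, that distinct trivial cycles sharing a free edge might impose incompatible demands on the filled-in maps, and I expect the hardest part to be arguing this away uniformly over a \emph{countably infinite} context poset. The coboundary construction resolves it at a stroke: any potential inconsistency among the filled trivial edges, including cycles that mix trivial and nontrivial edges, would manifest as nontrivial holonomy of $\fl^\circ$ around a trivial-edge-free cycle, contradicting flatness on the truncated poset. For the countable case the only remaining points are that the gauge be globally well defined --- which needs a base choice in each of the countably many components of $H$ (countable choice) and path-propagation within each, possibly infinite, component --- and that each triviality constraint be checkable; the latter causes no trouble, since a context cycle is by definition a finite tuple, so every instance of Eq.~(\ref{eq: CNC}) is a finite composition to which the telescoping identity applies verbatim. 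As an alternative route to the countable case, I would note that $\cO$ and $\rcO$ share the same associated orthogonality graph (orthogonality does not involve $\One$), so a de Bruijn--Erd\H{o}s-type compactness argument reduces the claim to the finitely generated case, where the gauge construction is finite.
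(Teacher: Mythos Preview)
Your proof is correct, and the converse direction takes a genuinely different route from the paper's. The paper argues by an edge-by-edge induction: starting from the graph $G_1$ on $\COm$ with edges the nontrivial overlaps, it adds trivial edges one at a time, at each step defining the new $l_{C'C}$ so that one chosen cycle through the fresh edge telescopes, and then checks via path decomposition that every other new cycle also becomes trivial. Your approach instead passes directly from the flat connection on $\rcO$ to a global gauge $g_C:\cP_1(C)\ra[d]$ --- well defined by path-independence, i.e.\ flatness --- and writes the full connection as the coboundary $l_{C'C}=g_{C'}^{-1}\circ g_C$, which is manifestly flat everywhere. This is essentially the same mechanism the paper uses in the proof of Thm.~\ref{thm: KSNC = n-colourability} (a flat connection is the same data as a $d$-colouring), redeployed here; it is more conceptual, and it handles the countably infinitely generated case more cleanly than the paper's induction, which literally speaks of a ``(finite) graph'' and reaching the complete graph $G_N$ in finitely many steps. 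Your alternative compactness remark via the shared orthogonality graph and Thm.~\ref{thm: n-colourability for non-maximal OAs} is also valid and indeed gives a one-line proof once those results are in place.
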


\begin{proof}
    By Lm.~\ref{lm: KSNC inherits from maximality}, it is sufficient to establish the claim for $\cO$ and $\rcO$ maximal.
    
    ``$\Rightarrow$'': Since $\rCO \subset \CO$, this direction is obvious.

    ``$\Leftarrow$'': Consider the (finite) graph $G_1 = (V,E_1)$, whose set of vertices is given by $V = \COm$ and whose set of edges $E_1 \subset V \times V$ is given by $E_1 = \{(C',C) \mid C \cap C' \neq C_\One\}$. $G_1$ thus describes the context overlaps in $\rCOm$.\footnote{Evidently, this graph differs from and should not be confused with the orthogonality graph associated to $\cO$ as defined in Def.~\ref{def: associated orthogonality graph} below.} Now, assume that $\rcO$ is KS noncontextual, that is, by Thm.~\ref{thm: CNC for OAs} there exists a flat context connection $\fl_1$ on $\rCO$ satisfying the triviality constraints in Eq.~(\ref{eq: CNC}). Next, we successively add edges, forming new graphs $G_m=(V,E_m)\mapsto G_{m+1}=(V,E_{m+1})$, until we obtain the fully-connected graph $G_N=(V,E_N)$ with $E_N=V\times V$, which corresponds to the partial order of contexts $\COm$. Note that since vertices in $G_m$ correspond to maximal contexts in $\CO$ and edges to context overlaps, context cycles are represented by cycles in $G_m$.

    Let $(C',C) \in E_{m+1}/E_m$ denote the edge added in the $m$-th inductive step. For any paths\footnote{A path $p$ in $G=(V,E)$ is any tuple $p=(v_1,\cdots,v_n)$ of vertices $v_i\in V$, $i=\{1,\cdots,n\}$ such that consecutive vertices are connected by an edge, that is, $(v_i,v_{i+1})\in E$ for all $i=\{1,\cdots,n-1\}$.} $p_{C_0}^C = (C_0,\cdots,C_k=C)$ and $p_{C'}^{C_0} = (C_{k+1}=C',\cdots,C_{n-1},C_0)$ in $G_m$, this adds a new context cycle $\alpha = p_{C'}^{C_0} \circ (C',C) \circ p_{C_0}^C := (C_0,\cdots,C_k=C,C_{k+1}=C',\cdots,C_{n-1})$ to $G_{m+1}$, and thus a new triviality constraint in Eq.~(\ref{eq: CNC}).\footnote{Here, $\circ$ denotes concatenation of paths with the same end and start point, and such that $(\cdots,C_{i-1},C_i,C_i,C_{i+1},\cdots)=(\cdots,C_{i-1},C_i,C_{i+1},\cdots)$.} Clearly, by setting $\fl({m+1})=\fl(m)\cup\fl_{C'C}$ with $\fl_{C'C}=(\prod_{i=k}^{n-1}\fl_{(i+1)i}(m))^{-1} (\prod_{i=0}^{k-1}\fl_{(i+1)i}(m))^{-1}$ we have $\circ_{i=0}^{n-1} l_{(i+1)i}(m+1) = \mathrm{id}$. Now, consider any other paths $q_{C_0}^C=(C_0,D_1,\cdots,D_{k-1},D_k=C)$ and $q_{C'}^{C_0}=(D_{k+1}=C',D_{k+2}\cdots,D_{n-1},C_0)$ in $G_m$ with the same end points as $p_{C_0}^C$ and $p_{C'}^{C_0}$, respectively. Then the context cycle $\beta = q_{C'}^{C_0} \circ (C',C) \circ q_{C_0}^C$ can be written as
    \begin{align*}
        \beta &= q_{C'}^{C_0} \circ (C',C) \circ q_{C_0}^C \\
        &= q_{C'}^{C_0} \circ (p_{C_0}^{C'} \circ p_{C'}^{C_0}) \circ (C',C) \circ (p_{C_0}^C \circ p_C^{C_0}) \circ q_{C_0}^C \\
        &= (q_{C'}^{C_0}\circ p_{C_0}^{C'}) \circ \alpha \circ (p_C^{C_0}\circ q_{C_0}^C)\; ,
    \end{align*}
    where e.g. $p_C^{C_0}=(p_{C_0}^C)^{-1}=(C_k=C,C_{k-1},\cdots,C_0)$ denotes the context path $p_{C_0}^C$ in reversed order such that $p_C^{C_0}\circ p_{C_0}^C=(C_0)$, and similarly for the other paths.
    
    Let $\Xi(G_m)$ denote the space of context connections restricted to context overlaps (between maximal contexts $\COm$) given by $E_m$. Then, $\Xi(G_1)=\Xi(\rCOm)$ and $\Xi(G_N)=\Xi(\COm)$. By the inductive hypothesis, there exists $\fl(m)\in\Xi(G_m)$ satisfying the triviality constraints in Eq.~(\ref{eq: CNC}), in particular, this applies to the context cycles $(q_{C'}^{C_0}\circ p_{C_0}^{C'})$ and $(p_C^{C_0}\circ q_{C_0}^C)$ which lie in $G_m$. Moreover, $\fl(m+1)$ satisfies Eq.~(\ref{eq: CNC}) for $\alpha$ (as shown above), and since $\fl(m+1)|_{G_m}=\fl(m)$ by construction, $\fl(m+1)$ thus satisfies the constraint in Eq.~(\ref{eq: CNC}) for $\beta$, and thus for all context cycles $\gamma'$ in $G_{m+1}$ (which contain $(C',C)$). Finally, by induction we thus find a context connection $\fl(N)\in\Xi(G_N)=\Xi(\COm)$ such that Eq.~(\ref{eq: CNC}) holds, hence, the result follows by Thm.~\ref{thm: CNC for OAs}.
\end{proof}

Building on the characterisation of the classical embedding problem (defined in Eq.~(\ref{eq: KSNC})) for observable algebras in terms of context connections on their partial order of contexts in Thm.~\ref{thm: CNC for OAs}, Thm.~\ref{thm: KSNC of truncated OAs} further simplifies the characterisation of Kochen-Specker noncontextuality for finitely generated observable algebras. (In Thm.~\ref{thm: optimal representation} in App.~\ref{app: reduced context categories} we prove additional reduction arguments of similar type, in particular, we will show that the representation in terms of measurement scenarios, compatibility graphs etc is generally redundant.) In what follows, we discuss two immediate consequences of Thm.~\ref{thm: KSNC of truncated OAs}.\\

\textbf{Acyclic observable algebras.} Thm.~\ref{thm: KSNC of truncated OAs} motivates the following definition.

\begin{definition}[acyclic observable algebras]\label{def: acyclic context category}
    Let $\cO$ be an observable algebra with partial order of contexts $\CO$. $\cO$ is called \emph{acyclic} if $\CO$ contains only trivial context cycles.
\end{definition}

Note that acyclicity of $\cO$ is a property of its partial order of contexts, and we thus also call $\CO$ \emph{acyclic} if $\rCO = \CO/\{C_\One\}$. Thm.~\ref{thm: KSNC of truncated OAs} immediately implies the following.

\begin{corollary}\label{cor: acyclic = KS}
    Let $\cO$ be a countably infinitely generated acyclic observable algebra. Then $\cO$ is Kochen-Specker noncontextual.
\end{corollary}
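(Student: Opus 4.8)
The plan is to deduce Corollary~\ref{cor: acyclic = KS} directly from Theorem~\ref{thm: KSNC of truncated OAs}, exploiting the fact that acyclicity means exactly that $\rCO$ contains no non-trivial context cycles. First I would observe that, by Definition~\ref{def: acyclic context category}, every context cycle in $\CO$ is trivial, i.e. for every cycle $\gamma=(C_0,\dots,C_{n-1})$ there is some $i\in\zz_n$ with $C_{(i+1)\cap i}=C_\One$. In the truncated partial order $\rCO=\CO/\{C_\One\}$ these trivial cycles are, by construction, no longer cycles at all (the shared subcontext that closed them up has been removed), so $\rCO$ is genuinely acyclic in the graph-theoretic sense: the underlying graph $G_1=(V,E_1)$ with $V=\COm$ and $(C',C)\in E_1$ iff $C\cap C'\neq C_\One$, as in the proof of Theorem~\ref{thm: KSNC of truncated OAs}, is a forest.

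Next I would construct a flat context connection on $\rCO$ explicitly, invoking Theorem~\ref{thm: CNC for OAs} (applied to the maximal extension, or directly to $\rcO$ after passing to maximality via Lm.~\ref{lm: KSNC inherits from maximality}). Since $G_1$ is a forest, one can choose, within each connected component, a root maximal context $C_0$ and orient all edges away from it; then define the bijections $l_{C'C}\colon\cP_1(C)\ra\cP_1(C')$ along each edge of the forest arbitrarily subject only to $l_{C'C}|_{\cP_1(C\cap C')}=\id$ (which is possible because $|\cP_1(C)|=|\cP_1(C')|=d$), and extend to all pairs by composition along the unique path in the tree, setting $l_{C'C}=\id$ whenever $C,C'$ lie in distinct components. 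Because the graph has no cycles, there is no consistency obstruction: the triviality constraints in Eq.~(\ref{eq: CNC}) are vacuous on $\rCO$ — any context cycle in $\COm$ that survives to $\rCOm$ would be a cycle in the forest $G_1$, of which there are none. Hence $\fl$ is flat, and by Theorem~\ref{thm: CNC for OAs}, $\rcO$ is Kochen-Specker noncontextual.

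Finally, I would apply Theorem~\ref{thm: KSNC of truncated OAs}: for a countably infinitely generated (in particular finitely generated) observable algebra $\cO$, $\cO$ is KS noncontextual if and only if $\rcO$ is. Since we have just exhibited a flat context connection on $\rcO$, it is KS noncontextual, and therefore so is $\cO$. This completes the argument.

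The only genuine subtlety — and the step I would be most careful about — is making sure the reduction to the maximal case is handled cleanly: Theorem~\ref{thm: CNC for OAs} as stated passes to the maximal extension $\cO^*$, and one must check that acyclicity of $\CO$ transfers appropriately, or else simply quote Theorem~\ref{thm: KSNC of truncated OAs} (which already incorporates Lm.~\ref{lm: KSNC inherits from maximality}) and Theorem~\ref{thm: CNC for OAs} in tandem so that the forest structure is used at the level where the context-connection machinery lives. Beyond that bookkeeping, the corollary is essentially immediate: acyclicity removes precisely the cyclic constraints that Theorem~\ref{thm: CNC for OAs} identifies as the sole obstruction to KS noncontextuality.
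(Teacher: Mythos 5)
Your proof is correct and follows essentially the same route as the paper: acyclicity means the truncated partial order $\rCO$ has no non-trivial context cycles, so the triviality constraints of Eq.~(\ref{eq: CNC}) are vacuous and Thm.~\ref{thm: CNC for OAs} gives KS noncontextuality of $\rcO$, whence Thm.~\ref{thm: KSNC of truncated OAs} transfers this to $\cO$. The paper leaves the existence of a flat connection implicit where you spell out the spanning-forest construction; the only (harmless) slip is the parenthetical suggesting countably infinitely generated implies finitely generated, which is backwards.
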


\begin{proof}
    If $\cO$ is acyclic, then there are no context cycles in $\cO^\circ$, hence, Eq.~(\ref{eq: CNC}) in Thm.~\ref{thm: CNC for OAs} poses no constraints and $\rCOm$ is trivially Kochen-Specker noncontextual. But then $\cO$ is also Kochen-Specker noncontextual by Thm.~\ref{thm: KSNC of truncated OAs}.
\end{proof}

The importance of cyclicity in proofs of the Kochen-Specker theorem has been recognised before. Indeed, many proofs explicitly exhibit a ``cyclic nature" in terms of their violation of the triviality constraints in Eq.~(\ref{eq: CNC}) \cite{Mermin1990,Mermin1993,Kernaghan1994,KernaghanPeres1995,CabelloEstebanzGarcia-Alcaine1997}. By Thm.~\ref{thm: CNC for OAs}, this is necessarily the case, and Cor.~\ref{cor: acyclic = KS} further distills this cyclic nature of KS contextuality.\\

\textbf{Meyer-Clifton-Kent (MCK) models.} It has been suggested that by restricting the set of quantum observables one may avoid the interpretational implications of the Kochen-Specker theorem \cite{Pitowsky1983,Pitowsky1985}. Motivated by the fact that two quantum observables, whose difference has infinitesimal operator norm, are experimentally indistinguishable, various restrictions to observable algebras $\cO\subset\LHsa$ that are dense in $\LHsa$ have been discussed. In Ref.~\cite{Meyer1999,Kent1999}, it was first shown that unlike rays in $\R^d$, rays in $\mathbb{Q}^d$ admit valuations (``truth functions''), that is, KS-colourings (see Def.~\ref{def: KS-colouring}). While this is not sufficient to bypass the Kochen-Specker theorem (see Sec.~\ref{sec: KS-colouring vs colouring}), the observable subalgebras $\cO\subset\LHsa$ constructed in Ref.~\cite{CliftonKent2000} admit a ``full collection of truth functions'', that is, are $d$-colourable and thus indeed Kochen-Specker noncontextual by Thm.~\ref{thm: KSNC = n-colourability}. With Cor.~\ref{cor: acyclic = KS}, we can give a new proof and structural explanation of this fact.

\begin{corollary}\label{cor: MCK models}
    The maximal observable algebras $\cO_\mathrm{MCK}$ in Ref.~\cite{CliftonKent2000} are Kochen-Specker noncontextual.
\end{corollary}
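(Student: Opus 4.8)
The plan is to recognise the Clifton-Kent algebras as \emph{acyclic} in the sense of Def.~\ref{def: acyclic context category} and then to apply Cor.~\ref{cor: acyclic = KS}.

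First I would recall the structural content of the construction in Ref.~\cite{CliftonKent2000}. For each finite $d\geq 3$ it produces a countable family of orthonormal bases $\{B_n\}_{n\in\mathbb{N}}$ of $\C^d$ (resp.\ $\R^d$) whose rays are dense in the projective space and which is in \emph{general position}, in the sense that no ray lies in two distinct bases and, moreover, no two rays drawn from distinct bases are orthogonal. The existence of such a family is a genericity argument: at the inductive step where $B_{n+1}$ is added, the rays and orthogonality relations already present impose only finitely many proper analytic conditions on its $d$-tuple of unit vectors, which can therefore be chosen to avoid all of them while preserving density; I would cite this as the content of Ref.~\cite{CliftonKent2000} rather than reprove it. Writing $\cO_\mathrm{MCK}$ for the observable subalgebra of $\LHsa$ generated by (spectral projections of) the bases $B_n$, the general-position property has two consequences: every maximal context of $\cO_\mathrm{MCK}$ is the context $C_n$ generated by some $B_n$ and is minimally generated by its $d$ one-dimensional projections (so $\cO_\mathrm{MCK}$ is a maximal observable algebra of dimension $\dim(\One)=d$ in the sense of Def.~\ref{def: max OA}), and, for $n\neq m$, the contexts $C_n$ and $C_m$ share no minimal projection, hence $C_n\cap C_m=C_\One$.

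Next I would read off acyclicity. By the last point, any context cycle $\gamma=(C_{n_0},\dots,C_{n_{k-1}})$ in $\cC_\mathrm{max}(\cO_\mathrm{MCK})$ satisfies $C_{n_i}\cap C_{n_{i+1}}=C_\One$ for every $i\in\zz_k$, and is therefore trivial; thus $\cC(\cO_\mathrm{MCK})$ contains only trivial context cycles, i.e.\ $\cO_\mathrm{MCK}$ is acyclic. Since $\{B_n\}$ is countable, $\cO_\mathrm{MCK}$ is countably infinitely generated, so Cor.~\ref{cor: acyclic = KS} applies and gives that $\cO_\mathrm{MCK}$ is Kochen-Specker noncontextual. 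Equivalently, the general-position property says that the associated orthogonality graph $G(\cO_\mathrm{MCK})$ is a disjoint union of $d$-cliques, so $\chi(G(\cO_\mathrm{MCK}))=d=\dim(\One)$ and Thm.~\ref{thm: KSNC = n-colourability} yields the same conclusion, exhibiting the ``full collection of truth functions'' of Ref.~\cite{CliftonKent2000} as the obvious $d$-colouring.

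The only real obstacle is bookkeeping internal to Ref.~\cite{CliftonKent2000}: one must confirm that their dense set of rays can be taken so that distinct bases neither share a ray nor contain mutually orthogonal rays, since it is precisely this ``non-interlocking'' of bases, not density nor the existence of truth functions per se, that makes $\cC(\cO_\mathrm{MCK})$ acyclic. Granting that, the corollary is immediate from Cor.~\ref{cor: acyclic = KS}.
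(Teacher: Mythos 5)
Your proof is correct and takes essentially the same route as the paper: both arguments rest on the fact that distinct maximal contexts of the Clifton--Kent construction overlap only in $C_\One$, so that $\cC(\cO_\mathrm{MCK})$ is acyclic and Cor.~\ref{cor: acyclic = KS} applies. The extra detail you supply about the general-position property of the bases, and the alternative reading via Thm.~\ref{thm: KSNC = n-colourability}, are consistent elaborations of the same argument.
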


\begin{proof}
    Let $\cO_\mathrm{MCK}$ denote the maximal observable algebras generated by the projections in Thm.~2.2 in Ref.~\cite{CliftonKent2000}. By construction, any two non-identical maximal contexts $C,C'\in\cC_\mathrm{max}(\cO_\mathrm{MCK})$, $C\neq C'$ (equivalently, resolutions of the identity) overlap trivially $C\cap C' = \emptyset$. Consequently, $\cO_\mathrm{MCK}$ is acyclic and the result follows from Cor.~\ref{cor: acyclic = KS}.
\end{proof}

Acyclicity has also been studied in the marginal approach, where it characterises classically correlated observable algebras (see Def.~\ref{def: fully classically correlated}) by Vorob'ev's theorem \cite{Vorob1962}. In the next section, we will review Vorob'ev's theorem, compare it with Cor.~\ref{cor: acyclic = KS} and find that the characterisation of Kochen-Specker noncontextuality is more complex.

\subsubsection{Characterisation of classically correlated observable algebras}\label{sec: Vorob'ev theorem}

The problem of characterising all finitely generated observable algebras $\cO$ with $\Gamma[\CO]=\Gamma_\mathrm{cl}[\CO]$ (see Def.~\ref{def: fully classically correlated}) has a simple answer and was first obtained by Vorob'ev \cite{Vorob1962}. Here, we will phrase it in the setting of observable algebras. To this end, we will relate it with the formulation in Ref.~\cite{XuCabello2019} - in terms of the compatibility graphs $\cG(X)$ of measurement scenarios $M=(X,\cM)$, which we interpret as generating sets $X=\cX(\cO)$ of their uniquely associated observable algebras $\cO$ by Lm.~\ref{lm: OA-representation of measurement scenarios}. In particular, without loss of generality we will choose $\cX(\cO)=\PO$, interpreted as elementary ``yes-no" observables. For a detailed comparison between the respective notions, see App.~\ref{app: unified perspective}.

\begin{lemma}\label{lm: chordal=acyclic}
    Let $\cO$ be a finitely generated observable algebra with partial order of contexts $\CO$, event algebra $\PO$. Then $\cO$ is acyclic if and only if $\cG(\PO)$ is chordal.\footnote{A graph $G=(V,E)$ is \emph{chordal} if for any cycle $(v_0,\cdots,v_{n-1},v_0)$ with $v_i\in V$ for all $0\leq i\leq n-1$ and $n\geq 4$ there exists an edge connecting non-adjacent vertices in the cycle.}
\end{lemma}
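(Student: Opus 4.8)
The plan is to prove both implications by translating between the combinatorial structure of cycles in the compatibility graph $\cG(\PO)$ and context cycles in $\CO$. Recall that $\cG(\PO)$ has vertex set $\PO$ (the events, viewed as ``yes-no'' observables) with an edge between $p,q$ whenever $p\odot q$, i.e. whenever $p,q\in\PC$ for some context $C\in\CO$. The key dictionary is: a context cycle $\gamma=(C_0,\dots,C_{n-1})$ in $\COm$ with nontrivial overlaps $C_{i\cap i+1}\neq C_\One$ corresponds to picking, for each $i\in\zz_n$, some nontrivial event $p_i\in\cP(C_{i\cap i+1})\subset\PC_i\cap\PC_{i+1}$; conversely a sufficiently long chordless cycle in $\cG(\PO)$ should force a nontrivial context cycle in $\CO$. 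So the two implications are: $\cG(\PO)$ has a chordless cycle of length $\geq 4$ iff $\CO$ has a nontrivial context cycle (equivalently, $\cO$ is not acyclic).

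For the direction ``$\cO$ not acyclic $\Rightarrow$ $\cG(\PO)$ not chordal'': suppose $\gamma=(C_0,\dots,C_{n-1})$ is a nontrivial context cycle, so each $C_{i\cap i+1}\neq C_\One$. First reduce to the case where $\gamma$ is ``minimal'' in the sense that no $C_{i\cap j}$ for nonconsecutive $i,j$ is nontrivial (a minimal such cycle exists — otherwise we can shortcut $\gamma$ through that overlap and, since $n$ was finite, iterate; if $n$ drops below $4$ one checks directly that $n=3$ still yields a triangle-free obstruction only when all three pairwise overlaps collapse appropriately — here is one place to be careful, see below). Having a minimal nontrivial cycle with $n\geq 4$, pick nontrivial events $p_i\in\cP(C_{i\cap i+1})$. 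Then $p_i$ and $p_{i+1}$ are compatible (both lie in $\PC_{i+1}$), so $(p_0,p_1,\dots,p_{n-1},p_0)$ is a cycle in $\cG(\PO)$; and $p_i,p_j$ for nonconsecutive $i,j$ are \emph{not} compatible, since a common context would have to contain $C_{i\cap i+1}$ and $C_{j\cap j+1}$ and hence witness a nontrivial overlap $C_{i\cap j}$ or $C_{(i+1)\cap j}$ etc., contradicting minimality (this needs a small argument using that commutative subalgebras of a common context intersect in a nontrivial subalgebra when the generating events are nontrivial — use Specker-free reasoning, just joint measurability). Hence this cycle is chordless of length $\geq 4$, so $\cG(\PO)$ is not chordal.

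For the converse, ``$\cG(\PO)$ not chordal $\Rightarrow$ $\cO$ not acyclic'': take a chordless cycle $(v_0,\dots,v_{n-1},v_0)$ in $\cG(\PO)$ with $n\geq 4$. Each consecutive pair $v_i,v_{i+1}$ is jointly measurable, hence contained in some context; let $C_i$ be a maximal context containing both $v_{i-1}$ and $v_i$ (exists by finite-dimensionality, extending the context generated by $\{v_{i-1},v_i\}$ to a maximal one). This yields a tuple of maximal contexts $(C_0,\dots,C_{n-1})$ with $v_i\in C_i\cap C_{i+1}$, so each overlap $C_{i\cap i+1}$ contains the nontrivial event $v_i$ and is thus $\neq C_\One$ — a nontrivial context cycle. (One must double-check this is genuinely a context \emph{cycle} in the sense of Def.~\ref{def: context cycle}, i.e. a closed tuple; it is, by construction with indices in $\zz_n$. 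Chordlessness of the $v_i$-cycle is what prevents the $C_i$ from all collapsing to a single context, which would make the ``cycle'' trivial — but note the definition of context cycle does not itself require the $C_i$ distinct, so strictly we only need the overlaps nontrivial, which we have.)

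The main obstacle I expect is the bookkeeping in the reduction to a \emph{minimal} nontrivial context cycle and, correspondingly, controlling the short cases $n=3$: a triangle in $\cG(\PO)$ need not produce any obstruction to acyclicity (three pairwise-compatible events may all lie in one context), and conversely three maximal contexts with pairwise nontrivial overlaps form a context cycle of length $3$, which is excluded from the definition of chordal graphs (cycles of length $<4$). So the equivalence is really between ``nontrivial context cycles of all lengths'' and ``chordless cycles of length $\geq 4$'', and one has to verify that a nontrivial context cycle, if it has length $3$, can always be re-expressed as (or already forces) one of length $\geq 4$ — or else that such a length-$3$ configuration cannot occur in isolation in a finitely generated $\cO$ without a longer chordless cycle lurking. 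I would handle this by arguing that a nontrivial $3$-cycle $(C_0,C_1,C_2)$ with all three pairwise overlaps nontrivial but no common refinement gives three events $p_0,p_1,p_2$ that are pairwise compatible but not jointly so, and in a finitely generated observable algebra the failure of joint compatibility for a pairwise-compatible triple propagates (via the compatibility graph of \emph{all} events, not just these three) to a chordless $4$-cycle; making this last step precise is the crux, and I would lean on the explicit structure of $\cG(\PO)$ together with the fact, used throughout Sec.~\ref{sec: (a)cycles}, that acyclicity is exactly the obstruction quantified over \emph{all} cycles.
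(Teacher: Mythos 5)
Your proposal follows the paper's proof essentially step for step. For ``not acyclic $\Rightarrow$ not chordal'' the paper likewise passes to a non-trivial context cycle of \emph{minimal length} (which forces the $C_i$ to be pairwise distinct), takes the maximal projection of each overlap $C_{i+1}\cap C_i$, and argues that a chord between non-consecutive projections would yield a shorter non-trivial context cycle; for the converse it takes a \emph{minimal} chordless cycle of events, extends each consecutive pair to a maximal context, and reads off a non-trivial context cycle exactly as you do. Strategically there is nothing separating the two arguments.

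The short-cycle issue you flag as ``the crux'' is, however, a genuine gap --- and it is worth knowing that the paper's own proof does not close it either: it simply asserts that the cycle $(p_{(i+1)i})_{i\in\zz_n}$ is non-chordal, which is vacuous when the minimal non-trivial context cycle has length $n=3$ (and the degenerate lengths $n=1,2$ are formally ``non-trivial'' under Def.~\ref{def: acyclic context category} as well, so some implicit convention on context cycles is clearly intended). More importantly, your sketched repair does not work. Under the closure axioms of Def.~\ref{def: abstract observable algebra} (a fortiori under Specker's principle), the three overlap projections of a triangle of contexts are pairwise compatible and hence forced into a \emph{common} context, so the ``pairwise but not jointly compatible'' obstruction you want to propagate is never available. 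And with the common refinement present the compatibility graph can remain chordal: take in $\dim(\cH)=4$ the contexts $C_0=C(e_1,e_2,e_3,e_4)$, $C_1\ni e_1,s$, $C_2\ni e_2,s$ with $s$ a generic rank-one projection in $\mathrm{span}(e_3,e_4)$, so that the pairwise overlaps are exactly $\langle e_1\rangle$, $\langle e_2\rangle$, $\langle s\rangle$; closure adjoins the fourth maximal context $C_3=C(e_1,e_2,s,\One-e_1-e_2-s)$, which contains all three pairwise overlaps. Then $\cG(\PO)$ is covered by four maximal cliques, an induced cycle of length $m\geq 4$ would need $m$ cliques each covering exactly one of its edges, and in every cyclic assignment of the four cliques one diagonal pair of vertices lands inside $\cP(C_3)$ and produces a chord --- so $\cG(\PO)$ is chordal, while $(C_0,C_1,C_2)$ is still a non-trivial context cycle in the literal sense of Def.~\ref{def: acyclic context category}. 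The direction ``chordal $\Rightarrow$ acyclic'' therefore cannot be rescued by your propagation argument; it requires either reading ``acyclic'' as discounting such reducible $3$-cycles (non-consecutive overlaps trivial, length at least four), or strengthening the definition of a trivial context cycle. Your instinct about where the difficulty sits is exactly right; the fix you propose is not.
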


\begin{proof} 
    ``$\Leftarrow$": Assume that $\cO$ is not acyclic and let $(C_0,\cdots,C_{n-1})$ be a non-trivial context cycle in $\CO$ of minimal length, in particular, $C_i\neq C_j$ for $i,j\in\zz_n$, $i\neq j$. For $i\in\zz_n$, denote by $p_{(i+1)i}$ the maximal projection of the subcontexts $C_{(i+1)i}=C_{(i+1)}\cap C_i$, then
    \begin{align*}
        C(p_{(i+1)i},\One):=\R p_{(i+1)i}+\R(\One-p_{(i+1)i})\subset C_{(i+1)i}\; .
    \end{align*}
    Since therefore $p_{(i+1)i},p_{i(i-1)}\in\cP(C_i)$, the corresponding dichotomoic observables are connected by an edge in the compatibility graph $\cG(\PO)$. It follows that $(p_{(i+1)i})_{i=0}^{n-1}$ defines a cycle in $\cG(\PO)$ which is non-chordal. Indeed, if it was chordal, there would exist $j\notin\{i-1,i,i+1\}$ such that $p_{(i+1)i}\in C_j$, hence, $C_i\cap C_j\neq C_\One$ contradicting that $(C_0,\cdots,C_{n-1})$ is a minimal non-trivial context cycle.

    ``$\Rightarrow$": Conversely, assume that $\cG(\PO)$ is non-chordal and let $(p_{10},\cdots,p_{0(n-1)})$ be a minimal non-chordal cycle in $\cG(\PO)$, that is, (the dichotomic observables defined by) $p_{ji}$ and $p_{kl}$ are connected by an edge in $\cG(\PO)$ if and only if $i=k$ or $j=l$. Define $C_{(i+1)i}:=C(p_{(i+1)i},\One)=\R p_{(i+1)i}+\R(\One-p_{(i+1)i})$ and let $C_i\in\COm$ be maximal contexts such that $C_{(i+1)i}=C_{(i+1)}\cap C_i$ for all $i\in\zz_n$. Then we have $C_{ij}= C_i\cap C_j=C_\One=C(\One)$ for $j\notin\{i-1,i,i+1\}$, since otherwise there would exist a smaller cycle in $\cG(\PO)$ contradicting minimality of $(p_{01},\cdots,p_{(n-1)0})$. It follows that $(C_0,\cdots,C_{n-1})$ is a non-trivial context cycle in $\CO$, hence, $\cO$ is acyclic.
\end{proof}

\begin{theorem}[Vorob'ev theorem \cite{Vorob1962,XuCabello2019}]\label{thm: Vorob'ev theorem}
    Let $\cO$ be a finitely generated observable algebra with partial order of contexts $\CO$. Then $\Gamma[\CO]=\Gamma_\mathrm{cl}[\CO]$ if and only if $\CO$ is acyclic.
\end{theorem}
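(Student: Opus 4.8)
The plan is to prove the two implications separately, trading on Lm.~\ref{lm: chordal=acyclic}, which identifies acyclicity of $\CO$ with chordality of the compatibility graph $\cG(\PO)$, and ultimately on the classical ``gluing of marginals along a join tree'' argument behind Vorob'ev's original result \cite{Vorob1962}. Since $\cO$ is finitely generated, $\COm$, every outcome space $\Sigma_C$, and $\cG(\PO)$ are finite, so all distributions involved live on finite sets and the target space $\Lambda$ may be taken finite; in particular I may freely appeal to Fine's theorem \cite{Fine1982a} to pass between (possibly fractional) classical solutions and convex combinations of deterministic assignments.

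For the direction ``$\CO$ acyclic $\Rightarrow\Gamma[\CO]=\Gamma_\mathrm{cl}[\CO]$'': by Lm.~\ref{lm: chordal=acyclic} (and the standard equivalence between acyclic hypergraphs and the existence of a join tree) acyclicity lets me enumerate the maximal contexts $C_1,\dots,C_m\in\COm$ so that the \emph{running intersection property} holds: for each $k\geq 2$ there is $j(k)<k$ with $C_k\cap(C_1\cup\cdots\cup C_{k-1})\subseteq C_{j(k)}$. Given $\gamma=(\gamma_C)_{C\in\CO}\in\Gamma[\CO]$, I build a distribution over a minimal generating set of dichotomic observables (the vertices of $\cG(\PO)$) by induction: start with $\mu^{(1)}=\gamma_{C_1}$, and extend $\mu^{(k-1)}$, defined on the generators of $C_1,\dots,C_{k-1}$, to $\mu^{(k)}:=\mu^{(k-1)}\,\gamma_{C_k}(\,\cdot\mid C_k\cap C_{j(k)})$, i.e.\ glue in $\gamma_{C_k}$ along its overlap with $C_{j(k)}$. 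The no-disturbance condition Eq.~(\ref{eq: no-disturbance}) guarantees $\gamma_{C_k}$ and $\gamma_{C_{j(k)}}$ agree on $C_k\cap C_{j(k)}$, so the conditional is well defined and $\mu^{(k)}$ restricts to $\gamma_{C_k}$; the running intersection property guarantees $C_k$ meets $C_1\cup\cdots\cup C_{k-1}$ only inside $C_{j(k)}$, so gluing introduces no clash and $\mu^{(k)}$ still restricts to every earlier $\gamma_{C_i}$. After $m$ steps, $\mu:=\mu^{(m)}$ induces an orthomorphism $\phi_\gamma:\PO\ra\cB(\Lambda)$ with $\gamma_C(p)=\mu(\Lambda_p)$ for all $C\in\CO$ and $p\in\PC$ (every context lies below some maximal one), so $\gamma\in\Gamma_\mathrm{cl}[\CO]$ by Def.~\ref{def: marginal problem}; as $\gamma$ was arbitrary, $\Gamma[\CO]=\Gamma_\mathrm{cl}[\CO]$.

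For the converse, suppose $\CO$ is not acyclic. Running the argument in the proof of Lm.~\ref{lm: chordal=acyclic}, a minimal non-trivial context cycle yields dichotomic observables $b_0,\dots,b_{n-1}\in\PO$ forming a cycle in $\cG(\PO)$ such that $b_i$ and $b_{i+1}$ are jointly measurable (both lying in a maximal context $C_{i+1}$, indices mod $n$) while $b_i,b_j$ are \emph{not} jointly measurable otherwise. I then define $\gamma\in\Gamma[\CO]$ by choosing, in each such $C_{i+1}$, the joint distribution of $(b_i,b_{i+1})$ to be uniform on $\{(0,0),(1,1)\}$ for $i=0,\dots,n-2$ and uniform on $\{(0,1),(1,0)\}$ for $i=n-1$, with all remaining generators (in these and all other contexts) taken independent and uniform; each $b_i$ then has uniform marginal, so the $\gamma_C$ agree on all overlaps and no-disturbance holds. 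If $\gamma$ had a classical solution it would, by Fine's theorem \cite{Fine1982a}, be a convex combination of deterministic assignments, one of which lies in the support of every $\gamma_{C_{i+1}}$; such an assignment satisfies $b_i=b_{i+1}$ for $i=0,\dots,n-2$ and $b_{n-1}\neq b_0$, which is impossible. Hence $\gamma\in\Gamma[\CO]\setminus\Gamma_\mathrm{cl}[\CO]$, completing the contrapositive.

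The main obstacle is the acyclic direction, specifically the claim that gluing $\gamma_{C_k}$ along its single overlap with $C_{j(k)}$ cannot conflict with any previously fitted $\gamma_{C_i}$ — equivalently, that local consistency along a spanning join tree forces consistency on \emph{all} pairwise overlaps. This is exactly what acyclicity (chordality of $\cG(\PO)$) delivers, since every overlap $C_k\cap C_i$ with $i<k$ is already contained in $C_{j(k)}$ and was matched in an earlier step; it is also precisely the step that breaks in the cyclic case. A secondary, bookkeeping-flavoured point is extracting the running intersection ordering from Def.~\ref{def: acyclic context category} via Lm.~\ref{lm: chordal=acyclic}; and if one prefers to avoid fixing a minimal generating set, one can run the whole construction at the level of $\CO$ and pass to an embedding using Lm.~\ref{lm: embeddings vs orthomorphisms vs order-preserving maps}.
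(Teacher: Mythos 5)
Your proposal takes a genuinely different route from the paper. The paper's proof is a two-step reduction: Lm.~\ref{lm: chordal=acyclic} identifies acyclicity of $\CO$ with chordality of the compatibility graph of a generating set, and then \emph{both} directions are delegated to the main theorem of Ref.~\cite{XuCabello2019} via the correspondence between observable algebras and measurement scenarios (Lm.~\ref{lm: OA-representation of measurement scenarios}). You instead reprove the Vorob'ev/Xu--Cabello content from scratch: a junction-tree gluing of marginals for the acyclic direction, and an explicit PR-box-like state on an induced cycle, killed by a parity argument, for the converse. This buys a self-contained argument that makes the mechanism visible (where the paper's version is shorter but opaque), and your forward direction is sound: chordality gives a clique ordering with the running intersection property, no-disturbance (Eq.~(\ref{eq: no-disturbance})) makes each conditional well defined, and RIP guarantees every earlier overlap is already matched inside $C_{j(k)}$.

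There is, however, a gap in the converse, precisely at the step you dispatch in one clause: ``each $b_i$ then has uniform marginal, so the $\gamma_C$ agree on all overlaps and no-disturbance holds.'' Membership in $\Gamma[\CO]$ requires consistency on \emph{every} overlap $C\cap C'$ of maximal contexts (Def.~\ref{def: state}), and in a general finitely generated observable algebra such an overlap need not be generated by shared generators of $C$ and $C'$: it can contain coarse-grained projections $t=p_1+p_2=q_1+q_2$ with $p_i\in\PC$, $q_i\in\cP(C')$ and $p_i\neq q_j$. Agreement of the single-observable marginals of the $b_i$ does not enforce agreement on such elements, and the prescription ``independent and uniform on the remaining generators'' is not even unambiguously consistent when maximal contexts have different numbers of atoms (uniform over $k_C$ atoms and uniform over $k_{C'}$ atoms assign $2/k_C$ versus $2/k_{C'}$ to a shared two-atom coarse-graining). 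This is exactly the work the paper offloads to Ref.~\cite{XuCabello2019}, which operates in the measurement-scenario picture where overlaps \emph{are} just shared measurement labels. Your argument can be repaired by first passing to that picture (or to the reduced context category of Thm.~\ref{thm: optimal representation}, where overlaps of maximal contexts are the designated intersections) before extending the cyclic behaviour by products of fixed single-observable marginals; as written, the existence of your $\gamma$ in $\Gamma[\CO]$ is asserted rather than proved.
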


\begin{proof}
    For any choice of generating set $\cX(\cO)$, we obtain a measurement scenario $\cM(\cX(\cO))$ by Def.~\ref{def: associated measurement scenario}. Clearly, if $\cX(\cO)=\PO$ is chordal then the same holds for any other choice of generating set, hence, $\cX(\cO)$ is chordal if and only if $\cO$ is acyclic by Lm.~\ref{lm: chordal=acyclic}. By the main result in Ref.~\cite{XuCabello2019} it follows that $\Gamma[\CO]=\Gamma_\mathrm{cl}[\CO]$.
    
    Conversely, if $\Gamma[\CO]=\Gamma_\mathrm{cl}[\CO]$ then the compatibility graph $\cG(\cX(\cO))$ of any measurement scenario corresponding to a generating set $\cX(\cO)$ of $\cO$ (via Lm.~\ref{lm: OA-representation of measurement scenarios}) is chordal by the main theorem in Ref.~\cite{XuCabello2019}. Hence, $\cO$ is acyclic by Lm.~\ref{lm: chordal=acyclic}. (For more details on the relation between measurement scenarios and observable algebras, see App.~\ref{app: measurement scenarios vs OAs}.)
\end{proof}

Crucially, the statement in Thm.~\ref{thm: Vorob'ev theorem} is a property of the partial order of contexts of an observable algebra only. As such it is similar to the characterisation of Kochen-Specker noncontextuality in terms of context connections in Thm.~\ref{thm: CNC for OAs}. However, by Prop.~\ref{prop: nonclassical correlations in KS noncontextual OAs} the respective notions of contextuality in Def.~\ref{def: fully classically correlated} and Def.~\ref{def: KSNC} differ. Since an acyclic observable algebra is also Kochen-Specker noncontextual by Cor.~\ref{cor: acyclic = KS}, it follows that acyclicity cannot also be sufficient for KS noncontextuality, that is, the converse of Cor.~\ref{cor: acyclic = KS} must fail.

\subsubsection{Kochen-Specker contextuality in single context cycles}\label{sec: single cycles}

Thm.~\ref{thm: CNC for OAs} casts the functional constraints in Eq.~(\ref{eq: KSNC}) as triviality constraints for context connections. This motivates the question whether a single (non-trivial) context cycle can already be sufficient to detect KS contextuality.

Recall that Ref.~\cite{AraujoEtAl2013} characterises the $n$-cycle contextuality scenario in dimension $d$ in Def.~\ref{def: n-cycle scenario} from the perspective of the marginal approach and Def.~\ref{def: fully classically correlated}. There, it is shown that $\Gamma_\mathrm{cl}[\cO(n,d)]\subsetneq \Gamma_\mathrm{qm}[\cO(n,d)]\subsetneq\Gamma_\mathrm{nd}[\cO(n,d)]$ (see Eq.~(\ref{eq: cl-qu-nd states})), in particular, there exist quantum correlations that violate some of the noncontextuality inequalities which define the noncontextuality polytope $\Gamma_\mathrm{cl}[\cO(n,d)]$. What is more, a closer look at the main result in Ref.~\cite{XuCabello2019} reveals that the direction ``fully classically correlated $\Rightarrow$ acyclic'' in Thm.~\ref{thm: Vorob'ev theorem} follows from the existence of beyond classical correlations in the $n$-cycle contextuality scenario. Since the other direction holds for both notions of contextuality, in particular, ``acyclic $\Rightarrow$ KS noncontextual'' by Cor.~\ref{cor: acyclic = KS}, and since both notions are distinct, the converse of Cor.~\ref{cor: acyclic = KS} must fail in general. To see this explicitly, we first discuss a well-known example in four-dimensional quantum theory, namely, we construct a classical embedding of the observable algebra corresponding to the CHSH scenario \cite{CHSH1969}.

\begin{example}[Classical embedding for the CHSH scenario]\label{ex: classical embedding of CHSH}
    Let $\Lambda=\Lambda_A\times\Lambda_B=\{\lambda_{ij}\}_{i,j=-2}^2$ denote a set consisting of $16$ elements, corresponding to the product of two ``elementary systems" with $\Lambda_A=\{\lambda^A_i\}_{i=-2}^2$ and $\Lambda_B=\{\lambda^B_j\}_{j=-2}^2$. Measurements are represented by certain sub-partitions of $\Lambda$.\footnote{This construction is reminiscent of Spekkens' toy theory \cite{Spekkens2007}, where measurements further satisfy the ``knowledge-balance principle", by which a ``maximally informative measurement" yields as much knowledge about $\Lambda$ as one lacks, together with consistency under repeated measurements and state-update rule (for details, see Ref.~\cite{Spekkens2007}). These additional constraints result in the fact that measurements generally do not commute, and are thus in conflict with a classical embedding of the form in Eq.~(\ref{eq: KSNC}). Here, we do not impose the ``knowledge-balance principle", and instead simply consider the usual operations on (measurable) functions in $L_\infty(\Lambda)$.} Maximally informative measurements partition the joint system into 4 subsets containing 4 ``ontic" states each. Non-maximally informative measurements arise by coarse-graining of maximal ones, where coarse-graining obtains the natural meaning on $\Lambda$ (see Fig.~\ref{fig: classical embedding for CHSH}).
    
    The observable algebra $\cO_\mathrm{CHSH}$ underlying the CHSH scenario \cite{CHSH1969} is generated by four product measurements $S_A\otimes S_B$, $S_A\otimes S'_B$, $S'_A\otimes S_B$ and $S'_A\otimes S'_B$ on a pair of spin-$\frac{1}{2}$ systems, corresponding to two choices of measurements for Alice and Bob. Together, these generate four maximal contexts with subcontexts given by the coarse-grained observables $S_A\otimes\One_B,S'_A\otimes\One_B$ and $\One_A\otimes S_B,\One_A\otimes S'_B$ and trivial context generated by $\One=\One_A\otimes\One_B$.
    
    We represent Alice's local spin-$\frac{1}{2}$ observables as (measurable) functions on $\Lambda_A$ via
    \begin{align*}
        f_{S_A} &:= \{\{\lambda^A_{-2},\lambda^A_{-1}\}\ra \bar{1},\ 
        \{\lambda^A_1,\lambda^A_2\}\ra 1\}, \\[0.2cm]
        f_{S'_A} &:= \{\{\lambda^A_{-2},\lambda^A_2\}\ra \bar{1},\ 
        \{\lambda^A_{-1},\lambda^A_1\}\ra 1\}, \\[0.2cm]
        f_{\One_A} &:= \{\{\Lambda_A\ra 1\}\; ,
    \end{align*}
    and similarly for Bob. Product observables are represented via $f_{S_AS_B}=f_{S_A}\times f_{S_B}$, e.g.
    \begin{align*}
        f_{S_AS_B} &:= \{\{\lambda^A_{-2},\lambda^A_{-1}\}\times\{\lambda^B_{-2},\lambda^B_{-1}\}\ra 1,\ 
        \{\lambda^A_{-2},\lambda^A_{-1}\}\times\{\lambda^B_1,\lambda^B_2\}\ra\bar{1},\\
        &\quad\ \ \{\lambda^A_1,\lambda^A_2\}\times\{\lambda^B_{-2},\lambda^B_{-1}\}\ra\bar{1},\ 
        \{\lambda^A_1,\lambda^A_2\}\times\{\lambda^B_1,\lambda^B_2\}\ra 1\}\; .
    \end{align*}
    These functions together with their coarse-graining relations are shown in Fig.~\ref{fig: classical embedding for CHSH} (a).

    The map $\epsilon:\cO_\mathrm{CHSH}\ra L_\infty(\Lambda)$ sending the generating observables in $\cO_\mathrm{CHSH}$ to (measurable) functions above thus defines a classical embedding for the CHSH scenario. Clearly, we have $\emptyset\neq\Gamma_\mathrm{cl}[\cO_\mathrm{CHSH}]$; moreover, $\Gamma_\mathrm{cl}[\cO_\mathrm{CHSH}]\subsetneq\Gamma[\cO_\mathrm{CHSH}]$ since the CHSH scenario allows for entanglement. Explicitly, the marginals arising from a Bell state, e.g. $|\psi\rangle=\frac{1}{\sqrt{2}}(|00\rangle+|11\rangle)$ which results in the (maximal) violation of the CHSH inequality in quantum theory \cite{CHSH1969,Tsirelson1980}, is not a valid probability distribution in $L^{\geq 0}_1(\Lambda)$.\footnote{This should not be confused with the representation of Bell entangled states as epistemic states on $\Lambda$ in Spekkens' toy theory, which rely on a different (to the classical Bayesian) state-update rule defined in line with the ``knowledge-balance principle" \cite{Spekkens2007}.} Indeed, denoting the outcomes of the respective observables by lower case letters, the events defined by $s_As_B=1$, $s_As'_B=1$, $s'_As_B=1$ and $s'_As'_B=\bar{1}$ are shown in Fig.~\ref{fig: classical embedding for CHSH} (b), together with the corresponding marginals of $\gamma_{|\psi\rangle}\in\Gamma_\mathrm{qu}[\cC(\cO_{\mathrm{CHSH}})]$, which maximises the (quantum-mechanical) expectation value $\mathbb{E}_{|\psi\rangle}(S_AS_B+S_AS'_B+S'_AS_B-S'_AS'_B)=2\sqrt{2}$ \cite{CHSH1969,Tsirelson1980}.\footnote{Note that the maximal violation of the CHSH inequality for general observable algebras (hence, for general probabilistic theories) even exceeds the (quantum) Tsirelson bound \cite{PopescuRohrlich1994,Tsirelson1980}.} Assuming that $\gamma_{|\psi\rangle}\in L_1(\Lambda)$,
    \begin{align*}
        \sum_{\lambda\in\Lambda}\gamma_{|\psi\rangle}(\lambda)+2\sum_{\lambda\in S\subset\Lambda}\gamma_{|\psi\rangle}(\lambda)=4\cos^2\left(\frac{\pi}{8}\right)>3\; .
    \end{align*}
    Yet, since $\sum_{\lambda\in\Lambda}\mu(\lambda)=1$, hence, $\sum_{\lambda\in S\subset\Lambda}\mu(\lambda)<1$ for every classical state $\mu\in L_1(\Lambda)$ with $\mu>0$, this implies $\gamma_{|\psi\rangle}\not >0$, hence, $\gamma_{|\psi\rangle}\notin \Gamma_\mathrm{cl}[\cC(\cO_{\mathrm{CHSH}})]$.
\end{example}

\begin{figure}
    \centering
    \begin{subfigure}{\textwidth}
        \centering
        \scalebox{0.6}{\input{CHSH_embedding}}
        \subcaption{Classical embedding $\epsilon:\cO_\mathrm{CHSH}\ra L_\infty(\Lambda)$ for the CHSH scenario \cite{CHSH1969}. The product observables $S\in\cO_\mathrm{CHSH}$ of local spin-$\frac{1}{2}$ observables $S_A,S'_A,\One_A$ for Alice and $S_B,S'_B,\One_B$ for Bob are represented as functions $f_S$ on the ``ontic'' state space $\Lambda$. Coarse-graining relations between these observables (and the respective contexts they generate) are indicated by dashed arrows.}
    \end{subfigure}
\end{figure}
\begin{figure}[ht]\ContinuedFloat
    \centering
    \begin{subfigure}{\textwidth}
        \centering
        \scalebox{0.8}{\begin{tikzpicture}

    \node [] (0) at (-2.5, 2.5) {};
    \node [] (1) at (2.5, 2.5) {};
    \node [] (2) at (-2.5, -2.5) {};
    \node [] (3) at (2.5, -2.5) {};
    \node [] (4) at (-1.25, 2.5) {};
    \node [] (5) at (0, 2.5) {};
    \node [] (6) at (1.25, 2.5) {};
    \node [] (7) at (-2.5, 1.25) {};
    \node [] (8) at (-2.5, 0) {};
    \node [] (9) at (-2.5, -1.25) {};
    \node [] (10) at (-1.25, -2.5) {};
    \node [] (11) at (0, -2.5) {};
    \node [] (12) at (1.25, -2.5) {};
    \node [] (13) at (2.5, -1.25) {};
    \node [] (14) at (2.5, 0) {};
    \node [] (15) at (2.5, 1.25) {};
    \node [] (21) at (0, -3.25) {$\Lambda$};
    \node [] (23) at (5, 4) {};
    \node [] (24) at (8, 4) {};
    \node [] (25) at (5, 1) {};
    \node [] (26) at (8, 1) {};
    \node [] (27) at (5.75, 4) {};
    \node [] (28) at (6.5, 4) {};
    \node [] (29) at (7.25, 4) {};
    \node [] (30) at (5, 3.25) {};
    \node [] (31) at (5, 2.5) {};
    \node [] (32) at (5, 1.75) {};
    \node [] (33) at (5.75, 1) {};
    \node [] (34) at (6.5, 1) {};
    \node [] (35) at (7.25, 1) {};
    \node [] (36) at (8, 1.75) {};
    \node [] (37) at (8, 2.5) {};
    \node [] (38) at (8, 3.25) {};
    \node [] (42) at (5, -1) {};
    \node [] (43) at (8, -1) {};
    \node [] (44) at (5, -4) {};
    \node [] (45) at (8, -4) {};
    \node [] (46) at (5.75, -1) {};
    \node [] (47) at (6.5, -1) {};
    \node [] (48) at (7.25, -1) {};
    \node [] (49) at (5, -1.75) {};
    \node [] (50) at (5, -2.5) {};
    \node [] (51) at (5, -3.25) {};
    \node [] (52) at (5.75, -4) {};
    \node [] (53) at (6.5, -4) {};
    \node [] (54) at (7.25, -4) {};
    \node [] (55) at (8, -3.25) {};
    \node [] (56) at (8, -2.5) {};
    \node [] (57) at (8, -1.75) {};
    \node [] (118) at (-8, 4) {};
    \node [] (119) at (-5, 4) {};
    \node [] (120) at (-8, 1) {};
    \node [] (121) at (-5, 1) {};
    \node [] (122) at (-7.25, 4) {};
    \node [] (123) at (-6.5, 4) {};
    \node [] (124) at (-5.75, 4) {};
    \node [] (125) at (-8, 3.25) {};
    \node [] (126) at (-8, 2.5) {};
    \node [] (127) at (-8, 1.75) {};
    \node [] (128) at (-7.25, 1) {};
    \node [] (129) at (-6.5, 1) {};
    \node [] (130) at (-5.75, 1) {};
    \node [] (131) at (-5, 1.75) {};
    \node [] (132) at (-5, 2.5) {};
    \node [] (133) at (-5, 3.25) {};
    \node [] (137) at (-8, -1) {};
    \node [] (138) at (-5, -1) {};
    \node [] (139) at (-8, -4) {};
    \node [] (140) at (-5, -4) {};
    \node [] (141) at (-7.25, -1) {};
    \node [] (142) at (-6.5, -1) {};
    \node [] (143) at (-5.75, -1) {};
    \node [] (144) at (-8, -1.75) {};
    \node [] (145) at (-8, -2.5) {};
    \node [] (146) at (-8, -3.25) {};
    \node [] (147) at (-7.25, -4) {};
    \node [] (148) at (-6.5, -4) {};
    \node [] (149) at (-5.75, -4) {};
    \node [] (150) at (-5, -3.25) {};
    \node [] (151) at (-5, -2.5) {};
    \node [] (152) at (-5, -1.75) {};
    \node [] (155) at (-4.5, -2.5) {};
    \node [] (156) at (-3, 0) {};
    \node [] (157) at (-4.5, 2.5) {};
    \node [] (158) at (3, 0) {};
    \node [] (159) at (4.5, 2.5) {};
    \node [] (160) at (4.5, -2.5) {};
    \node [] (161) at (-3.5, 1.5) {\rotatebox{-59}{\scriptsize{$s_As_B=1$}}};
    \node [] (162) at (-3.5, -1.5) {\rotatebox{59}{\scriptsize{$s'_As_B=1$}}};
    \node [] (163) at (3.5, 1.5) {\rotatebox{59}{\scriptsize{$s_As'_B=1$}}};
    \node [] (164) at (3.5,-1.5) {\rotatebox{-59}{\scriptsize{$s'_As'_B=\bar{1}$}}};

    %filling top left
    \fill[blue!40!white] (-5,4) rectangle (-6.5,2.5);
    \fill[blue!40!white] (-6.5,2.5) rectangle (-8,1);
    
    %fillings bottom left
    \fill[blue!40!white] (-5.75,-1) rectangle (-7.25,-2.5);
    \fill[blue!40!white] (-5,-2.5) rectangle (-5.75,-4);
    \fill[blue!40!white] (-7.25,-2.5) rectangle (-8,-4);

    %filling top right
    \fill[blue!40!white] (8,3.25) rectangle (6.5,1.75);
    \fill[blue!40!white] (6.5,1.75) rectangle (5,1);
    \fill[blue!40!white] (6.5,3.25) rectangle (5,4);

    %fillings bottom right
    \fill[blue!40!white] (5.75,-1) rectangle (7.25,-1.75);
    \fill[blue!40!white] (5,-1.75) rectangle (5.75,-3.25);
    \fill[blue!40!white] (7.25,-1.75) rectangle (8,-3.25);
    \fill[blue!40!white] (5.75,-3.25) rectangle (7.25,-4);
    %fillings bottom right - invereted
    %\fill[blue!40!white] (5.75,-1.75) rectangle (7.25,-3.25);
    %\fill[blue!40!white] (5,-1) rectangle (5.75,-1.75);
    %\fill[blue!40!white] (7.25,-1) rectangle (8,-1.75);
    %\fill[blue!40!white] (5,-3.25) rectangle (5.75,-4);
    %\fill[blue!40!white] (7.25,-3.25) rectangle (8,-4);

    %first line
    \fill[blue!40!white] (-2.5,2.5) rectangle (-1.25,1.25);
    \fill[blue!60!white] (-1.25,2.5) rectangle (1.25,1.25);
    \fill[blue!40!white] (1.25,2.5) rectangle (2.5,1.25);
    %second line
    \fill[blue!40!white] (-2.5,1.25) rectangle (0,0);
    \fill[blue!60!white] (0,1.25) rectangle (2.5,0);
    %third line
    \fill[blue!60!white] (-2.5,0) rectangle (-1.25,-1.25);
    \fill[blue!40!white] (-1.25,0) rectangle (2.5,-1.25);
    \fill[blue!60!white] (1.25,0) rectangle (2.5,-1.25);
    %fourth line
    \fill[blue!60!white] (-2.5,-1.25) rectangle (0,-2.5);
    \fill[blue!40!white] (0,-1.25) rectangle (2.5,-2.5);

    %fillings center - top right, bottom left inverted
    %first line
    %\fill[blue!40!white] (-2.5,2.5) rectangle (0,1.25);
    %\fill[blue!60!white] (0,2.5) rectangle (2.5,1.25);
    %second line
    %\fill[blue!60!white] (-2.5,0) rectangle (0,1.25);
    %\fill[blue!40!white] (-1.25,0) rectangle (1.25,1.25);
    %\fill[blue!60!white] (1.25,0) rectangle (2.5,1.25);
    %third line
    %\fill[blue!60!white] (-2.5,0) rectangle (0,-1.25);
    %\fill[blue!40!white] (0,0) rectangle (2.5,-1.25);
    %fourth line
    %\fill[blue!40!white] (-2.5,-1.25) rectangle (-1.25,-2.5);
    %\fill[blue!60!white] (-1.25,-1.25) rectangle (1.25,-2.5);
    %\fill[blue!40!white] (1.25,-1.25) rectangle (2.5,-2.5);

    \draw (0.center) to (1.center);
    \draw (0.center) to (2.center);
    \draw (2.center) to (3.center);
    \draw (1.center) to (3.center);
    \draw [thin,dotted] (7.center) to (15.center);
    \draw [thin,dotted] (9.center) to (13.center);
    \draw [thin,dotted] (6.center) to (12.center);
    \draw [thin,dotted] (4.center) to (10.center);
    \draw [thin,dotted] (5.center) to (11.center);
    \draw [thin,dotted] (8.center) to (14.center);
    \draw (23.center) to (24.center);
    \draw (23.center) to (25.center);
    \draw (25.center) to (26.center);
    \draw (24.center) to (26.center);
    \draw [thin,dotted] (30.center) to (38.center);
    \draw [thin,dotted] (31.center) to (37.center);
    \draw [thin,dotted] (32.center) to (36.center);
    \draw [thin,dotted] (28.center) to (34.center);
    \draw [thin,dotted] (29.center) to (35.center);
    \draw [thin,dotted] (27.center) to (33.center);
    \draw (42.center) to (43.center);
    \draw (42.center) to (44.center);
    \draw (44.center) to (45.center);
    \draw (43.center) to (45.center);
    \draw [thin,dotted] (49.center) to (57.center);
    \draw [thin,dotted] (50.center) to (56.center);
    \draw [thin,dotted] (51.center) to (55.center);
    \draw [thin,dotted] (47.center) to (53.center);
    \draw [thin,dotted] (48.center) to (54.center);
    \draw [thin,dotted] (46.center) to (52.center);
    \draw (118.center) to (119.center);
    \draw (118.center) to (120.center);
    \draw (120.center) to (121.center);
    \draw (119.center) to (121.center);
    \draw [thin,dotted] (125.center) to (133.center);
    \draw [thin,dotted] (126.center) to (132.center);
    \draw [thin,dotted] (127.center) to (131.center);
    \draw [thin,dotted] (123.center) to (129.center);
    \draw [thin,dotted] (124.center) to (130.center);
    \draw [thin,dotted] (122.center) to (128.center);
    \draw (137.center) to (138.center);
    \draw (137.center) to (139.center);
    \draw (139.center) to (140.center);
    \draw (138.center) to (140.center);
    \draw [thin,dotted] (144.center) to (152.center);
    \draw [thin,dotted] (145.center) to (151.center);
    \draw [thin,dotted] (146.center) to (150.center);
    \draw [thin,dotted] (142.center) to (148.center);
    \draw [thin,dotted] (143.center) to (149.center);
    \draw [thin,dotted] (141.center) to (147.center);
    \draw [->,thick] (156.center) to (157.center);
    \draw [->,thick] (156.center) to (155.center);
    \draw [->,thick] (158.center) to (159.center);
    \draw [->,thick] (158.center) to (160.center);
\end{tikzpicture}}
        \subcaption{The marginals $\gamma_{|\psi\rangle}(s_As_B=1)=\gamma_{|\psi\rangle}(s_As'_B=1)=\gamma_{|\psi\rangle}(s'_As_B=1)=\gamma_{|\psi\rangle}(s'_As'_B=\bar{1})=\cos^2(\frac{\pi}{8})$ of a (Bell-entangled) quantum state $\gamma_{|\psi\rangle}\in\Gamma_\mathrm{qu}[\cC(\cO_\mathrm{CHSH})]$ which maximally violates the CHSH inequality \cite{CHSH1969,Tsirelson1980} in quantum mechanics (similarly, for the ``Popescu-Rohrlich" box $\gamma_\mathrm{PR}\in\Gamma[\cC(\cO_\mathrm{CHSH})]$ with $\gamma_\mathrm{PR}(s_As_B=1)=\gamma_\mathrm{PR}(s_As'_B=1)=\gamma_\mathrm{PR}(s'_As_B=1)=\gamma_\mathrm{PR}(s'_As'_B=\bar{1})=\frac{1}{2}$). The shaded regions in the middle show single and triple overlaps of the distributions representing the respective events on the top/bottom left/right.}
    \end{subfigure}
    \caption{The CHSH scenario \cite{CHSH1969} admits a classical embedding (a), hence, is Kochen-Specker noncontextual, while also exhibiting nonclassical correlations e.g. in the form of entangled states and more general non-signalling correlations (b).}
    \label{fig: classical embedding for CHSH}
\end{figure}
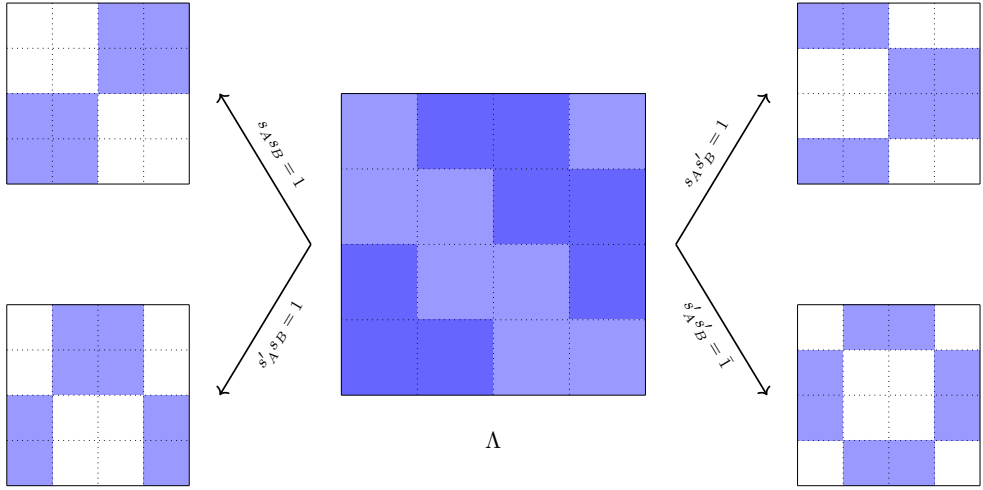

Ex.~\ref{ex: classical embedding of CHSH} shows that a proof of Kochen-Specker contextuality must generally involve constraints encoded in several non-trivial context cycles. Our next theorem asserts that this is always the case in quantum-mechanical systems of dimension three.

\begin{theorem}\label{thm: single cycles in d=3 are KS noncontextual}
    Let $(C_0,\cdots,C_{n-1})$ be a context cycle in $\cC_\mathrm{max}(\cH)$ with $\dim(\cH)=3$ and $n\in \mathbb{N}$. Then there exist unitaries $(U_{(i+1)i})_{i\in\zz_n}$, $U_{(i+1)i} \in \UH$ such that $C_{i+1}= U_{(i+1)i}.C_i:=U_{(i+1)i} C_i U^*_{(i+1)i}$ and $C_{(i+1)i} = U_{(i+1)i}.C_{(i+1)i}$ for all $i\in\zz_n$, and such that
    \begin{equation}\label{eq: single cycles in d=3 are noncontextual}
        \prod_{i=0}^{n-1} \Ad_\PH(U_{(i+1)i}) = \id \; .
    \end{equation}
    In particular, there exists a flat context connection on $\cC(n,d)$, defined by $(C_0,\cdots,C_{n-1})$.
\end{theorem}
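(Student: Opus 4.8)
The plan is to construct the unitaries $U_{(i+1)i}$ explicitly and then read off a flat context connection by setting $l_{C_{i+1}C_i} := \Ad_\PH(U_{(i+1)i})|_{\cP_1(C_i)}$. Everything rests on two facts special to $\dim\cH = 3$. First, any two distinct maximal contexts $C,C'$ intersect either in $\R\One$ or in a ``half-context'' $\R p + \R(\One - p)$ for a unique rank-one projection $p\in\cP_1(C)\cap\cP_1(C')$ (sharing a rank-two projection forces sharing its complement; sharing two rank-one projections forces $C=C'$). Second, whenever a rank-one projection $p$ lies in both $C$ and $C'$, there is a unitary $U = 1_{p\cH}\oplus u$ with $u\in U(2)$ acting on $(\One-p)\cH\cong\C^2$ and $UCU^* = C'$; moreover, since $U(2)$ acts transitively on ordered orthonormal bases of $\C^2$, the bijection $\Ad_\PH(U)|_{\cP_1(C)}\colon\cP_1(C)\to\cP_1(C')$ may be prescribed to fix $p$ and to realise \emph{either} of the two bijections $\cP_1(C)\setminus\{p\}\to\cP_1(C')\setminus\{p\}$.

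I argue by induction on $n$, the base $n\le 2$ being immediate (for $n=2$ take $U_{01}:=U_{10}^{-1}$). In the inductive step I first remove degeneracies: if $C_i=C_{i+1}$ set $U_{(i+1)i}=\One$ and delete $C_{i+1}$; so assume the $C_i$ pairwise distinct, and write $p_{(i+1)i}$ for the rank-one generator of $C_{(i+1)i}$ when the latter is nontrivial, $f_i:=p_{(i+1)i}$. If some overlap is trivial, say $C_0\cap C_{n-1}=\R\One$, choose unitaries $U_{(i+1)i}$ obeying the two required identities arbitrarily for $i\le n-2$ (possible by the second fact), put $P:=U_{(n-1)(n-2)}\cdots U_{10}$ and $U_{0(n-1)}:=P^{-1}$; then all identities hold and $\prod_i U_{(i+1)i}=\One$, whence $\prod_i\Ad_\PH(U_{(i+1)i})=\id$. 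So assume every $C_{(i+1)i}$ is a half-context. If $f_i=f_{i-1}=:\pi$ for some $i$, then $\pi\in C_{i-1}\cap C_{i+1}$, so deleting $C_i$ yields a context cycle of length $n-1$ whose new junction $C_{i-1}\cap C_{i+1}$ is still a half-context with generator $\pi$; by the induction hypothesis it carries unitaries of trivial $\Ad_\PH$-product, the one across the new junction fixes $\pi$, and, factoring its $U(2)$-part through $C_i$ via the second fact, it splits as a unitary $C_{i-1}\to C_i$ fixing $\pi$ followed by one $C_i\to C_{i+1}$ fixing $\pi$; both respect the overlap conditions (which at $C_i$ are exactly ``fix $\pi$''), so reinserting $C_i$ does not change the product. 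Thus we may assume in addition $f_i\neq f_{i-1}$ for all $i$.

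Under these assumptions I ``navigate''. Pick unitaries $U_{(i+1)i}$ realising the two identities for $0\le i\le n-2$, and transport the target projection $\rho^{(0)}:=p_{0(n-1)}=f_{n-1}\in\cP_1(C_0)$ forward: $\rho^{(i)}:=U_{(i+1)i}\cdots U_{10}\,\rho^{(0)}\,U_{10}^*\cdots U_{(i+1)i}^*\in\cP_1(C_i)$. As $f_{n-1}\neq f_0$, the second fact lets me choose, inductively, each $U_{(i+1)i}$ so that $\rho^{(i+1)}\in\cP_1(C_{i+1})\setminus\{f_i\}$ and $\rho^{(i+1)}\neq f_{i+1}$ for $i=0,\dots,n-3$ (possible because $f_{i+1}\neq f_i$, so that two-element set contains $f_{i+1}$ and one other projection); at the last step $\rho^{(n-2)}\neq f_{n-2}$ lets me choose $U_{(n-1)(n-2)}$ with $\rho^{(n-1)}=f_{n-1}=p_{0(n-1)}$ (legal since $f_{n-1}\neq f_{n-2}$). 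Hence $P:=U_{(n-1)(n-2)}\cdots U_{10}$ fixes $p_{0(n-1)}$ and maps $C_0$ to $C_{n-1}$, so $U_{0(n-1)}:=P^{-1}$ maps $C_{n-1}$ to $C_0$, fixes $p_{0(n-1)}$ (hence $U_{0(n-1)}.C_{0(n-1)}=C_{0(n-1)}$), and $\prod_i U_{(i+1)i}=\One$. Finally, setting $l_{C_{i+1}C_i}:=\Ad_\PH(U_{(i+1)i})|_{\cP_1(C_i)}$ and extending to all pairs of maximal contexts of $\cC(n,d)$ by composition along the cycle (consistent since non-adjacent maximal contexts meet only in $C_\One$), Eq.~(\ref{eq: noncontextual context connection}) holds because $l_{C_{i+1}C_i}$ fixes $\cP_1(C_{(i+1)i})=\{p_{(i+1)i}\}$, and $\circ_{i=0}^{n-1}l_{C_{(i+1)}C_i}$ is the restriction of $\Ad_\PH(\prod_i U_{(i+1)i})=\id$, giving Eq.~(\ref{eq: CNC}). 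So $\fl$ is a flat context connection on $\cC(n,d)$.

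The main obstacle is the inductive step for coinciding adjacent overlaps — the deletion/reinsertion argument and the verification that it preserves all overlap constraints — together with running the navigation with a one-step look-ahead so that the wrap-around equality $\rho^{(n-1)}=p_{0(n-1)}$ falls out exactly; the degenerate cases (repeated contexts, trivial overlaps) also need care. It is worth emphasising that $\dim\cH=3$ is used essentially: only then are nontrivial overlaps half-contexts, so that each $l_{C_{i+1}C_i}$ is forced to fix a single projection and may freely transpose the remaining pair, and only then does the relevant transitivity live in $U(2)$.
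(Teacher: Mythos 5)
Your proof is correct, and while it shares the paper's overall skeleton — induction on the cycle length to eliminate repeated contexts, trivial overlaps, and coinciding adjacent overlaps (your deletion/reinsertion step for $f_i=f_{i-1}$ is essentially the paper's case (ii), and your trivial-overlap case reproduces Lm.~\ref{lm: trivial cycle}) — the core of your argument is genuinely different. The paper fixes an arbitrary admissible family $(\tU_{(i+1)i})$, observes that its holonomy is some $\pi\in\Sym(C_0)\cong S_3$, and then shows that the re-gauging freedom at the overlaps (each stabiliser of a half-context being a distinct transposition, since adjacent overlaps are generated by different projections) generates all of $S_3$, so $\pi$ can be cancelled by adjusting at most three of the unitaries. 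You instead never compute a holonomy: you steer the transported projection $\rho^{(i)}$ with a one-step look-ahead so that it avoids $f_i$ at every stage and lands exactly on $p_{0(n-1)}$ at the end, which lets you close the cycle with $U_{0(n-1)}:=P^{-1}$ and makes the product of unitaries literally $\One$. This buys a slightly cleaner conclusion — $\Ad_\PH(\prod_i U_{(i+1)i})$ is then the identity on all of $\PH$, which is what Eq.~(\ref{eq: single cycles in d=3 are noncontextual}) literally asserts, whereas the paper's permutation argument a priori only controls the action on $\cP_1(C_0)$ — at the cost of a bookkeeping invariant ($\rho^{(i)}\notin\{f_{i-1},f_i\}$) that must be checked at each step. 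Two small imprecisions, neither fatal: deleting adjacent duplicates only gives \emph{adjacent} distinctness, not the pairwise distinctness you claim (but your navigation never uses the latter); and your extension of $\fl$ to non-adjacent pairs by composition along the cycle is the right way to get flatness on all of $\cC(n,d)$ — in fact it is more careful than the paper's ``$l_{ij}$ arbitrary for $j\neq i\pm 1$''. Both arguments use $\dim\cH=3$ in the same essential way, through the structure of half-contexts and the action of $U(2)$ on the orthogonal complement of the shared projection.
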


\begin{proof}
    We first proceed by induction in the cycle length $n$ to show that we can reduce to minimal non-trivial context cycles. To this end, we consider the following two cases.
    \begin{itemize}
        \item[(i)] First, let $C_i=C_j$ for some $i\neq j$ in the context cycle $(C_0,\cdots,C_{n-1})$, then by the inductive hypothesis there exist unitaries $(U_{(i+1)i})_{i\in\zz_n}$ for the context cycles $(C_i,\cdots,C_{j-1})$ and $(C_j,\cdots,C_{i-1})$ such that $\prod_{k=i}^{j-1} \Ad_\PH(U_{(k+1)k}) = \id = \prod_{k=j}^{i-1} \Ad_\PH(U_{(k+1)k})$. Consequently, we also have $\prod_{k=0}^{n-1} \Ad_\PH(U_{(k+1)k}) = \id$.
    \end{itemize}
    We may thus assume that $C_i\neq C_j$ for all $i,j \in \zz_n, i\neq j$.
    \begin{itemize}
        \item[(ii)] Second, let $C_{(i+1)i}=C_{i(i-1)}$ for some $i\in\zz_n$. Since $d=3$, this implies $C_{(i+1)(i-1)}=C_{(i+1)i}=C_{i(i-1)}$. Indeed, there exists $p \in \mc{P}_1(C_{(i+1)i}), \mc{P}_1(C_{i(i-1)})$ from which $p \in \mc{P}_1(C_{i+1}),\mc{P}_1(C_{i-1})$, and, since $C_{i+1} \neq C_{i-1}$ by (i), $C_{(i+1)(i-1)} \subsetneq C_{(i+1)},C_{(i-1)}$.

        W.l.o.g., let $i=n-1$ and consider the (nontrivial) context cycle $(C_0,\cdots,C_{n-2})$. Since this cycle has length $n-1$, by the inductive hypothesis, there exist unitaries $U_{(i+1)i} \in \UH$, $C_{(i+1)} = U_{(i+1)i}.C_i$, $C_{(i+1)i} = U_{(i+1)i}.C_{(i+1)i}$ for all $i\in\zz_{n-1}$ and such that $\prod_{i=0}^{n-2} \Ad_\PH(U_{(i+1)i}) = \id$. Since $U_{0(n-2)}$ stabilises $C_{0(n-2)}=C_{0(n-1)}=C_{(n-1)(n-2)}$ we can decompose (the non-trivial action on the two-dimensional subspace corresponding to $1-p$ as) $U_{0(n-2)}=U_{0(n-1)}U_{(n-1)(n-2)}$ into unitaries $U_{0(n-1)}$ and $U_{(n-1)(n-2)}$ such that both $C_{0(n-1)}=U_{0(n-1)}.C_{0(n-1)}$ and $C_{(n-1)(n-2)}=U_{(n-1)(n-2)}.C_{(n-1)(n-2)}$, as well as $C_0=U_{0(n-1)}.C_{n-1}$ and $C_{n-1}=U_{(n-1)(n-2)}.C_{n-2}$ hold. Clearly, we then have
        \begin{equation*}
            \prod_{i=0}^{n-1} \Ad_\PH(U_{(i+1)i}) = \left(\prod_{i=0}^{n-2} \Ad_\PH(U_{(i+1)i})\right) = \id \; .
        \end{equation*}
    \end{itemize}

    Now, let $(\tU_{(i+1)i})_{i\in\zz_n}$, $\tU_{(i+1)i} \in \UH$ be unitaries such that $\prod_{i=0}^{n-1} \Ad_\PH(\tU_{(i+1)i}) = \pi \in \Sym(C_0) \cong S_3$, while also satisfying the constraints $C_{i+1} = \tU_{(i+1)i}.C_i$ and $C_{(i+1)i} = \tU_{(i+1)i}.C_{(i+1)i}$. Note that the latter two constraints define $\tU_{(i+1)i}$ only up to permutations $\pi_{(i+1)i} \in \Stab_{\Sym(C_i)}(C_{(i+1)i})$ and $\pi_{i(i+1)} \in \Stab_{\Sym(C_{i+1})}(C_{(i+1)i})$, that is,
    \begin{align}\label{eq: re-defining permutations}
        U_{(i+1)i} = \pi_{i(i+1)}\tU_{(i+1)i}\pi_{(i+1)i}
    \end{align}
    satisfies the same constraints as $\tU_{(i+1)i}$.\footnote{Note also that the action of $U_{(i+1)i}$ between contexts $C_i$ and $C_{(i+1)}$ is left invariant under a transformation $U_{(i+1)i} \ra O_{{(i+1)}}U_{(i+1)i}O_i$ for any operator $O_i\in C_i$ and $O_{(i+1)}\in C_{(i+1)}$.} We will prove the existence of a set of unitaries $(U_{(i+1)i})_{i\in\zz_n}$ satisfying Eq.~(\ref{eq: single cycles in d=3 are noncontextual}) by choosing appropriate permutations as above.
    
    By (i) and (ii), we may assume that $C_{(i+1)i}\neq C_{i(i-1)}$ for all $i\in\zz_n$. By Thm.~\ref{thm: KSNC of truncated OAs}, we may further assume that $C_{(i+1)i},C_{i(i-1)}\neq\C\mathbbm{1}$. In other words, $(C_0,\cdots,C_{n-1})$ defines a $n$-cycle contextuality scenario in dimension $d=3$ (see Def.~\ref{def: n-cycle scenario}), hence, $C_{(i+1)i}$ and $C_{i(i-1)}$ are generated by different projections in $\mc{P}_1(C_i)$ (together with the identity $\mathbbm{1}$) for every $i\in\zz_n$. It follows that the permutations $\pi_{(i+1)i}$ and $\pi_{(i-1)i}$ in Eq.~(\ref{eq: re-defining permutations}) (which stabilise $C_{(i+1)i}$ and $C_{i(i-1)}$, respectively) correspond to distinct transpositions in $\Sym(C_i)$.
        
    Fix a maximal context $C_0$. Without loss of generality, we may assume that $\pi_{10} = \id,\tau_{12} \in \Sym(C_0)$ and $\pi_{(n-1)0}=\id,\tau_{23} \in \Sym(C_0)$.\footnote{In our notation we distinguish between generic permutations $\pi\in\Sym(C_i)$ and explicit elements $\sigma_{\pi(1,\cdots,n)}$; we also write $\tau_{ij}$ for the permutation corresponding to a transposition of elements $i,j\in[n]$.} If $\pi\in\Pi_1=\{\id,\tau_{12},\tau_{23},\tau_{12}\tau_{23}=\sigma_{312}\}$, Eq.~(\ref{eq: single cycles in d=3 are noncontextual}) is thus satisfied for appropriate choices of $\pi_{10}$ in $U_{10} = \tU_{10}\pi_{10}$ and $\pi_{(n-1)0}$ in $U_{0(n-1)}=\pi_{(n-1)0}\tU_{0(n-1)}$. For $\pi\notin\Pi_1$, we further consider a change of unitary $U_{21}=\tU_{21}\pi_{21}$, where $\pi_{21}\in\Stab_{\Sym(C_1)}(C_{21})$, which corresponds to a permutation $\tpi_{21}=U^{-1}_{10}\pi_{21}U_{10}\in\Sym(C_0)$. By the same argument as above, $C_{21}$ and $C_{10}$ are generated by different projections in $\mc{P}_1(C_1)$, hence, $\pi_{21}$ and $\pi_{01}$ induce different transpositions in $\Sym(C_1)$, and thus also in $\Sym(C_0)$ under conjugation by $U^{-1}_{10}$. For this induced symmetry in $\Sym(C_0)$, we therefore either have $\tpi_{21} = \id,\tau_{23}$ or $\tpi_{21} = \id,\tau_{13}$. The unitaries in the first case generate any permutation in
    \begin{equation*}
        \Pi_1\cup\{\tau_{23}\tau_{12}=\sigma_{231}=\sigma^2_{312},\tau_{23}\tau_{12}\tau_{23}=\tau_{13}\} = S_3\; .
    \end{equation*}
    Similarly, the unitaries in the second case generate any permutation in
    \begin{equation*}
        \Pi_1\cup\{\tau_{13},\tau_{13}\tau_{23}=\sigma_{231}=\sigma^2_{312}\} = S_3\; .
    \end{equation*}
    Hence, for any $\pi=\prod_{i=0}^{n-1} \Ad_\PH(\tU_{(i+1)i})\in\Sym(C_0)\cong S_3$, we can find permutations $\tpi_{21},\pi_{10},\pi_{0(n-1)}\in\Sym(C_0)$ (together with $\pi_{(k+1)k}=\id$ for all $k\neq 1,0,n-1$) such that the unitaries $(U_{(i+1)i})_{i\in\zz_n}$, $U_{(i+1)i} = \tU_{(i+1)i}\pi_{(i+1)i}$ satisfy Eq.~(\ref{eq: single cycles in d=3 are noncontextual}). Finally, $(U_{(i+1)i})_{i\in\zz_n}$ defines a context connection $\fl$ by setting $l_{(i+1)i}(p) = U_{(i+1)i}.p$ for every $i\in\zz_n$ (and $l_{ij}$ arbitrary for $j\neq i\pm 1$), which satisfies Eq.~(\ref{eq: CNC}) as a consequence of Eq.~(\ref{eq: single cycles in d=3 are noncontextual}).
\end{proof}

Thm.~\ref{thm: single cycles in d=3 are KS noncontextual} should be contrasted with the study of the $n$-cycle scenario in Ref.~\cite{AraujoEtAl2013,XuCabello2019}, which instead of Kochen-Specker noncontextual observable algebras is concerned with fully classically correlated observable algebras (see Def.~\ref{def: fully classically correlated}). As we have seen, the latter are characterised in terms of acyclic context categories by Vorob'ev's theorem (Thm.~\ref{thm: Vorob'ev theorem}). Moreover, acyclicity implies KS noncontextuality by Cor.~\ref{cor: acyclic = KS}. Yet, the converse does not hold, and thus allows for ``hybrid" scenarios such as the CHSH scenario in Ex.~\ref{ex: classical embedding of CHSH} which exhibits both classical and quantum features (cf. Ref.~\cite{Spekkens2007}). This shows that the characterisation of Kochen-Specker contextuality is more complex than that of Def.~\ref{def: fully classically correlated}: \emph{it requires not just a single context cycle, but constraints arranged along several context cycles}, e.g. as explicated by the topological arrangements in Ref.~\cite{OkayRoberts2017,OkayTyhurstRaussendorf2018,OkayRaussendorf2020,Arkhipov2012}.

It would be interesting to see whether Thm.~\ref{thm: single cycles in d=3 are KS noncontextual} generalises to arbitrary context cycles in higher dimensions. A more ambitious objective is to characterise quantum within more general observable algebras in terms of the constraints on context connections it gives rise to. Such would amount to a geometric characterisation and explanation of Kochen-Specker contextuality in quantum theory.

To (efficiently) characterise and identify Kochen-Specker noncontextual observable algebras remains an open problem - one that is not only of foundational, but also of practical interest. Its foundational importance mainly stems from clarifying the classical-quantum divide, specifically with respect to the emergence of classicality. From a practical perspective, contextuality has been identified as the main resource in quantum computation \cite{Raussendorf2013,HowardEtAl2014,FrembsRobertsBartlett2018,FrembsRobertsCampbellBartlett2023,BravyiGossetKoenig2018}. Characterising observable subalgebras $\cO\subset\LHsa$ in quantum theory in terms of KS contextuality thus presents an important task towards provable quantum advantage. We believe that the formalism presented here and the new tools it introduces will prove useful towards both of these goals.

In the next section, we collect some further evidence in support of this claim. More precisely, by comparing our formalism with the graph-theoretic approach to contextuality \cite{CabelloSeveriniWinter2014}, we will sharpen the (formal) characterisation of KS contextuality in terms of a colouring problem in Thm.~\ref{thm: KSNC = n-colourability} and thereby generalise various known results.

\clearpage

\section{Algebraic vs graph-theoretic approach}\label{sec: algebraic vs graph-theoretic approach}

In this section, we compare the notion of Kochen-Specker contextuality in Def.~\ref{def: KSNC} and its characterisation in terms of the triviality constraints in Thm.~\ref{thm: CNC for OAs}, specifically, in terms of the colouring problem in Thm.~\ref{thm: KSNC = n-colourability} and Thm.~\ref{thm: n-colourability for non-maximal OAs}, with various notions of contextuality for orthogonality graphs in the graph-theoretic approach \cite{CabelloSeveriniWinter2014,RamanathanHorodecki2014,CabelloKleinmannBudroni2015,AmaralCunha2018}.

We start by reviewing the relevant properties of orthogonality graphs in Sec.~\ref{sec: basics of graph-theoretic approach}. In Sec.~\ref{sec: consistency with graph-theoretic approach}, we translate the notion of Kochen-Specker contextuality for observable algebras to an equivalent notion applicable to orthogonality graphs, which is similar the notion of marginal Kochen-Specker contextuality in Def.~\ref{def: marginal noncontextuality}, and restricted to cases where both notions are directly comparable. However, as we show in App.~\ref{app: orthogonality graphs vs OAs} this is generally not the case. Yet, under appropriate assumptions on the realisation of an orthogonality graph, in Sec.~\ref{sec: SI-C sets and graphs}, we relate our results with known criteria for SI-C graphs and SI-C sets, which are expressed in terms of their chromatic number \cite{Cabello2012b,RamanathanHorodecki2014,CabelloKleinmannBudroni2015}. In doing so, we will further answer an open question in Ref.~\cite{RamanathanHorodecki2014}

\subsection{Basic elements of the graph-theoretic approach}\label{sec: basics of graph-theoretic approach}

The graph-theoretic approach was initiated in Ref.~\cite{ChavesFritz2012,FritzChaves2013,CabelloSeveriniWinter2014}, and has proven a successful tool in the study of contextuality. We refer to Ref.~\cite{CabelloSeveriniWinter2014,AcinFritzLeverrierSainz2015,AmaralCunha2018} for a detailed exposition, and restrict ourselves to the very basics needed for our purposes.

To begin with, recall that an (undirected) graph $G = (V,E)$ is specified by the set of its vertices $V$ and edges $E \subset V \times V$. Two vertices that are joined by an edge are called \emph{adjacent}. We will restrict to finite graphs for which $V = \{v_1,\cdots,v_n\}$ (and thus also $E$) are finite sets. A subset $I \subset V$ is called an \emph{independent set} if $(v,v') \notin E$ for all $v,v' \in I$. In contrast, a \emph{clique $\rC$ of $G$} is a fully connected subgraph of $G$.\footnote{We will distinguish cliques in the graph-theoretic approach from contexts (commutative algebras in observable algebras) by using the non-italic font for the former.} We denote by $\omega(G)=\max_{\rC\subset G}|\rC|$ the cardinality of the largest clique in $G$. Finally, the \emph{chromatic number $\chi(G)$ of $G$} is the minimal number of colours needed to colour the vertices of $G$ such that no two adjacent vertices are assigned the same colour.\\

\textbf{Orthogonality graphs and their realisations.} The essential idea behind the graph-theoretic approach is to reduce the study of contextuality to the bare essentials needed to describe the constraints on correlations.

\begin{definition}[orthogonality graph]\label{def: orthogonality graph}
    Let $\cP\subset\PH$ be a finite set of minimal projections on a Hilbert space $\cH$, $\dim(\cH)<\infty$. The \emph{orthogonality graph $G(\cP) = (V,E)$ of $\cP$} has vertices $V = \cP$, and edges $(p,p')\in E\subset\cP\times\cP$ if and only if $pp'=0$.
\end{definition}

Note first that since every non-minimal projection $p\in\PH$ can be decomposed into finitely many orthogonal projections (and since $p$ inherits its orthogonality relations from the elements in this decomposition), for observable subalgebras of quantum theory $\cO\subset\LHsa$ it is sufficient to restrict to the orthogonality graph of its minimal projections.

The properties of projections impose some conditions on the types of graphs that can arise as orthogonality graphs. For instance, since $p^2 = p \neq 0$ (unless $p$ is the zero projection), an orthogonality graph contains no self-loops. Moreover, there exists at most one edge between two nodes, that is, an orthogonality graph contains no multi-edges. Graphs without loops and multi-edges are called \emph{simple}. One may ask whether every finite, simple graph $G$ can be realised in terms of projections on some Hilbert space. Without further constraints, e.g. on the dimension of the Hilbert space, this is indeed the case \cite{HeunenFritzReyes2014}.\footnote{We will come across various more restrictive notions of (quantum) realisations throughout this section. We list these in App.~\ref{app: On realisations}; to the best of our knowledge their solutions remain by and large open.} It follows that the restriction in Def.~\ref{def: orthogonality graph} can be relaxed to the above general constraints on graphs. In particular, every finite, simple \emph{exclusivity graph}, whose vertices represent events and whose edges relate events if they are exclusive \cite{FritzEtAl2013,CabelloSeveriniWinter2014}, can be interpreted as an orthogonality graph.\footnote{Another closely related concept is that of a compatibility graph (see also App.~\ref{app: measurement scenarios vs OAs}), which instead relates events if they are compatible. In the case of ideal observables (specifically, for elementary events), as discussed here, the respective notions are easily seen to encode the same information.}

Yet, reducing the study of contextuality to orthogonality graphs raises a new question: which observable algebras are (faithfully) represented by orthogonality graphs? Recall that by Lm.~\ref{lm: embeddings vs orthomorphisms vs order-preserving maps}, event algebras capture the full structure of observable algebras. Yet only for partial algebras, that is, under the assumption of Specker's principle (or local orthogonality, see Ref.~\cite{ChavesFritz2012,FritzChaves2013,AcinFritzLeverrierSainz2015}) can the necessary structure be reduced to that of minimal projections. Throughout this section, we will therefore assume observable algebras to be (finite-dimensional and finitely generated) partial algebras.

\begin{definition}[realisation]\label{def: realisation}
    Let $G=(V,E)$ be an orthogonality graph and $\cO$ a partial algebra. A \emph{realisation of $G$ in $\cO$} is a map $\pi:G\ra\PO$ such that $\pi(v)\pi(v')=0$ if and only if $(v,v')\in E$. A \emph{quantum realisation of $G$} is a realisation with $\cO\subset\LHsa$.
\end{definition}

Often quantum realisations are restricted to rank-$1$ projections $\pi:G\ra\PHone$, which can equivalently be thought of as maps $\pi:G\ra\C^d$ for $\dim(\cH)=d$. We will generally not make this restriction, yet unlike some authors we will mostly restrict realisations to be \emph{constraint-preserving}, for which $\pi(v)\pi(v')=0$ implies $(v,v')\in E$ (see Tab.~\ref{tab: realisation vs orthomoprhism vs embedding}).\footnote{Other authors define realisations to be neither faithful nor constraint-preserving. As our main interest in orthogonality graphs stems from the characterisation of SI-C sets and SI-C graphs in Sec.~\ref{sec: SI-C sets and graphs}, here we follow the convention in Ref.~\cite{RamanathanHorodecki2014}.\label{fn: realisation}}

\begin{table}[htb!]
    \centering
    \begin{tabular}{c|ccc}
        \toprule
        & unital & faithful & constraint-preserving \\
        \midrule
        realisation & \xmark & \checkmark & \checkmark \\
        orthomorphism & \checkmark & \xmark & \xmark \\
        embedding & \checkmark & \checkmark & \xmark \\
        \bottomrule
    \end{tabular}
    \caption{Comparison between morphisms of orthogonality graphs, event and observable algebras. Realisations are both faithful and constraint-preserving, yet are generally not unital. In contrast, classical embeddings, equivalently faithful orthomorphisms (in terms of functors $\eta$ in Lm.~\ref{lm: embeddings vs orthomorphisms vs order-preserving maps}, this means faithful on objects) are not necessarily constraint-preserving, that is, $\epsilon(p)\epsilon(p')=0$ only if $pp'=0$ (in terms of functors $\eta$ in Lm.~\ref{lm: embeddings vs orthomorphisms vs order-preserving maps}, this means faithful on morphisms).}
    \label{tab: realisation vs orthomoprhism vs embedding}
\end{table}

Observe that orthogonality graphs $G=G(\cO)=G(\POm)$ associated with a partial algebra $\cO$, by definition, admit a realisation where maximal cliques correspond with resolutions of the identity $\One$ in $\cO$. In other words, maximal cliques in such orthogonality graphs are non-trivially related. It is thus natural to constrain the notion of realisation for orthogonality graphs $G=G(\cO)$ such that this additional structure is preserved.

\begin{definition}[unital realisation]\label{def: unital realisation}
    Let $G$ be an orthogonality graph. A realisation $\pi:G\ra\PO$ is called \emph{unital} if for every maximal clique $\rC\subset G$ it holds $\sum_{v\in\rC} \pi(v)=\One$.
\end{definition}

Clearly, not every realisation is unital. While every orthogonality graph has a quantum realisation, it remains an open problem which orthogonality graphs admit a unital quantum realisation \cite{HeunenFritzReyes2014} (see also App.~\ref{app: On realisations}). Consequently, upon restriction of orthogonality graphs to (those associated with) partial algebras, we need to add additional information in the form of the identity element (see also App.~\ref{app: orthogonality graphs vs OAs}).\footnote{In Sec.~\ref{sec: SI-C sets and graphs} below, we discuss this in more detail by introducing a notion of faithful completion (see Def.~\ref{def: faithful graph completion}) with respect to a realisation of an orthogonality graph.} In turn, given a unital realisation we may identify the vertices in an orthogonality graph with the (minimal) generating projections of some partial algebra, that is, we may assume that $G$ is associated with a partial algebra. Indeed, it is easily seen that unital realisations correspond with faithful (and constraint-preserving) orthomorphisms which induce (constraint-preserving) embeddings of partial algebras by Lm.~\ref{lm: embeddings vs orthomorphisms vs order-preserving maps} (see Tab.~\ref{tab: realisation vs orthomoprhism vs embedding}). In Sec.~\ref{sec: consistency with graph-theoretic approach} below, we will therefore first study Kochen-Specker contextuality for orthogonality graphs under the restricting to unital realisations in Def.~\ref{def: unital realisation}.\\

\textbf{Orthogonality graphs as constraints on correlations.} By interpreting orthogonality graphs $G=(V,E)$ as encoding exclusivity relations - representing events that are related if they correspond to different measurement outcomes - a correlation on $G$ is simply defined as a map $\tg:V\ra[0,1]$ that is additive on exclusive events \cite{CabelloSeveriniWinter2014,AcinFritzLeverrierSainz2015,AmaralCunha2018}. General correlations thus constitute the \emph{clique constrained stable set polytope},
\begin{align}\label{eq: qstab}
    \QStab(G) = \{\tg \in \R^{|V|}_+ \mid \sum_{v\in\rC} \tg(v) \leq 1 \quad \forall \mathrm{\ cliques\ } \rC \mathrm{\ in\ G}\}\; .
\end{align}
The inequality in Eq.~(\ref{eq: qstab}) is a consequence of the fact that $G\subset\GO$ is generally only a subgraph for $\cO$ a partial algebra, in particular, $\Gamma[\CO]\subsetneq\QStab(G)$ in this case.\footnote{Since every orthogonality graph $G=(V,E)$ has a quantum realisation on a Hilbert space of dimension $\dim(\cH)\leq |V|$ \cite{HeunenFritzReyes2014}, a partial algebra $\cO\subset\LHsa$ with $G\subset\GO$ always exists (cf. App.~\ref{app: On realisations}).} In turn, it is natural to demand equality for associated orthogonality graphs $G=\GO$, that is, under the restriction to unital realisations. We will at times distinguish unnormalised ``correlations" $\tg\in\QStab(G)$ for $G\subset\GO$ from normalised ``states" $\gamma\in\Gamma[\CO]$ for $G=\GO$, by marking the former with a tilde.

Classical correlations are modelled as probabilistic mixtures of all events that can occur together, that is, that are non-exclusive. These define the \emph{stable set polytope},
\begin{align}\label{eq: stab}
    \begin{split}
    \Stab(G)
    &= \mathrm{conv}\{\tg \in \{0,1\}^{|V|} \mid \tg(v)\tg(v') = 0 \quad \forall\ (v,v') \in E\} \\
    &= \mathrm{conv}\{I(G) \mid I(G) \mathrm{\ independent\ set\ of\ } G\}\; .
    \end{split}
\end{align}
For $G=\GO$, maximal independent sets with support in every maximal clique correspond with valuations, which we denote by $\cI_\mathrm{Val}(\GO)\subset\cI_\mathrm{max}(\GO)$; in particular, $\Gamma_\mathrm{cl}[\CO]\subsetneq\Stab(\GO)$ (for more details on the distinction between states on partial algebras $\cO$ and correlations on $\GO$, see  App.~\ref{app: orthogonality graphs vs OAs}).\footnote{The converse is clearly false: an independent set does generally not define a valuation, since an independent set is not required to assign one element in every maximal clique the value $1$. Consequently, the correlations $\Stab(G)$ are more general than classical correlations for fixed dimension.}

Finally, we consider quantum correlations, that is, those arising from quantum states under the Born rule. More precisely, given an orthogonality graph $G=(V,E)$ with realisation $\pi:G\ra\PH$ and quantum state $\rho\in\SH$ we obtain a correlation of the form $\tg_\rho(v)=\tr[\pi(v)\rho]$ for all $v\in V$. This gives rise to the so-called \emph{theta body},\footnote{Here, realisations are not required to be faithful or constraint-preserving \cite{CabelloSeveriniWinter2014}.}
\begin{align}\label{eq: theta body}
    \Th(G) = \{\tg_\rho \in \R^{|V|}_+ \mid \tg_\rho(v) = \tr[\pi(v)\rho] \mathrm{\ for\ }\rho\in\SH \mathrm{\ and\ }\pi:G\ra\PH \mathrm{\ a\ realisation}\}\; .
\end{align}
For general realisations, $\Gamma_\mathrm{qu}[\CO]\subset\Th(G)$, while equality holds under the restriction to unital realisations (since normalisation in this case follows from the normalisation of quantum states). Finally, for every orthogonality graph $G$ one finds the hierarchy
\begin{align}\label{eq: cl-qu-nd correlations}
    \mathrm{Stab}(G) \subset \Th(G) \subset \QStab(G)\; .
\end{align}
which for orthogonality graphs associated with a partial algebra $G=\GO$ closely resembles Eq.~(\ref{eq: cl-qu-nd states}). Note that the (non-trivial) inclusion $\Stab(G)\subset\Th(G)$ holds since $\Th(G)$ contains realisations of any Hilbert space dimension. Moreover, one can show that the inclusions are strict if and only if $G$ contains an $n$-cycle of length $n\geq 5$ as an induced subgraph \cite{CabelloSeveriniWinter2014}.

To summarise, representing exclusivity relations between (a subset of the) events in a partial algebra via an orthogonality graph encodes the constraints on states of partial algebras in an economic way. Moreover, it allows to bound correlations via graph-theoretic quantities \cite{CabelloSeveriniWinter2014,RamanathanHorodecki2014,CabelloKleinmannBudroni2015}, and to use these bounds in order to construct noncontextuality inequalities \cite{Pitowsky1989_QuantumProbability,BadziagBengtssonCabelloPitowsky2009,AraujoEtAl2013}. Below, we will be concerned with one such quantity: the chromatic number. In order to compare the respective results on contextuality for orthogonality graphs in Ref.~\cite{CabelloSeveriniWinter2014,RamanathanHorodecki2014,CabelloKleinmannBudroni2015} with Thm.~\ref{thm: KSNC = n-colourability}, we first translate the notion of Kochen-Specker contextuality to orthogonality graphs with unital realisation.

\subsection{Kochen-Specker contextuality for orthogonality graphs $G=\GO$}\label{sec: consistency with graph-theoretic approach}

Our aim in this section is to translate the notion of Kochen-Specker (KS) contextuality to the graph-theoretic language. To this end, we cast Def.~\ref{def: marginal noncontextuality} to the present setting.

\begin{definition}\label{def: NCOG}
    An orthogonality graph $G=\GO$ is called \emph{noncontextual} if there exists a normalised\footnote{To be clear, we require that $\gamma$ is normalised in every maximal clique $\rC\subset G$, that is, $\sum_{v\in\rC}\gamma(v)=1$.} correlation $\gamma\in\Stab(G)$ such that $0\neq\gamma(v)\neq\gamma(v')\neq 0$ for all $v,v'\in V$, $v\neq v'$.
\end{definition}

As we point out in detail in App.~\ref{app: orthogonality graphs vs OAs}, the notions of classical correlations $\Stab(G)$ on $G$ and classical states $\Gamma_\mathrm{cl}[\CO]$ on $\cO$ are not related for general orthogonality graphs $G\subsetneq\GO$. We thus restrict (first) to orthogonality graphs $G=\GO$ associated to a finitely generated, not necessarily maximal partial algebra $\cO$. Yet, even in this case classical correlations on $\GO$ are more general than classical states on $\cO$, that is, $\Gamma_\mathrm{cl}[\CO]\subsetneq\Stab(\GO)$. Consequently, the precise relationship between Def.~\ref{def: NCOG} for $\GO$ and (marginal) KS noncontextuality for $\cO$ is not immediately clear.

In order to compare the notion of (marginal) KS noncontextuality with that in Def.~\ref{def: NCOG}, we first show that a KS noncontextual partial algebra admits an optimal classical representation with respect to the subset $\cI_\Val(\GO)\subset\cI_\mathrm{max}(\GO)$ of all maximal independent sets on $\GO$, that is, those that contain a single element in every maximal clique of $\GO$, hence, which correspond to valuations on $\cO$.

\begin{lemma}\label{lm: optimal embedding}
    Let $\cO$ be a finitely generated partial algebra with partial order of contexts $\CO$ and associated orthogonality graph $\GO$. If $\cO$ is Kochen-Specker noncontextual, then there exists a classical embedding $\epsilon:\cO\ra L_\infty(\cI_\Val(\GO))$.
\end{lemma}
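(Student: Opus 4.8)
The plan is to combine Thm.~\ref{thm: KSNC = marginal KSNC} (equivalently, the ``$\Rightarrow$'' direction of its proof, which produces a faithful orthomorphism and hence a separating family of classical states) with the observation that any classical embedding into an arbitrary $L_\infty(\Lambda)$ can be ``pushed forward'' onto the canonical space $\cI_\Val(\GO)$ of valuations. Concretely, since $\cO$ is KS noncontextual, by Thm.~\ref{thm: CNC for OAs} (via Thm.~\ref{thm: KSNC = n-colourability}, Thm.~\ref{thm: n-colourability for non-maximal OAs}) there is a classical embedding $\epsilon:\cO\ra L_\infty(\Lambda)$ for some finite $\Lambda$ (finiteness using that $\cO$ is finitely generated, cf. the footnote after Def.~\ref{def: KSNC}). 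Restricting $\epsilon$ to the event algebra gives a faithful, constraint-preserving-or-not orthomorphism $\varphi = \epsilon|_\PO:\PO\ra\cB(\Lambda)$. The key step is to define, for each $\lambda\in\Lambda$, the valuation $v_\lambda:\cO\ra\R$ by $v_\lambda(O) = \epsilon(O)(\lambda)$; this satisfies the spectrum rule $v_\lambda(O)\in\spec(O)$ and the FUNC principle (ii) precisely because $\epsilon$ satisfies Eq.~(\ref{eq: KSNC}) and is an embedding. By Def.~\ref{def: KS-colouring} and the discussion following it, each $v_\lambda$ restricts to a KS-colouring, i.e.\ determines a maximal independent set $I_\lambda\in\cI_\Val(\GO)$ of $\GO$ (one element of value $1$ in each maximal clique/context). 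This yields a map $\kappa:\Lambda\ra\cI_\Val(\GO)$, $\lambda\mapsto I_\lambda$.

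Next I would use $\kappa$ to transport $\epsilon$ to an embedding $\epsilon':\cO\ra L_\infty(\cI_\Val(\GO))$. The natural candidate is: for $O\in\cO$ with spectral decomposition $O=\sum_i\sigma_i p_i$, set $\epsilon'(O)(I) = \sigma_i$ for the unique $i$ with $p_i\in I$ (equivalently, $\epsilon'(p)(I)=1$ iff $p\in I$, extended linearly). One checks directly that $\epsilon'$ preserves the functional relations in Eq.~(\ref{eq: KSNC}): this is immediate because for each fixed $I\in\cI_\Val(\GO)$ the evaluation $O\mapsto\epsilon'(O)(I)$ is exactly the valuation associated with the KS-colouring $I$, and valuations satisfy FUNC by definition. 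Equivalently, via Lm.~\ref{lm: embeddings vs orthomorphisms vs order-preserving maps} it suffices to verify that the induced map $\varphi':\PO\ra\cB(\cI_\Val(\GO))$, $p\mapsto\{I\mid p\in I\}$, is an orthomorphism: $\varphi'(\One)$ is everything (every valuation assigns $1$ somewhere in each maximal context), $\varphi'(\One-p)=\varphi'(\One)\setminus\varphi'(p)$, and $\varphi'(p+q)=\varphi'(p)\sqcup\varphi'(q)$ for orthogonal $p,q$ in a common context (no valuation can have two of its value-$1$ elements in the same context). So $\varphi'$ is an orthomorphism, hence lifts to an embedding.

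The one substantive point — and the main obstacle — is \emph{faithfulness} of $\epsilon'$, equivalently injectivity of $\varphi'$: one must show that distinct projections $p\neq p'$ in $\PO$ are separated by \emph{valuations} (not merely by the possibly-larger family of classical states coming from an abstract $\Lambda$). Here is where the factorisation $\varphi = \varphi'\circ(\text{something})$ matters: since $\epsilon:\cO\ra L_\infty(\Lambda)$ is faithful and each $\lambda$ gives a valuation $v_\lambda$ corresponding to $I_\lambda=\kappa(\lambda)$, for any $p\neq p'$ there is $\lambda$ with $\varphi(p)(\lambda)\neq\varphi(p')(\lambda)$, i.e.\ $p\in I_\lambda\Leftrightarrow p'\notin I_\lambda$ (for dichotomic $p,p'$; in general their spectral components are separated). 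Thus $I_\lambda\in\varphi'(p)\triangle\varphi'(p')$, so $\varphi'(p)\neq\varphi'(p')$ and $\varphi'$ is faithful. (One should be a little careful that a single $\lambda$ separates $p$ and $p'$ as \emph{events}, not just separates some coarser data — but this is exactly the statement that $\varphi=\epsilon|_\PO$ is faithful, which holds by hypothesis.) Hence $\varphi'$ is a faithful orthomorphism, and by Lm.~\ref{lm: embeddings vs orthomorphisms vs order-preserving maps} it extends to a classical embedding $\epsilon:\cO\ra L_\infty(\cI_\Val(\GO))$, as claimed. Finiteness of $\cI_\Val(\GO)$ follows from finite generation of $\cO$, so this is a genuine finite ``optimal'' representation.
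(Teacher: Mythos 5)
Your proof is correct and follows essentially the same route as the paper: take a classical embedding into a finite $\Lambda$, observe that each $\lambda$ determines a valuation $I_\lambda\in\cI_\Val(\GO)$, push the embedding forward along $\lambda\mapsto I_\lambda$, and use faithfulness of the original $\epsilon$ to conclude that the induced orthomorphism on $\cI_\Val(\GO)$ separates events. The only (harmless, arguably cleaner) difference is that you define $\varphi'(p)=\{I\in\cI_\Val(\GO)\mid p\in I\}$ over \emph{all} valuations rather than only over the image $\{I(\lambda)\mid\lambda\in\Lambda\}$ of the coarse-graining map as the paper does, which makes $\varphi'(\One)$ the indicator of the full space $\cI_\Val(\GO)$.
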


\begin{proof}
    If $\cO$ is KS noncontextual, then by definition there exists an embedding $\epsilon: \cO \ra L_\infty(\Lambda)$ for some measurable space $\Lambda$, and since $\cO$ is finitely generated, we may take $\Lambda$ to be a finite set. Consequently, every element $\lambda\in\Lambda$ defines a subset $I(\lambda) = \{p\in\POm\mid\chi(\lambda\mid p)=\epsilon(p)(\lambda)=1\}$, and we have $I(\lambda)\in\cI_\Val(\GO)$ as $\lambda$ singles out a unique projection in every maximal context $C\in\COm$: indeed, in every partition $\Lambda = \dot\cup_{p\in\cP_\mathrm{min}(C)} \Lambda_p$ with $\Lambda_p=\mathrm{supp}(\epsilon(p))$ we have $p\in I(\lambda)$ if and only if $\lambda\in\Lambda_p$.

    In general, different elements $\lambda,\lambda'\in\Lambda$, $\lambda\neq\lambda'$ (under the classical embedding $\epsilon:\cO\ra L_\infty(\Lambda)$) will give rise to the same independent set $I(\lambda)=I(\lambda')$. We define a more economical map $\epsilon'$ by composing $\epsilon$ with the ``coarse-graining" map $\Lambda\ra\cI_\Val(\GO)$, that is, by linear extension of $\epsilon'(p)(I(\lambda))=\epsilon(p)(\lambda)=1$ if and only if $p\in I(\lambda)\in\cI_\Val(\GO)$. In other words, $\mathrm{supp}(\epsilon'(p))=\cI(p):=\{I(\lambda)\mid \exists\lambda\in\Lambda:\ p\in I(\lambda)\}$. Denote the image under this coarse-graining map by $\cI(\epsilon)\subset\cI_\mathrm{val}(\GO)$. Clearly, we have $\epsilon'(\One)=\dot\cup_{p\in\cP_\mathrm{min}(C)}\cI(p)=\cI(\epsilon)$ (since every independent set $I\in\cI(p)\subset\cI(\epsilon)$ contains one element in every context $C\in\COm$). Moreover, $\cI(p)\cap\cI(p')=\emptyset$ whenever $p,p'\in\POm$, $pp'=0$ since no independent set $I\in\cI(\GO)$ contains both $p$ and $p'$. $\epsilon'|_\POm$ thus defines a faithful orthomorphism which induces a classical embedding by Lm.~\ref{lm: embeddings vs orthomorphisms vs order-preserving maps}.
\end{proof}

Note that Lm.~\ref{lm: optimal embedding} is closely related with Lm.~\ref{lm: embeddings vs orthomorphisms vs order-preserving maps} which shows that the existence of a classical embedding for $\cO$ is equivalent to the existence of a faithful orthomorphism of $\POm$, and hence, that KS contextuality is a property of the latter.

Under the embedding in Lm.~\ref{lm: optimal embedding}, every classical state $\gamma\in\Gamma_\mathrm{cl}[\CO]$ thus defines a (normalised) classical correlation on $\GO$, that is, $\gamma\in\Stab(\GO)$. This brings us one step closer to comparing (marginal) KS noncontextuality of $\cO$ with the notion of noncontextuality of its associated orthogonality graph $\GO$. 
Yet, the converse does not hold, that is, $\Gamma_\mathrm{cl}[\CO]\subsetneq\Stab(\GO)$ since $\cI_\Val(\GO)\subsetneq\cI(\GO)$, in particular, maximal independent sets of $\GO$ are not necessarily valuations. Still, the following theorem shows that the respective notions of contextuality coincide if $G=G(\cO)$.

\begin{theorem}\label{thm: KSNC = NCOG}
    Let $\cO$ be a finitely generated partial algebra with associated orthogonality graph $G=\GO$. Then $\cO$ is Kochen-Specker contextual if and only if $G$ is  contextual.
\end{theorem}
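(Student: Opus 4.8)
The plan is to prove the equivalent statement that $\cO$ is Kochen--Specker noncontextual if and only if $G=\GO$ is noncontextual in the sense of Def.~\ref{def: NCOG}. By Thm.~\ref{thm: KSNC = marginal KSNC} I may replace ``$\cO$ is KS noncontextual'' by ``$\cO$ is marginally KS noncontextual'' (Def.~\ref{def: marginal noncontextuality}). The bridge between the two sides is the observation that the \emph{normalised} elements of $\Stab(\GO)$ are exactly the restrictions to the vertex set $V=\POm$ of classical states $\gamma\in\Gamma_\mathrm{cl}[\CO]$; granting this, both implications are short.

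For the bridge, recall (as in the proof of Lm.~\ref{lm: optimal embedding}) that for finitely generated $\cO$ a classical state comes from an orthomorphism $\phi:\PO\ra\cB(\Lambda)$ with $\Lambda$ finite and a measure $\mu$, and that each $\lambda\in\Lambda$ yields, via $p\mapsto\phi(p)(\lambda)$, a valuation of $\cO$, i.e.\ a point of $\cI_\Val(\GO)$. Hence the restriction of $\gamma$ to $V$ is a convex combination of points of $\cI_\Val(\GO)$, so it lies in $\Stab(\GO)$, and it is normalised on every maximal clique $\rC$ because a valuation picks exactly one element of $\rC$ (the maximal cliques of $\GO$ being the sets $\cP_\mathrm{min}(C)$ of minimal projections of a maximal context $C\in\COm$). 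Conversely, if $\gamma=\sum_k\omega_k I_k\in\Stab(\GO)$ with $\omega_k>0$, $\sum_k\omega_k=1$ and $I_k$ independent sets, and $\gamma$ is normalised on every maximal clique $\rC$, then $\sum_k\omega_k\,|I_k\cap\rC|=1=\sum_k\omega_k$, and since $|I_k\cap\rC|\le1$ this forces $|I_k\cap\rC|=1$ for all $k$ and all maximal $\rC$; thus each such $I_k$ is a valuation and $\gamma$ extends to a classical state on $\cO$.

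\emph{$\cO$ KS noncontextual $\Rightarrow$ $\GO$ noncontextual.} By Thm.~\ref{thm: KSNC = marginal KSNC} there is a classical state $\gamma\in\Gamma_\mathrm{cl}[\CO]$ with $0\ne\gamma(p)\ne\gamma(p')\ne0$ for all $p\ne p'$ in $\PO$. By the bridge its restriction to $V=\POm$ lies in $\Stab(\GO)$, is normalised, and still separates all vertices, so $\GO$ is noncontextual. (One can make this concrete by feeding the embedding $\epsilon:\cO\ra L_\infty(\cI_\Val(\GO))$ of Lm.~\ref{lm: optimal embedding} a measure on $\cI_\Val(\GO)$ positive on atoms and avoiding the finitely many hyperplanes $\{\mu(\epsilon(p))=\mu(\epsilon(p'))\}$, $p\ne p'\in\POm$.)

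\emph{$\GO$ noncontextual $\Rightarrow$ $\cO$ KS noncontextual.} Let $\gamma\in\Stab(\GO)$ be normalised and separate all vertices. By the bridge $\gamma=\sum_{k}\omega_k I_k$ with $\omega_k>0$ and each $I_k\in\cI_\Val(\GO)$ a valuation, and these valuations separate $\POm$ because $\gamma$ does. Put $\Lambda=\{I_1,\dots,I_N\}$ and define the orthomorphism $\phi:\PO\ra\cB(\Lambda)$ by $\phi(p)=\{\,k\mid I_k(p)=1\,\}$; it is faithful on $\POm$. The remaining, and \emph{main}, step is to upgrade faithfulness from the vertices $\POm$ to all of $\PO$: two non-minimal projections $p\in\PC$, $p'\in\PC'$ in distinct contexts could a priori receive equal value. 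I would resolve this using Specker's principle --- which recovers a projection from its minimal sub-projections together with the overlap pattern of contexts --- combined with a perturbation within the polytope $\mathrm{conv}(\cI_\Val(\GO))$, which is nonempty since $\cI_\Val(\GO)\ne\emptyset$: the classical states separating \emph{all} of $\PO$ form the complement in this polytope of finitely many hyperplanes $\{\gamma(p)=\gamma(p')\}$ and $\{\gamma(p)=0\}$, and the crux is that none of these contains the whole polytope, equivalently that for every $p\ne p'$ in $\PO$ some valuation in $\cI_\Val(\GO)$ distinguishes them; this in turn uses that every atom of every maximal context is picked by some $I_k$ (a consequence of the ``$0\ne\gamma(\cdot)$'' part of the hypothesis). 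Once such a state is found the associated orthomorphism is faithful, so $\cO$ is KS noncontextual by Lm.~\ref{lm: embeddings vs orthomorphisms vs order-preserving maps} (equivalently, $\cO$ is marginally KS noncontextual and Thm.~\ref{thm: KSNC = marginal KSNC} applies). I expect this upgrade --- eliminating accidental coincidences on non-minimal projections --- to be where the partial-algebra hypothesis is genuinely needed and where the argument demands care.
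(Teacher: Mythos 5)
Your architecture is the same as the paper's: the ``bridge'' (a normalised element of $\Stab(\GO)$ is supported on valuations, because $\sum_k\omega_k|I_k\cap\rC|=1=\sum_k\omega_k$ with $|I_k\cap\rC|\le 1$ forces $|I_k\cap\rC|=1$) is exactly the normalisation argument the paper runs via $\cI_\gamma(\rC)=\cI(G)/\cI_{\gamma,0}(G)$; the forward direction via Thm.~\ref{thm: KSNC = marginal KSNC} and Lm.~\ref{lm: optimal embedding} is the paper's; and your orthomorphism $\phi(p)=\{k\mid I_k(p)=1\}$ on $\Lambda=\mathrm{supp}(\gamma)$ is the paper's $\varphi_\gamma$. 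One small point in your favour on well-definedness: an element of $\cI_\Val(\GO)$ does extend consistently from $\POm$ to all of $\PO$, since a valuation that picked a minimal projection below $p$ in one context and one below $\One-p$ in another would contain two adjacent vertices; you do not need Specker's principle beyond this.

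The genuine gap is where you yourself locate it: the upgrade of faithfulness from $\POm$ to $\PO$ in the converse direction is not carried out, and the patch you sketch does not close it. Your perturbation argument requires that no hyperplane $\{\gamma(p)=\gamma(p')\}$, $p\neq p'\in\PO$, contains all of $\mathrm{conv}(\cI_\Val(\GO))$, i.e.\ that every pair of distinct (possibly non-minimal) projections is separated by \emph{some} valuation. But by Lm.~\ref{lm: optimal embedding} every classical embedding factors through $\cI_\Val(\GO)$, so if some pair $p\neq p'$ were identified by every valuation then no classical embedding could be faithful and $\cO$ would be KS contextual --- the claim you defer to is therefore essentially equivalent to the conclusion you are trying to reach, and your stated justification (``every atom of every maximal context is picked by some $I_k$'') does not deliver it: injectivity of $\varphi_\gamma$ on minimal projections does not formally force injectivity of the additive extension on sums of them taken in different contexts. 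The paper's own proof closes this step in one line, concluding faithfulness of $\varphi_\gamma$ on $\PO$ directly from $0\neq\gamma(v)\neq\gamma(v')\neq 0$ on $V=\POm$ together with the fact that $\Lambda_\gamma\subset\cI_\Val(\GO)$; to match it you would need to either prove that separation on vertices by a fully supported normalised $\gamma$ already separates all of $\PO$, or exhibit an explicit repair (e.g.\ the product/refinement trick used in the proof of Lm.~\ref{lm: KSNC inherits from maximality}) rather than appeal to an unproven non-degeneracy of the valuation polytope. As written, the converse direction of your proposal is therefore incomplete.
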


\begin{proof}
    If $\cO$ is KS noncontextual, then there exists a classical embedding $\epsilon:\cO\ra L_\infty(\Lambda)$, which by Lm.~\ref{lm: optimal embedding} reduces to a classical embedding $\epsilon:\cO\ra L_\infty(\cI_\Val(\GO))$ under the coarse-graining $\Lambda\ra\cI_\Val(\GO)\subset\cI_\mathrm{max}(\GO)$. Now, since $\epsilon$ is faithful, there exists a separating state $\gamma\in\Gamma_\mathrm{cl}[\CO]\subset\Stab(\GO)$ with full support on $\GO$, that is, such that $0\neq\gamma(v)\neq\gamma(v')\neq 0$ for all $v,v'\in V$, $v\neq v'$. Hence, $G$ is noncontextual by Def.~\ref{def: NCOG}.
    
    Conversely, if $G=\GO$ is noncontextual by Def.~\ref{def: NCOG}, then there exists a state (a normalised correlation) $\gamma\in\Stab(G)$, that is, for all $v\in V=\POm$
    \begin{align}\label{eq: stab rep}
        \gamma(v)
        = \sum_{I\in\cI(G)} \rw_\gamma(I) I(v)
        = \sum_{I\in\cI(v)} \rw_\gamma(I)\; ,
    \end{align}
    where $\rw_\gamma(I)\geq 0$ for all $I\in\cI(G)$ with $\sum_{I\in\cI(G)} \rw_\gamma(I) = 1$, and where we defined $\cI(v)=\{I\in\cI(G)\mid I(v)=1\}$. Clearly, we have $\cI(v)\cap\cI(v')=\emptyset$ whenever $v,v'\in \rC$ with $v\neq v'$ for any clique $\rC\subset\GO$. Setting $\cI(\emptyset)=\emptyset$ we thus obtain orthomorphisms $\varphi_C:\PC\ra 2^{\cI(\rC)}$ defined additively on minimal projections by $\varphi_C(v)=\cI(v)$ (recall that $V=\POm$), and where $\cI(\rC) := \dot\bigcup_{v\in\rC} \cI(v) = \{I\in\cI(G)\mid\exists v\in\rC:\ I(v)=1\}$ for every context $C\in\cC(\cO)$ (equivalently, for every clique $\rC\subset G$). Now, we would like to define an orthomorphism from the collection of orthomorphisms $\{\varphi_C\}_{C\in\CO}$. For this to be the case we must have $\varphi_C(p)=\varphi_{C'}(p)$ for all $C,C'\in\CO$ with $p\in\PC,\cP(C')$, yet since $\cI(\rC)\neq\cI(\rC')$ for different (maximal) cliques, this is generally not the case.
    
    Nevertheless, since $\gamma$ is normalised (by assumption) and using Eq.~(\ref{eq: stab rep}) for every maximal clique $\rC\in G$, we find
    \begin{align*}
        1 = \sum_{v\in\rC} \gamma(v)
        = \sum_{v\in\rC} \sum_{I\in\cI(v)} \rw_\gamma(I)
        = \sum_{I\in\cI(\rC)} \rw_\gamma(I)\; .
    \end{align*}
    Since also $\sum_{I\in\cI(G)} \rw_\gamma(I)=1$ this implies that $\cI_\gamma(\rC)=\cI(G)/\cI_{\gamma,0}(G)=:\Lambda_\gamma$, where $\cI_{\gamma,0}(G)=\{I\in\cI(G)\mid \rw_\gamma(I)=0\}$ depends on the correlation $\gamma$, but is independent of the clique $\rC\subset G$. We may thus define a new collection of orthomorphisms $\{\varphi_{\gamma,C}\}_{C\in\COm}$, $\varphi_{\gamma,C}:\PC\ra 2^{\cI_\gamma(\rC)}$ by $\varphi_{\gamma,C}(v)=\cI_\gamma(v):=\cI(v)/\cI_{\gamma,0}(G)$. Clearly, we still have $\cI_\gamma(v)\cap\cI_\gamma(v')=\emptyset$ whenever $v,v'\in\rC$. Moreover, $\varphi_{\gamma,C}(\One)=\varphi_{\gamma,C'}(\One)$ for all $C,C'\in\CO$, which implies that $\varphi_\gamma:\PO\ra\cB(\Lambda_\gamma)=2^{\Lambda_\gamma}$, $\varphi_\gamma(p)=\cI_\gamma(p)$ defines an orthomorphism of event algebras. From $\cI_\gamma(\rC)=\cI_\gamma(\rC')$ for all maximal cliques $\rC,\rC'\subset G$, it follows that $\gamma(I)\neq 0$ only if $I\in\cI_\mathrm{val}(G)$, and since by Def.~\ref{def: NCOG} we may assume $0\neq\gamma(v)\neq\gamma(v')\neq 0$ whenever $v,v'\in V$, $v\neq v'$, we conclude that $\varphi_\gamma:\PO\ra\cB(\Lambda_\gamma)$ is a faithful orthomorphism of event algebras, which induces a classical embedding by Lm.~\ref{lm: embeddings vs orthomorphisms vs order-preserving maps}.
\end{proof}

Thm.~\ref{thm: KSNC = NCOG} proves that Def.~\ref{def: NCOG} is indeed a reformulation of KS noncontextuality for orthogonality graphs associated with observable algebras, that is, in cases where both notions are comparable. We highlight two important steps in the proof of Thm.~\ref{thm: KSNC = NCOG}.

First, the converse direction necessarily requires that equality holds in Eq.~(\ref{eq: qstab}), which is natural if $G=\GO$ is associated with (equivalently, has a unital realisation in) a partial algebra $\cO$. In turn, if we consider general realisations $\pi:G\ra\cP_\mathrm{min}(\cO')$, we have to add further vertices in order to obtain equality in Eq.~(\ref{eq: qstab}), that is, we need to consider (correlations on) larger graphs $G\subset G'=G(\cO')$. Yet, by Prop.~\ref{prop: no marginalisation} in App.~\ref{app: orthogonality graphs vs OAs}, $\Stab(G')|_G\neq\Stab(G)$ for $G\subset G'$, that is, the respective correlations in $G$ and $G'$ are not naturally related, in particular, they generally depend on the realisation $\pi$.

Second, note that Thm.~\ref{thm: KSNC = NCOG} is very similar to Thm.~\ref{thm: KSNC = marginal KSNC}. In particular, the second assertion in Thm.~\ref{thm: KSNC = NCOG} relies on the existence of a separating classical state with full support on an orthogonality graph. This assumption cannot be omitted without spoiling the equivalence in general: while we can construct the orthomorphism $\varphi_\gamma$ in the proof of Thm.~\ref{thm: KSNC = NCOG} for any (normalised) state $\gamma\in\Gamma_\mathrm{cl}[\CO]\subset\Stab(\GO)$, by a similar argument to the one in the proof of Lm.~\ref{lm: solutions to marginal problem do not imply KSNC} in Sec.~\ref{sec: characterising KSNC using marginal approach}, it will generally not be faithful.

Next, we show that contextual orthogonality graphs are characterised by their chromatic number. For ease of notation, we introduce the following definition.

\begin{definition}\label{def: maximal extended OG}
    Let $G$ be an orthogonality graph with unital realisation $\pi:G\ra\POm$, and let $\cO^*$ be a maximal extension of the partial algebra $\cO$ generated by $\pi(G)$. Then the \emph{maximal extension $G^*$ of $G$ with respect to $\cO^*$ (and thus $\pi$)} is defined as $G^*=\cP_1(\cO^*)$.
\end{definition}

\begin{corollary}\label{cor: NCOG=n-colourability}
    Let $\cO$ be a partial algebra with orthogonality graph $G=\GO$ and $\dim(\One)=d$. Then $G$ is contextual if and only if $\chi(G^*)>d$.
\end{corollary}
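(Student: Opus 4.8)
The plan is to chain together the equivalences that have already been established in the excerpt. The statement \texttt{Cor.~\ref{cor: NCOG=n-colourability}} says: for a partial algebra $\cO$ with associated orthogonality graph $G=\GO$ and $\dim(\One)=d$, the graph $G$ is contextual (in the sense of Def.~\ref{def: NCOG}) if and only if $\chi(G^*)>d$, where $G^*$ is the maximal extension of $G$ with respect to a maximal extension $\cO^*$ of $\cO$ (Def.~\ref{def: maximal extended OG}). First I would invoke \texttt{Thm.~\ref{thm: KSNC = NCOG}}: since $G=\GO$ is associated with the partial algebra $\cO$, the graph $G$ is contextual if and only if $\cO$ is Kochen-Specker contextual. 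This reduces the statement about orthogonality-graph contextuality to a statement about KS contextuality of $\cO$.

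Next I would apply \texttt{Thm.~\ref{thm: n-colourability for non-maximal OAs}}: for an observable algebra $\cO$ with maximal extension $\cO^*$ and $\dim(\One)=d$, $\cO$ is KS noncontextual if and only if $\cO^*$ admits a $d$-colouring, equivalently $\chi(G(\cO^*))=d$. Taking contrapositives, $\cO$ is KS contextual if and only if $\chi(G(\cO^*))>d$ (using that $\chi(G(\cO^*))\geq d$ always, since $\cO^*$ maximal of dimension $d$ contains a clique of size $d$). It only remains to identify $G(\cO^*)$ with $G^*$ as defined in Def.~\ref{def: maximal extended OG}. By Def.~\ref{def: maximal extended OG}, $G^*=\cP_1(\cO^*)$, i.e.\ the orthogonality graph whose vertices are the one-dimensional (minimal) projections of the maximal partial algebra $\cO^*$ with edges given by orthogonality; this is precisely $G(\cO^*)=G(\POm[\cO^*])$ in the notation of Def.~\ref{def: associated orthogonality graph} and Def.~\ref{def: orthogonality graph}, since for a maximal partial algebra the minimal projections $\POm$ coincide with $\cP_1$. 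Hence $\chi(G^*)=\chi(G(\cO^*))$, and the two equivalences compose to give: $G$ is contextual $\Leftrightarrow$ $\cO$ is KS contextual $\Leftrightarrow$ $\chi(G^*)>d$.

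One subtlety I should address carefully is well-definedness with respect to the choice of maximal extension: \texttt{Lm.~\ref{lm: KSNC inherits from maximality}} (cited in the proof of \texttt{Thm.~\ref{thm: n-colourability for non-maximal OAs}}) guarantees that KS (non)contextuality of $\cO$ is independent of which maximal extension $\cO^*$ is chosen, so $\chi(G^*)>d$ holds for one maximal extension if and only if it holds for every maximal extension; I would remark on this so the corollary's phrasing (``$\chi(G^*)>d$'' for the extension fixed in Def.~\ref{def: maximal extended OG}) is unambiguous. I would also note explicitly that a unital realisation of $G=\GO$ exists by construction (the identity map $\pi=\id$ on $\POm$), so Def.~\ref{def: maximal extended OG} genuinely applies here. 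The main obstacle, such as it is, is purely bookkeeping: matching the notational conventions ($G(\cO)$ vs.\ $\cP_1(\cO^*)$ vs.\ $\POm$, and the fact that ``contextual'' for graphs means $\neg$Def.~\ref{def: NCOG} while ``KS contextual'' for algebras means $\neg$Def.~\ref{def: KSNC}) and confirming $\chi(G(\cO^*))\geq d$ so that ``$\neq d$'' upgrades to ``$>d$''. No new mathematical content is needed beyond the cited theorems, so the proof will be short.

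\begin{proof}
    Since $G=\GO$ is associated with the partial algebra $\cO$, it admits the unital realisation $\pi=\id:\POm\ra\POm$, so that $G^*$ in Def.~\ref{def: maximal extended OG} is defined and equals $\cP_1(\cO^*)=G(\cO^*)$ for a maximal extension $\cO^*$ of $\cO$. By Thm.~\ref{thm: KSNC = NCOG}, $G$ is contextual if and only if $\cO$ is Kochen-Specker contextual. By Thm.~\ref{thm: n-colourability for non-maximal OAs}, $\cO$ is Kochen-Specker noncontextual if and only if $\cO^*$ admits a $d$-colouring, equivalently $\chi(G(\cO^*))=d$; since $\cO^*$ is maximal of dimension $d$, every maximal context yields a clique of size $d$ in $G(\cO^*)$, so $\chi(G(\cO^*))\geq d$ and hence $\cO$ is Kochen-Specker contextual if and only if $\chi(G(\cO^*))=\chi(G^*)>d$. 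Composing the two equivalences gives that $G$ is contextual if and only if $\chi(G^*)>d$. (By Lm.~\ref{lm: KSNC inherits from maximality}, this condition is independent of the choice of maximal extension $\cO^*$.)
\end{proof}
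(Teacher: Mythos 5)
Your proof is correct and follows exactly the paper's route: the paper proves this corollary in one line by combining Thm.~\ref{thm: KSNC = NCOG} with Thm.~\ref{thm: n-colourability for non-maximal OAs}, which is precisely your chain of equivalences. Your additional remarks (identifying $G^*$ with $G(\cO^*)$, independence of the choice of maximal extension via Lm.~\ref{lm: KSNC inherits from maximality}, and the bound $\chi(G(\cO^*))\geq d$) are sound bookkeeping that the paper leaves implicit.
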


\begin{proof}
    This follows from Thm.~\ref{thm: KSNC = NCOG} together with Thm.~\ref{thm: n-colourability for non-maximal OAs}.
\end{proof}

The restriction to orthogonality graphs $G=\GO$ associated with partial algebras simplifies the characterisation of KS contextuality, yet it leaves open the problem of finding a (quantum) realisation for $G$, equivalently of characterising quantum among more general partial algebras. While this is a fascinating and likely hard problem in general, in many applications one is given a subset of quantum observables $\cO\subset\LHsa$ with $\dim(\cH)=d$ to begin with, and the problem thus reduces to determining whether $\cO$ is contextual or not. The study of state-independent contextuality (SI-C) sets which we turn to next is concerned with this more restricted problem.

\subsection{State-independent contextuality (SI-C) sets and graphs}\label{sec: SI-C sets and graphs}

The notion of (non)contextuality in Def.~\ref{def: NCOG} is not commonly used in the graph-theoretic approach. Instead, one finds a notion similar to the one in Def.~\ref{def: fully classically correlated}, which is motivated by the study of noncontextuality inequalities \cite{Pitowsky1991,BadziagBengtssonCabelloPitowsky2009,KleinmannEtAl2012}. Similar to Bell inequalities \cite{Bell1964,BrunnerEtAl2014,Horodeckisz2009,JungeEtAl2010}, the violation of a noncontextuality inequality is generally state-dependent \cite{KlyachkoEtAl2008,LiangSpekkensWiseman2011}. Yet, while the violation of a Bell inequality is necessarily state-dependent there exist noncontextuality inequalities that are violated in every state \cite{Cabello2008b,YuOh2012,BengtssonBlanchfieldCabello2012,XuChenSu2015,LeiferDuarte2020}. This has motivated the following definition.

\begin{definition}[SI-C set]\label{def: SI-C set}
    A set of projectors $\{\pi(v)\in\PH\}_{v\in V}$, given in terms of a quantum realisation $\pi:G\ra\PH$ of an orthogonality graph in dimension $\dim(\cH)=d$, is called a \emph{state-independent contextuality (SI-C) set} if $\tg_\rho\notin\Stab(G)$ for all $\rho\in\SH$, where $\tg_\rho(v)=\tr[\pi(v)\rho]$ for all $v\in V$.\footnote{We follow the notation in Ref.~\cite{CabelloKleinmannBudroni2015}, which avoids confusion with the commonly used abbreviation `SIC POVM' for symmetric, informationally complete positive operator-valued measure \cite{FuchsHoangStacey2017}.}\footnote{While $\tg$ is generally not a state, that is, it is not normalised in every clique, $\tg$ necessarily extends to a state on the completion $\overline{\pi(V)}$ with respect to $\pi$ (see Eq.~(\ref{eq: complement completion}) below).}
\end{definition}

SI-C sets play an important role in experimental tests of contextuality \cite{SimonEtAl2000,MichlerWeinfurterZukowski2000,Cabello2008a,Cabello2008b,BartosikEtAl2009,KirchmairEtAl2009,AmselemEtAl2009,GuehneEtAl2010,LapkiewiczEtAl2011}. However, characterising SI-C sets is a hard problem in general \cite{CabelloKleinmannBudroni2015}. A simplification to orthogonality graphs was proposed in Ref.~\cite{RamanathanHorodecki2014}.

\begin{definition}[SI-C graph]\label{def: SI-C graph}
    An orthogonality graph $G$ is called \emph{(state-dependently) contextual in dimension $d$} if there exists a quantum state $\rho\in\SH$ with $\mathrm{dim}(\cH)=d$ such that $\tg_\rho(v) = \tr[\pi(v)\rho]\notin\mathrm{STAB}(G)$ for some realisation $\pi:G\ra\PHone$ of $G$.
    
    Moreover, $G$ is \emph{state-independently contextual (SI-C) in dimension $d$} if it is (state-dependently) contextual in dimension $d$ for every quantum state $\rho\in\SH$.
\end{definition}

Our present concern with SI-C sets (and graphs) is that the discovery of SI-C sets has raised some concern over the true essence of Kochen-Specker (KS) contextuality. We recall that many proofs of the KS theorem construct so-called KS sets \cite{KochenSpecker1967,Mermin1990,Peres1991,Mermin1993,ZimbaPenrose1993,Kernaghan1994}, that is, sets of projectors for which it is impossible to single out one in every set of mutually orthogonal ones (see Sec.~\ref{sec: KS-colouring vs colouring}). Yet, while every KS set is a SI-C set \cite{BadziagBengtssonCabelloPitowsky2009,YuTong2014,YuGuoTong2015}, not every SI-C set is also a KS set \cite{YuOh2012,BengtssonBlanchfieldCabello2012,Cabello2012b}. In Ref.~\cite{Frembs2024a}, we argue that proofs of contextuality based on (the violation of noncontextuality inequalities corresponding to) SI-C sets are nevertheless compatible with the notion of Kochen-Specker contextuality. In the following, we will provide a more detailed comparison.

To begin with, observe that existing (partial) characterisations of SI-C sets and SI-C graphs somewhat resemble Thm.~\ref{thm: n-colourability for non-maximal OAs} which implies Cor.~\ref{cor: NCOG=n-colourability}. Specifically, Ref.~\cite{Cabello2012b,CabelloKleinmannBudroni2015} argue that $\chi(G)>d$ is a necessary condition for the realisation of an orthogonality graph $G$ to be SI-C set (note the necessary restriction on realisations discussed below and in App.~\ref{app: correction}), and Ref.~\cite{RamanathanHorodecki2014} further proves that for an orthogonality graph $G$ to be SI-C in dimension $d$ is equivalent to $\chi_f(G)>d$, where $\chi_f(G)$ denotes the fractional chromatic number of $G$. Yet, despite the similarity of these results with Thm.~\ref{thm: n-colourability for non-maximal OAs} and Cor.~\ref{cor: NCOG=n-colourability}, we point out various subtleties involved with the underlying definitions. First, unlike Def.~\ref{def: NCOG}, Def.~\ref{def: SI-C set} and Def.~\ref{def: SI-C graph} are concerned with the reduced marginal problem of quantum states with respect quantum realisations in a given dimension. Second, unlike the realisations in Cor.~\ref{cor: NCOG=n-colourability}, the realisations of SI-C graphs (in particular, SI-C sets) are generally not unital. And, third, SI-C graphs are not simply the orthogonality graphs of SI-C sets: there are SI-C graphs which do not have a realisation that is a SI-C set (see Thm.~1 in Ref.~\cite{CabelloKleinmannBudroni2015}).\footnote{Since the two notions coincide whenever the quantifiers over realisations and states in Def.~\ref{def: SI-C set} and Def.~\ref{def: SI-C graph} commute, this also shows that this is generally not the case (see Sec.~\ref{sec: SI-C sets and graphs}).} The following subsections (together with App.~\ref{app: correction}) will address these issues in turn, beginning with the case of SI-C graphs associated with partial algebras.

\subsubsection{Unital realisations}\label{sec: associated SI-C graphs}

We consider Def.~\ref{def: SI-C graph} under the restriction to orthogonality graphs $G=\GO$ associated to a partial algebra $\cO$. As discussed in Sec.~\ref{sec: basics of graph-theoretic approach}, in order to preserve the additional structure imposed by the identity $\One$ in $\cO$ one needs to require that quantum realisations map maximal cliques in $G$ to resolutions of $\One$. Consequently, for orthogonality graphs $G=\GO$, we modify Def.~\ref{def: SI-C graph} by restricting to \emph{unital} quantum realisations.

\begin{definition}\label{def: USI-C graph}
    An orthogonality graph $G$ is \emph{unital (state-dependently) contextual in dimension $d$} if there exists a state $\rho\in\SH$ with $d=\mathrm{dim}(\cH)$ such that $\gamma_\rho(v) = \tr[\pi(v)\rho]\notin\mathrm{STAB}(G)$ for some unital quantum realisation $\pi:G\ra\PH$ of $G$.
    
    Moreover, $G$ is \emph{unital state-independently contextual (USI-C) in dimension $d$} if it is unitarily (state-dependently) contextual in dimension $d$ for every state $\rho\in\SH$.
\end{definition}

By Lm.~\ref{lm: embeddings vs orthomorphisms vs order-preserving maps}, a unital quantum realisation of $\GO$ is equivalent to a constraint-preserving quantum embedding of $\cO$ (see also Tab.~\ref{tab: realisation vs orthomoprhism vs embedding}). Under this stronger assumption, we can employ the characterisation of KS contextuality in Thm.~\ref{thm: n-colourability for non-maximal OAs}.

\begin{theorem}\label{thm: n-colourable implies USI-C}
    Let $G$ be an orthogonality graph with unital quantum realisation $\pi:G\ra\PH$ in $\dim(\cH)=d$. Then $G$ is unital state-independently contextual (USI-C) in dimension $d$ only if $\chi(G^*)>d$.
\end{theorem}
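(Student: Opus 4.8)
The plan is to prove the contrapositive: assuming $\chi(G^*) \leq d$, exhibit a single quantum state $\rho \in \SH$ witnessing that $G$ is \emph{not} USI-C in dimension $d$, i.e.\ a state with $\gamma_\rho(v) = \tr[\pi(v)\rho] \in \Stab(G)$ for \emph{every} unital quantum realisation of $G$ in dimension $d$. The natural candidate is the maximally mixed state $\rho = \One/d$.

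First I would dispense with the (routine) inequality part: by Def.~\ref{def: maximal extended OG}, $G^* = \cP_1(\cO^*)$ is the orthogonality graph of the maximal partial algebra $\cO^*$ of dimension $\dim(\One) = d$, so every maximal clique of $G^*$ is a resolution of the identity into $d$ one-dimensional projections; hence $\omega(G^*) = d$ and $\chi(G^*) \geq d$, so the hypothesis forces $\chi(G^*) = d$. By Thm.~\ref{thm: KSNC = n-colourability}, $\cO^*$ then admits a $d$-colouring $\zeta \colon \cP_1(\cO^*) \to [d]$ (equivalently, $\cO$ is KS noncontextual by Thm.~\ref{thm: n-colourability for non-maximal OAs}). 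Its colour classes $I_k := \zeta^{-1}(k)$ are independent sets of $G^*$, each meeting every maximal clique of $G^*$ in exactly one vertex (a valuation on $\cO^*$), and $\sum_{k=0}^{d-1} \mathbbm{1}_{I_k} = \mathbbm{1}_{\cP_1(\cO^*)}$.

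Next I would compute $\gamma_\rho$ for $\rho = \One/d$ and an arbitrary unital quantum realisation $\pi \colon G \to \PH$: one has $\gamma_\rho(v) = \tr[\pi(v)]/d = \dim(\pi(v))/d$, which depends on $\pi$ only through the dimensions of the $\pi(v)$, and equals the constant vector $(1/d,\dots,1/d)$ in the standard case of a rank-one realisation — in particular independently of the realisation chosen. It then remains to check $\gamma_\rho \in \Stab(G)$. For a rank-one realisation this is immediate: the subgraph of $G^*$ induced on $\pi(G) \subseteq \cP_1(\cO^*)$ is $G$, so $\zeta$ restricts to a proper $d$-colouring of $G$ with colour classes $J_k := I_k \cap \pi(G)$, and $\tfrac1d\sum_{k=0}^{d-1}\mathbbm{1}_{J_k}$ is a convex combination of independent-set indicators equal to $(1/d,\dots,1/d) = \gamma_\rho$, hence lies in $\Stab(G)$. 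For a general unital realisation one refines each $\pi(v)$ inside $\cO^*$ and declares $v \in J_k$ when the colour-$k$ atom of the maximal context of $\cO^*$ refining the clique through $v$ lies below $\pi(v)$; global consistency of $\zeta$ makes this well defined, each $J_k$ is independent in $G$, and $\tfrac1d\sum_k\mathbbm{1}_{J_k} = \gamma_\rho$. Since the same $\rho$ works for every unital realisation, $G$ is not USI-C, which is the contrapositive.

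The step I expect to be the main obstacle is this last one: checking that the $d$-colouring $\zeta$ of $\cO^*$ pushes down to a genuine independent-set decomposition of $\gamma_\rho$ on $G$ that is independent of which maximal clique (context cycle) one routes through — precisely the content guaranteed by Thm.~\ref{thm: KSNC = n-colourability}/\ref{thm: n-colourability for non-maximal OAs}, and the reason the maximal extension $G^*$, rather than $G$ itself, is the right object in the statement. A secondary point to handle carefully is normalisation: $\Stab(G)$ does not demand it, but since $\pi$ is unital and $\rho$ a state, $\gamma_\rho$ is automatically normalised on every maximal clique, so no mismatch arises.
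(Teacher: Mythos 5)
Your proposal is correct and follows essentially the same route as the paper's proof: prove the contrapositive by taking a $d$-colouring of $G^*$ (obtained from Thm.~\ref{thm: KSNC = n-colourability}/Thm.~\ref{thm: n-colourability for non-maximal OAs}), observing that its colour classes are disjoint valuations covering $V^*$, and identifying the uniform mixture $\tfrac{1}{d}\sum_k \mathbbm{1}_{I_k}$ of these independent sets with the correlation $\gamma_{\One/d}(v)=\tr[\pi(v)\One/d]$, after refining $\pi$ to a rank-one realisation of $G^*$ exactly as in the paper's parenthetical remark. The only cosmetic difference is that the paper phrases the construction via a classical embedding $\epsilon^*:\cO^*\ra L_\infty(\cI_{\mathrm{val}}(G^*))$ and a classical state $\gamma^*\in\Gamma_{\mathrm{cl}}[\cC(\cO^*)]$, whereas you work directly with the stable-set polytope; the underlying witness (the maximally mixed state) and decomposition are identical.
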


\begin{proof}
    By assumption, $G=\GO$ with $\cO\subset\LHsa$ the partial algebra induced by the unital quantum realisation, that is, $\POm=\pi(G)$. Therefore, if $\chi(G^*)=d$ then $\cO$ is KS noncontextual by Thm.~\ref{thm: n-colourability for non-maximal OAs}, that is, there exists a classical embedding $\epsilon:\cO\ra L_\infty(\Lambda)$. Moreover, by Lm.~\ref{lm: min extension to max observable algebra}, there exists a classical embedding $\epsilon^*:\cO^*\ra L_\infty(\Lambda)$, and $G^*=G(\cO^*)$ has a unital realisation $\pi^*:G^*\ra\PHone$ such that $\pi^*|_G=\pi$. (Indeed, since $G$ is finite, we can always decompose projections $\pi(v)=\sum_{i=1}^{\dim(\pi(v))}\pi^*(v^*_i)$ such that no new orthogonality relations arise, that is, $\pi^*(v^*_i)\pi^*(v^*)=0$ if and only if $v^*_i,v^*\in\rC^*$ and $v_i^*\neq v^*$.) By Lm.~\ref{lm: optimal embedding}, we may further assume that $(\epsilon^*)':\cO^*\ra \cI_\mathrm{val}(G^*)$, and since $\chi(G^*)=d$, there exists a $d$-colouring $\zeta:V^*\ra[d]$. In particular, each colour picks out a unique valuation in $\cI(\zeta)=\{I_1=\zeta^{-1}(v_1),\cdots,I_d=\zeta^{-1}(v_d)\}\subset\cI_\mathrm{val}(G^*)$ for the set of vertices $\{v_1,\cdots,v_d\}=\rC^*$ in any maximal clique $\rC^*\subset G^*$. The valuations in $\cI(\zeta)$ satisfy $I_i\cap I_j=\emptyset$ whenever $i\neq j$ and $\dot\bigcup_{i=1}^d I_i=V^*$. Now, define the classical state $\gamma^*\in\Gamma_\mathrm{cl}[\cC(\cO^*)]$ by $\gamma^*(I)=\frac{1}{d}$ if $I\in\cI(\zeta)$ and $\gamma^*(I)=0$ otherwise. Clearly, we have $\gamma^*=\one/d|_{\pi^*(G^*)}$, hence, by setting $\gamma(v)=\sum_{i}\gamma(v^*_i)$ for all $v\in V$ we also have $\gamma(v)=\tr[\pi(v)\one/d]\in\Gamma_\mathrm{cl}[\CO]$. Hence, $G$ is not a USI-C graph in dimension $d$.
\end{proof}

\begin{corollary}\label{cor: unital SI-C set}
    A necessary condition for $\pi(V)$ for $\pi:G\ra\PH$ a unital quantum realisation of an orthogonality graph $G$ in dimension $\dim(\cH)=d$ to be a state-independent contextuality (SI-C) set is that $\chi(G^*)>d$.
\end{corollary}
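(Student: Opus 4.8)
The plan is to deduce this directly from Theorem~\ref{thm: n-colourable implies USI-C} by carefully matching the relevant definitions. First I would observe that if $\pi(V)$ is an SI-C set, where $\pi:G\ra\PH$ is a unital quantum realisation in $\dim(\cH)=d$, then by Definition~\ref{def: SI-C set} we have $\tg_\rho\notin\Stab(G)$ for \emph{every} state $\rho\in\SH$, with $\tg_\rho(v)=\tr[\pi(v)\rho]$. Since $\pi$ is unital by hypothesis, the partial algebra $\cO=\cO(\pi(V))\subset\LHsa$ satisfies $\POm=\pi(V)$ and $G=\GO$; hence the maximal extension $G^*$ of $G$ (Definition~\ref{def: maximal extended OG}) is well-defined with respect to a maximal extension $\cO^*$ of $\cO$, and the statement $\chi(G^*)>d$ is meaningful.

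The key step is then to note that this SI-C condition forces $G$ to be unital state-independently contextual (USI-C) in dimension $d$ in the sense of Definition~\ref{def: USI-C graph}: for every $\rho\in\SH$ the \emph{same} unital quantum realisation $\pi$ witnesses $\tg_\rho\notin\Stab(G)$, so in particular there exists a unital quantum realisation of $G$ (namely $\pi$ itself) making $G$ unital state-dependently contextual for that $\rho$. Applying Theorem~\ref{thm: n-colourable implies USI-C} to $G$ with the unital quantum realisation $\pi$ then yields $\chi(G^*)>d$, which is exactly the claim.

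There is no genuine obstacle here; the proof is essentially immediate once the definitions are unwound. The only point deserving care is the direction of the quantifiers: an SI-C \emph{set} fixes a single realisation and ranges over all states, whereas a USI-C \emph{graph} permits the realisation to vary with the state, so being an SI-C set (with a unital realisation) is \emph{a priori} the stronger notion — which is precisely why the implication runs the way we need. One should also record explicitly that no converse is asserted: as discussed after Definition~\ref{def: SI-C graph}, there exist USI-C graphs admitting no realisation that is an SI-C set, so $\chi(G^*)>d$ is necessary but not sufficient for $\pi(V)$ to be an SI-C set.
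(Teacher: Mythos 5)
Your proposal is correct and follows the same route as the paper, whose entire proof is the one-line observation that the orthogonality graph of an SI-C set is a(n) (U)SI-C graph, after which Thm.~\ref{thm: n-colourable implies USI-C} applies; you have merely spelled out the quantifier bookkeeping that the paper leaves implicit. No further comment is needed.
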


\begin{proof}
    The orthogonality graph of a SI-C set is always a SI-C graph.
\end{proof}

The converse implications of Thm.~\ref{thm: n-colourable implies USI-C} and Cor.~\ref{cor: unital SI-C set} are generally false, since a quantum state $\rho\in\SH$ may have a classical representation of the form $\gamma(v)=\tr[\pi(v)\rho]\in\Gamma_\mathrm{cl}[\CO]$, yet one for which $\gamma$ is not separating and of full support (cf. Lm.~\ref{lm: solutions to marginal problem do not imply KSNC} in Sec.~\ref{sec: characterising KSNC using marginal approach}). The notions of USI-C graph $G=\GO$, SI-C sets with unital realisation and KS contextuality of $\cO$ therefore generally differ. What about more general realisations.

Thm.~\ref{thm: n-colourable implies USI-C} and Cor.~\ref{cor: unital SI-C set} generalise the results in Ref.~\cite{Cabello2012b,RamanathanHorodecki2014,CabelloKleinmannBudroni2015} in the case of a unital realisations. Yet, what about more general realisations? In the next section, we further extend our results by lifting a non-unital realisation $\pi:G\ra\PO$ of an orthogonality graph $G$ to the unital realisation of its completion in $\cO$.

\subsubsection{Realisations with faithful completion}\label{sec: general SI-C graphs}

We seek to generalise Thm.~\ref{thm: n-colourable implies USI-C} to orthogonality graphs with not necessarily unital realisation. Note that the realisations $\pi$ in Def.~\ref{def: SI-C set} and Def.~\ref{def: SI-C graph} are not assumed to be unital. Still, the correlations one is interested in are those generated by $\pi$ in $\PH$:
\begin{align}\label{eq: complement completion}
    \overline{\pi(V)}
    =\langle\pi(V),\mathbbm{1}\rangle\; ,
\end{align}
where the \emph{completion $\overline{\pi(V)}$} contains all projections that can be written as sums of elements in $\{\pi(V),\one\}$, and recursively from newly generated elements; in particular, $\{\op(\rC)=\mathbbm{1}-\sum_{v\in\rC}\pi(v)\mid \rC\subset G\mathrm{\ a\ maximal\ clique}\}\subset \overline{\pi(V)}$, moreover $\overline{\pi(V)}$ may contain new contexts generated by such elements. Indeed, the observables violating a noncontextuality inequality corresponding to a SI-C set necessarily include the identity element \cite{Cabello2008b,YuOh2012,BadziagBengtssonCabelloPitowsky2009}. This is most evident in noncontextuality inequalities involving dichotomic observables, whose quantum violation stems from measurements of the form $O_v=\mathbbm{1}-2\pi(v)$ for all $v\in V$. In this sense, SI-C sets correspond with the full partial algebra $\cO(\pi(V))=\cO(\overline{\pi(V)})$ induced by the event algebra $\overline{\pi(V)}=\cP(\cO(\pi(V)))$, generated by $\pi(V)$ in $\PH$. Define $\oG:=G(\cO(\pi(V)))$, then $\pi$ has an extension to a unital quantum realisation $\overline{\pi}:\oG\ra\PH$ such that $\overline{\pi}|_G=\pi$.\footnote{Note that $G$ is always a vertex-restricted subgraph of its completion $\oG$ (see App.~\ref{app: orthogonality graphs vs OAs}).} Clearly, the orthogonality graph $\oG$ depends on the realisation $\pi$ of $G$. We therefore first establish some relevant facts about completions induced by general realisations of orthogonality graphs.\\

\textbf{Faithful completions.} What is the difference between Thm.~\ref{thm: n-colourable implies USI-C} and the results in Ref.~\cite{Cabello2012b,RamanathanHorodecki2014,CabelloKleinmannBudroni2015}? The latter consider the chromatic number of the subgraph $G\subset\oG$, where $\oG$ denotes the orthogonality graph of its completion $\overline{\pi(V)}$ with respect to the realisation $\pi$ in Eq.~(\ref{eq: complement completion}). Now, it is generally not true that $\chi(G)<d$ implies $\chi(\oG)<d$. Indeed, the completion $\oG$ with respect to a realisation $\pi$ will generally introduce new constraints between the vertices in $G$. Explicitly, write $\ov_\pi(\rC)$ for the vertex in $\oG$ corresponding to the projection $\op(\rC)=\One-\sum_{v\in\rC}\pi(v)$ in the context realisation of the maximal clique $\rC\subset G$. Then we may have $\ov_\pi(\rC)=\ov_\pi(\rC')$ for two distinct maximal cliques $\rC,\rC'\subset G$, $\rC\neq\rC'$. Moreover, two vertices $\ov_\pi(\rC)$ and $\ov_\pi(\rC')$ may be orthogonal and thus pose additional constraints on a colouring of $\oG$. Consequently,
$\chi(\oG)>\chi(G)$ in general (even if $\omega(\oG)=\omega(G)$). For example, for a uniquely colourable graph with two non-maximal cliques $\rC,\rC'\subset G$ which jointly contain all colours in $\chi(G)$, together with a realisation such that $\ov_\pi(\rC)=\ov_\pi(\rC')$, one finds $\chi(\oG)>\chi(G)$. The following restriction to faithful completions avoids this issue.

\begin{definition}\label{def: faithful graph completion}
    A realisation $\pi:G\ra\PO$ of an orthogonality graph $G$ is called \emph{freely completable} and the completion $\overline{\pi(V)}$ \emph{faithful} if $\op(\rC)\neq\op(\rC')$ for any two distinct maximal cliques $\rC,\rC'\subset G$, $\rC\neq\rC'$ and $\op(\rC)\pi(v)=0$ only if $v\in\rC$.
\end{definition}

For instance, the completion of the realisation in Fig.~2 in Ref.~\cite{YuOh2012} of the Yu-Oh orthogonality graph (see also Fig.~\ref{fig: Yu-Oh graph} in App.~\ref{app: correction}) is faithful. Faithful completions of an orthogonality graph $G$ depend on the realisation only via the dimension $\dim(\One)=d$ of the target partial algebra; we therefore write $\CG$ in this case. As an aside, we mention that the existence of \emph{any} realisation $\pi:G\ra\PO$ of $G$ with $\dim(\One)=d$ implies the existence of a realisation with faithful completion $\CG$ in the same dimension.

\begin{lemma}\label{lm: existence of faithful graph completion}
    Let $G$ be an orthogonality graph with realisation $\pi':G\ra\PO$ in dimension $\dim(\One')=d$. Then $G$ has a realisation with $\dim(\One)=d$ and faithful completion $\CG$.
\end{lemma}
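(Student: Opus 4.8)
The plan is to take the given realisation $\pi':G\ra\PO$ (with $\dim(\One')=d$) and repair it so that the two faithfulness conditions in Def.~\ref{def: faithful graph completion} hold, while keeping the dimension fixed at $d$. The obstruction to $\pi'$ being freely completable is that two complementary projections $\op'(\rC)=\One'-\sum_{v\in\rC}\pi'(v)$ and $\op'(\rC')$ coincide for distinct maximal cliques $\rC\neq\rC'$, or that $\op'(\rC)\pi'(v)=0$ for some $v\notin\rC$; in either case the completion $\overline{\pi'(V)}$ identifies or over-constrains vertices that Def.~\ref{def: faithful graph completion} wants kept apart. The key idea is that both defects live inside the ``slack'' subspaces $\im(\op'(\rC))$, which carry no information about $G$ itself, and can therefore be enlarged by embedding $\cO$ into a partial algebra over a Hilbert space of the same dimension $d$ but in which these complementary projections are made generic.

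First I would set up the target: let $\cH$ be a Hilbert space of dimension $d$ and choose, for each maximal clique $\rC\subset G$, a $d$-tuple of mutually orthogonal rank-$1$ projections $\{q^\rC_v\}_{v\in\rC}\cup\{q^\rC_*\}$ summing to $\mathbbm{1}$, such that (i) $q^\rC_v q^{\rC'}_{v}$ is as dictated by the orthogonality graph $G$ (i.e.\ the assignment $v\mapsto q^\rC_v$ is compatible across cliques exactly where $G$ demands, which is possible because $\pi'$ already exhibits such a compatible assignment — one can reuse $\pi'$ on the vertices and only modify the complements), and (ii) the leftover projections $q^\rC_*$ are chosen in ``general position'', so that $q^\rC_*\neq q^{\rC'}_*$ for $\rC\neq\rC'$ and $q^\rC_* q^{\rC'}_v\neq 0$ for $v\notin\rC$. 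Concretely, one can start from $\pi'$, keep $\pi(v):=\pi'(v)$ for all $v\in V$ wherever this is consistent, and perturb only the decomposition of each slack space $\im(\op'(\rC))$ by a small generic unitary; since ``being distinct'' and ``being non-orthogonal'' are Zariski-open conditions and there are finitely many cliques, a single generic choice works simultaneously for all of them. This defines a realisation $\pi:G\ra\PO$ with $\dim(\One)=d$.

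Next I would verify the two clauses of Def.~\ref{def: faithful graph completion} for $\pi$. By construction $\op(\rC)=q^\rC_*$ (or, if the slack space has dimension $>1$, the sum of the leftover minimal projections, which we may as well take to a common generic position), so $\op(\rC)\neq\op(\rC')$ for $\rC\neq\rC'$ by clause (ii); and $\op(\rC)\pi(v)=q^\rC_*\,\pi'(v)$, which by genericity vanishes only when forced, i.e.\ only when $v\in\rC$. Hence $\pi$ is freely completable and its completion $\CG=G(\cO(\pi(V)))$ is faithful. Finally one checks $\pi$ is still a genuine realisation of $G$: $\pi(v)\pi(v')=0\Leftrightarrow (v,v')\in E$, which holds because we only altered projections inside the slack subspaces and left $\pi|_V$ equal to (a generically perturbed version of) $\pi'|_V$, preserving all orthogonality relations among the $\pi(v)$.

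The main obstacle I anticipate is the bookkeeping in the first step: one must choose the tuples $\{q^\rC_v\}$ so that they assemble into a single well-defined map $v\mapsto\pi(v)$ on $V$ (a vertex $v$ may lie in several maximal cliques, and the projection assigned to it must be the same in all of them), \emph{and} so that the induced orthogonality relations are exactly $E$ — no more, no less. The existence of $\pi'$ already guarantees that such a globally consistent, $G$-faithful assignment of the \emph{vertex} projections exists in dimension $d$; the only freedom one is exploiting is in the complements $\op(\rC)$, and the genericity argument there is routine once the vertex part is fixed. So the proof reduces to: reuse $\pi'$ on vertices, genericise the complements, invoke finiteness of the clique set plus openness of the two defect-avoiding conditions.
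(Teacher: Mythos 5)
There is a genuine gap, and it sits at the heart of your construction. The two defects you are trying to repair --- $\op(\rC)=\op(\rC')$ for distinct maximal cliques, and $\op(\rC)\pi(v)=0$ for some $v\notin\rC$ --- are properties of the complement projections $\op(\rC)=\One-\sum_{v\in\rC}\pi(v)$, and these are completely determined by the vertex projections $\pi|_\rC$. If you keep $\pi(v):=\pi'(v)$ for all $v\in V$ and only re-decompose each slack space $\im(\op'(\rC))$ into different rank-one pieces (generically rotated or not), the sum of those pieces is still the very same projection $\op'(\rC)$; nothing relevant to Def.~\ref{def: faithful graph completion} changes. Genericity of the refinement of the slack space is simply irrelevant to whether $\op(\rC)\neq\op(\rC')$ or whether $\op(\rC)\pi(v)\neq 0$ for $v\notin\rC$. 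To repair these defects you would have to move the vertex projections themselves, and then you would need to re-verify that both directions of $\pi(v)\pi(v')=0\Leftrightarrow(v,v')\in E$ survive the perturbation, which your sketch does not address.

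A second problem is the ambient algebra: the hypothesis only provides a realisation into a general partial algebra $\cO'$ with $\dim(\One')=d$, and this does not imply the existence of a quantum realisation in a Hilbert space of dimension $d$ (the paper points to counterexamples via App.~\ref{app: OAs with separating states} and App.~\ref{app: On realisations}), so opening with ``let $\cH$ be a Hilbert space of dimension $d$'' is not available at this level of generality. The paper's proof sidesteps both issues: it takes $\pi$ to be a realisation into the \emph{abstract} freely completed partial algebra, in which the complement elements $\ov_\pi(\rC)$ are by construction formally distinct and orthogonal only to their own clique, so faithfulness of the completion holds for free; the only thing actually verified is that the dimension function $\dim'$ on $\cO'$ transfers to this object via $\dim|_V=\dim'|_V$ and $\dim(\ov_\pi(\rC))=\dim'(\ov_{\pi'}(\rC))$, which yields $\dim(\One)=d$.
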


\begin{proof}
    Since $G$ has a realisation $\pi':G\ra\cP(\cO')$ with $\dim(\One')=d$, there exists a dimension function $\dim':\cP(\cO')\ra\mathbb{N}$ on $\cO'$ (see Def.~\ref{def: dim}) with $\dim(\One')=\dim'(\ov_{\pi'}(\rC))+\sum_{v\in\rC}\dim'(v)$ for all cliques $\rC\subset G$ and $\ov_{\pi'}(\rC)$ the unique vertex corresponding to $\op(\rC)$ in its completion in Eq.~(\ref{eq: complement completion}). Now, let $\pi:G\ra\PO$ be a realisation with faithful completion of $G$ as by Def.~\ref{def: faithful graph completion}. Then $\dim'$ induces a dimension function $\dim:\PO\ra\mathbb{N}$ on $\cO$ by setting $\dim|_V=\dim'|_V$ and $\dim(\ov_\pi(\rC))=\dim'(\ov_{\pi'}(\rC))$. In particular, we have
    \begin{align*}
        \hspace{.6cm}
        \dim(\One)
        = \dim(\ov_\pi(\rC))+\sum_{v\in\rC}\dim(v)
        = \dim'(\ov_{\pi'}(\rC))+\sum_{v\in\rC}\dim'(v)
        = \dim(\One') = d\; .
        \hspace{1.2cm} \qedhere
    \end{align*}
\end{proof}

Next, we relate the chromatic number $\chi(G)$ of an orthogonality graph $G\subset\oG$ with that of the minimal extension of its completion $\chi(\oG^*)$ with respect to some realisation $\pi:G\ra\PO$. Clearly, a $d$-colouring of $\oG^*$ induces a $d$-colouring of $G$ by restriction. In other words, $\chi(G)\leq\chi(\oG^*)$. However, the converse is generally not true. However, under the restriction to freely completable realisations with one-dimensional projections the following relation holds.

\begin{lemma}\label{lm: restricted chromatic extension}
    Let $G$ be an orthogonality graph with freely completable realisation $\pi:G\ra\POone$ in dimension $\dim(\One)=d$. Then $\chi(G)\leq d$ if and only if $\chi(G^*)\leq d$.
\end{lemma}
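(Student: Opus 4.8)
\textbf{Proof plan for Lemma~\ref{lm: restricted chromatic extension}.}

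The plan is to prove the two implications separately, noting first that the direction ``$\chi(G^*)\leq d\Rightarrow\chi(G)\leq d$'' is essentially immediate: since $\pi$ is freely completable, $G$ is a vertex-induced subgraph of $\oG$ (by Def.~\ref{def: faithful graph completion} no new orthogonality relations among vertices of $G$ are introduced by passing to the completion), and $\oG$ in turn is a vertex-induced subgraph of $\oG^*=G^*$. Restricting a $d$-colouring of $G^*$ to the vertices of $G$ then yields a proper $d$-colouring of $G$, so $\chi(G)\leq\chi(G^*)\leq d$. Hence the substance of the lemma is the forward direction.

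For ``$\chi(G)\leq d\Rightarrow\chi(G^*)\leq d$'', I would start from a proper $d$-colouring $\zeta:V(G)\to[d]$ and extend it, first to the completion $\oG$ and then to the maximal extension $G^*=\oG^*$. The key structural fact is that $\pi$ is \emph{freely completable}: for each maximal clique $\rC\subset G$ the new vertex $\ov_\pi(\rC)$ corresponding to $\op(\rC)=\One-\sum_{v\in\rC}\pi(v)$ is adjacent in $\oG$ only to those $v\in\rC$ (and to the other completion-vertices $\ov_\pi(\rC')$ it may be orthogonal to), and distinct maximal cliques give distinct completion-vertices. Since $\pi$ maps into $\POone$, every maximal clique $\rC\subset G$ has $\sum_{v\in\rC}\dim(\pi(v))=|\rC|\leq d$, so $\dim(\op(\rC))=d-|\rC|\geq 0$. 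Because $\zeta$ restricted to the clique $\rC$ uses exactly $|\rC|$ colours, there are $d-|\rC|=\dim(\op(\rC))$ colours left unused on $\rC$; these are precisely the colours available for $\ov_\pi(\rC)$ once we pass to a maximal extension, where $\op(\rC)$ is further decomposed into $d-|\rC|$ one-dimensional projections. Concretely, in $G^*$ each completion-vertex $\ov_\pi(\rC)$ splits into one-dimensional vertices $\ov_\pi(\rC)_1,\dots,\ov_\pi(\rC)_{d-|\rC|}$ forming, together with $\rC$, a maximal clique of size $d$; I assign to these the $d-|\rC|$ colours missing from $\zeta(\rC)$, in any order. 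This makes every maximal clique of $G^*$ that arises by completing a clique of $G$ properly coloured and rainbow (all $d$ colours appear exactly once).

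The main obstacle — and where free completability is doing real work — is consistency: a one-dimensional vertex $w$ of $G^*$ lying inside some $\op(\rC)$ may also lie inside $\op(\rC')$ for another maximal clique $\rC'$, or two completion-vertices from different cliques may be orthogonal, and I must check the colour assignment is well-defined and proper across all such overlaps. Here I would invoke Lm.~\ref{lm: embeddings vs orthomorphisms vs order-preserving maps} and Lm.~\ref{lm: optimal embedding}: a $d$-colouring of $G^*$ is equivalent to a classical embedding of the partial algebra $\cO^*$ generated by the completion, equivalently (by Thm.~\ref{thm: KSNC = n-colourability} and Thm.~\ref{thm: n-colourability for non-maximal OAs}) to $\chi(G^*)=d$, so it suffices to produce a flat context connection on $\cC((\oG^*)^*)$ rather than hand-craft the colouring vertex by vertex. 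Given the $d$-colouring $\zeta$ of $G$, I build bijections $l_{C'C}:\cP_1(C)\to\cP_1(C')$ between maximal contexts by matching equal colours on the $G$-part and matching the unused colours on the completion-part in a fixed canonical way (e.g. via a chosen linear order on $[d]$); the free-completability hypothesis guarantees that on overlaps $C\cap C'$ these bijections restrict to the identity, because overlapping minimal projections either lie in $G$ (where $\zeta$ is already globally defined) or are forced by $\op(\rC)=\op(\rC')$ only when $\rC=\rC'$. The resulting context connection is flat because the colouring $\zeta$ provides a global section witnessing triviality of every context cycle, exactly as in the proof of Thm.~\ref{thm: KSNC = n-colourability}. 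I would close by remarking that the one-dimensionality assumption $\pi:G\to\POone$ cannot be dropped: if some $\pi(v)$ has dimension $\geq 2$, the colour classes on $G$ need not refine compatibly with a decomposition into minimal projections of $G^*$, which is precisely the phenomenon illustrated in the non-faithful examples preceding Def.~\ref{def: faithful graph completion}.
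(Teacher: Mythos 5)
Your proposal is correct and follows essentially the same route as the paper: the backward direction by restriction, and the forward direction by extending a $d$-colouring of $G$ to $G^*$ by assigning to the one-dimensional completion vertices $\ov_{\pi^*,i}(\rC)$ the colours unused on the clique $\rC$, which is consistent precisely because free completability together with condition (iii) of Def.~\ref{def: max extension} guarantees that completion vertices from distinct maximal cliques are never identified nor adjacent. The closing detour through flat context connections and Lm.~\ref{lm: optimal embedding} is superfluous (and as phrased slightly circular, since $\zeta$ alone is not yet a global section on $G^*$); the direct colour-extension argument you give in the preceding paragraph already completes the proof, as in the paper.
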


\begin{proof}
    Clearly, $\chi(G)\leq\chi(G^*)$ since a colouring $\zeta^*:V^*\ra[d]$ yields a colouring $\zeta^*|_V:V\ra[d]$ for $G$ by restriction, hence, $\chi(G^*)\leq d$ implies $\chi(G)\leq d$.
    
    Conversely, $G$ arises from $G^*=G(\cO^*)$ by removing in (some) maximal cliques $\rC\subset\rC^*\subset G^*$ the vertices $\ov_{\pi^*,i}(\rC)$ corresponding to the one-dimensional projections $p^*_i(\rC)$ in the maximal extension $\cO^*$ of $\cO=\cO(\pi(V))$, which (by Def.~\ref{def: max extension}) satisfy $p^*_i(\rC)p^*_j(\rC')=0$ if and only if $\rC=\rC'$ and $i\neq j$ as well as $\op(\rC)=\one-\sum_{v\in\rC}\pi(v)=\sum_{i=1}^{d-|\rC|}p^*_i(\rC)$, where $\op(\rC)$ is the projection with corresponding vertex $\ov_\pi(\rC)$ in the faithful completion of $G$. Moreover, by Def.~\ref{def: faithful graph completion}, $\ov_\pi(\rC)\neq\ov_\pi(\rC')$ whenever $\rC\neq\rC'$ and $(\ov_\pi(\rC),v)\in E^*$ if and only if $v\in\rC$. Consequently, given a $d$-colouring $\zeta:V\ra[d]$ for $G$, we obtain a valid $d$-colouring $\zeta^*:V^*\ra[d]$ for $G^*$ by assigning to $\{\ov_{\pi^*,i}(\rC)\}_i$ the colours not yet assigned to any other vertex in $\rC$, that is, such that $\zeta^*(\ov_{\pi^*,i}(\rC))\neq\zeta(v)$ for all $v\in\rC$ and $i\in\{1,\cdots,d-|\rC|\}$. It follows that $\chi(G)\leq d$ implies $\chi(G^*)\leq d$.
\end{proof}

With the notion of freely completable realisations at hand, we obtain the following partial characterisation of SI-C graphs in terms of their chromatic number.\\

\textbf{A necessary condition for SI-C graphs and SI-C sets.}

\begin{theorem}\label{thm: partial RH conjecture}
    Let $G$ be an orthogonality graph with freely completable realisation $\pi:G\ra\PH$ in $\dim(\cH)=d$ and let $\oG$ be the orthogonality graph of its faithful completion with respect to $\pi$. Then $\oG$ is a SI-C graph only if $\chi(\CG^*)>d$.

    Moreover, if $\pi:G\ra\PHone$ then $\oG$ is a SI-C graph only if $\chi(G)>d$.
\end{theorem}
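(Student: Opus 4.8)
The plan is to prove both assertions in contrapositive form, reducing everything to a chromatic-number comparison together with the elementary fact that under any rank-$1$ realisation the maximally mixed state yields the uniform correlation. Recall that, since $\pi$ is freely completable, $\oG = \CG$ is the orthogonality graph of the partial algebra $\cO := \cO(\pi(V))$ generated by $\pi(V)$ in $\PH$ (Def.~\ref{def: faithful graph completion}), that $\cO$ has $\dim(\One) = d$, and that $\CG^* = G(\cO^*)$ with $\cO^*$ the maximal extension of $\cO$.

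For the first assertion I would assume $\chi(\CG^*) \le d$ and show that $\oG$ is not a SI-C graph in dimension $d$. The key estimate is $\chi(\oG) \le \chi(\CG^*)$: since $\oG = G(\cO)$ arises from $\CG^* = G(\cO^*)$ by deleting vertices (cf.\ the discussion following Thm.~\ref{thm: n-colourability for non-maximal OAs}); concretely, assigning to each minimal projection $p$ of $\cO$ the colour of one fixed one-dimensional piece of $p$ in $\cO^*$ turns a proper $d$-colouring of $\CG^*$ into a proper $d$-colouring of $\oG$, because $pq = 0$ forces every piece of $p$ to be orthogonal to every piece of $q$. Hence $\chi(\oG) \le d$; fixing a proper $d$-colouring $\zeta\colon V(\oG) \to [d]$, its colour classes $V_k = \zeta^{-1}(k)$ are independent sets partitioning $V(\oG)$, so $\tfrac{1}{d}\mathbf{1} = \tfrac{1}{d}\sum_{k \in [d]} \mathbf{1}_{V_k} \in \Stab(\oG)$. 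Now take $\rho = \tfrac{1}{d}\one \in \SH$ with $\dim(\cH) = d$. For every rank-$1$ realisation $\pi'\colon \oG \to \cP_1(\C^d)$ one gets $\tg_\rho(v) = \tr[\pi'(v)\rho] = \tfrac{1}{d}$ for all $v$, i.e.\ $\tg_\rho = \tfrac{1}{d}\mathbf{1} \in \Stab(\oG)$ irrespective of $\pi'$. Thus $\oG$ fails to be state-dependently contextual in dimension $d$ already for the single state $\rho$, hence is not a SI-C graph in dimension $d$; if $\oG$ admits no rank-$1$ realisation in dimension $d$ at all, the conclusion holds vacuously.

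The second assertion then follows by combining this with Lm.~\ref{lm: restricted chromatic extension}: if in addition $\pi\colon G \to \PHone$ is a rank-$1$ realisation and $\chi(G) \le d$, then that lemma gives $\chi(\CG^*) \le d$, whence $\oG$ is not SI-C by the first part; equivalently, $\oG$ SI-C implies $\chi(G) > d$.

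The step I expect to require the most care is the bookkeeping around the completion: one must verify that $\oG$ really is the associated orthogonality graph of a partial algebra of dimension $d$ (so that $\CG^*$ is the correct invariant), prove $\chi(\oG) \le \chi(\CG^*)$ carefully, and, for the second statement, note that the rank-$1$ hypothesis on $\pi$ is precisely what makes Lm.~\ref{lm: restricted chromatic extension} applicable. By contrast, no analogue of the more intricate construction in the proof of Thm.~\ref{thm: n-colourable implies USI-C} is needed here: the \emph{unital} realisations permitted by Def.~\ref{def: USI-C graph} may have rank greater than one, whereas Def.~\ref{def: SI-C graph} quantifies only over rank-$1$ realisations, which is exactly why $\tr[\pi'(v)\rho]$ with $\rho$ maximally mixed is independent of $\pi'$ and the argument collapses to the colouring estimate above.
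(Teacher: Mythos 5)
Your proof is correct, and while it rests on the same underlying mechanism as the paper's, it is organised quite differently. The paper disposes of the first assertion in one line by citing Thm.~\ref{thm: n-colourable implies USI-C} (whose proof runs through classical embeddings, valuations and the maximally mixed state) and of the second by citing Lm.~\ref{lm: restricted chromatic extension}; your second part is identical, but your first part replaces the citation by a direct argument: the homomorphism $p\mapsto p_1^*$ from $\oG$ into $\CG^*$ gives $\chi(\oG)\le\chi(\CG^*)\le d$, the colour classes exhibit $\tfrac{1}{d}\mathbf{1}$ as a convex combination of independent sets, and the maximally mixed state reproduces exactly this correlation under \emph{every} rank-$1$ realisation. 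This last point is where your write-up is actually more careful than the paper's: Def.~\ref{def: USI-C graph} and Def.~\ref{def: SI-C graph} quantify over different classes of realisations (unital versus rank-$1$), so ``not USI-C'' does not formally transfer to ``not SI-C'' without observing that the witness state is $\one/d$ and that $\tr[\pi'(v)\one/d]=1/d$ is independent of the rank-$1$ realisation $\pi'$ — precisely the observation you make explicit in your closing paragraph. Your homomorphism argument for $\chi(\oG)\le\chi(\CG^*)$ is also cleaner than the paper's informal ``vertex deletion'' remark following Thm.~\ref{thm: n-colourability for non-maximal OAs}, since minimal projections of $\cO$ of dimension greater than one are not literally vertices of $\CG^*$. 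What you lose relative to the paper is only economy of exposition; what you gain is a self-contained argument that makes the quantifier over realisations in Def.~\ref{def: SI-C graph} visibly harmless.
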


\begin{proof}
    The first part follows from Thm.~\ref{thm: n-colourable implies USI-C} since $\overline{\pi}:\oG\ra\PH$ is a unital realisation and since $\oG=\CG$, by assumption. The second assertion follows from Lm.~\ref{lm: restricted chromatic extension}.
\end{proof}

Clearly, this also implies a partial characterisation of SI-C sets.

\begin{corollary}\label{cor: generalised CKB}
    A necessary condition for the faithful completion $\overline{\pi(V)}$ of a (freely completable) quantum realisation $\pi:G\ra\PH$ of an orthogonality graph $G$ in dimension $\dim(\cH)=d$ to be a state-independent contextuality (SI-C) set is that $\chi(\CG^*)>d$.

    If moreover $\pi:G\ra\PHone$, then $\overline{\pi(V)}$ is a SI-C set only if $\chi(G)>d$.
\end{corollary}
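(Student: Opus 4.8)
The plan is to deduce this corollary directly from Thm.~\ref{thm: partial RH conjecture}, the only additional ingredient being the elementary observation already exploited for Cor.~\ref{cor: unital SI-C set}: \emph{the orthogonality graph of a SI-C set is always a SI-C graph}. First I would unwind the hypothesis. If the faithful completion $\overline{\pi(V)}$ of a freely completable quantum realisation $\pi:G\ra\PH$ in $\dim(\cH)=d$ is a SI-C set, then by Def.~\ref{def: SI-C set} the (unital) realisation $\overline{\pi}:\oG\ra\PH$ of its own orthogonality graph $\oG=G(\cO(\pi(V)))$ satisfies $\tg_\rho\notin\Stab(\oG)$ for every $\rho\in\SH$. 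Hence for every state $\rho$ the single realisation $\overline{\pi}$ already witnesses that $\oG$ is (state-dependently) contextual in dimension $d$, so $\oG$ is a SI-C graph in the sense of Def.~\ref{def: SI-C graph} (indeed a USI-C graph in the sense of Def.~\ref{def: USI-C graph}, since $\overline{\pi}$ is unital). I expect this step — matching the ``fixed realisation, all states'' quantifier of Def.~\ref{def: SI-C set} against the ``all states, some realisation'' quantifier of Def.~\ref{def: SI-C graph}, and making sure one works with $\Stab(\oG)$ rather than $\Stab(G)$ — to be the only place requiring care; it is nonetheless immediate, since a realisation that works uniformly in $\rho$ a fortiori works for each fixed $\rho$.

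Next I would invoke free completability: the completion is then faithful and determined by the target dimension alone, so $\oG=\CG$ (as discussed before Lm.~\ref{lm: existence of faithful graph completion}). Feeding $\oG=\CG$ into the first assertion of Thm.~\ref{thm: partial RH conjecture} yields $\chi(\CG^*)>d$, which is the first claim.

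Finally, for the rank-one case $\pi:G\ra\PHone$, I would note that the induced unital realisation $\overline{\pi}$ can be chosen with one-dimensional projections as well, by splitting each complement projection $\op(\rC)=\One-\sum_{v\in\rC}\pi(v)$ into $d-|\rC|$ rank-one projections without introducing new orthogonality relations, exactly as in the proof of Thm.~\ref{thm: n-colourable implies USI-C}. Then the second assertion of Thm.~\ref{thm: partial RH conjecture} — which, via Lm.~\ref{lm: restricted chromatic extension}, upgrades $\chi(\CG^*)>d$ to $\chi(G)>d$ under this rank-one hypothesis — delivers $\chi(G)>d$, completing the argument. Beyond the quantifier check in the first paragraph, every step is a direct citation of an earlier result, so no genuine obstacle is anticipated.
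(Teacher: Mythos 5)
Your proposal is correct and follows the same route as the paper: the paper's proof is the one-line observation that the orthogonality graph of a SI-C set is a SI-C graph, after which both assertions follow from Thm.~\ref{thm: partial RH conjecture}. Your additional care with the quantifier mismatch between Def.~\ref{def: SI-C set} and Def.~\ref{def: SI-C graph}, and with extending the rank-one realisation to a unital one, just makes explicit what the paper leaves implicit.
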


\begin{proof}
    Since the orthogonality graph of a SI-C set is a SI-C graph, the result follows immediately from Thm.~\ref{thm: partial RH conjecture}.
\end{proof}

Importantly, the statements in Thm.~\ref{thm: partial RH conjecture} and Cor.~\ref{cor: generalised CKB} do not hold without assuming (faithful) completions. Since this creates a conflict with the results in Ref.~\cite{RamanathanHorodecki2014,CabelloKleinmannBudroni2015}, we elaborate on it in more detail in App.~\ref{app: correction}. In turn, under the restriction to freely completable realisations, Thm.~\ref{thm: partial RH conjecture} and Cor.~\ref{cor: generalised CKB} generalise Thm.~4 in Ref.~\cite{CabelloKleinmannBudroni2015} which is also implied by Thm.~2 in Ref.~\cite{RamanathanHorodecki2014}. Finally, we remark that our proofs of Thm.~\ref{thm: partial RH conjecture} and Cor.~\ref{cor: generalised CKB} are independent from the ones given in Ref.~\cite{CabelloKleinmannBudroni2015} and Ref.~\cite{RamanathanHorodecki2014}, as we derived the characterisation of state-independent contextuality (in terms of the chromatic number of an orthogonality graph) from the algebraic characterisation of KS noncontextuality in Thm.~\ref{thm: CNC for OAs}, in particular, its reformulation in Thm.~\ref{thm: KSNC = n-colourability} and Thm.~\ref{thm: n-colourability for non-maximal OAs}.\\

We finish this section by briefly commenting on the distinction between SI-C graphs and SI-C sets. Note that SI-C sets and SI-C graphs are distinguished by the order of quantifiers over realisations and states in Def.~\ref{def: SI-C set} and Def.~\ref{def: SI-C graph}, which in general do not commute due to the existence of classical states: clearly, if $\SH|_{\Stab(\oG)}=\Gamma_\mathrm{qu}[\cO(\pi(V))]\cap\Stab(\oG)\neq\emptyset$ then $\overline{\pi(V)}$ is not a SI-C set. Nevertheless, there exist SI-C graphs with $\SH|_{\Stab(\oG)}\neq\emptyset$, for which therefore no realisation is a SI-C set  \cite{CabelloKleinmannBudroni2015}. Such examples arise since Def.~\ref{def: SI-C graph} allows to choose different realisations (hence, different noncontextuality inequalities) for different quantum states, which renders ``SI-C" graphs \textit{``state-independent on an operational level"} \cite{CabelloKleinmannBudroni2015}. We can see explicitly how the notion of SI-C graph applies to KS contextual partial algebras with $\SH|_{\Stab(\oG)}\neq\emptyset$. Namely, Ref.~\cite{RamanathanHorodecki2014} introduce a measure of contextuality in terms of the distance of a quantum state $\rho\in\SH$ from the set $\Stab(\oG)$ over \emph{all} realisations.$^{\ref{fn: implicitly complete}}$ In particular, given a realisation $\pi$ of $G$ this includes the unitarily rotated realisations $\pi_U=U\pi U^*$ for every unitary $U\in\UH$. Consequently, Def.~\ref{def: SI-C graph} implicitly assumes that the embedding is with respect to a simple quantum algebra $\LH$. Yet, $G$ may have a (minimal) realisation with respect to a non-simple algebra, that is, with respect to an algebra with superselection rules (for more details, see App.~\ref{app: On realisations}). In this case, allowing for arbitrary, specifically unitarily rotated realisations is indeed not justified and, consequently, the argument in Thm.~1 in Ref.~\cite{RamanathanHorodecki2014}, which reduces the problem to analysing the violation of noncontextuality inequalities by the maximally mixed state, no longer applies.

\subsubsection{A positive resolution of the conjecture in Ref.~\cite{RamanathanHorodecki2014}}\label{sec:RH conjecture}

In this final section, we address a question raised in Ref.~\cite{RamanathanHorodecki2014}: whether $\chi(G)>d$ is a necessary and sufficient condition for state-independent contextuality (in Def.~\ref{def: USI-C graph}) for the completion $\oG$ of the orthogonality graph $G$ with respect to a realisation $\pi$ as in Eq.~(\ref{eq: complement completion}). The failure of the converse of Thm.~\ref{thm: n-colourable implies USI-C} shows that this is not true even in the case of orthogonality graphs with unital realisation, for which the completion in Eq.~(\ref{eq: complement completion}) is trivial, while Thm.~\ref{thm: partial RH conjecture} refutes it for orthogonality graphs with freely completable realisation. However, by comparing with Thm.~\ref{thm: KSNC = NCOG}, we find that for the notion of contextual orthogonality graph in Def.~\ref{def: NCOG} the conjecture holds true.

\begin{theorem}\label{thm: RH conjecture}
    Let $G$ be an orthogonality graph with freely completable realisation $\pi:G\ra\PH$ in $\dim(\cH)=d$ and let $\oG$ be the orthogonality graph of its faithful completion with respect to $\pi$. Then $\oG$ is contextual if and only if $\chi(\CG^*)>d$.

    Moreover, if $\pi:G\ra\PHone$ then $\oG$ is contextual if and only if $\chi(G)>d$.
\end{theorem}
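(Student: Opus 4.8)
The plan is to reduce the statement to Thm.~\ref{thm: KSNC = NCOG} and Cor.~\ref{cor: NCOG=n-colourability} (equivalently Thm.~\ref{thm: n-colourability for non-maximal OAs}) by carefully identifying the relevant partial algebra. The key observation is that the faithful completion $\overline{\pi(V)}$ of a freely completable realisation $\pi:G\ra\PH$ generates a partial algebra $\cO=\cO(\pi(V))=\cO(\overline{\pi(V)})\subset\LHsa$ of dimension $\dim(\One)=d$, and that $\overline{\pi}:\oG\ra\PH$ is by construction a \emph{unital} quantum realisation, with $\oG=G(\cO)=\CG$ the associated orthogonality graph of $\cO$ (here we use Lm.~\ref{lm: existence of faithful graph completion} to know a realisation with faithful completion in dimension $d$ exists, and the hypothesis that $\pi$ is itself freely completable so that $\oG=\CG$). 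Thus $\oG$ is an orthogonality graph of the form $G(\cO)$ for a finitely generated partial algebra $\cO$, which is precisely the setting of Thm.~\ref{thm: KSNC = NCOG}.

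First I would spell out: $\oG$ is contextual in the sense of Def.~\ref{def: NCOG} if and only if $\cO$ is Kochen-Specker contextual, which is exactly Thm.~\ref{thm: KSNC = NCOG} applied to $\cO=\cO(\pi(V))$ (a finitely generated partial algebra by the standing assumptions in Sec.~\ref{sec: basics of graph-theoretic approach}). Next, by Thm.~\ref{thm: n-colourability for non-maximal OAs} (equivalently Cor.~\ref{cor: NCOG=n-colourability}), $\cO$ is KS contextual if and only if $\chi(G(\cO^*))>d$, where $\cO^*$ is a maximal extension of $\cO$ as in Lm.~\ref{lm: min extension to max observable algebra}. By Def.~\ref{def: maximal extended OG}, $G(\cO^*)=\oG^*$, and since $\oG=\CG$ we have $\oG^*=\CG^*$. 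Chaining these equivalences gives $\oG$ contextual $\Leftrightarrow$ $\chi(\CG^*)>d$, which is the first assertion.

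For the second assertion, assume in addition $\pi:G\ra\PHone$, so that the realisation is by one-dimensional projections. Then Lm.~\ref{lm: restricted chromatic extension} applies directly: $\chi(G)\leq d$ if and only if $\chi(G^*)\leq d$, i.e. $\chi(G)>d\Leftrightarrow\chi(G^*)>d$. It remains to note that $G^*$ in Lm.~\ref{lm: restricted chromatic extension} is $G(\cO^*)=\CG^*$ (the minimal/maximal extension of the completion is reached by the same procedure of splitting the complement projections $\op(\rC)$ into one-dimensional pieces, whether one starts from $G$ directly or from $\oG$ — one should check that the two orderings of ``complete, then maximally extend'' versus ``maximally extend $G$'' yield the same graph up to the relabelling used in Lm.~\ref{lm: restricted chromatic extension}, which is immediate since $\oG$ adds only the vertices $\ov_\pi(\rC)$ and these are further split in $\CG^*$ exactly into the $p_i^*(\rC)$). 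Combining with the first assertion, $\oG$ contextual $\Leftrightarrow\chi(\CG^*)>d\Leftrightarrow\chi(G)>d$.

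The main obstacle I expect is purely bookkeeping around the completions: making sure that ``$\oG=\CG$'' genuinely holds under the \emph{freely completable} hypothesis (so that the possibly-realisation-dependent completion coincides with the realisation-independent $\CG$ of Def.~\ref{def: faithful graph completion}), and that the maximal extension of $\cO(\pi(V))$ has orthogonality graph exactly $\CG^*$ rather than some larger graph — in particular that no spurious orthogonality relations among the newly split vertices $p_i^*(\rC)$ appear, which is guaranteed by the defining property of $\cO^*$ in Def.~\ref{def: max extension} (that $p_i^*(\rC)p_j^*(\rC')=0$ iff $\rC=\rC'$, $i\neq j$) and by the second clause $\op(\rC)\pi(v)=0$ only if $v\in\rC$ in Def.~\ref{def: faithful graph completion}. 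Once this identification is nailed down, the proof is a short chain of previously established equivalences.

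\begin{proof}
    By Lm.~\ref{lm: existence of faithful graph completion}, $G$ admits a realisation with faithful completion $\CG$ in dimension $d$; since $\pi$ is assumed freely completable, its completion $\overline{\pi(V)}$ coincides with $\CG$, that is, $\oG=\CG$. Write $\cO=\cO(\pi(V))=\cO(\overline{\pi(V)})\subset\LHsa$, a finitely generated partial algebra with $\dim(\One)=d$, and note that the extension $\overline{\pi}:\oG\ra\PH$ is a \emph{unital} quantum realisation with $\overline{\pi}|_G=\pi$ and $\oG=\GO$ for $\cO$ in the sense of Def.~\ref{def: associated orthogonality graph}.

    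By Thm.~\ref{thm: KSNC = NCOG} applied to $\cO$, the orthogonality graph $\oG=\GO$ is contextual (Def.~\ref{def: NCOG}) if and only if $\cO$ is Kochen-Specker contextual. By Thm.~\ref{thm: n-colourability for non-maximal OAs}, $\cO$ is KS contextual if and only if $\chi(G(\cO^*))>d$, where $\cO^*$ is a maximal extension of $\cO$ as in Lm.~\ref{lm: min extension to max observable algebra}. By Def.~\ref{def: maximal extended OG}, $G(\cO^*)=\oG^*$, and since $\oG=\CG$ this is $\CG^*$. Chaining these equivalences yields: $\oG$ is contextual if and only if $\chi(\CG^*)>d$, proving the first assertion.

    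For the second assertion, assume moreover $\pi:G\ra\PHone$. Then $\pi$ is a freely completable realisation by one-dimensional projections, and Lm.~\ref{lm: restricted chromatic extension} gives $\chi(G)\leq d$ if and only if $\chi(G^*)\leq d$; equivalently $\chi(G)>d$ if and only if $\chi(G^*)>d$. Here $G^*=G(\cO^*)=\CG^*$ as above (the maximal extension splits each complement projection $\op(\rC)=\One-\sum_{v\in\rC}\pi(v)$ into one-dimensional pieces $p^*_i(\rC)$, and by Def.~\ref{def: faithful graph completion} together with Def.~\ref{def: max extension} no orthogonality relations beyond $p^*_i(\rC)p^*_j(\rC')=0 \Leftrightarrow \rC=\rC',\, i\neq j$ and $(\ov_\pi(\rC),v)\in E^*\Leftrightarrow v\in\rC$ arise). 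Combining with the first assertion, $\oG$ is contextual if and only if $\chi(\CG^*)>d$ if and only if $\chi(G)>d$.
\end{proof}
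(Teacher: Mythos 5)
Your proof is correct and follows essentially the same route as the paper's: both directions are obtained by chaining Thm.~\ref{thm: KSNC = NCOG} with Thm.~\ref{thm: n-colourability for non-maximal OAs} (equivalently Cor.~\ref{cor: NCOG=n-colourability}), and the second assertion additionally invokes Lm.~\ref{lm: restricted chromatic extension}. Your write-up merely makes explicit the bookkeeping identifications $\oG=\CG=\GO$ and $G(\cO^*)=\CG^*$ that the paper's terser proof leaves implicit, which is a harmless (indeed helpful) elaboration rather than a different argument.
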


\begin{proof}
    Since a contextual graph is also a SI-C graph, the ``only if''-directions both follow from Thm.~\ref{thm: partial RH conjecture} (equivalently, from Cor.~\ref{cor: NCOG=n-colourability} and Lm.~\ref{lm: restricted chromatic extension}). The first part of the converse assertion holds by Thm.~\ref{thm: KSNC = NCOG} and Thm.~\ref{thm: n-colourability for non-maximal OAs}, and the second additionally by Lm.~\ref{lm: restricted chromatic extension}.
\end{proof}

In particular, Thm.~\ref{thm: RH conjecture} shows that under the restriction to realisations with faithful completion and one-dimensional projections the colouring problem in Thm.~\ref{thm: n-colourability for non-maximal OAs} can be reduced to the chromatic number of subgraphs $G\subset\oG$.

We re-emphasise that the assumptions in Thm.~\ref{thm: RH conjecture} cannot be omitted: (i) the restriction of SI-C to contextual graphs is necessary for the converse of Thm.~\ref{thm: partial RH conjecture}, while (ii) the restriction to (faithful) completions is required already for Thm.~\ref{thm: partial RH conjecture} (see App.~\ref{app: correction}).\\

Thm.~\ref{thm: RH conjecture} completes our comparison between SI-C sets, SI-C graphs and Kochen-Specker contextuality. To summarise, recall that the existence of SI-C sets that are not KS sets such as in Ref.~\cite{YuOh2012} indicated that the traditional notion of KS contextuality had to be superceded. In Ref.~\cite{Frembs2024a} we show that this conclusion is misguided, since the existence of KS sets is only a necessary  but not sufficient criterion to prove KS contextuality (see also Sec.~\ref{sec: KS-colouring vs colouring}). Yet, Ref.~\cite{Frembs2024a} does not fully resolve the relationship between the respective notions. Here, we provided a full comparison: first, we showed that the notions of SI-C set in Def.~\ref{def: SI-C set} and SI-C graph in Def.~\ref{def: SI-C graph} differ from that of Kochen-Specker contextuality for observable algebras in Def.~\ref{def: KSNC} - even under the restriction to unital realisations (by Thm.~\ref{thm: n-colourable implies USI-C} and Cor.~\ref{cor: unital SI-C set}); second, we independently proved that the chromatic number of an orthogonality graph still remains a necessary criterion for state-independent contextuality - provided it has a unital quantum realisation (Thm.~\ref{thm: n-colourable implies USI-C} and Cor.~\ref{cor: unital SI-C set}) or one with a faithful completion (Thm.~\ref{thm: partial RH conjecture} and Cor.~\ref{cor: generalised CKB}, see also App.~\ref{app: correction}); third, we found that for contextual orthogonality graphs (Def.~\ref{def: NCOG}) the chromatic number provides a complete invariant, thus leading to a positive version of the conjecture in Ref.~\cite{RamanathanHorodecki2014}. Together, this fully resolves the subtle differences between the respective notions of contextuality.  Tab.~\ref{tab: KSC vs SI-C} summarises these various relationships.

\begin{table}
    \centering
    \begin{tabular}{ccccc}
        \toprule[0.05cm]
        orthogonality & \multirow{2}*{with} & quantum realisation & & (partial) chromatic \\
        graph $G$ & & in $\dim(\cH)=d$ & & characterisation \\[0.1cm]
        \midrule \\[-.3cm]

        contextual $G$ & & $\pi:G\ra\PH$\ \ unital & \multirow{2}*{$\quad\stackrel{Cor.~\ref{cor: NCOG=n-colourability}}{\Longleftrightarrow}$} & $\chi(G^*)>d$ \\[0.2cm]
        (Def.~\ref{def: NCOG}) & & ( $\pi:G\ra\PHone$\ unital\ \ & & \ $\chi(G)>d$ ) \\[0.3cm]

        $\cap$ & & & & \\[.3cm]
        
        USI-C graph $G$ & & $\pi:G\ra\PH$\ \ unital & \multirow{2}*{$\quad\stackrel{Thm.~\ref{thm: n-colourable implies USI-C}}{\Longrightarrow}$} & $\chi(G^*)>d$ \\[0.2cm]
        (Def.~\ref{def: USI-C graph}) & & ( $\pi:G\ra\PHone$\ unital\ \ & & \ $\chi(G)>d$ ) \\[0.3cm]

        $\cap$ & \multicolumn{4}{l}{\hspace{.05cm} freely completable (f.c.) \rotatebox{90}{\reflectbox{$\hookrightarrow$}} $\uparrow$ unital}\\[.3cm]
        
        SI-C graph $\oG$
        & & $\pi:G\ra\PH$\ \ f. c. & \multirow{2}*{$\quad\stackrel{Thm.~\ref{thm: partial RH conjecture}}{\Longrightarrow}$} & $\chi(\CG^*)>d$\\[.2cm]
        (Def.~\ref{def: SI-C graph}) & & $\pi:G\ra\PHone$\ f. c. & & $\chi(G)>d$\\[.3cm]

        $\cup$ & & & & \\[0.3cm]

        SI-C set $\overline{\pi(V)}$ & & $\pi:G\ra\PH$\ \ f. c.
        & \multirow{2}*{$\quad\stackrel{Cor.~\ref{cor: generalised CKB}}{\Longrightarrow}$} & $\chi(\CG^*)>d$ \\[.2cm]
        (Def.~\ref{def: SI-C set}) & & $\pi:G\ra\PHone$\ f. c.
        & & $\chi(G)>d$ \\[.1cm]
        \bottomrule[0.05cm]
    \end{tabular}
    \caption{The graph-theoretic approach in Ref.~\cite{CabelloSeveriniWinter2014} aims to characterise contextuality purely in terms of a (finite) orthogonality graph $G$. At a minimum, this requires $G$ to admit a realisation with respect to some $d$-dimensional partial algebra, which is equivalent to the existence of a dimension function with $\dim(\One)=d$ (see App.~\ref{app: non-maximal OAs}). Already at this general level, the condition $\chi(G^*)>d$ of orthogonality graphs $G$ with unital realisation is a complete invariant of KS contextuality by Cor.~\ref{cor: NCOG=n-colourability} (top row), and restricts to a necessary criterion for SI-C graphs with unital quantum realisation (second row). More generally, $\chi(\CG^*)>d$ remains a necessary criterion for SI-C graphs $G$ with freely completable (f.c.) realisation (third row), which further carries over to SI-C sets (bottom row). Finally, for realisations restricted to one-dimensional projections, the respective conditions further simplify since $\chi(G)\leq d$ if and only if $\chi(\CG^*)\leq d$ by Lm.~\ref{lm: restricted chromatic extension}.}
    \label{tab: KSC vs SI-C}
\end{table}

\section{Conclusion}\label{sec: conclusion}

\textbf{Summary (see also Tab.~\ref{tab: summary}).} We proved a new and complete characterisation of Kochen-Specker (KS) contextuality \cite{KochenSpecker1967} (Thm.~\ref{thm: CNC for OAs}), based on the concept of context connections (Def.~\ref{def: context connection}) and the constraints encoded by them along context cycles (Def.~\ref{def: context cycle}), which were first introduced in Ref.~\cite{Frembs2024a}. Our result applies at the level of generality (and beyond that) of the original definition, which applied to partial algebras.

Given this generality, the comparison with various other frameworks becomes possible, and we have outlined in detail the subtle differences with various notions of ``contextuality" used in the marginal and (hyper)graph-theoretic approaches \cite{BudroniEtAl2022}. In particular, we highlighted the difference between KS contextuality and the violation of noncontextuality inequalities: while the latter is characterised in terms of acyclic partial orders of contexts by Vorob'ev's theorem \cite{Vorob1962,XuCabello2019} (Thm.~\ref{thm: Vorob'ev theorem}), that is, by the existence of a single non-trivial context cycle \cite{AraujoEtAl2013}, the former generally requires constraints across several contexts cycles (Thm.~\ref{thm: single cycles in d=3 are KS noncontextual}). Nevertheless, KS contextuality can be phrased as a colouring problem (Thm.~\ref{thm: n-colourability for non-maximal OAs}), which explains why traditional proofs of the Kochen-Specker theorem in terms of KS sets and colourings do not always capture KS contextuality: $d$-colourability (Def.~\ref{def: n-colouring}) implies KS colourability (Def.~\ref{def: KS-colouring}), but the converse is generally false \cite{YuOh2012,Frembs2024a}. Moreover, we have fully explored the relation between $d$-colourability and existing partial characterisations of contextuality in the graph-theoretic approach, based on the chromatic number of orthogonality graphs. Specifically, we gave new and refined proofs of the results in Ref.~\cite{Cabello2012b,CabelloKleinmannBudroni2015}, and resolved the conjecture raised in Ref.~\cite{RamanathanHorodecki2014}.\\

\textbf{Outlook.} The unified perspective offered by observable algebras, paired with the effectiveness of the concept of context connections - introduced first in Ref.~\cite{Frembs2024a}, and explored in more detail here - strongly suggests that this novel approach is a fruitful one also when considering applications of KS contextuality. We finish by mentioning a few.

For one thing, analysing constraints on context connections offers a complementary approach to the common study of noncontextuality inequalities, which is mainly concerned with methods from convex geometry. It may thus lead to new insights in characterising noncontextuality polytopes of quantum measurement scenarios, e.g. with respect to optimal noncontextuality inequalities for experimental tests \cite{BartosikEtAl2009,KirchmairEtAl2009,AmselemEtAl2009,GuehneEtAl2010,MoussaEtAl2010,LapkiewiczEtAl2011,Winter2014}.

Yet, our formalism is not tied to quantum theory; and one motivation to study general observable algebras is the question of how to characterise quantum observable algebras among more general ones - explicitly, when expressed in terms of the constraints these pose on context connections. Since our formalism is amenable to various related structures in quantum logic \cite{Putnam1979,Kochen2015,AbramskyBarbosa2020} and general probabilistic theories \cite{Mueller2021,Plavala2023,MuellerGarner2023}, it may thus offer a new perspective towards identifying properties that are essentially quantum, in line with the reconstruction programme of quantum theory \cite{Hardy2001,Hardy2001a,Mueller2021}.

On a more formal level, the algebraic characterisation of contextuality in Thm.~\ref{thm: CNC for OAs} may shed new light on the nature of KS contextuality as a geometric obstruction to a classical state space picture \cite{AshtekarSchilling1997,Kibble1979,Cunha2019}. In particular, it is tempting to interpret context connections as (generalisations of) connections in differentiable geometry.

Of important practical interest is to apply our new tools towards a fine-grained characterisation of contextuality as a resource in quantum computation \cite{AndersBrowne2009,Raussendorf2013,HowardEtAl2014,BravyiGossetKoenig2018,FrembsRobertsBartlett2018,FrembsRobertsCampbellBartlett2023}, quantum cryptography \cite{Ekert1991,BennettBrassard2014,ChaillouxEtAl2016,AmbainisEtAl2016,SchmidSpekkens2018}, quantum machine learning \cite{KleinmannEtAl2011,FagundesKleinmann2017,CabelloEtAl2018,GaoEtAl2022,AnschuetzEtAl2023,Bowles2EtAl023}, quantum metrology \cite{JaeEtAl2023,JaeEtAl2024} etc. In future work, we will compare constraints on context connections with tools from algebraic topology that have been successfully applied to characterise contextuality in measurement-based quantum computation \cite{Raussendorf2016,OkayRoberts2017,OkayTyhurstRaussendorf2018,OkayRaussendorf2020}, as well as sheaf-cohomological methods \cite{AbramskyEtAl2015,Caru2017,BeerOsborne2018,Aasnass2020} and the study of (quantum solutions to) linear constraint systems in \cite{Arkhipov2012,CleveLiuSlofstra2017,OkayRaussendorf2020,FrembsChungOkay2022}. To this end, we wish to promote the present techniques to quantitative tools for the analysis of contextuality necessary, e.g. as need for specific computational tasks. We believe this is a key effort towards provable quantum advantage in quantum computing and beyond.

\section*{Acknowledgments}

I thank Andreas D\"oring for many discussions, and Renato Renner for his generous hospitality during a visit to ETH Z\"urich in 2024, where part of this work was done.

\clearpage

\bibliographystyle{plain}
\bibliography{bibliography}

\appendix

\section{Proof of Thm.~\ref{thm: CNC for OAs}}\label{app: proof of CNC for OAs}

\begin{proof}
    Without loss of generality, we choose non-degenerate observables $O$ for every maximal context $C\in\COm$, that is, $C=C(O)$ with identical spectra $\spec(O)=[d]=\{0,\cdots,d-1\}$ for $d=\dim(\One)$.\footnote{Here, $[d]$ is any $d$-element set. For a $d$-dimensional quantum system, we may take the observables to correspond to (Stern-Gerlach) measurements on a spin-$\frac{d-1}{2}$ particle with $\spec(O)=\{-\frac{d-1}{2},\cdots,\frac{d-1}{2}\}$.} Importantly, note that this still does not fix $O$ uniquely, but only up to a permutation on its spectral decomposition (equivalently, pre-composition of $O:\Sigma_C\ra\R$ by a permutation on $\Sigma_C$).

    $\epsilon$ maps the projections $p\in\PO$ to measurable subsets $\Lambda_p\subset\Lambda$ such that (possibly up to negligible sets) $\Lambda_0=\emptyset$, $\Lambda_\One=\Lambda$ and $\Lambda_{p+p'}=\Lambda_p\dot\cup \Lambda_{p'}$ whenever $pp'=0$. For every observable $O\in\cO$ with spectral decomposition $O=\sum_{\sigma\in[d]}\sigma p_\sigma$, $p_\sigma\in\POone$, $\epsilon$ therefore induces a partition $\Lambda=\dot\cup_{\sigma\in[d]} \Lambda^O_\sigma$ into measurable subsets $\Lambda^O_\sigma$, which yield the outcome $\sigma\in\spec(O)$ under evaluation of $f_O:=\epsilon(O):\Lambda\ra\R$, where $\im(f_O)=\spec(O)=[d]$.\footnote{Formally, $\epsilon|_\PO:\PO\ra\cB(\Lambda)$ defines an orthomorphism into the complete Boolean $\sigma$-algebra $\cB(\Lambda)$.}
    
    Consider two non-degenerate observables $O,O'\in\cO$, and let $\phi_{O'O}:\Lambda\ra\Lambda$ be a measurable function such that $\phi_{O'O}(\Lambda^O_\sigma)\subset\Lambda^{O'}_{\sigma'=\sigma}$ for all $\sigma\in[d]$. Clearly, this definition depends on the choice of generating observables for the contexts $C=C(O),C'=C(O')$. In other words, it depends on a choice of bijection $l_{O'O}:\cP_1(C)\ra\cP_1(C')$. We may thus write $\phi_{O'O}=\phi^{l_{C'C}}_{C'C}$, and $\phi^\fl=(\phi^{l_{C'C}}_{C'C})_{C,C'\in\COm}$ for a collection of maps between maximal contexts, which depends on the context connection $\fl=(l_{C'C})_{C,C'\in\COm}$. We will derive constraints on context connections $\fl$ by concatenating the maps in $\phi^\fl$ along context cycles $(C_0,\cdots,C_{n-1})$, in which case we will further abbreviate our notation to $\phi^{l_{(i+1)i}}_{(i+1)i}:=\phi^{l_{C_{(i+1)}C_i}}_{C_{(i+1)}C_i}=\phi_{O_{(i+1)}O_i}$ for $C_i=C(O_i)$ and $i\in\zz_n$.
    
    To this end, note first that $\phi_{O_{(i+1)}O_i}$ can generally not be chosen one-to-one, since $\Lambda^{O_i}_{\sigma_i=\sigma}$ and $\Lambda^{O_{(i+1)}}_{\sigma_{i+1}=\sigma}$ will generally have different cardinalities. Nevertheless, by restricting to measurable subsets $\tilde{\Lambda}^{O_i}_{\sigma_i}\subset\Lambda^{O_i}_{\sigma_i}$ such that there exist injections $\tilde{\Lambda}^{O_i}_{\sigma_i} \hookrightarrow \Lambda^{O_j}_{\sigma_j=\sigma_i}$ for all $\sigma_i\in[d]$ and $i,j\in\zz_n$, we can always find invertible (measurable) maps $\phi_{O_{(i+1)}O_i}:\Lambda\ra\Lambda$ such that $\phi_{O_{(i+1)}O_i}(\tilde{\Lambda}^{O_i}_{\sigma_i}) = \tilde{\Lambda}^{O_{(i+1)}}_{\sigma_{(i+1)}=\sigma_i}$ for all $\sigma_i\in[d]$ and such that $\phi_{O_{(i+1)}O_i}|_{\Lambda/\tilde{\Lambda}^{O_i}}=\mathrm{id}$ for all $i\in\zz_n$, where $\tilde{\Lambda}^{O_i}=\dot\cup_{\sigma_i\in[d]} \tilde{\Lambda}^{O_i}_{\sigma_i}$.\footnote{Indeed, we may choose $\tilde{\Lambda}^O_\sigma\subset\Lambda^O_\sigma$ to be of cardinality $\min_{O'\in\cO} |\Lambda^{O'}_{\sigma'=\sigma}|$ for all $\sigma\in[d]$ and $O\in\cO$.}
    
    Denote by $f_i=\epsilon(O_i)$ the measurable function on $\Lambda$ representing the observable $O_i$ under the classical embedding $\epsilon$. Then $f_{(i+1)}|_{\tilde{\Lambda}^{O_{(i+1)}}} = f_i|_{\tilde{\Lambda}^{O_i}}\circ(\phi^{l_{(i+1)i}}_{(i+1)i})^{-1}$, hence,\footnote{What is more, since $\phi^{l_{(i+1)i}}_{(i+1)i}|_{\Lambda/\tilde{\Lambda}^{O_i}}=\mathrm{id}$ we also have $f_0\circ\left(\circ_{i=0}^{n-1} \phi^{l_{(i+1)i}}_{(i+1)i}\right)^{-1} = f_0$.}
    \begin{equation*}
        f_0|_{\tilde{\Lambda}^{O_0}} \circ \left(\circ_{i=0}^{n-1} \phi^{l_{(i+1)i}}_{(i+1)i}\right)^{-1} = f_0|_{\tilde{\Lambda}^{O_0}}
        \ \ \Llra \ \ \circ_{i=0}^{n-1} \phi^{l_{(i+1)i}}_{(i+1)i} = \id
        \ \ \Llra \ \ \circ_{i=0}^{n-1} l_{(i+1)i} = \id\; ,
    \end{equation*}
    where we used that $f_0=\epsilon(O_0)$ is non-degenerate with respect to the $\{\Lambda^{O_0}_\sigma\}_{\sigma\in[d]}$-partition ($O_0$ is non-degenerate and $\im(f_0)=\spec(O_0)=[d])$. Since we constructed $\phi^\fl$ simply by assuming the existence of a classical embedding $\epsilon$, the result follows.

    Conversely, let $\fl$ be a flat context connection on $\CO$, that is, $\fl$ satisfies the triviality constraints in Eq.~(\ref{eq: CNC}) for every context cycle in $\COm$. Then we construct a state space $\Lambda$ as follows. Let $\Sigma_{C_0} := \{\hat{\sigma}_p: \cP_1(C_0)\ra\{0,1\}\mid\forall p,q\in\cP_1(C_0):\ \hat{\sigma}_p(q) = \delta_{pq}\}$ be the ``state space" of $C_0\in\COm$.\footnote{$\hat{\sigma}_p\in\Sigma_{C_0}$ defines a state on $C_0$ by linear extension, $\hat{\sigma}_p(O)=\hat{\sigma}_p(\sum_q\sigma_q q)=\sigma_q$ for all $O=\sum_{q}\sigma_q q \in C_0$. Equivalently, we may view any observable $O\in C_0$ as a (discrete) random variable on $\Sigma_0$ (see Fig.~\ref{fig: state space decomposition}).} For any $\lambda_{C_0}\in\Sigma_{C_0}$, we define $\lambda^{\fl,\lambda_{C_0}} = (\lambda_C)_{C\in\CO}$, first, for all maximal contexts $C\in\COm$ by setting $\lambda_C=\fl_{CC_0}.\lambda_{C_0}$, (where the action of $\fl_{CC_0}$ on $\Sigma_{C_0}$ is defined by $(\fl_{CC_0}.\lambda_{C_0})(p)=\lambda_{C_0}(\fl^{-1}_{CC_0}(p))$ for all $p \in \mc{P}_1(C)$) and, second, by $\lambda_{\tC|C} := \lambda_C|_{\tC}$ for all $\tC\subset C\in\COm$. Since $\fl$ preserves the order relations in $\CO$ by definition (Eq.~(\ref{eq: noncontextual context connection})), it follows that $\lambda_{\tC|C} = \lambda_{\tC|C'} =: \lambda_{\tC}$ whenever $\tC\subset C,C'\in\COm$.\footnote{In other words, $\lambda$ defines a global section of the spectral presheaf of $\CO$ \cite{IshamButterfieldI,DoeringIsham2011,DoeringFrembs2019a}.} $\lambda^{\fl,\lambda_{C_0}}$ thus assigns a value to every observable independent of its context. Collecting all such states, and writing $\Xi_\mathrm{cl}(\CO)$ for the space of flat context connections on $\CO$, that is, those that satisfy Eq.~(\ref{eq: CNC}), we define
    \begin{equation}\label{eq: state space from context connecion}
        \Lambda:=\{\lambda^{\fl,\lambda_{C_0}} \mid \lambda_{C_0}\in\Sigma_{C_0}, \fl\in\Xi_\mathrm{cl}(\CO)\}\; .
    \end{equation}
    We equip $\Lambda$ with the product topology of the (discrete) topology on $\Sigma_{C_0}$ together with any topology on $\Xi_\mathrm{cl}(\CO)$ 
    (and its associated Borel $\sigma$-algebra). Finally, we define an embedding $\epsilon:\cO\ra L_\infty(\Lambda)$ by setting $\epsilon(O)(\lambda)=\lambda(O)$. Clearly, this defines a measurable function: since the spectrum of $O$ is discrete, any subset $S\subset\spec(O)$ is measurable, as its pre-image $\epsilon^{-1}(O)(S)=\{\lambda\in\Lambda\mid\lambda(O)\in S\}=S\times\Xi_\mathrm{cl}(\CO)$ is of product form, with both $S\subset\Sigma_{C_0}$ and $\Xi_\mathrm{cl}(\CO)$ measurable, hence, $\epsilon^{-1}(O)(S)$ is measurable in $\Lambda$.
\end{proof}

\begin{figure}
    \centering
    \scalebox{1.3}{\begin{tikzpicture}[every node/.style={scale=0.75},scale=0.55]
    %\begin{pgfonlayer}{nodelayer}
    \node (0) at (-0.25, 7.75) {};
    \node (1) at (2.75, 9.75) {};
    \node (2) at (8.75, 6.75) {};
    \node (3) at (11.75, 8.75) {};
    \node (4) at (-5, 4.5) {};
    \node (5) at (6.5, 4.5) {};
    \node (6) at (7, 4) {$\mathbb{R}$};
    \node (7) at (5.25, 8.35) {$\small{\bullet}$};
    \node (9) at (0, 4.5) {};
    \node (11) at (5.775, 8.425) {$\lambda$};
    \node (13) at (1.5, 5.25) {$f_O = \epsilon(O)$};
    \node (18) at (5, 7.375) {};
    \node (21) at (0.775, 8.75) {};
    \node (22) at (1.5, 9.225) {};
    \node (23) at (3.5, 7.775) {};
    \node (24) at (8.5, 8.625) {};
    \node (44) at (-5.95, 8.425) {};
    \node (45) at (-3.7, 8.575) {};
    \node (46) at (-5, 9) {};
    \node (47) at (-7.2, 8.75) {};
    \node (48) at (-3.2, 8.75) {};
    \node (51) at (1.175, 8.375) {};
    \node (52) at (7.3, 7.425) {};
    \node (53) at (3.25, 9) {};
    \node (63) at (-5.25, 8.15) {};
    \node (64) at (-1.25, 5.325) {$O$};
    \node (67) at (10.5, 7.5) {$\Lambda$};
    \node (68) at (-6.2, 8.675) {$\hat{\sigma}_1$};
    \node (69) at (-5.225, 8.675) {$\hat{\sigma}_2$};
    \node (70) at (-4.25, 8.75) {};
    \node (71) at (-4.2, 8.675) {$\hat{\sigma}_3$};
    \node (78) at (-5.25, 9.75) {};
    \node (79) at (-7.75, 8.8) {$\Sigma_O$};
    \node (80) at (0, 4) {$\epsilon(O)(\lambda) = O(\varepsilon_O(\lambda))$};
    \node (81) at (1.925, 8.5) {$\Lambda^O_{\sigma_3}$};
    \node (82) at (3.85, 8.75) {$\Lambda^O_{\sigma_2}$};
    \node (83) at (8, 7.5) {$\Lambda^O_{\sigma_1}$};
    \node (84) at (-1.25, 10.375) {$\varepsilon_O$};
    \node (85) at (-1.25, 8.675) {$\varepsilon_O$};
    \node (86) at (-1.25, 7.125) {$\varepsilon_O$};
    %\end{pgfonlayer}
    %\begin{pgfonlayer}{edgelayer}
    \draw [in=-165, out=0] (1.center) to (3.center);
    \draw [bend right=15, looseness=0.75] (3.center) to (2.center);
    \draw [in=180, out=15] (0.center) to (2.center);
    \draw [bend right=15] (1.center) to (0.center);
    \draw [->] (4.center) to (5.center);
    \draw [dashed, in=0, out=-165, looseness=1.50] (24.center) to (23.center);
    \draw [dashed, in=150, out=0, looseness=1.25] (22.center) to (23.center);
    \draw [bend left=90, looseness=0.50] (47.center) to (48.center);
    \draw [bend right=90, looseness=0.50] (47.center) to (48.center);
    \draw [->, in=90, out=-90] (63.center) to (9.center);
    \draw [->, in=90, out=-90] (7.center) to (9.center);
    \draw [->,dashed, bend left=315, looseness=0.50] (53.center) to (46.center);
    \draw [->,dashed, bend left=15, looseness=0.75] (51.center) to (45.center);
    \draw [->,dashed, bend right=330, looseness=0.75] (52.center) to (44.center);
    %\end{pgfonlayer}
\end{tikzpicture}}
    \caption{Under the classical embedding $\epsilon:\cO\ra L_\infty(\Lambda)$, every observable $O\in\cO$ induces a decomposition $\Lambda=\dot\cup_{\sigma\in\spec(O)} \Lambda^O_\sigma$. Viewing $O:\Sigma_C\ra\R$ with $C=C(O)$ as a random variable, we can thus define surjective maps $\varepsilon_O:\Lambda\ra\Sigma_O$ by $\Lambda^O_\sigma\ni\lambda\mapsto\hat{\sigma}\in\Sigma_O$ such that $\epsilon$ factorises as $\epsilon(O)=O\circ\varepsilon_O$ for all $O\in\cO$.}
    \label{fig: state space decomposition}
\end{figure}
\section{Finite-dimensional observable algebras}\label{app: non-maximal OAs}

In Ref.~\cite{Frembs2024a}, context connections on the partial order of contexts of a (spin-$1$) quantum system were defined as bijective maps between rank-$1$ projections. App.~\ref{app: proof of CNC for OAs} generalises Thm.~2 in Ref.~\cite{Frembs2024a} to maximal observable algebras (see Def.~\ref{def: max OA}), where - similar to the quantum case - minimal projections all have the ``same size". However, not every observable algebra is maximal, in particular, minimal events may be of ``different size''. In this section, we will extend the result (and thus the proof of Thm.~\ref{thm: CNC for OAs}) to finite-dimensional observable algebras, for which a notion of size (see Def.~\ref{def: dim}) still exists.

We first characterise maximal observable algebras in terms of maximally mixed states. Let $\dim$ be a dimension function on $\cO$ and define the map $\gamma_{\dim}:\PO\ra\mathbb{Q}$ by $\gamma_{\dim}(p)=\frac{\dim(p)}{\dim(\One)}$ for all $p\in\PO$. Then $\gamma_{\dim}\in\Gamma[\CO]$ since the marginalisation constraints in Eq.~(\ref{eq: no-disturbance}) hold by additivity of $\dim$, and $\gamma_{\dim}(\One)=\frac{\dim(\One)}{\dim(\One)}=1$ is normalised.\footnote{For finitely generated observable algebras the converse also holds: if $\gamma\in\Gamma[\CO]$ with $\gamma(p)\in\mathbb{Q}$ for all $p\in\PO$, then we obtain a dimension function by setting $\dim(p)=L\gamma(p)$, where $L$ is the least common multiple of all denominators in $\gamma(p)$.\label{fn: fin gen and classical implies fin dim}}

\begin{lemma}\label{lm: max OA vs max-mixed states}
     An observable algebra $\cO$ is maximal if and only if it contains the maximally mixed state $\gamma_{\One/d}\in\Gamma[\CO]$ with $\gamma_{\One/d}(p)=\frac{1}{d}$ for all $p\in\POm$ and $d=\dim(\One)$.
\end{lemma}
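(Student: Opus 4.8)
The statement is an "if and only if" characterisation of maximality in terms of the existence of the maximally mixed state. The plan is to prove both directions directly from the definitions (Def.~\ref{def: max OA} and Def.~\ref{def: dim}), using the correspondence between dimension functions and rational-valued states already sketched in the paragraph preceding the lemma.

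For the forward direction, suppose $\cO$ is maximal with $\dim(\One)=d$. By Def.~\ref{def: max OA}, every maximal context $C\in\COm$ admits a resolution of the identity $\One=\sum_{i=1}^d p^C_i$ into minimal projections $p^C_i\in\cP_1(C)$, hence $\dim(p^C_i)=1$ for all $i$ and these projections are pairwise orthogonal. Thus $\gamma_{\dim}$ restricted to any such $C$ assigns $\gamma_{\dim}(p^C_i)=\frac{\dim(p^C_i)}{\dim(\One)}=\frac1d$ to each of the $d$ atoms, and by additivity it is the uniform distribution on $\Sigma_C$. Since this holds in every maximal context, and since every context is a subcontext of a maximal one (using that $\cO$ is finite-dimensional, so chains in $\CO$ have finite length and terminate at maximal contexts), $\gamma_{\dim}$ restricts correctly by Eq.~(\ref{eq: no-disturbance}); it is therefore the desired state $\gamma_{\One/d}$, with $\gamma_{\One/d}(p)=\frac1d$ for every $p\in\POm$ (every minimal projection sits in some maximal context as one of its atoms).

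For the converse, suppose $\gamma_{\One/d}\in\Gamma[\CO]$ with $\gamma_{\One/d}(p)=\frac1d$ for all $p\in\POm$. First, this forces $\cO$ to be finite-dimensional: the remark in the footnote (Fn.~\ref{fn: fin gen and classical implies fin dim}) shows that a rational-valued state yields a dimension function, and here clearing denominators gives $\dim(p)=d\cdot\gamma_{\One/d}(p)$, so $\dim$ is integer-valued, additive, and $\dim(\One)=d$. Now fix a maximal context $C\in\COm$ and let $\{q_j\}_{j\in J}=\cP_\mathrm{min}(C)$ be its atoms; these are pairwise orthogonal and sum to $\One$. Each $q_j$ is minimal in $\PC$, but a priori need not lie in $\POm$, i.e. need not be minimal across all of $\cO$; however, if some $q_j$ were not in $\POm$, it would decompose as $q_j=r+r'$ with $r,r'\in\PO$ nonzero orthogonal, and then $\gamma_{\One/d}(q_j)=\gamma_{\One/d}(r)+\gamma_{\One/d}(r')$. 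The key observation is that a minimal projection in $\POm$ has the smallest possible positive $\gamma_{\One/d}$-value, namely $\frac1d$; since $\gamma_{\One/d}$ is normalised and additive, every nonzero $p\in\PO$ has $\gamma_{\One/d}(p)\geq\frac1d$, with equality iff $p\in\POm$. Hence if $q_j\notin\POm$ then $\gamma_{\One/d}(q_j)\geq\frac2d$, and summing over $J$ gives $1=\gamma_{\One/d}(\One)=\sum_j\gamma_{\One/d}(q_j)\geq\frac{|J|}{d}$ with strict inequality unless all $q_j\in\POm$; combined with $\sum_j\gamma_{\One/d}(q_j)=1$ this forces $|J|=d$ and $\gamma_{\One/d}(q_j)=\frac1d$, i.e. $\dim(q_j)=1$ for all $j$. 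Thus $\One=\sum_{j=1}^d q_j$ is a resolution of the identity into one-dimensional minimal projections, so $C$ is minimally generated by $d$ elements, and $\cO$ is maximal by Def.~\ref{def: max OA}.

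The main obstacle I anticipate is the subtlety in the converse that the atoms of a maximal context $C$ need not automatically be globally minimal in $\cO$ — one must rule out, using only the existence and the specific values of $\gamma_{\One/d}$, that an atom of $C$ could split further in some other context. The argument above handles this by establishing that $\frac1d$ is a strict lower bound on positive values of $\gamma_{\One/d}$ and that this bound is attained exactly on $\POm$; making that monotonicity/minimality claim fully rigorous (in particular that every chain of proper refinements of a nonzero projection is finite, which again uses finite-dimensionality, itself derived from the rational-valuedness of $\gamma_{\One/d}$) is the delicate point. One must also be a little careful about the logical order: finite-dimensionality is a consequence of the hypothesis, not an assumption, so the dimension-function machinery can only be invoked after the rational values have been clamped to $\{0,1/d,2/d,\dots\}$.
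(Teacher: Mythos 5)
Your forward direction is correct and is essentially the paper's own argument (the paper devotes one sentence to it, relying on the preceding paragraph for the fact that $\gamma_{\dim}\in\Gamma[\CO]$). For the converse you have correctly isolated the real difficulty, which the paper's one-line proof (``the canonical dimension function is $\dim(p)=d\gamma_{\One/d}(p)$'') passes over in silence: an atom of a maximal context need not be globally minimal, i.e.\ need not lie in $\POm$, so the hypothesis a priori says nothing about its $\gamma_{\One/d}$-value. Unfortunately your resolution of this difficulty contains a non sequitur. From $1=\sum_{j}\gamma_{\One/d}(q_j)\geq |J|/d$, strict inequality in the presence of a non-minimal atom only yields $|J|<d$; it does not contradict $\sum_j\gamma_{\One/d}(q_j)=1$, so nothing forces $|J|=d$ or $\gamma_{\One/d}(q_j)=\frac{1}{d}$.

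Moreover, the gap cannot be closed under the literal reading of the hypothesis. Take $\dim(\One)=4$ and two maximal contexts, $C$ with atoms $q_1,q_2,q_3$ and $C'$ with atoms $r_1,r_2,s$, where $q_1=r_1+r_2$ and $s=q_2+q_3=\One-q_1$, so that $C\cap C'=C(q_1,\One)$ (realisable by projections on $\C^4$). Then $\POm=\{r_1,r_2,q_2,q_3\}$, and the assignment $\gamma_{\One/4}(p)=\frac{1}{4}$ on $\POm$ extends additively to a bona fide state ($\gamma(q_1)=\gamma(s)=\frac{1}{2}$, consistent on the overlap), yet $\cO$ is not maximal: $C$ has only three atoms and $\dim(q_1)=2$. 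Your ``key observation'' ($\gamma_{\One/d}(p)\geq\frac{1}{d}$ with equality iff $p\in\POm$) holds in this example and is therefore not strong enough to conclude. The converse only becomes true if one reads ``maximally mixed'' as $\gamma_{\One/d}(p)=\frac{1}{d}$ for every atom $p\in\cP_\mathrm{min}(C)$ of every maximal context $C\in\COm$ (equivalently for all $p\in\POone$), in which case the argument is immediate: each maximal context then has exactly $d$ atoms, each of dimension $d\cdot\frac{1}{d}=1$. You should either adopt that reading explicitly or flag the discrepancy; as it stands, both your deduction and the statement you are proving fail on the example above.
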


\begin{proof}    
    If $\cO$ is maximal and of dimension $\dim(\One)=d$, then the state $\gamma_{\One/d}(p)=\frac{\dim(p)}{d}$ is maximally mixed since $\dim(p)=1$ for all $p\in\POone$. Conversely, if $\cO$ contains the maximally mixed state $\gamma_{\One/d}\in\Gamma[\CO]$, defined by $\gamma_{\One/d}(p)=\frac{1}{d}$ for all $p\in\cP_\mathrm{min}(\cO)$, then the canonical dimension function on $\cO$ is given by $\dim(p)=d\gamma_{\One/d}(p)$.
\end{proof}

While the existence of a maximally mixed state may seem a trivial assumption, there exist finitely generated observable algebras that do not admit such a state, and indeed ones that admit no states at all \cite{Greechie1971}. By Lm.~\ref{lm: max OA vs max-mixed states}, such observable algebras do not have a dimension function.\footnote{For finitely generated observable algebras, the additive constraints of a dimension function define a linear set of Diophantine equations (see Eq.~(\ref{eq: dim function constraints}) in App.~\ref{app: On realisations}). The existence of a dimension function thus coincides with the existence of a solution to this set of equations which can be checked efficiently.}\\

\textbf{Maximal extensions of observable algebras.} Next, we show that observable algebra $\cO$ may have a dimension function yet not be maximal, and may even support different dimension functions. For instance, given a set of minimal events $p_1,p_2,p_3\in\PC$ with $\dim(p_i)=1$ for $i=\{1,2,3\}$ for some context $C\in\COm$ of some maximal observable algebra $\cO$ such that $p_ip=0$ only if $p\in\PC$. Then the ``coarse-grained'' observable algebra $\tcO$, which instead of $p_1,p_2,p_3$ only contains the events $\tp_1,\tp_2$ and is otherwise identical, admits at least two dimension functions: (i) $\widetilde{\dim}(\tp_1)=1$, $\widetilde{\dim}(\tp_2)=2$ and $\widetilde{\dim}(p)=\dim(p)$ for all $p\in\PO/\{p_1,p_2,p_3\}$ and (ii) $\widetilde{\dim}'(\tp_1)=2$, $\widetilde{\dim}'(\tp_2)=1$ and $\widetilde{\dim}'(p)=\dim(p)$ for all $p\in\PO/\{p_1,p_2,p_3\}$.

Observable algebras of this type arise, for instance, from measurement scenarios (see Def.~\ref{def: measurement scenario} in App.~\ref{app: measurement scenarios vs OAs}), where the (discrete) outcomes sets of measurement labels generally do not have the same cardinality. In order to deal with such cases, we need to generalise Thm.~\ref{thm: CNC for OAs} from maximal to general (finite-dimensional) observable algebras. The basic idea is to embed finite-dimensional observable algebras within maximal ones.

\begin{definition}\label{def: max extension}
    Let $\cO$ be an observable algebra with dimension function $\dim$. Then the \emph{maximal extension $\cO^*$ of $\cO$ with respect to $\dim$} is a maximal observable algebra $\cO^*$ of dimension $\dim(\One^*)=\dim(\One)$, together with an embedding $\iota:\cO\ra\cO^*$ such that:
    \begin{itemize}
        \item[(i)] for all $C\in\COm$ there exists a unique maximal context $C^*\in\sCOm$ such that $C\subset C^*$ (and $C=C^*$ if $C$ is minimally generated by $\dim(\One)$ elements),
        \item[(ii)] for all $C'^*\in\sCO$ either $C'^*\in\CO$ or $C'^*\subset C^*\in\sCOm$, where $C^*$ is the unique maximal context of some maximal context $C\in\COm$ by (i),
        \item[(iii)] for all $C,C'\in\COm$ with $C\neq C'$ it holds $C \cap C' = C^* \cap C'^*$.
    \end{itemize}
\end{definition}

Conditions (i) and (ii) in Def.~\ref{def: max extension} define maximal extensions to be minimal under all embeddings of $\cO$ into maximal observable algebras. Furthermore, if (iii) was not true, then there would exist maximal contexts $C,C'\in\COm$ with $C\cap C' \subsetneq C^*\cap C'^*$, that is, there would exist $\cP(C^*),\cP(C'^*)\ni p^*\notin\PO$ such that $p^*p=p^*p'=0$ for all $p\in\PC$ and $p'\in\cP(C')$. Consequently, $\cO^*$ would therefore pose additional constraints (not present in $\cO$) between elements in \emph{different} maximal contexts in $\CO$ since $p^*\leq(\One-p),(\One-p')$ for all $p\in\PC$ and $p'\in\cP(C')$. (iii) thus ensures that $\CO$ already contains all relevant (noncontextuality) constraints between elements in $\cO$.

The following lemma shows that a maximal extension always exists and is unique.

\begin{lemma}\label{lm: min extension to max observable algebra}
    Let $\cO$ be an observable algebra with partial order of contexts $\CO$ and dimension function $\mathrm{dim}: \PO \ra \mathbb{N}$. Then $\cO$ has a unique maximal extension $\iota:\cO\hra\cO^*$ with maximal observable algebra $\cO^*$, corresponding partial order of contexts $\sCO$ and dimension function $\dim^*:\mc{P}(\cO^*)\ra\mathbb{N}$ such that $\dim^*|_\PO=\dim$.
\end{lemma}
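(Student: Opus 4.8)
The statement to be proved is Lm.~\ref{lm: min extension to max observable algebra}: existence and uniqueness of a maximal extension $\cO^*$ of a finite-dimensional observable algebra $\cO$, compatible with a fixed dimension function. The plan is to construct $\cO^*$ explicitly context-by-context, verify the three defining conditions of Def.~\ref{def: max extension}, then argue uniqueness.

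\textbf{Construction (existence).} First I would work entirely at the level of event algebras, since by Lm.~\ref{lm: embeddings vs orthomorphisms vs order-preserving maps} observable algebras, event algebras and partial orders of contexts carry equivalent data; it suffices to produce a maximal event algebra $\cP(\cO^*)$ with a faithful orthomorphism $\iota|_\PO:\PO\ra\cP(\cO^*)$ and then invoke Def.~\ref{def: dim}, Def.~\ref{def: max OA}. For each maximal context $C\in\COm$, if $C$ is already minimally generated by $d=\dim(\One)$ one-dimensional projections, set $C^*:=C$. Otherwise $C=\cB(2^{\Sigma_C})$ has some minimal projection(s) $p$ with $\dim(p)=k>1$; replace each such $p$ by $k$ fresh one-dimensional projections $p^*_1,\dots,p^*_k$ with $p^*_1+\cdots+p^*_k=p$, $p^*_ip^*_j=0$ for $i\neq j$, and with $\dim^*(p^*_i)=1$. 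This produces a Boolean algebra $\cB(C^*)$ minimally generated by $d$ atoms, with $C\hookrightarrow C^*$ an orthomorphism. The key design choice — forced by condition (iii) — is that the \emph{new} atoms $p^*_i$ introduced inside $C^*$ must be chosen \emph{distinct across different maximal contexts} and \emph{orthogonal to nothing outside $C^*$}: formally, take $\cP(\cO^*) := \bigcup_{C\in\COm}\cB(C^*)$ where the only identifications among atoms of $\cB(C^*)$ and $\cB(C'^*)$ for $C\neq C'$ are those already forced in $\PO$, i.e.\ atoms lying in $\cP_1(C\cap C')$. (Here one uses that $\cO$ is finite-dimensional so every maximal context has finitely many atoms, and that $\cO$ maximal or not, $C\cap C'$ is a well-defined sub-Boolean-algebra by Def.~\ref{def: context category}.) Extending by all Boolean combinations within each $\cB(C^*)$ and closing gives the event algebra of $\cO^*$; the map $\iota$ is the evident inclusion, which is faithful and unital by construction. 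Then $\dim^*$ defined by $\dim^*|_\PO=\dim$ and $\dim^*(p^*_i)=1$ on new atoms is additive — the one routine verification — so $\cO^*$ is finite-dimensional, and since every maximal context $C^*$ now has a resolution of $\One$ into $d$ one-dimensional projections, $\cO^*$ is maximal by Def.~\ref{def: max OA}.

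\textbf{Verification of Def.~\ref{def: max extension}.} Condition (i) is immediate from the construction: each $C\in\COm$ sits inside the single $C^*$ built from it, and $C=C^*$ exactly when $|C|=d$. Condition (ii) requires that every maximal-dimensional context of $\cO^*$ is either an old context or a subcontext of some $C^*$; this holds because the only new commutative algebras created lie within the $\cB(C^*)$'s (new atoms are jointly compatible only with elements of $C^*$), so any $C'^*\in\sCO$ not already in $\CO$ is generated by elements of a single $\cB(C^*)$. Condition (iii), $C\cap C' = C^*\cap C'^*$ for $C\neq C'$ in $\COm$, is where the care in the construction pays off: by fiat the atoms we adjoined to $C^*$ are orthogonal to no projection outside $C^*$ and not identified with atoms of $C'^*$, so no projection of $\cP(C^*)\cap\cP(C'^*)$ lies outside $\PO$, giving $C^*\cap C'^*\subseteq C\cap C'$; the reverse inclusion is trivial since $C\subset C^*$, $C'\subset C'^*$.

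\textbf{Uniqueness.} Suppose $\cO^*$ and $\cO^{**}$ are both maximal extensions with dimension functions restricting to $\dim$. By (i), their maximal contexts are in canonical bijection via the shared indexing by $\COm$ (those $C$ with $|C|<d$) together with $\COm$ itself (those with $|C|=d$); within each maximal context $C^*$, both algebras are the Boolean algebra on $d$ atoms extending $\cB(C)$, and $\dim^*$ forces which atoms of $C$ are split and into how many pieces, so $\cB(C^*)\cong\cB(C^{**})$ canonically over $\cB(C)$. Condition (ii) guarantees there are no further contexts beyond the $C^*$ and their subcontexts, and condition (iii) forces the gluing between distinct maximal contexts to be exactly the gluing inherited from $\CO$ — no extra identifications, no extra orthogonalities. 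Hence the partial orders of contexts $\sCO$ and $\cC(\cO^{**})$ are isomorphic rel $\CO$, and by Lm.~\ref{lm: embeddings vs orthomorphisms vs order-preserving maps} the observable algebras $\cO^*$ and $\cO^{**}$ are isomorphic rel $\cO$.

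\textbf{Main obstacle.} The genuinely delicate point is condition (iii) — ensuring the maximal extension introduces \emph{no} new constraints between elements of distinct maximal contexts of $\cO$. Concretely, one must argue that the fresh atoms used to pad out $C^*$ can always be chosen ``generically'' with respect to the rest of $\cP(\cO^*)$; in the purely combinatorial (abstract partial-Boolean-algebra) setting this is unobstructed because compatibility is exactly ``lies in a common Boolean block'' (cf.\ the reconstruction $\cO=\bigcup_{C\in\CO}C$ at the end of Sec.~\ref{sec: OAs}), so one is free to declare the new atoms incompatible with everything outside $C^*$. The subtlety is purely bookkeeping: checking that after closing under the partial operations no unintended identifications are forced, and that additivity of $\dim^*$ survives the closure. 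I expect no conceptual difficulty beyond this, and the argument mirrors the construction of free products / coproducts of partial Boolean algebras in Ref.~\cite{DoeringFrembs2019a,AbramskyBarbosa2020}.
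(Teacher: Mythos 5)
Your construction is essentially identical to the paper's: for each maximal context $C$ you split every minimal projection of dimension $k>1$ into $k$ fresh one-dimensional atoms, declare the new atoms orthogonal to nothing outside $C^*$ so that $C\cap C'=C^*\cap C'^*$ (the paper's ``$p^*_ip=0$ for $p\in\PO$ if and only if $p\in\PC$ by construction''), and deduce uniqueness from the fact that each $\cB(C^*)$ is determined up to isomorphism over $\cB(C)$ by $\dim$. Your write-up is more explicit about verifying condition (ii) and about the uniqueness argument, but it is the same proof.
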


\begin{proof}
    Write $|C|:=\cP_\mathrm{min}(C)$ for the cardinality of a minimal generating set of $C$. For every maximal context $C\in\COm$, define $\iota_{C^*C}:C\hra C^*$ to be an embedding of $C$ into a commutative algebra $C^*$ with $|C^*|=\dim(\One)$ and such that $\dim^*(\iota_{C^*C}(p))=\dim(p)$ for all $p\in\PC$. Then $\iota_{C^*C}$ is the unique (up to isomorphism) extension of $C$ such that for every $p\in\PC$ with $\dim(p)=k$ there exist projections $p^*_1,\cdots,p^*_k\in\cP(C^*)$ with $\dim^*(p^*_i)=1$, $\sum_{i=1}^k p^*_i=\iota_{C^*C}(p)$. Clearly, (i) and (ii) in Def.~\ref{def: max extension} are satisfied under this definition. Moreover, $p^*_ip=0$ for $p\in\PO$ if and only if $p\in\PC$ by construction such that $C\cap C'=C^*\cap C'^*$ for any two distinct maximal contexts $C,C'\in\COm$. Hence, (iii) in Def.~\ref{def: max extension} also holds and $\cO^*$ is a maximal observable, is clearly unique and $\dim^*:\cP(\cO^*)\ra\mathbb{N}$ defines a dimension function on $\cO^*$ such that $\dim^*|_\PO=\dim$.
\end{proof}

Using Lm.~\ref{lm: min extension to max observable algebra}, we can reduce the study of KS contextuality of (non-maximal) finite-dimensional observable algebras to that of their maximal extensions.

\begin{lemma}\label{lm: KSNC inherits from maximality}
    Let $\cO$ be a finite-dimensional observable algebra. Then the following are equivalent:
    \begin{itemize}
        \item[(i)] $\cO$ is Kochen-Specker noncontextual,
        \item[(ii)] there exists a maximal extension $\cO^*$ as by Lm.~\ref{lm: min extension to max observable algebra} with respect to some choice of dimension function $\dim$ on $\cO$ that is Kochen-Specker noncontextual,
        \item[(iii)] the maximal extension $\cO^*$ as by Lm.~\ref{lm: min extension to max observable algebra} with respect to any choice of dimension function $\dim$ on $\cO$ is Kochen-Specker noncontextual.
    \end{itemize}
\end{lemma}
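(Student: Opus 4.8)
\textbf{Proof plan for Lm.~\ref{lm: KSNC inherits from maximality}.} The statement establishes the equivalence of KS noncontextuality of a finite-dimensional observable algebra $\cO$ with that of \emph{some} (resp. \emph{any}) maximal extension $\cO^*$ constructed via Lm.~\ref{lm: min extension to max observable algebra}. The plan is to prove the chain of implications (iii)$\Rightarrow$(ii)$\Rightarrow$(i)$\Rightarrow$(iii). The implication (iii)$\Rightarrow$(ii) is trivial, since a dimension function on a finite-dimensional observable algebra always exists by Def.~\ref{def: dim}, so "any" immediately gives "some". The content lies in the remaining two implications.

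First I would prove (ii)$\Rightarrow$(i). Suppose $\cO^*$ is a maximal extension of $\cO$ (with respect to some dimension function $\dim$), so by Def.~\ref{def: max extension} there is an embedding $\iota:\cO\hra\cO^*$. If $\cO^*$ is KS noncontextual, then by Def.~\ref{def: KSNC} there is a classical embedding $\epsilon^*:\cO^*\ra L_\infty(\Lambda)$. Since embeddings are precisely the faithful maps preserving the functional constraints in Eq.~(\ref{eq: KSNC}), their composition is again such a map; one only needs to check that the composite $\epsilon^*\circ\iota:\cO\ra L_\infty(\Lambda)$ is faithful (immediate, as both factors are) and preserves the functional relations (immediate, since each factor does, using transitivity of Eq.~(\ref{eq: KSNC})). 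Hence $\cO$ admits a classical embedding and is KS noncontextual. This direction is essentially formal.

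The main work is (i)$\Rightarrow$(iii): if $\cO$ is KS noncontextual, then the maximal extension $\cO^*$ with respect to \emph{any} dimension function $\dim$ is KS noncontextual. Here I would use the context-connection characterisation of Thm.~\ref{thm: CNC for OAs}, or more directly its colouring reformulation. By Lm.~\ref{lm: embeddings vs orthomorphisms vs order-preserving maps}, a classical embedding $\epsilon:\cO\ra L_\infty(\Lambda)$ is equivalent to a faithful orthomorphism $\varphi:\PO\ra\cB(\Lambda)$. The key point is to extend $\varphi$ to a faithful orthomorphism $\varphi^*:\mc{P}(\cO^*)\ra\cB(\Lambda^*)$ for a suitable refined measurable space $\Lambda^*$. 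Concretely, for each maximal context $C\in\COm$ and each minimal projection $p\in\cP_\mathrm{min}(C)$ with $\dim(p)=k$, condition (i)--(ii) of Def.~\ref{def: max extension} tells us that $\iota_{C^*C}(p)$ splits as a sum $p^*_1+\cdots+p^*_k$ of one-dimensional projections in $C^*$; correspondingly, I would split the measurable set $\Lambda_p=\varphi(p)$ into $k$ disjoint measurable pieces (possible after, if necessary, replacing $\Lambda$ by $\Lambda\times[d]$ or an analogous finite refinement so that each $\Lambda_p$ has enough "room" to be split into the required number of nonempty pieces while keeping all pieces distinct across contexts — this is where condition (iii) of Def.~\ref{def: max extension} is essential, as it guarantees the only orthogonality relations $p^*$ inherits are the intra-context ones, so the splitting can be done context-locally without creating inconsistencies). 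Setting $\varphi^*$ to send each $p^*_i$ to its assigned piece and extending additively yields a well-defined orthomorphism on $\mc{P}(\cO^*)$ since the defining relations of $\cO^*$ all pertain to individual maximal contexts $C^*$ (condition (ii)), on which the assignment is by construction compatible. Faithfulness of $\varphi^*$ follows from faithfulness of $\varphi$ together with the distinctness of the splitting pieces. By Lm.~\ref{lm: embeddings vs orthomorphisms vs order-preserving maps}, $\varphi^*$ lifts to a classical embedding of $\cO^*$, so $\cO^*$ is KS noncontextual.

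\textbf{Main obstacle.} The delicate step is the last one: showing the splitting of $\Lambda_p$ into one-dimensional pieces can be carried out globally and consistently across \emph{all} maximal contexts of $\cO^*$ simultaneously, producing genuinely distinct (nonempty) pieces so that the resulting $\varphi^*$ remains faithful, while not accidentally introducing orthogonality relations between pieces coming from different contexts. Condition (iii) of Def.~\ref{def: max extension} is precisely what rules out spurious cross-context constraints in $\cO^*$, and I expect the proof to hinge on invoking it carefully — together with a routine enlargement of $\Lambda$ (e.g. to $\Lambda\times[d]$) to ensure each piece can be taken nonempty. An alternative, possibly cleaner, route for (i)$\Rightarrow$(iii) is to argue entirely at the level of $d$-colourings: restricting a $d$-colouring of $\cO^*$ to $\cO$ and conversely extending a classical embedding of $\cO$ to a $d$-colouring of $G(\cO^*)$ via the splitting above, then applying Thm.~\ref{thm: n-colourability for non-maximal OAs} — but that theorem itself is proved using Lm.~\ref{lm: KSNC inherits from maximality}, so to avoid circularity I would keep the argument at the orthomorphism level as sketched.
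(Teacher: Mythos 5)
Your proposal is correct and follows essentially the same route as the paper: the downward implications are handled by composing/restricting embeddings, and the substantive direction is proved by splitting each measurable set $\Lambda_p$ with $\dim(p)=k>1$ into $k$ disjoint measurable pieces assigned to the one-dimensional refinements $p^*_i$ in $\cO^*$, enlarging $\Lambda$ by a finite product when needed and invoking conditions (ii)--(iii) of Def.~\ref{def: max extension} to keep the splitting context-local and faithful. The paper carries out the same splitting via a finite minimal generating set of the induced $\sigma$-algebra on $\Lambda_p$ (and the product $\Lambda_p\times[k]$ in the non-finitely-generated case), which is precisely the refinement you anticipate, and your decision to stay at the orthomorphism level to avoid circularity with Thm.~\ref{thm: n-colourability for non-maximal OAs} matches the paper's logical order.
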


\begin{proof}
    The implications (iii)$\Rightarrow$(ii)$\Rightarrow$(i) are obvious. Indeed, let $\cO^*$ be the unique maximal observable algebra of $\cO$ in Lm.~\ref{lm: min extension to max observable algebra} with respect to any dimension function. If $\cO^*$ is KS noncontextual then there exists a classical embedding $\epsilon:\cO^*\ra L_\infty(\Lambda)$ for some measurable space $\Lambda$ which induces an embedding $\epsilon|_\cO=\epsilon\circ\iota:\cO\ra L_\infty(\Lambda)$ for $\cO$ by restriction. Hence, $\cO$ is KS noncontextual.

    To prove (i)$\Rightarrow$(ii), assume that $\cO$ is KS noncontextual, that is, there exists an embedding $\epsilon:\cO\ra L_\infty(\Lambda)$ for some measurable space $\Lambda$. Now, let $\cO^*$ be the unique maximal observable algebra in Lm.~\ref{lm: min extension to max observable algebra} with respect to some dimension function $\dim$ on $\cO$. We need to show that every observable $\cO^*\ni O^*\notin\cO$ can also be represented as a measurable function on $\Lambda$. To do so, it is sufficient to find a decomposition of the measurable subset $\Lambda_p = \supp(\chi(\lambda\mid p))\subset\cB(\Lambda)$, corresponding to the indicator function $\chi(\lambda\mid p)=\epsilon(p)(\lambda)$ for every projection $p\in\PO$ with $\mathrm{dim}(p)=k>1$, into measurable subsets $\Lambda_{p^*_i} =: \supp(\chi(\lambda\mid p^*_i))$, $\Lambda_p=\dot\cup_{i=1}^k \Lambda_{p^*_i}$ for $p^*_1,\cdots,p^*_k$ the unique one-dimensional projections in $\cP(\cO^*)$ with $\sum_{i=1}^k p^*_i=\iota_{C^*C}(p)$ (where $\iota_{C^*C}:C\hra C^*$ is the embedding defined by the maximal extension $\iota$ in Lm.~\ref{lm: min extension to max observable algebra}). The only constraints on the subsets $\Lambda_{p^*_i}$ are that they are measurable and that $\Lambda_{p^*_i}\subset\Lambda_{p'}$ only if $p^*_i\leq p'$ for $p'\in\cP(\cO^*)$.
    
    Let $\sigma|_{\Lambda_p}=\{\widetilde{\Lambda}\cap\Lambda_p\mid\widetilde{\Lambda}\in\sigma_\Lambda\}$ be the induced $\sigma$-algebra under the restriction to $\Lambda_p$. Moreover, define $\Lambda_{\cP}=\dot\cap_{p'\in\cP} \Lambda_{p'}$ as the intersection with respect to any subset of events $\cP\subset\PO$. Note that $\Lambda_\cP\in\sigma_\Lambda$, in particular, $\Lambda_\cP\in\sigma|_{\Lambda_p}$ whenever $\cP$ is a countable set. If $\cO$ is finitely generated, $\Lambda_\cP$ is a finite intersection, hence, $\Lambda_\cP$ is itself a measurable subset for all $\cP\subset\PO$. We can therefore find a (finite) minimal generating set $\cG_{\Lambda_p}$ for $\sigma|_{\Lambda_p}$ such that $\Lambda_p=\dot\cup_{\Lambda_\cP\in\cG_{\Lambda_p}}\Lambda_\cP$ and $\Lambda_\cP\cap\Lambda_{\cP'}=\emptyset$ for all $\Lambda_\cP,\Lambda_{\cP'}\in\cG_{\Lambda_p}$.\footnote{Indeed, for finitely generated observable algebras we may assume that $\Lambda$ is a finite set (see Lm.~\ref{lm: optimal embedding}).} After possibly enlarging each generating set $\Lambda_{\cP}\in\cG_{\Lambda_p}$ such that it contains at least $k$ elements, let $\Lambda_\cP=\dot\cup_{i=1}^k \Lambda_{\cP,i}$ be a disjoint decomposition into (measurable) subsets for every $\Lambda_\cP\in\cG_{\Lambda_p}$, and define $\Lambda_{p^*_i}:=\dot\cup_{\Lambda_\cP\in\cG_{\Lambda_p}}\Lambda_{\cP,i}$. It follows that $\Lambda_p=\dot\cup_{i=1}^k\Lambda_{p^*_i}$ and $\Lambda_{p^*_i}\subset\Lambda_{p'}$ only if $p^*_i\leq p'$ for $p'\in\cP(\cO^*)$ by construction. Finally, setting $\epsilon^*(p^*_i)=\Lambda_{p^*_i}$ and $\epsilon^*|_\PO=\epsilon$ we thus obtain a classical embedding for the maximal extension $\cO^*$.

    If $\cO$ is not finitely generated, the $\sigma$-algebra $\sigma_\Lambda$ on $\Lambda$ is not finitely generated either and the above construction does not work. In this case, we construct a new space $\Lambda'$ by replacing $\Lambda_p$ in $\Lambda$ by the product space $\Lambda_p\times[k]$ for $\dim(p)=k$ where $[k]$ denotes a set of $k$ elements with $\sigma$-algebra $\sigma_{p}(k)=2^{[k]}$. We then recover $\sigma|_{\Lambda_p}$ as the projection onto the first factor in the product $\sigma$-algebra $\sigma|_{\Lambda_p}\times\sigma_p(k)$ on $\Lambda_p\times[k]$, and since every measurable set $\Lambda_\cP\in\sigma$ decomposes as $\Lambda_\cP=(\Lambda_\cP\cap\Lambda_p)\cup(\Lambda_\cP\cap\overline{\Lambda_p})$, we also recover $\sigma_\Lambda$ by defining the $\sigma$-algebra $\sigma_{\Lambda'}$ on $\Lambda'$ as the smallest $\sigma$-algebra generated by the sets $\Lambda_\cP=((\Lambda_\cP\cap\Lambda_p)\times 1_k)\cup(\Lambda_\cP\cap\overline{\Lambda_p})$ for all $\Lambda_\cP\in\sigma$. We thus obtain a new classical embedding $\widetilde{\epsilon}':\cO\ra\Lambda'$ of $\cO$. Moreover, $\widetilde{\epsilon}'$ is readily extended to the elements $p^*_i<p$ in $\cP(\cO^*)$ by setting $\epsilon'(p^*_i)=\Lambda_p\times\{i\}=:\Lambda_{p^*_i}$ for all $1\leq i\leq k$. Clearly, we have $\Lambda_{p^*_i}\neq\Lambda_{q}$ for any $q\in\cP(\cO^*)$ by construction.

    To prove (i)$\Rightarrow$(iii) it is enough to note that the construction in the previous paragraph is independent of, hence, applies to any choice of dimension function $\dim$ on $\cO$.
\end{proof}
\section{Observable algebras with separating sets of states}\label{app: OAs with separating states}

Prop.~\ref{prop: nonclassical correlations in KS noncontextual OAs} says that a KS noncontextual observable algebra $\cO$ is not necessarily fully classically correlated: the existence of a classical embedding implies $\Gamma_\mathrm{cl}[\CO]\neq\emptyset$, yet generally not $\Gamma_\mathrm{cl}[\CO]=\Gamma[\CO]$. On the other hand, this suggests that KS noncontextuality is a weaker notion than being fully classically correlated. In this section, we prove under which condition this is the case. We first show that it is not true in general.

\begin{lemma}\label{lm: single state}
    There exist observable algebras $\cO$ which are Kochen-Specker contextual, yet which are fully classically correlated, that is, for which $\emptyset\neq\Gamma_\mathrm{cl}[\CO]=\Gamma[\CO]$.
\end{lemma}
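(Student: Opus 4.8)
The plan is to construct an explicit small example where the partial order of contexts admits only a single state, so that being fully classically correlated is automatic, while the algebraic (embedding) structure still forbids a classical embedding. The key conceptual point, already flagged in the excerpt, is that $\Gamma[\CO]=\Gamma_\mathrm{cl}[\CO]$ only constrains states, whereas KS contextuality is a property of the faithful orthomorphisms of $\PO$ (Lm.~\ref{lm: embeddings vs orthomorphisms vs order-preserving maps}); if the state space is a single point, the ``state'' constraint is vacuous and carries no information about faithfulness.

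First I would take a Kochen-Specker contextual observable algebra $\cO_0$ — for instance a finitely generated partial algebra built from a finite KS set in $\C^3$, so that $\cO_0$ has no classical embedding by Thm.~\ref{thm: CNC for OAs} (equivalently, no flat context connection on $\cC(\cO_0^*)$). The issue is that $\Gamma[\cC(\cO_0)]$ is typically large (it contains the maximally mixed state and more), so $\cO_0$ itself is not fully classically correlated. The remedy is to rigidify the state space. The plan is to adjoin to $\cO_0$ (or rather to a suitable extension) enough extra context overlaps — additional compatibility relations among the generating projections, realised by enlarging to a bigger partial algebra — so that the no-disturbance constraints in Eq.~(\ref{eq: no-disturbance}) pin down a unique consistent family of probability distributions. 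Concretely, one can force overlaps making chains of marginalisation constraints that propagate a single fixed value; a clean way is to arrange the contexts so that $\CO$ becomes ``connected enough'' that the only global section is $\gamma_{\One/d}$ (this is the same phenomenon exploited in state-independent contextuality). Then $\Gamma[\CO]=\{\gamma_{\One/d}\}$, and since $\gamma_{\One/d}\in\Gamma_\mathrm{cl}[\CO]$ trivially (it is the pushforward of the uniform measure under any orthomorphism, e.g. the one sending everything of dimension $k$ to a set of measure $k/d$), we get $\emptyset\neq\Gamma_\mathrm{cl}[\CO]=\Gamma[\CO]$.

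The remaining task is to check that adding these extra overlaps does not accidentally make the algebra KS noncontextual. Here I would be careful to add the new relations in a way that introduces no new ``slack'' in the triviality constraints of Thm.~\ref{thm: CNC for OAs}: the new context cycles created should either be trivial (pass through $C_\One$, and hence be harmless by Lm.~\ref{lm: trivial cycle}/Thm.~\ref{thm: KSNC of truncated OAs}) or be forced by the old non-trivial cycles, so that any flat context connection on the enlarged $\cC(\cO^*)$ would restrict to a flat one on $\cC(\cO_0^*)$ — contradicting the contextuality of $\cO_0$. Equivalently, in the colouring language of Thm.~\ref{thm: n-colourability for non-maximal OAs}: I want $\chi(G(\cO^*))>d$ preserved, which holds as long as the added edges/vertices do not lower the chromatic number, and it is easy to add overlaps that keep an existing non-$d$-colourable subgraph intact. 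A convenient concrete realisation: take a finite quantum partial algebra $\cO_0\subset\LHsa$ with $\dim(\cH)=3$ coming from a KS set, and let $\cO$ be the partial algebra generated by \emph{all} rank-one and rank-two projections appearing as functions of the chosen observables together with enough additional commuting pairs to connect the contexts; since in quantum theory the maximally mixed state is the unique trace-normalised element compatible with \emph{every} context once the contexts are sufficiently interlocking, $\Gamma[\CO]$ collapses to $\{\gamma_{\one/3}\}$ while $\cO$ is still KS contextual.

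The main obstacle I anticipate is the simultaneous control of two things pulling in opposite directions: enlarging $\CO$ to kill all non-trivial states tends to add context cycles, and one must ensure none of the new cycles (nor their interaction with the old ones) opens up room for a flat context connection. I would handle this by keeping the additions ``local'' — each new overlap either shares at most a one-dimensional projection that is already fixed, or sits inside the span of existing contexts — and then invoking Thm.~\ref{thm: CNC for OAs} / Thm.~\ref{thm: n-colourability for non-maximal OAs} on a fixed contextual subgraph that survives the enlargement, so that the contextuality is ``inherited'' downward from $\cO_0$ rather than needing to be re-proved for $\cO$. Verifying that $\Gamma[\CO]$ is genuinely a single point is then a finite linear-algebra check on the marginalisation equations, which I would only sketch.
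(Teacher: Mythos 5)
Your construction has two gaps that I believe are fatal, and together they explain why the paper takes a completely different route. First, the rigidification step cannot be carried out where you want to carry it out: for any observable subalgebra $\cO\subset\LHsa$, every density matrix $\rho\in\SH$ induces a no-disturbance state $\gamma_\rho(p)=\tr[\rho\, p]$ on $\CO$, no matter how interlocking the contexts are, so $\Gamma[\CO]$ is never a single point. The assertion that ``the maximally mixed state is the unique trace-normalised element compatible with every context once the contexts are sufficiently interlocking'' is false (Gleason's theorem says the opposite for $\cO=\LHsa$). Producing an observable algebra with exactly one state forces you outside quantum theory, and constructing such an object is precisely the hard part your sketch leaves open. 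Second, and independently, the surviving state would not lie in $\Gamma_\mathrm{cl}[\CO]$ for your choice of $\cO_0$. Membership in $\Gamma_\mathrm{cl}[\CO]$ presupposes the existence of \emph{some} orthomorphism $\varphi:\PO\ra\cB(\Lambda)$; evaluating any such $\varphi$ at a point $\lambda\in\Lambda$ gives a $\{0,1\}$-assignment that is additive on orthogonal projections and normalised in every maximal context, i.e.\ a valuation. Hence $\Gamma_\mathrm{cl}[\CO]\neq\emptyset$ already forces valuations to exist, and for finitely generated $\cO$ the set $\Gamma_\mathrm{cl}[\CO]$ is exactly the convex hull of the valuation states. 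Starting from a KS set, as your concrete realisation does, there are no valuations, so $\Gamma_\mathrm{cl}[\CO]=\emptyset$ and the lemma's conclusion fails outright. Your proposed orthomorphism ``sending everything of dimension $k$ to a set of measure $k/d$'' does not exist in general: an orthomorphism must assign actual measurable sets with the correct disjoint-union structure, not merely measures. Even switching to a colourable-but-contextual set (e.g.\ Yu--Oh) does not help, since you would then need $\gamma_{\One/d}$ to be a convex combination of valuations \emph{and} the state space to be a point; neither is established, and the first is in tension with KS contextuality via Thm.~\ref{thm: KSNC = n-colourability}.

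The paper's proof avoids both problems simultaneously by importing an orthomodular lattice $L_1$ with exactly one state from the quantum-logic literature (Navara), forming the product $L=L_1\times L_2$ with a Boolean algebra $L_2$ so that $\Gamma[L]\cong\Gamma[L_2]$, and observing that the non-faithful projection $L_1\times L_2\ra L_2$ is an orthomorphism through which every state factors; the valuations of $L$ are exactly those supported on the Boolean factor, so $\Gamma_\mathrm{cl}$ is nonempty and exhausts $\Gamma$, while KS contextuality follows because a faithful classical embedding would manufacture additional states on $L_1$. Any repair of your argument would have to reproduce these two ingredients — a genuinely non-quantum algebra with an impoverished state space, and a non-faithful orthomorphism capturing all of its states — at which point you have essentially rediscovered the paper's construction.
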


\begin{proof}
    We will use the construction of state spaces for orthomodular lattices in Ref.~\cite{Navara1994,Navara2008} (and references therein). To this end, recall first that every orthomodular lattice defines a transitive partial Boolean algebra \cite{Gudder1972}, which in turn generates a unique partial algebra (see Lm.~\ref{lm: embeddings vs orthomorphisms vs order-preserving maps}). It follows that there exist orthomodular lattices $L$ with exactly one state, which assigns every atom (and thus every minimal projection in $\PO$ in the finitely generated, maximal observable algebra $\cO(L)$ associated with $L$) the same value, hence, which is a maximally mixed state.
    
    Let $L_1$ be such an orthomodular lattice and denote by $\gamma_\One\in\Gamma[L_1]$ the unique (maximally mixed) state on $L_1$. For instance, we may take $L_1$ to be the  orthomodular lattice in Ref.~\cite{Navara1994}. Next, consider its product $L=L_1\times L_2$ with the Boolean algebra $L_2\cong\cB(2^{\Lambda})$ of measurable sets on some finite set $\Lambda$. In this case we have $\Gamma[L]\cong\Gamma[L_2]$ \cite{Navara2009}. $L$ defines a finitely generated, maximal observable algebra $\cO(L)$, in particular, note that $\gamma_\One$ gives rise to a dimension function on $\PO$ by Lm.~\ref{lm: max OA vs max-mixed states}. Hence, we can meaningfully talk about Kochen-Specker contextuality,\footnote{In turn, for this reason we cannot use orthomodular lattices with no states \cite{Greechie1971,Kalmbach1983,Navara1994}.} and clearly, $\cO(L)$ is Kochen-Specker contextual, since a classical embedding would define more than a single state on $L_1$, and thus $\Gamma[L_2]\subsetneq\Gamma[L_1\times L_2]$ contradicting $\Gamma[L_2]\cong\Gamma[L_1\times L_2]$.

    Still, every state $\gamma$ on $L$ has a solution to the marginal problem. Indeed, the respective (non-faithful) orthomorphisms in Def.~\ref{def: marginal problem} are simply given by the projection onto the second factor, that is, $\phi_\gamma:L_1\times L_2\ra \mathbf{1}\times L_2\cong L_2$, where $\mathbf{1}$ denotes the unique single-element Boolean algebra. Consequently, $\emptyset\neq\Gamma_\mathrm{cl}[\cO(L)]=\Gamma[\cO(L)]$, hence, $\cO(L)$ is fully classically correlated.
\end{proof}

Admittedly, the example in the proof of Lm.~\ref{lm: single state} is rather artificial, and relies on the vast generality of observable algebras (specifically, orthomodular lattices), as compared to observable subalgebras $\cO\subset\LHsa$ 
in quantum theory. Physically, observable algebras as in Lm.~\ref{lm: single state} violate the idea that contextuality, that is, the compatibility relations between observables in $\cO$ are \emph{learnt}: if states cannot distinguish between projections in different contexts, then the structure of $\CO$ cannot be inferred operationally alone.

In the same spirit, one may argue that the assumption of an embedding in Def.~\ref{def: KSNC} should be restricted to observable algebras admitting sufficiently many states in order to distinguish the events in $\PO$ in $\cO$. This motivates the following definition.

\begin{definition}\label{def: separating set}
    Let $\cO$ be an observable algebra with event algebra $\PO$. A set of states $\Gamma\subset\Gamma[\CO]$ is called \emph{separating} if for any $p,p'\in\PO$ there exists $\gamma\in\Gamma$ with $\gamma(p)\neq\gamma(p')$. $\cO$ is called \emph{separating} if it admits a separating set of states.
\end{definition}

We observe that if $\cO$ is finitely generated and separating, then it is also finite-dimensional. Note also that observable subalgebras $\cO\subset\LHsa$ in quantum theory are separating. Moreover, if $\cO$ admits a classical embedding $\epsilon:\cO\ra L_\infty(\Lambda)$ every measure on $\Lambda$ induces a state on $\cO$. It follows that the existence of a separating set of \emph{classical} states is a necessary condition for $\cO$ to be KS noncontextual.

In turn, under the assumptions that $\cO$ is finitely generated and separating, the implication ``fully classically correlated $\Rightarrow$ Kochen-Specker noncontextual" does indeed hold.

\begin{proposition}\label{prop: non-empty fully classical implies KS noncontextual}
    Let $\cO$ be a finitely generated, separating and fully classically correlated observable algebra. Then $\cO$ is Kochen-Specker noncontextual.
\end{proposition}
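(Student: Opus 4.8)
The plan is to combine the equivalence between Kochen-Specker noncontextuality and marginal Kochen-Specker noncontextuality established in Thm.~\ref{thm: KSNC = marginal KSNC} with the separating hypothesis. Concretely, by Thm.~\ref{thm: KSNC = marginal KSNC} it suffices to show that a finitely generated, separating, fully classically correlated observable algebra $\cO$ is \emph{marginally} Kochen-Specker noncontextual in the sense of Def.~\ref{def: marginal noncontextuality}, that is, that for all $p,p'\in\PO$ with $p\neq p'$ there exists a classical state $\gamma\in\Gamma_\mathrm{cl}[\CO]$ with $0\neq\gamma(p)\neq\gamma(p')\neq 0$.

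First I would invoke the fully-classically-correlated hypothesis: by Def.~\ref{def: fully classically correlated}, $\Gamma[\CO]=\Gamma_\mathrm{cl}[\CO]$, so \emph{every} non-disturbing state on $\cO$ already admits a solution to the marginal problem. Hence it is enough to produce, for each pair $p\neq p'$ in $\PO$, \emph{some} state $\gamma\in\Gamma[\CO]$ with $0\neq\gamma(p)\neq\gamma(p')\neq 0$; automatically $\gamma\in\Gamma_\mathrm{cl}[\CO]$. Now I would use the separating hypothesis: by Def.~\ref{def: separating set} there is a separating set $\Gamma\subset\Gamma[\CO]$, so for each $p\neq p'$ there is $\gamma_{pp'}\in\Gamma$ with $\gamma_{pp'}(p)\neq\gamma_{pp'}(p')$. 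This gives the strict-inequality part; the remaining work is to upgrade ``$\gamma(p)\neq\gamma(p')$'' (allowing one of them to be $0$) to ``$0\neq\gamma(p)\neq\gamma(p')\neq 0$''.

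The mixing step handles this. Since $\cO$ is finitely generated, $\PO$ is a finite set, so the separating set can be taken finite, say $\{\gamma_i\}_{i=1}^N$, and moreover the collection $\{\gamma_{pp'}\}$ over the finitely many unordered pairs is finite; throw in for good measure the maximally mixed state (or any state assigning every nonzero minimal projection a strictly positive value, which exists because $\cO$ is separating and finitely generated, cf. Lm.~\ref{lm: max OA vs max-mixed states} after passing to a dimension function). Form the convex combination $\gamma=\sum_i\omega_i\gamma_i$ with all $\omega_i>0$ and $\sum_i\omega_i=1$. Convexity of $\Gamma[\CO]$ gives $\gamma\in\Gamma[\CO]=\Gamma_\mathrm{cl}[\CO]$. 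The strictly positive state in the mixture forces $\gamma(p)>0$ for every nonzero $p\in\PO$. For a genuine separation of a fixed pair $p\neq p'$, one chooses the weights so that the contribution of the separating state $\gamma_{pp'}$ is not cancelled: since only finitely many linear equalities $\gamma(p)=\gamma(p')$ can fail to be avoided, a generic choice of $(\omega_i)$ in the open simplex makes $\gamma(p)\neq\gamma(p')$ for all pairs simultaneously (each equality cuts out a proper affine subspace, and a finite union of proper subspaces cannot cover the open simplex). This yields a single $\gamma\in\Gamma_\mathrm{cl}[\CO]$ that is separating with full support, i.e.\ exactly the witness required by Def.~\ref{def: marginal noncontextuality}.

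The main obstacle, and the step deserving care, is precisely this last genericity argument: one must check that the finitely many ``bad'' affine conditions $\gamma(p)=\gamma(p')$ (and $\gamma(p)=0$) are each proper, i.e.\ that for every pair there is at least one state in the chosen finite family violating it — the equalities $\gamma(p)=0$ are killed by the strictly positive state, and the equalities $\gamma(p)=\gamma(p')$ by the separating states $\gamma_{pp'}$ — so that their union misses an open dense subset of the simplex. Once this is in hand, Thm.~\ref{thm: KSNC = marginal KSNC} converts marginal KS noncontextuality back into the existence of a classical embedding, completing the proof. (Alternatively, and perhaps more cleanly, one can take the uniform mixture $\gamma=\frac1N\sum_i\gamma_i$ of the entire finite separating family together with a strictly positive state and verify directly that $\gamma(p)=\gamma(p')$ would force $\gamma_i(p)=\gamma_i(p')$ for all $i$, contradicting separation — avoiding the genericity language altogether; this is the route I would actually write up.)
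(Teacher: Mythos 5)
Your main argument is correct and follows essentially the same route as the paper: the separating set is automatically classical because $\Gamma[\CO]=\Gamma_\mathrm{cl}[\CO]$, so $\cO$ is marginally KS noncontextual, and Thm.~\ref{thm: KSNC = marginal KSNC} finishes the job; your genericity argument on the open simplex correctly supplies the mixing step that the paper only sketches after Def.~\ref{def: marginal noncontextuality}. One warning, though: the ``cleaner'' alternative in your final parenthesis --- that for the \emph{uniform} mixture $\gamma=\frac1N\sum_i\gamma_i$ the equality $\gamma(p)=\gamma(p')$ would force $\gamma_i(p)=\gamma_i(p')$ for all $i$ --- is false, since $\sum_i\bigl(\gamma_i(p)-\gamma_i(p')\bigr)=0$ permits cancellation between terms of opposite sign; do not write that version up, and keep the generic-weights argument instead.
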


\begin{proof}
    By assumption, $\cO$ contains a separating set of states, which is classical since $\cO$ is also fully classically correlated. Hence, $\cO$ is marginally KS noncontextual by Def.~\ref{def: marginal noncontextuality}, which further implies KS noncontextuality by Thm.~\ref{thm: KSNC = marginal KSNC}.
\end{proof}

Together with Prop.~\ref{prop: nonclassical correlations in KS noncontextual OAs}, we thus find that for finitely generated, separating observable algebras, KS noncontextuality is a weaker notion of classicality than Def.~\ref{def: fully classically correlated} (see Tab.~\ref{tab: comparison}).
\section{Comparison of approaches and unified perspective}\label{app: unified perspective}

Unfortunately, the various approaches to (the marginal approach to) contextuality use slightly different mathematical structures, notations and conventions. The purpose of this section is threefold: first, to give a minimal account of the necessary terminology in order to collect all these approaches under one umbrella,\footnote{A similar effort was made in Ref.~\cite{BudroniEtAl2022}, where the relevant notion was identified as a ``marginal scenario" \cite{ChavesFritz2012,FritzChaves2013}. Here, we substantially expand this view: we add the complementary algebraic perspective on KS contextuality, provide a detailed comparison between the respective approaches and prove various new results including the minimal mathematical structure necessary to study KS contextuality.} second, to expose unnecessary redundancies inherent in various descriptions, and third, to prove that the formalism of observable algebras and their underlying partial orders of contexts subsumes the minimal account in the most economical fashion. The latter may thus be seen as a unified framework for Kochen-Specker contextuality, and we endorse it as such.

\subsection{Measurement scenarios vs observable algebras}\label{app: measurement scenarios vs OAs}

We begin by reviewing the notion of a measurement scenario in the sheaf-theoretic approach \cite{AbramskyBrandenburger2011}. For a brief review of the sheaf approach, see Ref.~\cite{AbramskyBrandenburger2011,AbramskyEtAl2015,AbramskyEtAl2017}.

\begin{definition}\label{def: measurement scenario}
    Let $X$ be a finite set of \emph{measurement labels}. Each label $x\in X$ is assigned a discrete \emph{outcome space $o_x\subset\R$}.\footnote{For a generalisation to random variables with continuous spectra, see Ref.~\cite{BarbosaEtAl2022}.} A \emph{measurement cover $\cM \subset 2^X$ of $X$} is a set of subsets of $X$ such that (i) $\cM$ covers $X$, that is, $\bigcup \cM = X$, and (ii) $\cM$ is an anti-chain, that is, whenever $\mC'\subset\mC$ for $\mC,\mC'\in\cM$ then $\mC'=\mC$. A pair $M = (X,\cM)$ is called a \emph{measurement scenario}.

    Any subset $\tilde{\mC}\subset X$ that can be written as $\tilde{\mC}=\bigcap_{i=1}^n \mC_i$ for $\mC_i\in\cM$ is called a \emph{measurement context}. In particular, the elements $\mC\in\cM$ are called \emph{maximal measurement contexts}. The set of all measurement contexts is denoted $\cC(\cM)$ with $\cC_\mathrm{max}(\cM)=\cM$.
\end{definition}

We will distinguish measurement contexts in the sheaf-theoretic approach, that is, subsets of $X$ from contexts (commutative algebras in observable algebras) by using the non-italic font for the former.

It is usually not explicitly assumed that $X$ contains a measurement label $I\in X$ corresponding to the trivial observable which has outcome set $o_I=\{1\}$\footnote{Measuring $I$ always results in the outcome $1$ (``true"). Its interpretation lies in defining the physical system with respect to which all other measurements $x\in X$ are performed.} and is contained in every context, hence, defines a least element $\mC_I=\{I\}$ of $\cC(\cM)$. Nevertheless, this assumption is made implicitly when considering states (``empirical models") (see Def.~\ref{def: state}), where it appears in the form of the normalisation condition on probability distributions across contexts. Indeed, normalisation follows from $\mu_\mC(1) = \mu_\mC|_{\mC_I}(1)=1$ if $\mC_I\subset\mC$ for all $\mC\in\cC(\cM)$. For this reason, we will henceforth explicitly include the measurement label $I\in X$ for the trivial observable, equivalently, a least element $\mC_I$ in $\cC(\cM)$.

The collection of all (not necessarily maximal) contexts is sometimes also called a \emph{marginal scenario $\cC(\cM)\subset 2^X$} \cite{ChavesFritz2012,FritzChaves2013,BudroniEtAl2022} which satisfies (i) in Def.~\ref{def: measurement scenario} and instead of (ii) is closed under subsets, that is, if $\mC,\mC'\in\cC$ with $\mC'\subset\mC$ then $\mC'\in\cC$. A measurement cover encodes which measurement labels in $X$ are compatible. This information can further equivalently be encoded in terms of the \emph{compatibility graph $\cG(X)=(X,\circ\subset X\times X)$ of $X$} where $x\circ x'$ if and only if $x,x'\in\rC\in\cM$. Indeed, given $\cG(X)$ we recover $\cM$ from $X\supset\rC\in\cM$ if and only if $x\circ x'$ for all $x,x'\in\rC$ and $\rC$ is maximal.\footnote{For two self-adjoint operators $x,x'\in\LHsa$ in quantum theory, we have $x\circ x'\Leftrightarrow [x,x']=0$.} In essence, measurement covers, marginal scenarios and compatibility graphs all define a partial order, represented in terms of subsets of a set (measurement labels $X$ of $M$ or vertices $V$ of $\cG(X)$). Clearly, this partial order is the analogue of the partial order of contexts $\CO$ for observable algebras $\cO$. However, unlike observable algebras, measurement scenarios and compatibility graphs are concerned with a particular choice of measurements. In general, this adds a level of redundancy to the presentation, which as we now show relates to a choice of (generally non-minimal) generating set $\cX(\cO)$ of $\cO$.\\

\textbf{Measurement scenarios as generating sets of observable algebras.} At first sight, by removing the full algebraic structure in commutative algebras $C\in\CO$ of an observable algebra $\cO$, the sheaf-theoretic approach seems to achieve a significant reduction and thus a more economical representation of a ``contextuality scenario". However, this intuition turns out to be misleading as we will show in this section.

To see this, we first show that a (finitely generated) observable algebra $\cO$ specifies a measurement scenario through (subsets of observables that constitute) a generating set.

\begin{definition}\label{def: generating set}
    Let $\cO$ be a finitely generated observable algebra with partial order of contexts $\CO$. Then $\cX(\cO) = \{O_1,\cdots,O_N\}$ is called a \emph{generating set for $\cO$} if $\cO\ni O=\sum_{k=1}^{|S_O|} c_kh_k(O_k)$ for some $c_k\in\R$, $h_k:\R\ra\R$ and $O_k\in\cX(\cO)$. A generating set $\cX(\cO)$ is called \emph{minimal} if $\cX'(\cO)\subset\cX(\cO)$ is a generating set for $\cO$ only if $\cX'(\cO)=\cX(\cO)$.
\end{definition}

\begin{definition}\label{def: associated measurement scenario}
    Let $\cO$ be a finitely generated observable algebra with maximal set of contexts $\COm$ and generating set $\cX(\cO)$. Then $M(\cX(\cO)) = (X,\cM)$ with $X=\cX(\cO)$ and $\cM=\COm$ is called the \emph{measurement scenario associated with $\cX(\cO)$}.
\end{definition}

Clearly, the measurement scenario associated with an observable algebra $\cO$ depends on the choice of generating set $\cX(\cO)$. At the same time, by the definition of a generating set, each choice defines the same partial order of contexts. Since the latter is the only relevant structure (from the perspective of KS contextuality), we see that specifying a partial order of contexts in terms of measurement scenarios is generally redundant.

Nevertheless, there are certain canonical choices of generating sets, which we have encountered before: (i) for every maximal context $C\in\COm$, let $O \in C$ be a non-degenerate observable such that $C = C(O) = \{h(O) \mid h:\R\ra\R\}$. We have been using such generating sets in the arguments leading up to Thm.~\ref{thm: CNC for OAs}. Recall that KS noncontextuality may be seen as the impossibility to choose a generating observable in every maximal context. (ii) for every minimal context $C\in\COmin$, let $O\in C$ be a non-degenerate observable such that $C=C(O)=\{h(O) \mid h:\R\ra\R\}$. Both $\cX(\cO)=\{O_C\}_{C\in\COmin}$ and $\cX'(\cO)=\{O_C\}_{C\in\COm}$ are minimal generating sets for $\cO$. In particular, (the set of elementary observables corresponding with) $\POone$ forms a minimal generating set for maximal observable algebras. More generally, $\POm$ is a minimal generating set for coarse-grainged observable algebras (see Lm.~\ref{lm: reduction to coarse-grained OA} below). Generating sets of this form (for a canonical choice of outcome sets $\{0,1\}$) justify the (hyper)graph-theoretic approach \cite{CabelloSeveriniWinter2014,ChavesFritz2012,FritzChaves2013,AcinFritzLeverrierSainz2015,AmaralCunha2018} (see App.~\ref{app: orthogonality graphs vs OAs} below).

However, the above are not the only minimal generating sets. Indeed, there are even infinitely many generating sets already of the form (i) and (ii), whose observables merely differ by a choice of invertible function $\{h_C:\R\ra\R\}_{C\in\CO}$, that is, by a choice of re-labelling of their outcome sets. Such re-labellings play no role in the sheaf-theoretic formalism to contextuality, and may thus be considered as ``gauge". This again emphasises that functional relations between observables may be seen as part of the side-processing of an experiment. Removing this ambiguity therefore naturally leads back to the perspective of observable algebras and their partial order of contexts defined in terms of commutative subalgebras as presented here.\\

\textbf{Observable algebras associated with measurement scenarios.} We complete our comparison between our formalism and the sheaf-theoretic one by showing that observable algebras subsume measurement scenarios, that is, given a measurement scenario $M = (X,\cM)$, there always exists an observable algebra $\cO$ and a generating set $\cX(\cO)$ such that $M=M(\cX(\cO))$. One way to see this is by noting that any partial order with least element corresponds to the partial order of contexts of some observable algebra. An explicit construction is given in the following lemma.

\begin{lemma}\label{lm: OA-representation of measurement scenarios}
    Let $M=(X,\cM)$ be a measurement scenario. Then there exists a unique observable algebra $\cO$ with generating set $\cX(\cO)$ such that $M = M(\cX(\cO))$.
\end{lemma}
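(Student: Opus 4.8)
The plan is to build $\cO$ directly from the combinatorial data of $M$, invoking the principle recalled in Sec.~\ref{sec: OAs} (following \cite{CannonDoering2018}) that any partial order of finite-dimensional commutative algebras with a least element defines an observable algebra via $\cO = \bigcup_C C$. First I would pass from the measurement cover $\cM$ to the poset $\cC(\cM)$ of all measurement contexts, i.e.\ all finite intersections $\tilde{\mC} = \bigcap_i \mC_i$ of elements of $\cM$, together with the least element $\mC_I = \{I\}$ which (as discussed before Def.~\ref{def: measurement scenario}) we include explicitly. To each $\tilde{\mC} \in \cC(\cM)$ I assign the finite measurable space $\Sigma_{\tilde{\mC}} := \prod_{x \in \tilde{\mC}} o_x$ and the commutative $\R$-algebra $C(\tilde{\mC}) := L_\infty(\Sigma_{\tilde{\mC}})$ of real-valued functions on it. An inclusion $\tilde{\mC}' \subseteq \tilde{\mC}$ of measurement contexts induces the coordinate projection $\Sigma_{\tilde{\mC}} \twoheadrightarrow \Sigma_{\tilde{\mC}'}$, hence by pullback an embedding $C(\tilde{\mC}') \hookrightarrow C(\tilde{\mC})$; these embeddings are mutually compatible, so $\{C(\tilde{\mC})\}_{\tilde{\mC} \in \cC(\cM)}$ is a partial order of finite-dimensional commutative algebras with least element $C(\mC_I) \cong \R$. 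Setting $\cO := \bigcup_{\tilde{\mC} \in \cC(\cM)} C(\tilde{\mC})$, with $\One$ the constant function $1$ and $O \odot O'$ iff $O, O' \in C(\tilde{\mC})$ for some common $\tilde{\mC}$, produces the candidate observable algebra.

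Next I would identify the generating set and recover $M$. For $x \in X$, the coordinate functions on the spaces $\Sigma_{\tilde{\mC}}$ over all $\tilde{\mC} \ni x$ agree under the pullback maps, hence glue to a single element $O_x \in \cO$ with $\spec(O_x) = o_x$; put $\cX(\cO) := \{O_x \mid x \in X\}$. Since each maximal context $C(\mC) = L_\infty\big(\prod_{x\in\mC} o_x\big)$ is generated, as a commutative algebra, by the finitely many coordinate functions $\{O_x \mid x \in \mC\}$, and since $\cO$ is the union of the $C(\tilde{\mC})$, the set $\cX(\cO)$ generates $\cO$ in the sense of Def.~\ref{def: generating set}. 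Under the bijection $x \leftrightarrow O_x$ one has $X \cong \cX(\cO)$ with matching outcome spaces, so it remains to check $\cM = \COm$: for each $\mC \in \cM$, the context $C(\mC)$ is maximal (a strictly larger context would be $C(\tilde{\mC}')$ with $\tilde{\mC}' \supsetneq \mC$, impossible since $\cM$ is an antichain and every measurement context is contained in some member of $\cM$); conversely every maximal context is of this form, and -- using that $\cM$ is an antichain and, w.l.o.g., $|o_x| \geq 2$ for $x \neq I$ -- distinct $\mC, \mC' \in \cM$ yield distinct, incomparable $C(\mC), C(\mC')$. This gives the required poset isomorphism $\COm \cong \cM$, hence $M(\cX(\cO)) = M$.

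For uniqueness, I would note that the data of $M$ -- the labels $X$, their outcome spaces $o_x$, and the cover $\cM$ -- determines $\cO$: any $\cO'$ with generating set $\cX(\cO')$ and $M(\cX(\cO')) = M$ has generators of spectrum $o_x$ and maximal contexts indexed by $\cM$, and -- absent any further algebraic relations among the generators, i.e.\ for the natural realisation in which every joint outcome occurs -- the maximal context over $\mC$ is forced to equal $L_\infty(\prod_{x\in\mC} o_x)$; its subcontexts $C(\tilde{\mC})$, and thus all of $\cO'$, are then determined, giving $\cO' \cong \cO$. I should state explicitly that uniqueness is meant up to this convention, since a measurement cover records no relations of the form $O_xO_{x'} = 0$.

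The step I expect to be the main obstacle is the verification that $\cO = \bigcup_{\tilde{\mC}} C(\tilde{\mC})$ is genuinely an observable algebra, together with the accompanying fact that $\CO$ contains \emph{no} contexts beyond $\cC(\cM)$ -- both delicate precisely when $M$ contains a ``triangle'' of pairwise-overlapping maximal contexts with no common refinement (as in the $n$-cycle scenarios of Def.~\ref{def: n-cycle scenario}), where closing the coordinate functions under the partial algebraic operations threatens to introduce spurious joint-measurability relations. Following the convention of Sec.~\ref{sec: OAs}, I would take $\odot$ to hold only between elements sharing a context in $\cC(\cM)$ and verify that the commutative subalgebras of the resulting structure are exactly the $C(\tilde{\mC})$ and their sub-algebras, so that $\CO = \cC(\cM)$ and $\COm = \cM$ on the nose.
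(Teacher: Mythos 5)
Your proof is correct in outline and follows the same blueprint as the paper's: assign a commutative algebra to each measurement context in $\cC(\cM)$, arrange these into a poset mirroring the inclusion order, take the union, and read off the generating set from the measurement labels. The difference lies in how the context algebras are realised. The paper proceeds by induction on the cardinality of measurement contexts and takes the \emph{minimal} commutative algebra admitting distinct, intersection-compatible embeddings of the representations of all subcontexts; you take the explicit product $C(\tilde{\mC})=L_\infty\bigl(\prod_{x\in\tilde{\mC}}o_x\bigr)$. Read literally these do not coincide: for two compatible dichotomic labels the paper's minimality conditions are already met by $L_\infty$ of a three-element set (two distinct two-block partitions of a three-point space intersect only in the scalars), which is strictly smaller than your $\R^4$ and would impose a spurious exclusivity relation between outcomes of the two labels --- extra structure not recorded in $M$, and one whose arbitrariness would also undermine the paper's uniqueness claim. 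Your free/product construction is the reading under which the statement is actually true, and your explicit remark that uniqueness holds only up to the convention of imposing no relations beyond the functional calculus within each context is exactly the caveat the paper leaves implicit; in this respect your version is the more careful one. The one genuine loose end --- shared with the paper, and which you flag but do not resolve --- is that for a cover containing a ``Specker triangle'' of pairwise-overlapping maximal contexts with no joint refinement, the resulting structure with $\odot$ defined by ``sharing a context in $\cC(\cM)$'' violates the closure axiom of Def.~\ref{def: abstract observable algebra} (pairwise compatibility of $O_1,O_2,O_3$ is required there to force $(O_1+O_2)\odot O_3$, yet $O_1+O_2$ then lives only in the context of $O_1,O_2$); neither proof addresses this, so it is not a defect of your argument relative to the paper's.
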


\begin{proof}
    We need to represent the measurement labels $x\in X$ by discrete random variables with event spaces ordered according to the partial order of measurement contexts, defined by the marginal scenario $\cC(\cM)$ of $M$. More precisely, for every measurement context $\cC(\cM) \ni \mC \subset X$, we need to define a commutative algebra $C$, whose generating projections define a Boolean algebra (of events) $\PC$ such that the measurement label $x\in X$ is represented as a random variable $O:\PC\ra o_x\subset\R$. Moreover, the inclusion relations between these commutative algebras must reflect the partial order of measurement contexts, that is, $\tC \subset C$ if and only if $\tilde{\mC} \subset \mC$ for all $\tilde{\mC},\mC \in \cM$.
    
    To do so, we proceed by induction in the cardinality of measurement contexts $\mC\in\cC(\cM)$ in the measurement cover $\cM$. Since we assume that $I\in\mC$ for all $\mC\in\cM$, the initial step is to represent the trivial measurement context $\mC_I=\{I\}$ (with $|\mC|=1$) by the Boolean algebra $\cP(C_I)\cong \mathbf{2}$, and to define the commutative algebra $C_\One = \{\lambda \One \mid \lambda\in\R\}$ which will be included in every other commutative algebra containing the identity $\One\in\cO$.
    
    In the inductive step, we may assume to have represented all measurement contexts $\mC\in\cC(\cM)$ of cardinality smaller or equal to $n\in\mathbb{N}$ by commutative algebras $C$ such that $\widetilde{C}\subset C$ if and only if $\widetilde{\mC}\subset\mC$. Let $\mC\in\cC(\cM)$ be a measurement context of cardinality $|\mC|=n+1$ and let $\{\widetilde{\mC}_k\}_{k=1}^S$ with $|\widetilde{\mC}_k|=m_k\leq n$ for all $1\leq k\leq S$ be the set of measurement subcontexts $\widetilde{\mC}_k\subset\mC$ in $\cC(\cM)$. We represent $\mC$ by the minimal commutative algebra $C$ such that there exist inclusion relations $i_{C\tC_k}:\tC_k\hra C$, subject to the conditions $i_{C\tC_k}(\tC_k)\neq i_{C\tC_l}(\tC_l)$ whenever $k \neq l$ for $1\leq k,l \leq S$ and
    \begin{align*}
        i_{C\tC_k}|_{\tC_l}(\tC_k\cap\tC_l) = i_{C\tC_l}|_{\tC_k}(\tC_k\cap\tC_l) = i_{C\tC_k}(\tC_k) \cap i_{C\tC_l}(\tC_l)\; .
    \end{align*}
    Clearly, in this way we obtain commutative algebras $C$ for every measurement context $\mC\subset\cM$ whose inclusion relations define a partial order $\CO$ which faithfully represents the measurement scenario generated by $\cM$, that is, $\tC\subset C$ if and only if $\widetilde{\mC}\subset\mC$. Since we assume the commutative algebras $C$ to be minimal, we thus obtain a unique observable algebra $\cO=\bigcup_{C\in\CO} C$, where every measurement label $x\in X$ is represented by a random variable $O(x):\PC\ra o_x\subset\R$; and it is straightforward to see that $\{O(x)\}_{x\in X}$ forms a generating set for $\cO$ with $M=M(\{O(x)\}_{x\in X})$.
\end{proof}

Note that the generating set $\{O(x)\}_{x\in X}$ for $\cO$ in the proof of Lm.~\ref{lm: OA-representation of measurement scenarios} is generally not minimal, as it may contain observables $O(x)$ that can be written as linear combinations of functions of other observables in the set. In this case, we may remove the respective measurement labels from $X$ without changing the construction in Lm.~\ref{lm: OA-representation of measurement scenarios}. Consequently, such measurement labels are redundant for the purposes of studying contextuality. Moreover, we reiterate from above that even specifying a minimal generating set is generally ambiguous given re-labellings of outcome sets.\\

In summary, up to a choice of generating set, every measurement scenario is represented by an observable algebra such that the respectively defined partial orders of contexts coincide. Hence, our formalism subsumes, and generally provides a more economical representation than, the sheaf-theoretic approach \cite{AbramskyBrandenburger2011,AbramskyEtAl2015,AbramskyEtAl2017} (see also Sec.~\ref{app: reduced context categories}).

\subsection{Orthogonality graphs vs partial algebras}\label{app: orthogonality graphs vs OAs}

The generating sets of type (ii) in App.~\ref{app: measurement scenarios vs OAs}, for fixed choice of outcomes sets $o=o_x=\{0,1\}$, are given by $\POm$, that is, they correspond with the event algebra $\PO$ of the observable algebra $\cO$ defined in Lm~\ref{lm: OA-representation of measurement scenarios}. Since, by Lm.~\ref{lm: embeddings vs orthomorphisms vs order-preserving maps}, $\PO$ encodes the same context structure as $\cO$, it constitutes an economical representation of the contextuality scenario described by $\cO$. For general observable algebras, the information contained in $\PO$ is readily encoded as a hypergraph \cite{ChavesFritz2012,FritzChaves2013,AcinFritzLeverrierSainz2015}.$^{\ref{fn: identity element in hypergraphs}}$ What is more, under the restriction to partial algebras (which also satisfy Specker's principle), we can further restrict to simple (``orthogonality") graphs (see Def.~\ref{def: orthogonality graph}), that is, to the orthogonal relations between minimal elements in $\POm$. For the ensuing comparison between our formalism and that of orthogonality graphs, we will therefore (mostly) restrict to partial algebras.

Now, every partial algebra defines a unique orthogonality graph (see Def.~\ref{def: associated orthogonality graph}). However, the converse is generally false, since orthogonality graphs lack a notion of identity element. In the following, we first show that this implies that (classical) correlations in Eq.~(\ref{eq: cl-qu-nd correlations}) are generally incompatible with ``coarse-graining'', and then provide various ways to encode this additional structure in order to uniquely identify an orthogonality graph with an event algebra. The issue of sub-normalisation has been pointed out in Ref.~\cite{AcinFritzLeverrierSainz2015}, and further distinguishes the graph- from the hyper-graph approach. Here, we highlight those aspects necessary for our treatment of SI-C sets and graphs in Sec.~\ref{sec: SI-C sets and graphs}.\\

\textbf{Embedding and coarse-graining of states and correlations.} We compare states on partial algebras in Eq.~(\ref{eq: cl-qu-nd states}) with correlations defined on orthogonality graphs in Eq.~(\ref{eq: cl-qu-nd correlations}), with respect to their compatibility under embedding and coarse-graining. In particular, it is natural to require that the state spaces $\Gamma_X[\tCO]$ for $X=\mathrm{cl},\mathrm{qu},\mathrm{nd}$ do not depend on whether we consider $\tcO$ as describing a closed system or whether it arises as part of a larger system which is described by a partial algebra $\cO\supset\tcO$, that is, whether we first embed $\tcO$ within a larger partial algebra, followed by coarse-graining. (Similarly, for correlations on orthogonality graphs $\tG\subset G$.)

To see whether this is the case, note first that for a general embeddings $\epsilon:\tcO\ra\cO$, we have $\Gamma_X[\tCO]\not\subset\Gamma_X[\CO]$ for $X=\mathrm{cl},\mathrm{qu},\mathrm{nd}$, since $\cO$ will generally impose more constraints on states than $\tcO$ does, that is, the marginal problem posed by $\cO$ is generally more restrictive than that posed by any observable subalgebra $\tcO\subset\cO$.\footnote{Of course, we also have $\Gamma_X[\CO]\not\subset\Gamma_X[\cC(\tcO)]$ for $X=\mathrm{cl},\mathrm{qu},\mathrm{nd}$ since states on $\cO$ generally contain more information than states $\tcO$.} For instance, consider a classical observable algebra $\tcO = L_\infty(\Lambda)$ with $\Lambda$ a finite set of cardinality $|\Lambda|=d$. Clearly, in this case $\Gamma_\mathrm{cl}[\CO]=\Gamma[\CO]$. Now, consider the embedding $\tcO\ra\cO$ with $\cO=\LHsa$ for a Hilbert space of dimension $\dim(\cH)=d$, which maps the elements in $\Lambda$ to an orthogonal basis of $\cH$. By the Kochen-Specker theorem, $\Gamma_\mathrm{cl}[\CO]=\emptyset$ whenever $d\geq 3$. Consequently, we find $\Gamma_\mathrm{cl}[\tCO]\not\subset\Gamma_\mathrm{cl}[\CO)]$ in this case. A similar argument will show that $\Gamma_\mathrm{qu}[\cC(\tcO)]\not\subset\Gamma_\mathrm{qu}[\CO]$.

We call an embedding $\epsilon:\tcO\ra\cO$ \emph{constraint-preserving} if $\epsilon(p)\epsilon(p')=0$ only if $pp'=0$ for all $p,p'\in\cP(\tcO)$,\footnote{Equivalently, an embedding $\epsilon$ is constraint-preserving if $\eta$ in Lm.~\ref{lm: embeddings vs orthomorphisms vs order-preserving maps} is a fully faithful functor.} and write $\tcO\subset_\mathrm{c-p}\cO$ in this case.

\begin{lemma}\label{lm: fully faithful embedding of states on OAs}
    Let $\tcO\subset_\mathrm{c-p}\cO$ be two observable algebras. Then $\Gamma_X[\tCO]\subset\Gamma_X[\CO]$, where $X=\mathrm{cl},\mathrm{qu},\mathrm{nd}$ denotes classical, quantum and non-disturbing states, respectively.
\end{lemma}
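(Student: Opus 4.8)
The plan is to reduce to the event-algebra picture of Lm.~\ref{lm: embeddings vs orthomorphisms vs order-preserving maps} and then to extend, from $\tcO$ to $\cO$, the data that witnesses type-$X$ classicality of a state. By Lm.~\ref{lm: embeddings vs orthomorphisms vs order-preserving maps} I may identify $\tcO$ with its image, so that $\tPO\subseteq\PO$ is a sub-event-algebra and $\tCO\subseteq\CO$ a sub-poset of contexts. The hypothesis $\tcO\subset_\mathrm{c-p}\cO$ says precisely that this inclusion \emph{reflects} orthogonality, i.e.\ $pp'=0$ in $\cO$ iff $pp'=0$ in $\tcO$ for all $p,p'\in\tPO$; equivalently (the footnote) $\tCO\hra\CO$ is order-reflecting, so in particular no new inclusion relations among $\tcO$-contexts appear in $\CO$. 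I would also record the automatic fact that $\tCO$ is down-closed in $\CO$: a commutative subalgebra of $\cO$ contained in some $\tC\in\tCO$ already lies in $\tcO$.

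First I would treat $X=\mathrm{nd}$, where $\Gamma_\mathrm{nd}=\Gamma$. Given $\tilde\gamma=(\tilde\gamma_{\tC})_{\tC\in\tCO}\in\Gamma[\tCO]$, the task is to produce $\gamma\in\Gamma[\CO]$ with $\gamma|_{\tcO}=\tilde\gamma$. Since $\tCO\hra\CO$ is order-reflecting, $\tilde\gamma$ already satisfies every marginalisation constraint that $\CO$ imposes among $\tcO$-contexts; so it is a consistent partial non-disturbing state on the down-closed sub-poset $\tCO$. I would then extend it context by context along a well-ordering of $\CO$ refining the order: when a context $C$ is reached, the values already fixed on $\bigcup_{D<C}\cP(D)$ form (by the inductive hypothesis, after checking pairwise agreement) a consistent assignment on a Boolean subalgebra of $\cP(C)$, and any such assignment extends to a probability distribution on all of $\cP(C)$ (split each atom's mass). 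The set of partial extensions is nonempty at each step and closed under chains, so Zorn's lemma yields a total one; for finitely generated $\cO$ this is just feasibility of a linear program whose equality constraints all come from $\CO$, made consistent by c-p.

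Next, for $X=\mathrm{cl}$ and $X=\mathrm{qu}$, I would start from a witness for $\tilde\gamma$: an orthomorphism $\tilde\phi:\tPO\ra\cB(\widetilde\Lambda)$ (resp.\ $\tilde\phi:\tPO\ra\mc{P}(\tcH)$ for some Hilbert space $\tcH$) together with a probability measure (resp.\ quantum state) such that $\tilde\gamma=\tilde\mu\circ\tilde\phi$. The plan is to extend $\tilde\phi$ to an orthomorphism $\phi:\PO\ra\cB(\Lambda)$ (resp.\ $\phi:\PO\ra\mc{P}(\cH)$) with $\phi|_{\tPO}=\tilde\phi$, and then extend $\tilde\mu$ to a measure (resp.\ state) on the larger target that still restricts to $\tilde\mu$ on the $\widetilde\Lambda$-part (resp.\ $\tcH$-part): in the quantum case one takes $\cH=\tcH\otimes\mc{K}$ with $\mc{K}$ of sufficiently large dimension and uses the extra tensor factor to realise the new events of $\PO$ and their orthogonality pattern; in the classical case one freely adjoins the new events, passing to a larger sample space $\Lambda\twoheadrightarrow\widetilde\Lambda$. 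Constraint-preservation is what makes this work: because $\tPO\hra\PO$ creates no new orthogonality relations among elements of $\tPO$, there is no obstruction to assigning images of the new events compatibly with the images of the old ones. Composing, $\gamma=\mu\circ\phi\in\Gamma_X[\CO]$ extends $\tilde\gamma$, which gives the claimed inclusions.

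I expect the main obstacle to be exactly this extension step in the $\mathrm{cl}$ and $\mathrm{qu}$ cases: showing that the new contexts and events of $\cO$ can be coherently assigned images in a Boolean (resp.\ projection) algebra without disturbing the measure or state already fixed on the $\tcO$-part, and that the resulting $\phi$ is genuinely an orthomorphism. The role of the hypothesis $\subset_\mathrm{c-p}$ is precisely to rule out the pathology --- new forced relations among the old events --- that would obstruct such an extension (this is what fails for general embeddings, cf.\ the counterexample preceding Def.~\ref{def: separating set}); once that is secured, the rest is a bookkeeping argument by transfinite induction, or in the finitely generated case a feasibility argument.
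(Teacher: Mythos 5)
Your opening observation --- that constraint-preservation makes $\tCO\hra\CO$ order-reflecting, so that no new inclusion relations arise among the contexts of $\tcO$ --- is exactly the content of the paper's proof, which consists of that remark plus the assertion that the remaining contexts of $\CO$ therefore ``pose no additional marginalisation constraints''. Where you go further is in trying to actually construct the extension, and that is precisely where a genuine gap appears. In your inductive step for $X=\mathrm{nd}$ you claim that the values already fixed on $\bigcup_{D<C}\cP(D)$ extend to a probability distribution on $\cP(C)$ after ``checking pairwise agreement''. This is false in general: if a single new context $C\in\CO$ contains the images of several contexts $D_1,\dots,D_k\in\tCO$ that were pairwise incompatible in $\tcO$, then defining $\gamma_C$ is a $k$-fold marginal problem inside the Boolean algebra $\cP(C)$, and pairwise consistency does not imply the existence of a joint distribution. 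Constraint-preservation, as defined, only forbids \emph{new pairwise orthogonalities} among events of $\tPO$; it does not forbid new higher-order relations. Concretely, let $\tcO$ be generated by three pairwise incompatible dichotomic observables $b_1,b_2,b_3$ and embed it into $L_\infty(\Sigma_C)$ with $\Sigma_C=\{0,1\}^3\setminus\{111\}$ by sending $b_i$ to the $i$-th coordinate function. Every pairwise intersection $[b_i{=}\varepsilon]\cap[b_j{=}\varepsilon']$ with $i\neq j$ is nonempty, so the embedding is faithful and constraint-preserving, yet $[b_1{=}1]\cap[b_2{=}1]\cap[b_3{=}1]=\emptyset$; the non-disturbing state with $\tilde\gamma_{D_i}(b_i{=}1)=1$ therefore admits no extension to $\cP(C)$. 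The same obstruction defeats the orthomorphism-extension plan for $X=\mathrm{cl},\mathrm{qu}$, since any extension $\phi$ must be additive over the atoms of $C$.

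To be fair, the paper's own two-line proof does not address this amalgamation step either --- it implicitly assumes that every $C\in\CO$ meets $\eta(\tCO)$ in a single greatest context, in which case $\gamma_C$ is obtained by splitting atom masses exactly as you describe and both your transfinite induction and your cl/qu extension of the witnessing orthomorphism go through. So the verdict is: your strategy is the paper's strategy made explicit, but making it explicit reveals that the crucial extension step needs a hypothesis beyond $\subset_\mathrm{c-p}$ as defined (for instance, that the contexts of $\tcO$ contained in any one context of $\cO$ are jointly contained in a context of $\tcO$, or that their join in $\cP(C)$ is a free amalgam over the common subcontexts). As written, neither the Zorn argument nor the linear-programming fallback closes the gap, because the feasibility of the relevant linear program is exactly what is in question.
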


\begin{proof}
    By definition, two contexts $\tC,\tC'\in\tCO$ are related $\tC'\subset\tC$ in $\tCO$ if and only if $\epsilon(\tC')\subset\epsilon(\tC)$ are related in $\CO$. It follows that contexts $\CO\ni C\notin\tCO$ pose no additional marginalisation constraints on states in Eq.~(\ref{eq: no-disturbance}).
\end{proof}

Clearly, we also have $\Gamma_X[\CO]|_\tcO\subset\Gamma_X[\CO]$ for $X\in\{\mathrm{cl,qu,nd}\}$, where $|_{\tcO}$ denotes marginalisation. In other words, (classical, quantum) states are compatible with marginalisation. Consequently, $\Gamma_\mathrm{X}[\cC(\tcO)]=\Gamma_\mathrm{X}[\CO]|_\tcO$ whenever $\tcO\subset_\mathrm{c-p}\cO$.

Finally, we observe that compatibility with marginalisation already implies a partial consistency condition with respect to general subsystem embeddings.

\begin{lemma}\label{lm: embedding of correlations on G OA}
    Let $\epsilon:\tcO\ra\cO$ be an embedding. Then $\Gamma_\mathrm{X}[\CO]|_{\tcO}\subset\Gamma_\mathrm{X}[\cC(\tcO)]$, where $X=\mathrm{cl},\mathrm{qu},\mathrm{nd}$ denotes classical, quantum and non-disturbing states, respectively.
\end{lemma}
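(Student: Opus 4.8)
\textbf{Proof plan for Lemma~\ref{lm: embedding of correlations on G OA}.}

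The plan is to show that any state $\gamma \in \Gamma_\mathrm{X}[\CO]$ restricts along $\epsilon:\tcO\ra\cO$ to a state of the same type on $\tcO$. The key observation is that even though a general embedding $\epsilon$ need not be constraint-preserving --- so that $\cO$ may impose \emph{more} orthogonality constraints than $\tcO$ --- this can only \emph{shrink} the relevant state spaces when passing from $\tcO$ to $\cO$, never enlarge them. More precisely, every functional (hence order) relation $\tC' \subset \tC$ in $\cC(\tcO)$ is preserved by $\epsilon$ (this is exactly what it means for $\epsilon$ to be an embedding, cf.\ Lm.~\ref{lm: embeddings vs orthomorphisms vs order-preserving maps}), so $\epsilon(\tC') \subset \epsilon(\tC)$ in $\CO$. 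Therefore, for $\gamma = (\gamma_C)_{C\in\CO} \in \Gamma[\CO]$, defining $\widetilde{\gamma}_{\tC} := \gamma_{\epsilon(\tC)}|_{\tC}$ (via the identification of $\cP(\tC)$ with its image $\epsilon(\cP(\tC)) \subset \cP(\epsilon(\tC))$) yields a well-defined family: whenever $\tC' \subset \tC$, the no-disturbance condition Eq.~(\ref{eq: no-disturbance}) for $\gamma$ on the pair $\epsilon(\tC') \subset \epsilon(\tC)$ gives $\widetilde{\gamma}_{\tC'} = \gamma_{\epsilon(\tC')}|_{\tC'} = (\gamma_{\epsilon(\tC)}|_{\epsilon(\tC')})|_{\tC'} = \widetilde{\gamma}_{\tC}|_{\tC'}$, so $\widetilde{\gamma} \in \Gamma[\cC(\tcO)] = \Gamma_\mathrm{nd}[\cC(\tcO)]$. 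This handles the case $X = \mathrm{nd}$.

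For $X = \mathrm{cl}$ and $X = \mathrm{qu}$, I would argue via the defining orthomorphisms. If $\gamma \in \Gamma_\mathrm{cl}[\CO]$, then by definition $\gamma$ arises as $\gamma_C(p) = \mu(\Lambda_p)$ for some orthomorphism $\phi:\cP(\cO)\ra\cB(\Lambda)$ and probability measure $\mu$ on $\Lambda$ (Def.~\ref{def: marginal problem}). Composing with the orthomorphism $\epsilon|_{\cP(\tcO)}:\cP(\tcO)\ra\cP(\cO)$ --- which is an orthomorphism since $\epsilon$ restricts to one on event algebras by Lm.~\ref{lm: embeddings vs orthomorphisms vs order-preserving maps} --- gives an orthomorphism $\phi \circ \epsilon|_{\cP(\tcO)}:\cP(\tcO)\ra\cB(\Lambda)$ witnessing that $\widetilde{\gamma} = \gamma|_{\tcO}$ solves the marginal problem over $\cC(\tcO)$, i.e.\ $\widetilde{\gamma} \in \Gamma_\mathrm{cl}[\cC(\tcO)]$. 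The quantum case $X = \mathrm{qu}$ is identical, replacing $\cB(\Lambda)$ by the projection lattice of a Hilbert space and $\mu$ by a density matrix: if $\gamma$ is compatible with an orthomorphism $\phi:\cP(\cO)\ra\PH$ and state $\rho$, then $\phi\circ\epsilon|_{\cP(\tcO)}$ and $\rho$ witness $\widetilde{\gamma}\in\Gamma_\mathrm{qu}[\cC(\tcO)]$. Since this works uniformly for all three cases, the inclusion $\Gamma_\mathrm{X}[\CO]|_{\tcO} \subset \Gamma_\mathrm{X}[\cC(\tcO)]$ follows.

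The main subtlety --- rather than a genuine obstacle --- is to be careful that the marginalisation $|_{\tcO}$ on the left-hand side is interpreted correctly: a state $\gamma$ on $\cO$ restricts to $\tcO$ by first pulling back along $\epsilon$ (identifying $\tcO$ with $\epsilon(\tcO) \subset \cO$) and then forgetting the contexts in $\CO$ not in the image of $\cC(\tcO)$. One should check that this is exactly the composite described above and that no normalisation is lost --- which holds because $\epsilon(\One_{\tcO}) = \One_{\cO}$ (embeddings are unital, cf.\ Eq.~(\ref{eq: KSNC}) applied to constant functions), so $\widetilde{\gamma}_{\tC}(\One) = \gamma_{\epsilon(\tC)}(\One) = 1$. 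I do not expect any hidden difficulty here: the statement is essentially the observation that "restricting constraints + restricting a consistent family stays consistent", and the only content is bookkeeping the identification of event algebras under $\epsilon$. Note in particular that this lemma is strictly weaker than, and complementary to, Lm.~\ref{lm: fully faithful embedding of states on OAs}: the latter gives $\Gamma_\mathrm{X}[\cC(\tcO)] \subset \Gamma_\mathrm{X}[\CO]$ for constraint-preserving embeddings, whereas here we obtain only $\Gamma_\mathrm{X}[\CO]|_{\tcO} \subset \Gamma_\mathrm{X}[\cC(\tcO)]$ for arbitrary embeddings, and equality $\Gamma_\mathrm{X}[\cC(\tcO)] = \Gamma_\mathrm{X}[\CO]|_{\tcO}$ holds precisely when $\tcO \subset_\mathrm{c-p} \cO$.
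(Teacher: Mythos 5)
Your proof is correct and follows essentially the same route as the paper's (which is far terser): the non-disturbing case reduces to the fact that $\epsilon$ preserves the order of contexts, and the classical/quantum cases follow by composing the witnessing orthomorphism with $\epsilon|_{\cP(\tcO)}$. The only (inessential) overstatement is your closing remark that equality holds \emph{precisely} when $\tcO\subset_\mathrm{c-p}\cO$; the paper asserts only the ``if'' direction of that equivalence.
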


\begin{proof}
    Clearly, $\Gamma_\mathrm{nd}[\CO]|_{\tcO}\subset\Gamma_\mathrm{nd}[\cC(\tcO)]$. Moreover, every classical, respectively quantum embedding $\epsilon':\cO\ra\cO'$ induces a classical, respectively quantum embedding $\epsilon'|_\tcO$ of $\tcO$ by restriction. It therefore follows from the definition of the state spaces $\Gamma_X[\CO]$ for $X=\mathrm{cl},\mathrm{qu}$ via such embeddings in Eq.~(\ref{eq: cl-qu-nd states}) that marginalisation of classical, respectively quantum states $\gamma\in\Gamma_X[\CO]$ defines classical, respectively quantum states on $\tcO$, that is, $\gamma|_\tcO\in\Gamma_X[\tCO]$.
\end{proof}

Differences between $\Gamma_\mathrm{X}[\cC(\tcO)]$ and $\Gamma_\mathrm{X}[\CO]|_\tcO$ in a physical theory reveal a mismatch with respect to subsystem embeddings. In turn, by demanding full consistency with subsystem embeddings, that is, $\Gamma_\mathrm{X}[\CO]|_{\tcO}=\Gamma_\mathrm{X}[\cC(\tcO)]$ for $X=\mathrm{cl},\mathrm{qu},\mathrm{nd}$, one may sometimes rule out states on certain subsystems, for instance, one may rule out non-quantum states \cite{PopescuRohrlich1994} on two-dimensional systems under consistency with quantum embeddings.\\

Next, we turn to correlations on orthogonality graphs in Eq.~(\ref{eq: cl-qu-nd correlations}). Specifically, given a subgraph $\tG\subset G$ we seek a notion of ``coarse-graining'' for correlations that corresponds with the restriction to subgraphs. Note first that in analogy with the case of states on observable algebras the marginal problems defined by $G$ and $\tG$ also differ in two opposing ways. On the one hand, $G$ generally admits more correlations than $\tG$ since it contains more vertices and thus correlations on $G$ encode more information than those on $\tG$. On the other hand, $G$ generally poses more constraints on the respective marginal problem than $\tG$ does. Consequently, in general $\QStab(G)\not\subset\QStab(\tG)$ and $\QStab(\tG)\not\subset\QStab(G)$, and similarly for the restriction to quantum and classical correlations.

Nevertheless, the following relations hold for vertex-restricted subgraphs $\tG=(\tV,\tE)\subset_V G=(V,E)$ defined by $\tV\subset V$ and $\tE=E|_\tV=\{(\tv,\tv')\in E\mid (\tv,\tv')\in\tV\}$, analogous to constraint-preserving embeddings between observable algebras in Lm.~\ref{lm: fully faithful embedding of states on OAs}.

\begin{lemma}\label{lm: embedding of correlations on G}
    Let $\tG,G$ be two orthogonality graphs with $\tG\subset_V G$. Then
    \begin{itemize}
        \item[(i)] $\QStab(\tG)\subset\QStab(G)$,
        \item[(ii)] $\Stab(\tG)\subset\Stab(G)$,
        \item[(iii)] $\Th(\tG)\subset\Th(G)$.\footnote{Here, we allow for realisations $\pi:G\ra\PH$ of $G=(V,E)$ with $\pi(v)\pi(v')=0$ even if $(v,v')\notin E$.}
    \end{itemize}
\end{lemma}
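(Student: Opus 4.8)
\emph{Plan.} The three inclusions should be read as inclusions via \emph{extension by zero}: a correlation $\tg$ on the vertex set $\tV$ is identified with the correlation $\gamma$ on $V$ given by $\gamma|_{\tV}=\tg$ and $\gamma(v)=0$ for $v\in V\setminus\tV$, and in each case the task is to show that $\gamma$ belongs to the relevant object attached to $G$. First I would record the one structural input about vertex-restricted subgraphs that all three parts use: since $\tE=E|_{\tV}$, a subset $S\subseteq\tV$ is a clique (resp.\ an independent set) of $\tG$ if and only if it is one of $G$; in particular $\rC\cap\tV$ is a clique of $\tG$ for every clique $\rC$ of $G$.

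For (i), given $\tg\in\QStab(\tG)$ I would take $\gamma$ to be its zero-extension and verify Eq.~(\ref{eq: qstab}) for $G$: for any clique $\rC$ of $G$, $\sum_{v\in\rC}\gamma(v)=\sum_{v\in\rC\cap\tV}\tg(v)\leq 1$ since $\rC\cap\tV$ is a clique of $\tG$; hence $\gamma\in\QStab(G)$. For (ii), I would write $\tg=\sum_j w_j I_j$ with $w_j\geq 0$, $\sum_j w_j=1$, and $I_j$ independent sets of $\tG$ (as $0/1$-vectors), using Eq.~(\ref{eq: stab}); each $I_j$, viewed inside $V$, is independent in $G$ by the structural input, so its zero-extension is an extreme point of $\Stab(G)$, and the same convex combination then shows the zero-extension of $\tg$ lies in $\Stab(G)$.

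For (iii), given $\tg_\rho\in\Th(\tG)$ realised by $\tpi:\tG\ra\PH$ and $\rho\in\SH$, the plan is to extend $\tpi$ to a realisation $\pi$ of $G$ that is evaluated on the \emph{same} state $\rho$ and vanishes on $V\setminus\tV$. The simplest choice is $\pi(v)=\tpi(v)$ for $v\in\tV$ and $\pi(v)=0$ for $v\in V\setminus\tV$; alternatively, to avoid the zero projection, one enlarges to $\cH\oplus\mc{K}$ with $\dim\mc{K}=|V\setminus\tV|$ and assigns mutually orthogonal rank-one projections on $\mc{K}$ to the new vertices, which is allowed because $\Th$ admits realisations of arbitrary Hilbert space dimension. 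In either case all orthogonality relations $(v,v')\in E$ hold --- pairs inside $\tV$ inherit them from $\tpi$ (using $\tE=E|_{\tV}$), and any pair involving a vertex of $V\setminus\tV$ is orthogonal by construction --- and no further property of $\pi$ is needed, since realisations entering $\Th(G)$ are not required to be faithful or constraint-preserving (per the convention fixed in the statement). Then $\tr[\pi(v)\rho]=\tg_\rho(v)$ for $v\in\tV$ and $\tr[\pi(v)\rho]=0$ otherwise, so the zero-extension of $\tg_\rho$ lies in $\Th(G)$.

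The whole argument is essentially bookkeeping once the ``extension by zero'' reading is made explicit. The one step I expect to require care is (iii): I must use exactly the weakened notion of realisation relevant to $\Th(G)$ --- neither faithful nor constraint-preserving, and of unrestricted dimension --- so that $\tpi$ can be extended across the additional vertices without violating orthogonality on the new edges and without being blocked by the non-orthogonality that a genuine constraint-preserving realisation would insist on.
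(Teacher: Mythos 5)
Your proposal is correct and follows essentially the same route as the paper: extension by zero, transfer of cliques/independent sets between $\tG$ and $G$ via $\tE=E|_{\tV}$, and for (iii) an enlargement of the Hilbert space assigning mutually orthogonal rank-one projections to the new vertices while keeping the state supported on the original subspace. If anything, you are slightly more explicit than the paper on (i) (observing that $\rC\cap\tV$ is a clique of $\tG$, which is the direction actually needed) and you correctly flag that (iii) relies on the weakened, non-constraint-preserving notion of realisation entering $\Th(G)$.
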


\begin{proof}
    (i) Since cliques in $\tG$ are cliques in $G$, we find $\QStab(\tG)\subset\QStab(G)$.

    (ii) Let $\tg\in\Stab(\tG)$,\footnote{Unlike before, here we write $\gamma\in\QStab(G)$, $\tg\in\QStab(\tG)$ for correlations on $G$, $\tG\subset G$, respectively.} that is, there exist weights $\rw_\tg(\tI)\in[0,1]$ over independent sets $\tI\in\cI(\tG)$ with $\sum_{\tI\in\cI(\tG)}\rw_\tg(\tI)=1$, $\tg(\tv)=\sum_{\tI\in\cI(\tG)} \rw_\tg(I)\tI(\tv)$ for all $\tv\in\tV$ and such that $\sum_{\tv\in\widetilde{\mC}}\sum_{I\in\cI(\tv)}\rw_\tg(\tI)\leq 1$ for all (maximal) cliques $\widetilde{\mC}\subset\tG$. But since $\cI(\tG)\subset\cI(G)$, we also have $\tg\in\Stab(G)$, hence, $\Stab(\tG)\subset\Stab(G)$.

    (ii) Let $\tg\in\Th(\tG)$, that is, there exists a realisation $\pi:\tG\ra\cP(\tH)$ and a quantum state $\trho\in\tSH$ such that $\tg(\tv)=\tr[\trho\tpi(\tv)]$ for all $\tv\in\tV$. Since $\tG\subset_V G$ with $\tE=E|_\tV$ we can extend $\tpi$ to a realisation $\pi:G\ra\PH$ with $\dim(\cH)=\dim(\tH)+|V|/|\tV|$, by mapping the vertices in $V/\tV$ onto mutually orthogonal rank-$1$ projections in the subspace orthogonal to $\tpi(\tV)$. Clearly, in this case $\tpi=\pi|_\tH$ as well as $\trho=\rho|_\tH$ under the canonical embedding $\tH\hra\cH$, hence, $\tg\in\Th(G)$.
\end{proof}

What about the (partial) consistency condition with respect to subsystem embeddings in Lm.~\ref{lm: embedding of correlations on G OA}? In particular, do correlations on $G$ marginalise to correlations on $\tG$ for subgraphs $\tG\subset G$, specifically for vertex-restricted subgraphs $\tG\subset_V G$?
 
Note first that there is no natural notion of marginalisation given merely the structure of an orthogonality graph.\footnote{Indeed, this requires a notion of coarse-graining of events, such as given by the event algebra of an observable algebra or by hyperedges in a hypergraph \cite{AcinFritzLeverrierSainz2015}.} Nevertheless, in the absence of a canonical marginalisation map, we may call any map $|_\tG:\QStab(G)\ra\QStab(\tG)$ a \emph{valid marginalisation map} if (i) it preserves convex decompositions, that is, $\gamma|_\tG=p\gamma_1|_\tG+(1-p)\gamma_2|_\tG$ for every $\gamma_1,\gamma_2,\gamma=p\gamma_1+(1-p)\gamma_2\in\QStab(G)$ and $p\in[0,1]$, and (ii) it acts trivially on $\QStab(\tG)$, that is, $(|_\tG)|_{\Stab(\tG)}=\mathrm{id}$. For non-disturbing and quantum states, such maps generally exist: let $\tG\subset_V G$ be a vertex-restricted subgraph of $G$, e.g. with $V=\tV\cup\{v\}$. Given a unital (quantum) realisation $\pi:G\ra\POone$, we obtain a unital (quantum) realisation $\tpi:\tG\ra\PO$ by setting $\tpi(v_\trC)=\pi(v)+\pi(v_\trC)$ for some choice of vertex $v_\trC\in\trC$ in every maximal clique $C=\trC\cup\{v\}\subset G$ (such that if $v_\trC\in\rC'=\trC\cup\{v\}$, then $v_{\trC'}=v_\trC$) and $\tpi=\pi$ otherwise. This is easily seen to define a valid marginalisation map.\footnote{Here, we assume that correlations are normalised, that is, we assume equality in Eq.~(\ref{eq: qstab}).} In particular, this construction applies to graph completions $G\subset_V \oG$.

However, there is generally no marginalisation map for classical correlations. The reason is that upon deleting a vertex $v\in V$ from $G$ there is generally no way to ``re-distribute" the weight $w_\gamma(I_v)$ with $I_v(v')=\delta_{vv'}$ of a correlation $\gamma\in\QStab(G)$ to a coarse-grained correlation in $\QStab(\tG)$. More precisely, note that $I\in\Stab(G)$ for every $I\in\cI(G)$. Hence, for independent sets $I\in\cI(G)$ with $I(\tv)=1$ for some $\tv\in\tV$, a natural way to define marginalisation is by restriction, that is, $I|_\tG\in\Stab(\tG)$ and thus $w_{\gamma|_\tG}(I|_\tG)=w_\gamma(I)$, where $I|_\tG(\tv)=I(\tv)$ for all $\tv\in\tG$. However, for $I_v$ with $V\ni v\notin\tV$ there is no natural way to assign a probability weight and we may thus assign it any element $I_v|_\tG\in\Stab(\tG)$. Yet, if $\gamma\in\Stab(G)$ is a classical correlation such that $w_\gamma(I_v)>0$, then $\sum_{I|_\tG\in\cI(\tG)}w_\gamma|_\tG(I|_\tG)=\sum_{I_v\neq I|_\tG\in\cI(\tG)}w_\gamma|_\tG(I|_\tG)+w_\gamma(I_v)=1+w_\gamma(I_v)>1$.

\begin{proposition}\label{prop: no marginalisation}
    There exists no marginalisation map $|_\tG:\Stab(G)\ra\Stab(\tG)$ for $\tG\subset G$. Consequently, $\Stab(G)$ is incompatible with subsystem embeddings, that is, $\Stab(G)|_\tG\neq\Stab(\tG)$, in particular, $\Gamma_\mathrm{cl}[\cC(\cO(\oG)]|_G\neq\Stab(G)$ for $\pi:G\ra\PO$.
\end{proposition}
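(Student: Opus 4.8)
The plan is to establish the non-existence of a marginalisation map by a direct counting argument, exactly along the lines sketched in the paragraph preceding the proposition, and then to leverage this to deduce the failure of compatibility with subsystem embeddings. First I would fix a concrete small example that makes the obstruction transparent: take $G=(V,E)$ with $V=\tV\cup\{v\}$ and pick a maximal clique $\rC\subset G$ containing $v$, so that $\trC=\rC\setminus\{v\}\subset\tG$ is a clique in $\tG$ but not a maximal one. Consider the independent set $I_v\in\cI(G)$ defined by $I_v(v')=\delta_{vv'}$, which is an extreme point of $\Stab(G)$, and the classical correlation $\gamma=I_v\in\Stab(G)$ itself. Any candidate marginalisation map $|_\tG$ must, by property (ii) (triviality on $\Stab(\tG)$) combined with property (i) (preservation of convex decompositions), send each vertex of $\cI(G)$ that already ``lives in'' $\tG$ to its restriction; but $I_v$ does not restrict to a point of $\Stab(\tG)$ that is consistent with the normalisation inherited from the maximal cliques of $\tG$. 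The key computation is the one already displayed in the text: if $w_\gamma(I_v)>0$ then summing the weights of $\gamma|_\tG$ over a maximal clique $\widetilde{\rC}\subset\tG$ with $\widetilde{\rC}\supset\trC$ forces $\sum_{I|_\tG\in\cI(\tG)}w_{\gamma|_\tG}(I|_\tG)\geq 1+w_\gamma(I_v)>1$, contradicting that $\gamma|_\tG$ is a (normalised) element of $\Stab(\tG)$.

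Second, I would argue that this is not an artefact of a bad choice of map: properties (i) and (ii) pin down the behaviour on the relevant extreme points. Concretely, every $\gamma\in\Stab(G)$ is a convex combination $\gamma=\sum_{I\in\cI(G)}w_\gamma(I)\,I$, and by (i) we get $\gamma|_\tG=\sum_{I}w_\gamma(I)\,(I|_\tG)$, while by (ii) each $I\in\cI(G)$ whose support already lies in $\tV$ satisfies $I|_\tG=I$. The only freedom is in assigning $I_v|_\tG$ for vertices $v\notin\tV$; but since $\Stab(\tG)\subset\R^{|\tV|}_+$ consists of convex combinations of (indicators of) independent sets of $\tG$, and since normalisation in each maximal clique of $\tG$ is an affine constraint that $\Stab(\tG)$ must satisfy, no choice of $I_v|_\tG$ can repair the $+w_\gamma(I_v)$ excess computed above. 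This closes the first assertion; the statement $\Stab(G)|_\tG\neq\Stab(\tG)$ is then immediate, since $\Stab(G)$ restricted (in the naive coordinate sense) contains over-normalised vectors that are not in $\Stab(\tG)$, and there is no remedial map.

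Third, for the final clause $\Gamma_\mathrm{cl}[\cC(\cO(\oG))]|_G\neq\Stab(G)$ I would specialise $\tG=G$, $\oG=G(\cO(\pi(V)))$ the completion with respect to a realisation $\pi:G\ra\PO$, and recall from Sec.~\ref{sec: basics of graph-theoretic approach} that $\Gamma_\mathrm{cl}[\CO]\subsetneq\Stab(\GO)$ in general because $\cI_\mathrm{Val}(\GO)\subsetneq\cI(\GO)$: a classical state on $\cO(\oG)$ is a convex combination of valuations (maximal independent sets hitting every maximal clique of $\oG$), whereas $\Stab(G)$ allows arbitrary independent sets of $G$, in particular ``sub-normalised'' ones like $I_v$ above that fail to select an element in the maximal clique $\rC$ of $\oG$ containing $\op(\rC)$. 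Marginalising a genuine classical state of $\cO(\oG)$ back to $G$ therefore lands strictly inside $\Stab(G)$, never surjecting onto it — this gives the strict inequality. I would phrase this as a corollary of Lm.~\ref{lm: optimal embedding} (which shows classical embeddings factor through valuations) together with the observation that $G\subset_V\oG$ is a vertex-restricted subgraph.

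The main obstacle I anticipate is making the ``no choice of $I_v|_\tG$ can repair it'' step fully rigorous without hand-waving: one has to be careful that properties (i)--(ii) really do determine $\gamma|_\tG$ on all of $\Stab(G)$ up to the single degree of freedom per deleted vertex, and that this degree of freedom genuinely cannot absorb the normalisation defect (as opposed to merely being inconvenient). The cleanest route is probably to test the putative map on the two correlations $\gamma_1=I_v$ and $\gamma_2=I_{v'}$ for a vertex $v'\in\trC$ adjacent to $v$, together with their midpoint, and show that (i) forces an inconsistency in the clique sum over $\widetilde{\rC}$ regardless of how $I_v|_\tG$ is chosen; the affine/normalisation constraints defining $\Stab(\tG)$ then do the rest. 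A secondary subtlety is the implicit normalisation convention (equality in Eq.~(\ref{eq: qstab})), which should be stated explicitly at the start of the proof so that the ``$>1$'' contradiction is meaningful.
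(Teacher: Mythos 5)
Your proposal is correct and follows essentially the same route as the paper, whose justification for this proposition is precisely the prose paragraph preceding it: the weight $\rw_\gamma(I_v)$ of the singleton independent set at a deleted vertex cannot be redistributed consistently with preservation of convex decompositions, triviality on $\Stab(\tG)$, and clique normalisation, and the final clause follows from $\cI_\Val(\oG)\subsetneq\cI(\oG)$ exactly as you argue via Lm.~\ref{lm: optimal embedding}. Your closing remark about making the normalisation convention (equality in Eq.~(\ref{eq: qstab})) explicit is well taken, since the paper's own computation relies on it implicitly.
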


\def\boxit#1{%
  \smash{\color{red}\fboxrule=1pt\relax\fboxsep=2pt\relax%
  \llap{\rlap{\fbox{\vphantom{0}\makebox[#1]{}}}~}}\ignorespaces
}

\begin{table}[htb!]
    \centering
    \begin{tabular}{c|cccc}
        \toprule
        states & $\bullet(\tO)\subset\bullet(\cO)$ & $\bullet(\cO)|_\tO\subset\bullet(\cO)$ & $\bullet(\cO)|_\tO\subset\bullet(\tO)$ & $\bullet(\tO)\subset\bullet(\cO)|_\tO$ \\
        \midrule
        $\bullet=\Gamma[\cC(\cdot)]$ & \checkmark & \checkmark & \checkmark & \checkmark \\
        $\bullet=\Gamma_\mathrm{qu}[\cC(\cdot)]$ & \checkmark & \checkmark & \checkmark & \checkmark \\
        $\bullet=\Gamma_\mathrm{cl}[\cC(\cdot)]$ & \checkmark & \boxit{6.2cm} \checkmark & \checkmark & \checkmark \\
        \bottomrule
        \toprule
        correlations & $\bullet(\tG)\subset\bullet(G)$ & $\bullet(G)|_\tG\subset\bullet(G)$& $\bullet(G)|_\tG\subset\bullet(\tG)$ & $\bullet(\tG)\subset\bullet(G)|_\tG$ \\
        \midrule
        $\bullet=\QStab(\cdot)$ & \checkmark & ? (\checkmark) & ? (\checkmark) & ? (\checkmark) \\
        $\bullet=\Th(\cdot)$ & \checkmark & ? (\checkmark) & ? (\checkmark) & ? (\checkmark) \\
        $\bullet=\Stab(\cdot)$ & \checkmark & %\boxit{3.2cm} 
        \boxit{6.2cm} \xmark & \xmark & \xmark \\
        \bottomrule
    \end{tabular}
    \caption{Compatibility of (top) states on observable (specifically, partial) algebras as opposed to (bottom) correlations on orthogonality graphs, with respect to constraint-preserving embeddings $\tO\subset_\mathrm{cp}\cO$, respectively vertex-restricted subgraphs $\tG\subset_V G$, as well as coarse-graining and subsystem embeddings. Contrary to classical states in Lm.~\ref{lm: embedding of correlations on G OA}, by Prop.~\ref{prop: no marginalisation} there is no notion of marginalisation and thus of subsystem embedding for classical correlations.}
        \label{tab: states under subsystem embeddings}
\end{table}

Prop.~\ref{prop: no marginalisation} is a consequence of the different normalisation condition imposed on classical correlations (in Eq.~(\ref{eq: stab})) as compared to the normalisation condition imposed on states on observable algebras. This shows that the respective notions are generally incomparable (see also App.~\ref{app: correction}). If we nevertheless want to compare correlations on orthogonality graphs $G$ with states on partial algebras $G\subset\POm$, we thus need to add additional information to the graph $G$ in the form of an identity. The obvious way to enforce normalisation is by restricting realisations of $G$ to be unital (see Def.~\ref{def: unital realisation}). Yet, while every orthogonality graph admits a realisation in some (quantum) partial algebra \cite{HeunenFritzReyes2014}, to determine whether it admits a unital realisation remains an open problem (see App.~\ref{app: On realisations}).

In summary, orthogonality graphs correspond with the event algebras of partial algebras if and only if they admit a unital realisation (see Tab.~\ref{tab: realisation vs orthomoprhism vs embedding}). While the existence problem of such realisations is notably harder than that of generic realisations (see App.~\ref{app: On realisations}), in Thm.~\ref{thm: n-colourable implies USI-C} we find that under the restriction to unital realisations the chromatic number of (the maximal extensions of) an orthogonality graph (see Def.~\ref{def: maximal extended OG}) is a complete invariant of KS contextuality. What is more, in Thm.~\ref{thm: RH conjecture} we prove that the chromatic number of (the extension of) an orthogonality graph with freely completable realisation (see Def.~\ref{def: faithful graph completion}) is a complete invariant of state-independent contextuality.

\subsection{Reduced context categories}\label{app: reduced context categories}

We finish this section by pointing out further redundancies in the presentation of observable algebras. More precisely, we will prove a number of reduction arguments that allow us to reduce a given observable algebra to a simpler one, which is Kochen-Specker contextual if and only if the original one is. By Thm.~\ref{thm: CNC for OAs}, this is the case as long as we do not change the constraints in Eq.~(\ref{eq: CNC}). By comparison with measurement scenarios and (hyper)graphs, and the context categories (the partial order of contexts) they generate as shown in App.~\ref{app: measurement scenarios vs OAs} and App.~\ref{app: orthogonality graphs vs OAs}, we will find that these frameworks generally contain redundant information from the perspective of KS contextuality.\\

\textbf{Downward generated context categories.}

\begin{definition}\label{def: max gen OAs}
    Let $\cO$ be an observable algebra with context category $\CO$. We call $\CO\supset\COdown:=\{\tC\in\CO\mid\exists C,C'\in\COm:\tC=C\cap C'\}$ the \emph{downward generated context category}, and $\cOdown \subset \cO$ the \emph{downward generated observable algebra of $\cO$.} 
\end{definition}

Note that $\COm\subset\COdown$ (pick $C'=C$ in Def.~\ref{def: max gen OAs}). Since a classical observable algebra with $\cO=L_\infty(\Lambda)$ for a measurable space $\Lambda$ has a single maximal context, this immediately implies that $\COdown=\CO$ for classical systems.

\begin{lemma}\label{lm: sparse equals full in QM}
    Let $\cO=\LHsa$ with $\dim(\One)=\dim(\cH)=d$. Then $\COdown=\CO$.
\end{lemma}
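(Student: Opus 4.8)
The inclusion $\COdown\subseteq\CO$ is immediate from Def.~\ref{def: max gen OAs}, so the whole content of the statement is the reverse inclusion: every commutative subalgebra $C\subset\LHsa$ can be written as $C_1\cap C_2$ for two maximal contexts $C_1,C_2\in\COm$. Fix $C\in\CO$ and let $p_1,\dots,p_k\in\PH$ be its minimal projections, so that $\sum_{i=1}^k p_i=\One$, $p_ip_j=\delta_{ij}p_i$, and $C=\mathrm{span}_\R\{p_1,\dots,p_k\}$; write $r_i=\dim(p_i)$ for their ranks (with the minimal dimension function $\dim(p)=\mathrm{rk}(p)$ on $\PH$).

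The plan is to build two maximal contexts refining $C$. First I would pick, for each $i$, an orthonormal basis $\{e^i_1,\dots,e^i_{r_i}\}$ of the range of $p_i$; the union over $i$ is an orthonormal basis of $\cH$, and the rank-$1$ projections $\kb{e^i_j}{e^i_j}$ generate a maximal context $C_1\in\COm$ with $C\subset C_1$, since $p_i=\sum_j\kb{e^i_j}{e^i_j}$. For the second context I would, whenever $r_i\geq 2$, replace $\{e^i_j\}_j$ inside the range of $p_i$ by an orthonormal basis $\{f^i_j\}_j$ mutually unbiased to it, e.g. the Fourier basis $f^i_a=\tfrac1{\sqrt{r_i}}\sum_b\omega_{r_i}^{ab}e^i_b$ with $\omega_{r_i}=e^{2\pi\mathrm{i}/r_i}$, while keeping $f^i_1=e^i_1$ when $r_i=1$. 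Again the union is an orthonormal basis of $\cH$, generating a maximal context $C_2\in\COm$ with $C\subset C_2$; hence $C\subseteq C_1\cap C_2$.

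The remaining step — and the only one requiring an argument — is $C_1\cap C_2\subseteq C$. Since $C_1\cap C_2$ is a finite-dimensional commutative von Neumann algebra it is spanned by its projections, so it suffices to show every projection $r\in C_1\cap C_2$ lies in $C$. Such an $r$ is diagonal in both the $\{e^i_j\}$- and the $\{f^i_j\}$-basis; as $p_i$ is diagonal in both bases as well, $r$ commutes with $p_i$, and $rp_i=p_irp_i$ is a coordinate projection with respect to both the $e^i$- and the $f^i$-basis of the range of $p_i$. If $r_i\geq 2$ and $rp_i=\sum_{b\in B}\kb{e^i_b}{e^i_b}=\sum_{a\in A}\kb{f^i_a}{f^i_a}$, then comparing traces gives $|A|=|B|=:m$, while for $b\in B$ (resp.\ $b\notin B$) the quantity $\bra{e^i_b}rp_i\ket{e^i_b}=\sum_{a\in A}|\bk{e^i_b}{f^i_a}|^2=m/r_i$ equals $1$ (resp.\ $0$); this forces $B\in\{\emptyset,\{1,\dots,r_i\}\}$, i.e.\ $rp_i\in\{0,p_i\}$, and the same conclusion is trivial for $r_i=1$. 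Therefore $r=\sum_i p_irp_i$ is a sum of a subset of $\{p_1,\dots,p_k\}$, so $r\in C$. This yields $C_1\cap C_2=C$, hence $C\in\COdown$; since $C\in\CO$ was arbitrary, $\COdown=\CO$. (The same construction covers the extreme cases: when $C=C_\One=\R\One$ one takes $C_1,C_2$ mutually unbiased bases of $\cH$, and when $C$ is already maximal all $r_i=1$ and $C_1=C_2=C$.)

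I expect the main obstacle to be purely presentational rather than mathematical: spelling out that the intersection of two von Neumann algebras is recovered from its projection lattice, and that the two refinements above are maximal in the sense of Def.~\ref{def: max OA}. The one genuinely substantive input is the elementary fact, established by the trace computation above, that a projection which is a coordinate projection of two mutually unbiased bases of an $r$-dimensional space must be $0$ or the identity.
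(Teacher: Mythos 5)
Your proof is correct and follows essentially the same route as the paper: write $C$ as the intersection of two maximal contexts obtained by refining each minimal projection $p_i$ in two different ways whose blockwise intersection is trivial. The paper merely asserts that such a pair of refinements exists, whereas you make this explicit via mutually unbiased bases and verify the key triviality claim with the trace computation — a welcome completion of the step the paper leaves implicit.
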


\begin{proof}
    Let $\tC\in\CO$ be a non-maximal context in $\CH$ with generating projections $\{\tp_1,\cdots,\tp_m\}$, that is, $\mathbbm{1}=\sum_{k=1}^m\tp_k$, $\sum_{k=1}^m d_k=d$ and $d_k>1$ for some $k$. Consequently, there exist maximal contexts $C,C'\in\COm$ generated by one-dimensional projections
    \begin{align*}
        \cP_1(C) &= \{p_{11},\cdots,p_{1d_1},p_{21},\cdots,p_{2d_2},p_{31},\cdots,p_{md_m}\}\; ,\\
        \cP_1(C') &= \{p'_{11},\cdots,p'_{1d_1},p'_{21},\cdots,p'_{2d_2},p'_{31},\cdots,p'_{md_m}\}\; ,
    \end{align*}
    with $\tp_k=\sum_{j=1}^{d_k} p_{kj} = \sum_{j=1}^{d_k} p'_{kj}$ $p_{ik} \neq p'_{jl}$ for all $k,l\in\{1,\cdots,m\}$ and such that the contexts $C_k=C(p_{k1},\cdots,p_{kd_k},\mathbbm{1})$, $C'_k=C(p'_{k1},\cdots,p'_{kd_k},\mathbbm{1})$ generated by the respective resolutions of $\tp_k$ satisfy $C_k\cap C'_k=C(\tp_k,\mathbbm{1})$ for all $k\in\{1,\cdots,m\}$. It follows that $C\cap C'=\tC$, hence, $\tC\in\COdown$. Since this holds for all $\tC\in\CO$, the result follows.
\end{proof}

Since every finite-dimensional quantum system is a direct sum of matrix algebras, the result holds for all finite-dimensional quantum systems. However, for general observable algebras, including more general subalgebras $\cO\subset\LHsa$ we have $\cOdown\subsetneq\cO$. Namely, if $\tC\in\CO$ is such that $\tC\subsetneq C\cap C'$ for all $C,C'\in\COm$ then $\tC\notin\COdown$.

\begin{lemma}\label{lm: reduction to max gen OAs}
    Let $\cO$ be an observable algebra with context category $\CO$. Then $\cO$ is Kochen-Specker contextual if and only if $\cOdown$ is Kochen-Specker contextual.
\end{lemma}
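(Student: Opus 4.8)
Both implications will follow from the context-connection characterisation in Thm.~\ref{thm: CNC for OAs}, once a little bookkeeping is in place. First note that $\cOdown$ is again finite-dimensional: any dimension function on $\cO$ restricts to $\cP(\cOdown)\subset\PO$, and the additivity requirement of Def.~\ref{def: dim} is inherited. Next, $\cOdown$ has the same maximal contexts as $\cO$, that is, $\cC_\mathrm{max}(\cOdown)=\COm$: by Def.~\ref{def: max gen OAs} one has $\COm\subset\COdown$ (take $C'=C$), while conversely every $\tC\in\COdown$ is a subcontext of some $C\in\COm$, so no element of $\COdown$ properly contains a maximal context of $\cO$.

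The crucial point is that passing from $\cO$ to $\cOdown$ leaves the pairwise intersections of maximal contexts untouched: for distinct $C,C'\in\COm$ one has $C\cap_{\cOdown}C'=C\cap_{\cO}C'$, since $C\cap_{\cO}C'$ is a commutative subalgebra lying in both $C$ and $C'$, it belongs to $\COdown$ by the very definition of the downward generated context category, and nothing larger lies in both $C$ and $C'$ already inside $\CO$. Now form the maximal extensions $\cO^*$ and $\cOdown^*$ via Lm.~\ref{lm: min extension to max observable algebra}, using the same dimension function. The maximal context $C^*$ replacing a given $C\in\COm$ depends only on $C$ and on the dimension function (Def.~\ref{def: max extension}(i),(ii)), so $\cO^*$ and $\cOdown^*$ have the very same maximal contexts; and by Def.~\ref{def: max extension}(iii) their pairwise intersections are $C^*\cap C'^*=C\cap C'$ in both cases. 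Hence $\cC_\mathrm{max}(\cO^*)$ and $\cC_\mathrm{max}(\cOdown^*)$ carry identical data of the kind entering Def.~\ref{def: context connection} and Def.~\ref{def: context cycle} --- the same sets $\mc{P}_1(C^*)$, the same shared subsets $\mc{P}_1(C^*\cap C'^*)$, hence the same context cycles and the same triviality constraints in Eq.~(\ref{eq: CNC}). A flat context connection therefore exists on $\cC_\mathrm{max}(\cO^*)$ if and only if one exists on $\cC_\mathrm{max}(\cOdown^*)$.

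To conclude, chain the equivalences: by Lm.~\ref{lm: KSNC inherits from maximality}, $\cO$ is KS noncontextual iff $\cO^*$ is, and $\cOdown$ iff $\cOdown^*$; by Thm.~\ref{thm: CNC for OAs}, $\cO^*$ (resp.\ $\cOdown^*$) is KS noncontextual iff its maximal extension --- which is itself, being already maximal --- admits a flat context connection; and these two conditions coincide by the previous paragraph. Hence $\cO$ is KS noncontextual iff $\cOdown$ is. (The implication ``$\cO$ KS noncontextual $\Rightarrow$ $\cOdown$ KS noncontextual'' has the one-line proof that a classical embedding $\epsilon:\cO\ra L_\infty(\Lambda)$ restricts along $\cOdown\hra\cO$ to a classical embedding of $\cOdown$; the real content is the converse, namely that deleting the contexts in $\CO\setminus\COdown$ forfeits no constraint.)

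I expect the only real difficulty to be this last point --- making sure no context cycle is lost or altered --- and the argument above handles it by the observation that Def.~\ref{def: context connection} and Def.~\ref{def: context cycle} refer exclusively to maximal contexts and their pairwise intersections, both of which have just been shown to be invariant under $\cO\mapsto\cOdown$ (and under the subsequent maximal extension). Once that reduction is isolated, the proof is entirely mechanical.
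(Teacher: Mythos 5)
Your proposal is correct and follows essentially the same route as the paper: reduce to the characterisation of Thm.~\ref{thm: CNC for OAs} and observe that context connections and context cycles involve only maximal contexts and their pairwise intersections, both of which are unchanged when passing from $\cO$ to $\cOdown$ (and commute with maximal extension). The paper merely compresses your explicit bookkeeping on maximal extensions into an appeal to Lm.~\ref{lm: KSNC inherits from maximality} and a footnote on condition (iii) of Def.~\ref{def: max extension}, so your version is just a more spelled-out rendering of the same argument.
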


\begin{proof}
    By Lm.~\ref{lm: KSNC inherits from maximality}, it is sufficient to establish the claim for $\cO$ maximal.\footnote{This follows from condition (iii) on maximal extensions in Def.~\ref{def: max extension} (see App.~\ref{app: non-maximal OAs}).\label{fn: constraint-preserving}}
    
    ``$\Rightarrow$'': Since $\COdown \subset \CO$, this direction is obvious.

    ``$\Leftarrow$'': Let $\gamma = (C_0,\cdots,C_{n-1})$ be a context cycle in $\CO$. Since, $C_{i+1i} = C_{i+1}\cap C_i\in\COdown$ for all $C_i,C_{i+1}\in\COm$ by definition, $\gamma$ is also a context cycle in $\COdown$. Hence, $\cO$ poses no additional constraints and so inherits KS noncontextuality from $\cOdown$.
\end{proof}

Lm.~\ref{lm: reduction to max gen OAs} highlights redundant structure (from the perspective of KS contextuality) in general observable algebras, e.g. those generated from measurement scenarios (see App.~\ref{app: measurement scenarios vs OAs}). A similar result holds for the other direction.\\

\textbf{Upward generated observable algebras.}

\begin{definition}\label{def: min gen OAs}
    Let $\cO$ be an observable algebra with context category $\CO$. We call $\CO\supset\COup:=\{\tC\in\CO \mid \exists C,C'\in\COmin:\tC=C\cup C'\}$ the \emph{upward generated context category}, and $\cOup \subset \cO$ the \emph{upward generated observable algebra of $\cO$.} 
\end{definition}

Minimal contexts are defined in the same way as maximal ones. However, for Def.~\ref{def: min gen OAs} not to be trivial, we need to exclude the least element in $C_\One$ in $\CO$.

In the extreme case, minimal contexts are generated by a single projection. Finitely generated classical and quantum observable algebras are of this form.

\begin{lemma}\label{lm: min gen = full in QM}
    Let $\cO=\cAsa\subset\LHsa$ with $\dim(\cH)=d$. Then $\COup=\CO$.
\end{lemma}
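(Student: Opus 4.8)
\textbf{Proof plan for Lm.~\ref{lm: min gen = full in QM}.}

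The claim is the ``upward'' counterpart of Lm.~\ref{lm: sparse equals full in QM}: for a finite-dimensional quantum observable algebra $\cO = \cAsa$ (where $\cA$ is a $*$-subalgebra of $\LH$), every context is the join of two minimal contexts. The plan is to argue that an arbitrary context $\tC \in \CO$ can be written as $C \vee C'$ for minimal contexts $C, C' \in \COmin$, where the join $C \vee C'$ denotes the smallest commutative subalgebra containing both. First I would observe that $\tC$, being a finite-dimensional commutative $*$-algebra, is generated by a single (non-degenerate for $\tC$) self-adjoint operator $O$, with spectral decomposition $O = \sum_{i=1}^{m}\sigma_i p_i$ and $m = |\tC| = |\cP_\mathrm{min}(\tC)|$. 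The minimal contexts below which $O$ sits correspond to the one-parameter subalgebras $C(p_i, \One)$ generated by a single spectral projection; but these are one-dimensional-projection-generated only in a Boolean sense — the point is that a \emph{minimal} context in the sense of Def.~\ref{def: min gen OAs} is one generated (over $C_\One$) by a single projection. So I would produce $O', O''$ such that $C(O')$ and $C(O'')$ are minimal (single-projection-generated, hence two-atom Boolean algebras if the projections are not the identity, or rather: generated minimally), and such that their join recovers $\tC$.

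Concretely, the key step: pick two projections $q, q' \in \tC$ (each generating a minimal context $C(q,\One)$ and $C(q',\One)$) that \emph{jointly refine} the full resolution $\{p_1,\dots,p_m\}$ of $\One$ in $\tC$, i.e. such that the common refinement of the partitions $\{q, \One - q\}$ and $\{q', \One - q'\}$ equals $\{p_1,\dots,p_m\}$. For $m \le 4$ this is immediate; for larger $m$ one needs more projections, so I would instead argue that the join of the minimal contexts $\{C(p_i, \One)\}_{i=1}^{m}$ already recovers $\tC$, and then note that Def.~\ref{def: min gen OAs} as stated only takes joins of \emph{two} minimal contexts — so the correct reading, and the one matching Lm.~\ref{lm: sparse equals full in QM}, is that we need $\tC = C \cup C'$ for \emph{some} pair. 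The way to get down to two is: since we work in a matrix algebra (or direct sum thereof), given any resolution $\{p_1,\dots,p_m\}$ of $\One$ into orthogonal projections, we can choose two self-adjoint operators $O', O''$ whose common refinement is exactly this resolution — e.g. take $O' = \sum_i \sigma'_i p_i$ and $O'' = \sum_i \sigma''_i p_i$ with $(\sigma'_i, \sigma''_i)$ pairwise distinct pairs but $\sigma'_i$ not all distinct and $\sigma''_i$ not all distinct, so that $C(O'), C(O'')$ are each strictly coarser than $\tC$ — but this does not make them \emph{minimal}. So the genuinely correct approach mirrors the proof of Lm.~\ref{lm: sparse equals full in QM}: build $\tC$ as the join of two minimal contexts by exhibiting, for each atom $p_k$ of $\tC$, a chain of minimal contexts whose pairwise joins climb up to $\tC$, and then observe that in a quantum algebra all the intermediate joins are themselves realized as $C \cup C'$ because one has enough projections available.

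I expect the main obstacle to be precisely this combinatorial/bookkeeping point: reconciling ``join of two minimal contexts'' (as literally written in Def.~\ref{def: min gen OAs}) with the fact that a context with $m$ atoms generically needs $\lceil \log_2 m \rceil$ single-projection contexts to generate it. The resolution is that in the quantum case the relevant statement is about $\COup$ being \emph{generated} by iterated joins of minimal contexts, and the lemma should be read as: the upward-closure under pairwise joins of $\COmin$ exhausts $\CO$. Once phrased this way, the proof is: every $\tC \in \CH$ with atoms $\{p_1,\dots,p_m\}$ is the join $\bigvee_{k} C(p_k,\One)$ of minimal contexts (each $C(p_k,\One)$ being generated over $C_\One$ by the single projection $p_k$), and this join lies in $\COup$ by induction on $m$ using that $C(p_1,\One) \vee \dots \vee C(p_{k},\One) \in \COup$ is again joined with the minimal context $C(p_{k+1},\One)$; since $\cO = \cAsa$ with $\cA$ a direct sum of matrix algebras, every such join is actually attained inside $\cO$, so $\tC \in \COup$, giving $\COup = \CO$. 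I would close by remarking that, as with Lm.~\ref{lm: reduction to max gen OAs}, the analogous KS-contextuality reduction $\cO$ KS contextual $\Leftrightarrow$ $\cOup$ KS contextual then follows, since passing to $\cOup$ removes no context overlaps and hence no triviality constraints in Eq.~(\ref{eq: CNC}).
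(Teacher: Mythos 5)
Your final argument is essentially the paper's: the proof given there consists of the single observation that every maximal context $C\in\CHm$ is generated by its one-dimensional projections $\cP_1(C)=\{p_1,\dots,p_d\}$, i.e.\ that $C$ is the join of the minimal contexts $C(p_i,\One)$ — which is exactly the induction you settle on (and the same reasoning covers non-maximal contexts via their atoms). The definitional worry you raise is genuine and is \emph{not} addressed by the paper: Def.~\ref{def: min gen OAs} as literally written permits only the join of two minimal contexts, and since the join of two single-projection contexts has at most four atoms, a maximal context in dimension $d\geq 5$ could not lie in $\COup$ under that reading; both the lemma and the paper's one-line proof implicitly assume the iterated-closure reading you adopt. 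Your intermediate detour through ``two projections that jointly refine the resolution'' is correctly abandoned, as it only works for contexts with at most four atoms.
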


\begin{proof}
    Clearly, every maximal context $C\in\CHm$ is generated by one-dimensional projections $\cP_1(C) = \{p_1,\cdots,p_d\}$ for $p_i\in\cP_1(\cH)$, $\sum_{i=1}^dp_i=\mathbbm{1}$ and $p_ip_j=\delta_{ij}$.
\end{proof}

Since (finitely generated) classical observable algebras arise as a special case with $\cAsa=\oplus_{k=1}^N M_1(\C)$, Lm.~\ref{lm: min gen = full in QM} applies to both the classical and quantum case.

For general observable algebras, minimal contexts are not generated by a single projection, and we have $\cOup\subsetneq\cO$. Namely, if $\tC\in\CO$ is such that $C\cup C'\subsetneq\tC$ for all $C,C'\in\COmin$ then $\tC\notin\COup$. Nevertheless, for what regards Kochen-Specker contextuality, we may always reduce to upward generated observable algebras.

\begin{lemma}\label{lm: reduction to min gen OAs}
    Let $\cO$ be an observable algebra with context category $\CO$. Then $\cO$ is Kochen-Specker contextual if and only if $\cOup$ is Kochen-Specker contextual.
\end{lemma}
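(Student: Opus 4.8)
The plan is to mirror the argument of Lm.~\ref{lm: reduction to max gen OAs} on the ``upward'' side, reducing as before to the case of maximal observable algebras via Lm.~\ref{lm: KSNC inherits from maximality} (using condition (iii) of Def.~\ref{def: max extension}, exactly as in the constraint-preserving reduction invoked in footnote~\ref{fn: constraint-preserving}). Once $\cO$ is maximal, the ``$\Rightarrow$'' direction is immediate since $\COup\subset\CO$, so that a classical embedding of $\cO$ restricts to one of $\cOup$. The content is in ``$\Leftarrow$'': assuming $\cOup$ is KS noncontextual, show $\cO$ is. By Thm.~\ref{thm: CNC for OAs} (equivalently Thm.~\ref{thm: KSNC = n-colourability}), it suffices to show that any context cycle $\gamma=(C_0,\cdots,C_{n-1})$ in $\COm$, together with its subcontexts $C_{(i+1)\cap i}=C_{i+1}\cap C_i$, is already ``present'' in $\cOup$ in the sense that the triviality constraint Eq.~(\ref{eq: CNC}) it imposes is among those of $\cOup$. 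Since $\cO$ and $\cOup$ are maximal, they share the same maximal contexts $\COm=\cC_\mathrm{max}(\cOup)$ and the same one-dimensional projections $\cP_1(\cO)=\cP_1(\cOup)$; a context connection on $\CO$ is just a collection of bijections on the $\cP_1(C)$'s, $C\in\COm$, subject to Eq.~(\ref{eq: noncontextual context connection}) on the intersections $C\cap C'$. So the only possible obstacle is that an intersection $C_{i+1}\cap C_i$, viewed inside $\cO$, might be strictly larger than its counterpart inside $\cOup$, thereby imposing \emph{more} identifications $l_{C'C}|_{\cP_1(C\cap C')}=\id$ in $\cO$ than in $\cOup$.

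The key step, then, is to verify that $\cP_1(C\cap C')$ is unchanged upon passing to $\cOup$ for maximal $C,C'\in\COm$ — i.e.\ that the one-dimensional projections lying in both $C$ and $C'$ are exactly those that survive in the upward-generated algebra. This is where I expect the main work: one must argue that every minimal context $C(\langle p\rangle)\subset\cO$ generated by a single one-dimensional projection $p$ is retained in $\cOup$ (indeed such $C(\langle p\rangle)$, for $p\in\cP_1(\cO)$, are the minimal contexts above $C_\One$, hence appear among the $C,C'$ in Def.~\ref{def: min gen OAs}, e.g.\ with $C'=C$), and that $p\in\cP_1(C\cap C')$ if and only if $C(\langle p\rangle)\subset C$ and $C(\langle p\rangle)\subset C'$. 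Since inclusion of contexts and membership of minimal projections in $\COm$ are both preserved by the (constraint-preserving) inclusion $\cOup\hookrightarrow\cO$, it follows that $\cP_1(C\cap C')$ computed in $\cOup$ agrees with that computed in $\cO$. Consequently, the noncontextuality constraints Eq.~(\ref{eq: noncontextual context connection}) on context connections, and thus the triviality constraints Eq.~(\ref{eq: CNC}) along every context cycle in $\COm$, are literally the same for $\cO$ and $\cOup$. Hence a flat context connection on $\cC(\cOup)$ restricts (on maximal contexts) to a flat context connection on $\CO$, and $\cO$ inherits KS noncontextuality from $\cOup$ by Thm.~\ref{thm: CNC for OAs}.

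A slightly more streamlined route, which I would actually write up, avoids even discussing $\cP_1(C\cap C')$ directly: by Thm.~\ref{thm: KSNC = n-colourability} it suffices to show $\cO$ admits a $d$-colouring whenever $\cOup$ does, where $d=\dim(\One)$. But a $d$-colouring is a map $\zeta:\cP_1(\cO)\ra[d]$ assigning distinct colours to the $d$ projections in each maximal context, and since $\cP_1(\cO)=\cP_1(\cOup)$ and $\COm=\cC_\mathrm{max}(\cOup)$, a $d$-colouring of $\cOup$ \emph{is} a $d$-colouring of $\cO$ verbatim. Thus $\cOup$ is $d$-colourable iff $\cO$ is, and the claim follows from Thm.~\ref{thm: KSNC = n-colourability} (with Lm.~\ref{lm: KSNC inherits from maximality} handling the reduction to maximality). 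The only thing to double-check carefully is the assertion $\cP_1(\cOup)=\cP_1(\cO)$ and $\COm=\cC_\mathrm{max}(\cOup)$ under Def.~\ref{def: min gen OAs}, in particular that no minimal projection and no maximal context is lost when discarding contexts that are not unions of minimal ones; this is the genuine, if small, obstacle, and it is resolved exactly as in the parenthetical remark after Def.~\ref{def: min gen OAs} together with Lm.~\ref{lm: min gen = full in QM}.

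\begin{proof}
    By Lm.~\ref{lm: KSNC inherits from maximality} (and condition (iii) in Def.~\ref{def: max extension}, cf. Fn.~\ref{fn: constraint-preserving}), it suffices to establish the claim for $\cO$ maximal.

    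``$\Rightarrow$'': Since $\COup\subset\CO$, a classical embedding $\epsilon:\cO\ra L_\infty(\Lambda)$ restricts to one of $\cOup$, hence this direction is obvious.

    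``$\Leftarrow$'': Since $\cO$ is maximal of dimension $\dim(\One)=d$, every maximal context $C\in\COm$ is generated by one-dimensional projections $p\in\cP_1(C)$, and each such $p$ generates a context $C(\langle p\rangle)=\R p+\R(\One-p)\in\COmin$; taking $C'=C$ in Def.~\ref{def: min gen OAs} shows $C(\langle p\rangle)\in\COup$ for all $p\in\cP_1(\cO)$, and more generally $\COm\subset\COup$ by taking unions of the $C(\langle p\rangle)$ for $p\in\cP_1(C)$. Consequently $\cP_1(\cOup)=\cP_1(\cO)$ and $\cC_\mathrm{max}(\cOup)=\COm$, and moreover for $p\in\cP_1(\cO)$ and $C\in\COm$ we have $p\in\cP_1(C)$ if and only if $C(\langle p\rangle)\subset C$; since inclusions of contexts are preserved by the (constraint-preserving) inclusion $\cOup\hookrightarrow\cO$, this shows that $\cP_1(C\cap C')$ is the same whether computed in $\cOup$ or in $\cO$, for all $C,C'\in\COm$.

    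Now assume $\cOup$ is KS noncontextual. By Thm.~\ref{thm: KSNC = n-colourability}, $\cOup$ admits a $d$-colouring $\zeta:\cP_1(\cOup)\ra[d]$, that is, for every maximal context $C\in\cC_\mathrm{max}(\cOup)=\COm$ there exists exactly one $p\in\cP_1(C)$ with $\zeta(p)=k$, for every $k\in[d]$. Since $\cP_1(\cO)=\cP_1(\cOup)$ and $\COm=\cC_\mathrm{max}(\cOup)$, the same map $\zeta:\cP_1(\cO)\ra[d]$ is a $d$-colouring of $\cO$. Hence $\cO$ is KS noncontextual by Thm.~\ref{thm: KSNC = n-colourability}.
\end{proof}
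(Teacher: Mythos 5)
Your reduction to maximal $\cO$ and the ``$\Rightarrow$'' direction match the paper. The gap is in ``$\Leftarrow$'', at the claim $\cC_\mathrm{max}(\cOup)=\COm$ (equivalently $\COm\subset\COup$), on which both of your routes — the verbatim transfer of a $d$-colouring and the restriction of a flat context connection — entirely depend. Def.~\ref{def: min gen OAs} only admits unions of \emph{two} minimal contexts, and the remark immediately following it states explicitly that any $\tC$ with $C\cup C'\subsetneq\tC$ for all $C,C'\in\COmin$ is excluded from $\COup$. Since the commutative algebra generated by two projections has at most four atoms, a maximal context with $d\geq 5$ one-dimensional projections is \emph{not} in $\COup$; the maximal elements of $\cC(\cOup)$ are then proper subalgebras of the $C\in\COm$, so $\cP_1(\cO)=\cP_1(\cOup)$ (which is correct) does not give you $\cC_\mathrm{max}(\cOup)=\COm$, and a $d$-colouring of $\cOup$ (or of its maximal extension) need not assign all $d$ colours within each $C\in\COm$. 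Your appeal to Lm.~\ref{lm: min gen = full in QM} does not repair this: that lemma is specific to quantum algebras $\cAsa\subset\LHsa$, not general maximal observable algebras. Note also that if one instead reads Def.~\ref{def: min gen OAs} as permitting iterated unions, then for maximal $\cO$ every maximal context is generated by its $d$ minimal subcontexts, so $\cOup=\cO$ and the lemma becomes vacuous after the reduction to maximality — which cannot be the intended content.

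The paper's proof operates precisely in the regime your argument skips. Given a context cycle $(C_0,\cdots,C_{n-1})$ in $\CO$, it constructs a shadow cycle $(\tC_0,\cdots,\tC_{n-1})$ in $\COup$ with $\tC_i=\tC_{(i+1)i}\cup\tC_{i(i-1)}\subset C_i$ allowed to be a \emph{proper} subcontext, and then argues that the projections distinguishing $C_i$ from $\tC_i$ are unrelated to projections outside $\tC_i$, so that $\cO$ is a (non-maximal) extension of $\cOup$ to which the lifting argument of Lm.~\ref{lm: KSNC inherits from maximality} applies. To salvage your write-up you would need to either prove $\COm\subset\COup$ under the paper's pairwise definition (false for $d\geq 5$) or supply an extension/lifting step of the paper's type that handles $\tC_i\subsetneq C_i$.
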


\addtocounter{footnote}{-1}

\begin{proof}
    By Lm.~\ref{lm: KSNC inherits from maximality}, it is again sufficient to establish the claim for $\cO$ maximal.$^{\ref{fn: constraint-preserving}}$

    ``$\Rightarrow$'': Since $\COup\subset\CO$, this direction is obvious.

    ``$\Leftarrow$'': Let $\gamma=(C_0,\cdots,C_{n-1})$ be a context cycle in $\CO$ with subcontexts $C_{i+1i}=C_{i+1}\cap C_i$ for all $i\in\zz_n$. By Def.~\ref{def: min gen OAs}, there exist contexts $C_{i+1i}\supset\tC_{i+1i}\in\COup$ and $C_i\supset\tC_i=\tC_{i+1i}\cup\tC_{ii-1}\in\COup$ such that
    $\tg=(\tC_0,\cdots,\tC_{n-1})$ is a context cycle in $\COup$. Moreover, Def.~\ref{def: min gen OAs} implies that the contexts $C_i\in\CO$ and $\tC_i\in\COup$ (respectively, $C_{i+1i}\in\CO$ and $\tC_{i+1i}\in\COup$) are distinguished by additional projections which are not related to projections other than in $\tC_i$ (respectively, $\tC_{i+1i}$). Consequently, $\cO$ is a (non-maximal) extension of $\cOup$ (similar to Def.~\ref{def: max extension}), and the result thus follows by analogous arguments to those in the proof of Lm.~\ref{lm: KSNC inherits from maximality}.
\end{proof}

Of course, we can combine the reductions in Def.~\ref{def: max gen OAs} and Def.~\ref{def: min gen OAs}.

\begin{lemma}\label{lm: min max reduction}
    Let $\cO$ be a finitely generated observable algebra with context category $\CO$. Then $((\cOup)_\cap)_\cup = (\cOup)_\cap = (\cOdown)_\cup = ((\cOdown)_\cup)_\cap$.
\end{lemma}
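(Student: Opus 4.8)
The plan is to reduce everything to two "idempotence" statements and one "commutation" statement for the operators $(-)_\cap$ and $(-)_\cup$ on finite-dimensional observable algebras, and to read off the chain of equalities from these. First I would establish idempotence of the downward closure: $(\cOdown)_\cap = \cOdown$. This is immediate from the definition (Def.~\ref{def: max gen OAs}) once one checks that $\cOdown$ and $\cO$ have the same maximal contexts, i.e.\ $\cC_\mathrm{max}(\cOdown) = \COm$; indeed every $C \in \COm$ lies in $\COdown$ (take $C' = C$), and no strictly larger context can appear because $\COdown \subset \CO$. Hence $(\cC(\cOdown))_\cap = \{\tC \mid \exists C,C' \in \cC_\mathrm{max}(\cOdown):\ \tC = C \cap C'\} = \{\tC \mid \exists C,C' \in \COm:\ \tC = C \cap C'\} = \COdown$. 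Dually, $(\cOup)_\cup = \cOup$, using that $\cC_\mathrm{min}(\cOup) = \COmin$ (with the least element $C_\One$ excluded, as stipulated after Def.~\ref{def: min gen OAs}): every minimal context of $\cO$ survives in $\cOup$, and no strictly smaller one appears since $\cC(\cOup) \subset \CO$.

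Next I would prove the commutation relation $(\cOup)_\cap = (\cOdown)_\cup$, which is the substantive content. Both sides should equal the "doubly reduced" algebra whose context category consists of all $\tC \in \CO$ of the form $\tC = (C_1 \cap C_1') \cup (C_2 \cap C_2')$ (plus the maximal and minimal contexts themselves, absorbed as degenerate cases). For $(\cOdown)_\cup$: its minimal contexts are the minimal elements of $\COdown$, which are intersections $C \cap C'$ of maximal contexts of $\cO$ that are minimal among such intersections; then $(\cOdown)_\cup$ collects unions of pairs of these. For $(\cOup)_\cap$: its maximal contexts are the maximal elements of $\cOup$; but since $\cOup$ retains all maximal contexts of $\cO$ (each $C\in\COm$ is trivially a union $C \cup C$ of minimal... wait — one must be careful here: a maximal context of $\cO$ is a union of minimal contexts only if $\cO$ is finite-dimensional, which is why the hypothesis "finitely generated" — hence finite-dimensional modulo the remarks in App.~\ref{app: non-maximal OAs} — is invoked). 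Granting $\cC_\mathrm{max}(\cOup) = \COm$, the downward closure $(\cOup)_\cap$ collects intersections $C \cap C'$ for $C,C' \in \COm$, but now these intersections are taken \emph{inside} $\cOup$, so only the sub-projections that survived in $\cOup$ remain. The key lemma to verify is that this surviving structure coincides with $(\cOdown)_\cup$: an intersection of two maximal contexts, truncated to the projections lying in some union of two minimal contexts, is the same as a union of two minimal intersection-contexts. I expect to argue this exactly as in the proofs of Lm.~\ref{lm: reduction to max gen OAs} and Lm.~\ref{lm: reduction to min gen OAs} — namely that passing between $\cO$ and $\cOdown$, or between $\cO$ and $\cOup$, only adds or removes projections that are not orthogonally related to anything outside their own context, so these operations do not interfere with one another and the order in which they are applied is immaterial.

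With these three facts in hand the chain closes mechanically. From $(\cOdown)_\cup$ we apply $(-)_\cap$; since $(\cOdown)_\cup = (\cOup)_\cap$ and $(-)_\cap$ is idempotent on any downward-closed-in-the-relevant-sense algebra (here $(\cOup)_\cap$ is such), we get $((\cOdown)_\cup)_\cap = (\cOup)_\cap$. Symmetrically, $((\cOup)_\cap)_\cup = (\cOdown)_\cup = (\cOup)_\cap$, where the last step again uses idempotence (now of $(-)_\cup$, applied to $(\cOdown)_\cup$, which is upward-closed in the relevant sense). Collecting: $((\cOup)_\cap)_\cup = (\cOup)_\cap = (\cOdown)_\cup = ((\cOdown)_\cup)_\cap$, which is precisely the asserted string of equalities.

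The main obstacle will be the commutation step $(\cOup)_\cap = (\cOdown)_\cup$: one must check carefully that truncating to minimal contexts (which shrinks each context "from within") and restricting to pairwise intersections of maximal contexts (which discards contexts "from outside") genuinely commute at the level of the projection lattices, and in particular that no intersection $C \cap C'$ of maximal contexts fails to be expressible as a union of minimal intersection-contexts once one is inside $\cOup$. The finite-dimensionality hypothesis — that every maximal context is generated by $\dim(\One)$ minimal projections, hence is the join of minimal contexts — is what makes this work, and I would flag explicitly where it is used (it fails for the pathological non-finite-dimensional examples of App.~\ref{app: On realisations}). The remaining bookkeeping — that $\cC_\mathrm{max}$ and $\cC_\mathrm{min}$ are preserved under the respective closures, and the idempotence claims — is routine and parallels the earlier lemmas.
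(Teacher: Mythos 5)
Your overall decomposition is valid and genuinely different from the paper's. The paper proves the two outer equalities $((\cOup)_\cap)_\cup=(\cOup)_\cap$ and $((\cOdown)_\cup)_\cap=(\cOdown)_\cup$ directly (one inclusion is trivial; the other is shown by replacing the maximal contexts $\tC,\tC'$ appearing in an intersection by the maximal elements $\ttC\subset\tC$, $\ttC'\subset\tC'$ of the reduced algebra and checking $\ttC\cap\ttC'=\tC\cap\tC'$), and only then establishes the middle commutation $(\cOdown)_\cup=(\cOup)_\cap$. You instead prove the simpler single-operation idempotence statements $(\cX_\cap)_\cap=\cX_\cap$ and $(\cX_\cup)_\cup=\cX_\cup$ (via $\cC_\mathrm{max}(\cX_\cap)=\cC_\mathrm{max}(\cX)$ and $\cC_\mathrm{min}(\cX_\cup)=\cC_\mathrm{min}(\cX)$, which are correct and generic) and then derive the outer equalities formally from idempotence plus commutation. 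That bookkeeping is clean and arguably tidier than the paper's; both routes funnel everything into the same substantive claim, namely $(\cOup)_\cap=(\cOdown)_\cup$.

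On that commutation step your proposal is not yet a proof, and your framing diverges from the paper's in a way you should be wary of. You reduce it to the claim $\cC_\mathrm{max}(\cOup)=\COm$, i.e.\ that every maximal context survives intact in $\cOup$. Under Def.~\ref{def: min gen OAs} as written (unions of \emph{pairs} of minimal contexts) this fails already for a maximal context generated by four or more minimal projections whose minimal subcontexts are each generated by a single projection; the paper's proof deliberately avoids this by working with the maximal element $\ttC\in\cC(\cOup)$ \emph{contained in} each $\tC\in\COm$, which may be a proper subalgebra, and showing instead that $\ttC\cap C=\tC\cap C$ for all $C\in\cC(\cOup)$ — i.e.\ that the projections discarded in passing from $\tC$ to $\ttC$ do not participate in any cross-context intersection. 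You do identify exactly this mechanism ("only adds or removes projections that are not orthogonally related to anything outside their own context") and correctly name it as the key lemma, but you defer its verification to an analogy with Lm.~\ref{lm: reduction to max gen OAs} and Lm.~\ref{lm: reduction to min gen OAs} rather than carrying it out. To complete the argument you should drop the claim $\cC_\mathrm{max}(\cOup)=\COm$, replace it with the paper's statement about the maximal elements of $\cC(\cOup)$ inside each $\tC\in\COm$, and prove the intersection-invariance $\ttC\cap C=\tC\cap C$ explicitly; everything else in your write-up then goes through.
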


\begin{proof}
    Clearly, $((\cOup)_\cap)_\cup\subset(\cOup)_\cap$. Conversely, let $C\in\cC((\cOup)_\cap)$, then there exist two maximal contexts $\tC,\tC'\in\cC_{\max}((\cOup)_\cap)$ such that $C=\tC\cap\tC'$. We will show that there also exist $\ttC,\ttC'\in((\cOup)_\cap)_\cup$ such that $C=\ttC\cap\ttC'$. Let $\tC\supset\ttC\in ((\cOup)_\cap)_\cup$ and $\tC'\supset\ttC'\in ((\cOup)_\cap)_\cup$ be maximal. Since $\ttC\cap C=\tC\cap C$ (similarly, $\ttC'\cap C=\tC'\cap C$) for all $C\in((\cOup)_\cap)_\cup$ (unless $\tC\supset C \supset\ttC$, but then $C\notin((\cOup)_\cap)_\cup$) this implies $\ttC\cap\ttC'=\ttC\cap\tC'(=\tC\cap\ttC')=\tC\cap\tC'=C$. Consequently, $C\in((\cOup)_\cap)_\cup$ which implies $(\cOup)_\cap\subset((\cOup)_\cap)_\cup$ and thus $(\cOup)_\cap = ((\cOup)_\cap)_\cup$. Similarly, we find $((\cOdown)_\cup)_\cap=(\cOdown)_\cup$.
    
    Finally, for $\tC\in\cO$ and $\tC\supset\ttC\in\cOup$ maximal such that $\ttC=D\cup E$ for $D,E\in\cO$, we also have $\ttC\cap C=\tC\cap C$ for all $C\in\cOup$ (unless $\tC\supset C\supset\ttC$, but then $C\notin\cOup$). It follows that the two operations commute, hence, $(\cOdown)_\cup=(\cOup)_\cap$.
\end{proof}

After having reduced our observable algebra by means of Lm.~\ref{lm: reduction to max gen OAs} and Lm.~\ref{lm: reduction to min gen OAs}, we may further coarse-grain minimal generating sets as follows.

\begin{lemma}\label{lm: reduction to coarse-grained OA}
    Let $\cO$ be an observable algebra, $\PO$ its event algebra and  $\COmin$ the set of minimal contexts. Then $\CO$ is isomorphic to the context category generated by $\cP=\{\mathbbm{1}\}\cup \{p_C\}_{C\in\COmin}$ with $p_C=\sum_i p^C_i$ and $\{\mathbbm{1},p^C_i\}_i$ a minimal generating set of $C$,
    \begin{align}\label{eq: isomorphic context categories under coarse-graining}
        \CO=\cC(\PO)\cong\cC(\cP)\; .    
    \end{align}
    In particular, $\cO$ is KS noncontextual if and only if $\cOr=\cO(\cP)$ is KS noncontextual.
\end{lemma}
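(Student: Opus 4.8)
\textbf{Proof proposal for Lm.~\ref{lm: reduction to coarse-grained OA}.}

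The plan is to construct an explicit order-isomorphism between the context categories $\cC(\PO)$ and $\cC(\cP)$ and then invoke Lm.~\ref{lm: embeddings vs orthomorphisms vs order-preserving maps} together with Thm.~\ref{thm: CNC for OAs} to transfer KS (non)contextuality. First I would observe that, for a minimal context $C\in\COmin$ with minimal generating set $\{\One,p^C_i\}_i$, the single event $p_C=\sum_i p^C_i$ generates (together with $\One$) a two-element Boolean subalgebra $\cB(C(p_C,\One))\subset\BC$; the assignment $C\mapsto C(p_C,\One)$ is a bijection between $\COmin$ and the atoms of $\cC(\cP)$ above $C_\One$, since distinct minimal contexts correspond to distinct minimal generating events and hence to distinct projections $p_C$ (here I would use that $\cO$ has been reduced so that minimal contexts are genuinely distinct, i.e. that the $p_C$ are pairwise non-equal and share no nontrivial orthogonality relations beyond those in their own context). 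Extending upward by closure under addition, multiplication and scalar multiplication, every context $C'\in\CO$ — being a join of minimal contexts it contains, cf. the upward-generation arguments in Lm.~\ref{lm: reduction to min gen OAs} — is sent to the context $\bigvee_{C\in\COmin,\,C\subset C'} C(p_C,\One)\in\cC(\cP)$, and this map preserves inclusions in both directions.

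The key steps, in order: (1) show the correspondence $C\mapsto C(p_C,\One)$ is well-defined and injective on $\COmin$; (2) show that the induced map on all of $\CO$, defined by sending $C'$ to the context generated by $\{p_C\mid C\in\COmin,\ C\subset C'\}$, is an order-isomorphism $\eta:\CO\xrightarrow{\sim}\cC(\cP)$ — surjectivity follows because every context in $\cC(\cP)$ is by definition generated by a subset of the $p_C$'s, and order-preservation in both directions follows because the overlap $C_1\cap C_2$ of two contexts is precisely the join of the minimal contexts contained in both, which is preserved; (3) conclude $\CO=\cC(\PO)\cong\cC(\cP)$, the first equality being just the definition of the event algebra $\PO$ as generating $\CO$; (4) apply Lm.~\ref{lm: embeddings vs orthomorphisms vs order-preserving maps}: a faithful order-preserving map (here, the identity transported across $\eta$) witnesses that classical embeddings of $\cO$ correspond to classical embeddings of $\cOr=\cO(\cP)$, so $\cO$ is KS noncontextual iff $\cOr$ is; alternatively, since isomorphic context categories have identical context cycles and context connections, the triviality constraints in Eq.~(\ref{eq: CNC}) of Thm.~\ref{thm: CNC for OAs} are satisfiable for one iff for the other.

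The main obstacle I expect is step (2), specifically checking that the coarse-graining $p^C_i\mapsto p_C=\sum_i p^C_i$ does not accidentally identify two distinct sub-contexts of $\CO$ or create spurious orthogonality relations in $\cC(\cP)$ that would make $\eta$ fail to be injective or fail to reflect order. Concretely, one must verify that replacing the full minimal generating set of each minimal context by a single event neither merges two minimal contexts (which would happen only if $p_C=p_{C'}$ for $C\neq C'$, excluded after the reductions of Lm.~\ref{lm: reduction to min gen OAs}) nor collapses an overlap $C_1\cap C_2$ to something strictly larger or smaller; the latter requires showing that $C_1\cap C_2$ is generated exactly by those $p_C$ with $C\subset C_1$ and $C\subset C_2$, which is where I would lean on the downward/upward generation structure (Lm.~\ref{lm: reduction to max gen OAs}, Lm.~\ref{lm: reduction to min gen OAs}, Lm.~\ref{lm: min max reduction}) guaranteeing that all order relations in $\CO$ are already captured by overlaps of maximal contexts and joins of minimal ones. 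Once this combinatorial bookkeeping is in place, the transfer of KS noncontextuality is immediate from the cited results.
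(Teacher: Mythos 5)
Your proposal is correct and follows essentially the same route as the paper's (much terser) proof: both arguments rest on the minimality of the contexts in $\COmin$ to conclude that coarse-graining each minimal generating set $\{p^C_i\}_i$ to the single event $p_C$ leaves the partial order of contexts unchanged, and then invoke Thm.~\ref{thm: CNC for OAs} (equivalently Lm.~\ref{lm: embeddings vs orthomorphisms vs order-preserving maps}) to transfer KS (non)contextuality, since it is a property of $\CO$ alone. Your step (2), where you verify that overlaps and joins of contexts are preserved via the upward/downward generation structure, is exactly the bookkeeping the paper compresses into the single remark that $C\subset C(\{p^C_i\})$ in $\cC(\cO^\circ)$ forces $C=C(\{p^C_i\})$.
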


\begin{proof}
    Recall that $\POm$ is a (not necessarily minimal) generating set for $\cO$ (see App.~\ref{app: orthogonality graphs vs OAs}). Hence, Eq.~(\ref{eq: isomorphic context categories under coarse-graining}) holds by minimality of $\COmin$, that is, $\cC(\cO^\circ)\ni C\subset C(\{p^C_i\})$ implies $C=C(\{p^C_i\})$. But then the second statement follows since, by Thm.~\ref{thm: CNC for OAs}, KS noncontextuality is a property of the partial order of contexts only.
\end{proof}

Combining the arguments in this section, we obtain a more economical representation of ``contextuality scenarios" \cite{FritzChaves2013} represented by generic context categories.

\begin{theorem}\label{thm: optimal representation}
    Let $\cO$ be an observable algebra with partial order of contexts $\CO$. Then $\cO$ is KS noncontextual if and only if $((\cO^\circ_\cap)_\cup)_\mathrm{cg}=((\cO^\circ_\cup)_\cap)_\mathrm{cg}$ is KS noncontextual.
\end{theorem}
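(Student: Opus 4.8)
The plan is to assemble this statement purely from the four reduction lemmata proved in this appendix, without reopening the characterisation in Thm.~\ref{thm: CNC for OAs}. The key observation is that each of the operations $\cO\mapsto\cOdown$, $\cO\mapsto\cOup$, $\cO\mapsto\cOr$ preserves Kochen-Specker (non)contextuality: this is exactly the content of Lm.~\ref{lm: reduction to max gen OAs}, Lm.~\ref{lm: reduction to min gen OAs} and Lm.~\ref{lm: reduction to coarse-grained OA}, respectively. Since ``$\cO$ is KS noncontextual'' is thus invariant under each step, it is invariant under any finite composition of them, and in particular under the two composite operations in the statement. The only genuine content beyond this bookkeeping is the equality $((\cO^\circ_\cap)_\cup)_\mathrm{cg}=((\cO^\circ_\cup)_\cap)_\mathrm{cg}$, which identifies the two normal forms; this is where Lm.~\ref{lm: min max reduction} does the work.

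\textbf{Key steps.} First I would invoke Lm.~\ref{lm: reduction to min gen OAs} to pass from $\cO$ to $\cOup$ (preserving KS contextuality), then Lm.~\ref{lm: reduction to max gen OAs} to pass from $\cOup$ to $(\cOup)_\cap=\cO^\circ_\cap{}_\cup$ (again preserving it), and finally Lm.~\ref{lm: reduction to coarse-grained OA} to pass to its coarse-graining $((\cO^\circ_\cap)_\cup)_\mathrm{cg}$; chaining the three ``if and only if''s gives that $\cO$ is KS noncontextual iff $((\cO^\circ_\cap)_\cup)_\mathrm{cg}$ is. (Here one uses implicitly that $\cOup$, $(\cOup)_\cap$ etc.\ are again observable algebras with a well-defined partial order of contexts, so that the lemmata apply; by Lm.~\ref{lm: KSNC inherits from maximality} we may also assume throughout that everything is maximal, which is how those lemmata were proved.) Symmetrically, Lm.~\ref{lm: reduction to max gen OAs} followed by Lm.~\ref{lm: reduction to min gen OAs} and then Lm.~\ref{lm: reduction to coarse-grained OA} shows that $\cO$ is KS noncontextual iff $((\cO^\circ_\cup)_\cap)_\mathrm{cg}$ is. The final step is to note that the two resulting algebras actually coincide: Lm.~\ref{lm: min max reduction} gives $(\cOdown)_\cup=(\cOup)_\cap$ on the nose (not merely ``KS-equivalently''), and since the coarse-graining operation $\cdot_\mathrm{cg}$ depends only on the set of minimal contexts and their minimal generating sets, which are the same for the two algebras, applying $\cdot_\mathrm{cg}$ to both sides yields $((\cO^\circ_\cup)_\cap)_\mathrm{cg}=((\cO^\circ_\cap)_\cup)_\mathrm{cg}$. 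This makes both chains of equivalences land at the same object, completing the proof.

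\textbf{Main obstacle.} The routine part is the chaining of equivalences; the one place that needs care is verifying that Lm.~\ref{lm: min max reduction} is being applied with the right bracketing. That lemma states $((\cOup)_\cap)_\cup=(\cOup)_\cap=(\cOdown)_\cup=((\cOdown)_\cup)_\cap$, i.e.\ once one has formed $(\cOup)_\cap$ the operations stabilise, and moreover $(\cOdown)_\cup=(\cOup)_\cap$. So in the first chain, after Lm.~\ref{lm: reduction to min gen OAs} and Lm.~\ref{lm: reduction to max gen OAs} one is at $(\cOup)_\cap$, and in the second, after Lm.~\ref{lm: reduction to max gen OAs} and Lm.~\ref{lm: reduction to min gen OAs}, one is at $(\cOdown)_\cup$; these are equal by Lm.~\ref{lm: min max reduction}, and it only remains to apply $\cdot_\mathrm{cg}$. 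The subtle point, which I would spell out, is that $\cdot_\mathrm{cg}$ is determined by $\COmin$ and the chosen minimal generating sets $\{\one,p^C_i\}_i$ of each minimal context $C$; since $(\cOdown)_\cup$ and $(\cOup)_\cap$ are literally the same observable algebra, they have the same minimal contexts and the same minimal generating sets, so $\cdot_\mathrm{cg}$ produces the same algebra from either description. I do not expect any real difficulty beyond this, but it is worth remarking explicitly that all four operations land inside the class to which Lm.~\ref{lm: KSNC inherits from maximality} and Thm.~\ref{thm: CNC for OAs} apply, so that ``KS noncontextual'' is meaningful at each stage.
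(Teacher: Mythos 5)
Your chain of reductions reproduces the paper's argument for the $\cap$, $\cup$ and coarse-graining steps, but it silently drops the superscript $\circ$: the theorem is about $((\cO^\circ_\cap)_\cup)_\mathrm{cg}$, i.e.\ the operations are applied to the \emph{truncated} algebra $\rcO=\cO/\{\lambda\One\mid\lambda\in\R\}$, not to $\cO$ itself. Passing from $\cO$ to $\rcO$ removes the least element $C_\One$ from $\CO$ and hence discards all trivial context cycles (those with some overlap $C_i\cap C_{i+1}=C_\One$), which strictly weakens the triviality constraints of Eq.~(\ref{eq: CNC}). The equivalence ``$\cO$ is KS noncontextual iff $\rcO$ is'' is therefore not a formal consequence of the four reduction lemmata you cite; it is the content of Thm.~\ref{thm: KSNC of truncated OAs}, whose non-obvious direction is a separate inductive edge-adding argument on the graph of context overlaps. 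The paper's proof lists that theorem as its first ingredient, and your write-up never invokes it; the point where you write ``$(\cOup)_\cap=\cO^\circ_\cap{}_\cup$'' is exactly where the gap sits, since that identification needs both Lm.~\ref{lm: min max reduction} and the truncation equivalence.

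Apart from this omission, your route is essentially the paper's (one-line) proof spelled out: chaining Lm.~\ref{lm: reduction to min gen OAs}, Lm.~\ref{lm: reduction to max gen OAs} and Lm.~\ref{lm: reduction to coarse-grained OA}, and using Lm.~\ref{lm: min max reduction} to identify the two normal forms. Your observation that $\cdot_\mathrm{cg}$ is determined by the minimal contexts and their minimal generating sets, so that the two bracketings coarse-grain to literally the same algebra, is a correct and worthwhile elaboration. Inserting the step ``$\cO$ KS noncontextual $\Leftrightarrow$ $\rcO$ KS noncontextual'' (Thm.~\ref{thm: KSNC of truncated OAs}) at the head of each chain, and then applying the reduction lemmata to $\rcO$ rather than to $\cO$, repairs the argument.
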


\begin{proof}
    This follows from Thm.~\ref{thm: KSNC of truncated OAs}, together with Lm.~\ref{lm: reduction to max gen OAs}, Lm.~\ref{lm: reduction to min gen OAs}, Lm.~\ref{lm: min max reduction} and Lm.~\ref{lm: reduction to coarse-grained OA}.
\end{proof}

Thm.~\ref{thm: CNC for OAs} shows that the essence of Kochen-Specker contextuality lies in the partial order of contexts $\CO$ \cite{DoeringFrembs2019a,Frembs2024a}, and extracts the relevant information in terms of constraints imposed on context connections, and expressed along context cycles (see Eq.~(\ref{eq: CNC})). Thm.~\ref{thm: optimal representation} reduces the representation of $\CO$ with respect to these constraints, and thus highlights redundancies in the representation of partial orders of contexts generated from generic sets of observables or events as in Def.~\ref{def: measurement scenario}. Moreover, the reduction in Thm.~\ref{thm: optimal representation} is optimal in that removing further contexts will generally reduce the number of constraints imposed by KS noncontextuality.
\section{On realisations of orthogonality graphs}\label{app: On realisations}

Thm.~\ref{thm: n-colourability for non-maximal OAs} characterises Kochen-Specker contextuality of an observable algebra in terms of the chromatic number of the orthogonality graph defined by its minimal projections (see also Thm.~\ref{thm: n-colourable implies USI-C}, Thm.~\ref{thm: partial RH conjecture} and Thm.~\ref{thm: RH conjecture}). In turn, this suggests a way to search for contextual arrangements in quantum theory, namely in terms of (critical) $k$-colourable graphs.\footnote{An (undirected) graph is called critical if removing any vertex or edge reduces its chromatic number. For instance, the 5-cycle in Ref.~\cite{YuOh2012} is $3$-critical, whereas the orthogonality graph corresponding to the provably minimal KS set in dimension $d\geq 4$ in Ref.~\cite{CabelloEstebanzGarcia-Alcaine1997} is not $4$-critical.} More generally, the (hyper-)graph approach \cite{ChavesFritz2012,FritzChaves2013,CabelloSeveriniWinter2014,AcinFritzLeverrierSainz2015,AmaralCunha2018} identifies additional graph-theoretic invariants for the study of contextuality.

An application of interest is to use such graph-theoretic invariants for the construction of novel contextual arrangements. In turn, this poses the following question:

\begin{problem*}
    Which simple graphs are realisable in a finite-dimensional observable algebra?
\end{problem*}

Of most interest are orthogonality graphs that are quantum realisabile.

\begin{problem*}
    Which simple graphs admit a finite-dimensional quantum realisation?
\end{problem*}

In this section, we discuss these problems and pose a number of (to our knowledge) open questions for refinements of the type of (quantum) realisation considered. In App.~\ref{app: orthogonality graphs vs OAs}, we have seen that an orthogonality graph is the event algebra of an observable algebra if and only if it has a unital realisation. This first raises the question when a unital realisation exists. We will consider both the general and the quantum case in turn.\\

\textbf{Unital realisations in general observable algebras.} The existence problem for general observable algebras is tied to its (minimal) dimension. Indeed, we obtain an embedding of $G$ in a (finitely generated) finite-dimensional observable algebra if there exists a dimension function $\dim:V\ra\mathbb{N}$ for $G=\GO$ (cf. App.~\ref{app: non-maximal OAs}) such that,
\begin{align}\label{eq: dim function constraints}
    \sum_{v\in\rC}\dim(v)=\dim(\One)\; ,
\end{align}
for all maximal cliques $\rC\subset G$. We can write this as a homogeneous system of linear Diophantine equations in $|V|+1$ variables of the form $Ax=0$, where $A\in M_{|C|\times (|V|+1)}(\mathbb{N})$ is the matrix given for all maximal cliques $\rC\subset G$ by $A_{\rC,0}=d$, $A_{\rC,v}=-1$ if $v\in\rC$ and $A_{\rC,v}=0$ otherwise. The existence of solutions to Eq.~(\ref{eq: dim function constraints}) can thus be checked efficiently by computing the Smith normal form of $A$. If a solution exists, $G$ has a realisation $\pi:G\ra\PO$ with $\cO$ a finite-dimensional observable algebra.  Moreover, a minimal realisation simply corresponds with solutions for which $d$ is minimal.\\

\textbf{Unital quantum realisations.} Given a quantum realisation $\pi:G\ra\PH$ of an orthogonality graph $G$ we obtain another quantum realisation by unitary conjugation $\pi_U=U\pi U^*$ for any $U\in\UH$. Both realisations contain the same quantum states and we may thus identify such realisations. However, $G$ may also admit quantum realisations that are not identifiable in this way.

\begin{problem}\label{prob: inequivalent realisations}
    Given a simple graph $G$, characterise all inequivalent unital realisations $\pi:G\ra\PH$ in dimension $\dim(\cH)=d$.
\end{problem}

Furthermore, one may ask how realisations in different dimensions are related. For instance, given a realisation $\pi:G\ra\PH$ with $\dim(\cH)=d$ we obtain another realisation in $\dim(\cH')=2d$ by doubling the dimension of every subspace. However, beyond such simple constructions, the relationship is not clear and a given orthogonality graph may thus lead to genuinely different realisations. Fortunately, the distinction between inequivalent unital realisations bears no relevance for the study of KS contextuality.

\begin{proposition}\label{prop: different unital realisations are KS equivalent}
    Let $G$ be an orthogonality graph with two unital realisations $\pi:G\ra\PH$ and $\pi':G\ra\cP(\cH')$. Then $\cO=\cO(\pi(G))$ is Kochen-Specker contextual if and only if $\cO'=\cO(\pi'(G))$ is Kochen-Specker contextual.
\end{proposition}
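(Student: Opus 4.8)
(Sketch) The plan is to reduce both sides to one and the same realisation-independent statement — that $G$ is contextual in the sense of Def.~\ref{def: NCOG} — and then invoke Thm.~\ref{thm: KSNC = NCOG}. So first I would argue that, for any faithful unital quantum realisation $\pi:G\ra\PH$, the partial algebra $\cO=\cO(\pi(V))$ generated by $\pi(V)$ inside $\LHsa$ is finitely generated (as $V$ is finite), is itself a partial algebra (any three pairwise commuting elements of $\cO$ generate a common commutative subalgebra, so Specker's principle is inherited from $\LHsa$), and has $G$ as its \emph{associated} orthogonality graph, $G=\GO$ (Def.~\ref{def: associated orthogonality graph}).

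The last point is the one that genuinely uses unitality. By Def.~\ref{def: unital realisation} every maximal clique $\rC\subset G$ maps to a resolution $\sum_{v\in\rC}\pi(v)=\One$, hence generates a Boolean algebra $\PC\cong 2^{\rC}$ whose atoms are exactly $\{\pi(v)\mid v\in\rC\}$; conversely, by Specker's principle any maximal family of pairwise compatible minimal projections of $\cO$ lies in a common maximal context, and maximality together with unitality forces this context to be generated by one of the $\rC$'s. Thus $\POm=\pi(V)$, and the edge relation matches because $\pi$ is a realisation, $\pi(v)\pi(v')=0$ iff $(v,v')\in E$, which for distinct atoms of the $\PC$'s coincides with lying in a common context. (This is the ``easily seen'' correspondence between unital realisations and faithful, constraint-preserving orthomorphisms noted after Def.~\ref{def: unital realisation}.) With $G=\GO$ in hand, $\cO$ satisfies the hypotheses of Thm.~\ref{thm: KSNC = NCOG}, so $\cO$ is Kochen-Specker contextual if and only if $G$ is contextual, i.e.\ if and only if there is \emph{no} normalised $\gamma\in\Stab(G)$ with $0\neq\gamma(v)\neq\gamma(v')\neq 0$ for all $v\neq v'$ (Def.~\ref{def: NCOG}).

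The crucial observation is now that this last condition refers only to the graph $G$ — its vertices, edges, cliques, independent sets, and normalisation over maximal cliques — and to nothing about $\pi$ or the ambient Hilbert space $\cH$. Running the identical argument for the second realisation $\pi':G\ra\cP(\cH')$ shows that $\cO'=\cO(\pi'(V))$ is Kochen-Specker contextual if and only if $G$ is contextual, with the \emph{same} notion of ``$G$ contextual''. Chaining the two equivalences yields that $\cO$ is Kochen-Specker contextual precisely when $\cO'$ is, as claimed.

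I expect the only substantive work to be the verification in the second paragraph that $G=\GO$ for a unital realisation, in particular ruling out that $\cO(\pi(V))$ acquires extra minimal projections or extra maximal contexts not visible in $G$; this is exactly where unitality (resolutions of $\One$) and Specker's principle are both indispensable. Once that identification is secured, the rest is an immediate appeal to Thm.~\ref{thm: KSNC = NCOG} together with the fact that Def.~\ref{def: NCOG} is intrinsic to $G$. One could alternatively route through Thm.~\ref{thm: n-colourability for non-maximal OAs} ($\cO$ contextual iff $\chi(G(\cO^*))>\dim(\cH)$), but then one must also track that the dimension $d=\dim(\cH)$ may differ between the two realisations while $\chi(G^*)$ changes in step; working with the dimension-free Def.~\ref{def: NCOG} avoids this bookkeeping entirely.
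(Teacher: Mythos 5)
Your proposal is correct, but it takes a different route from the paper. The paper's proof is a one-liner: since both partial algebras are generated from the same graph $G$ by unital realisations, their partial orders of contexts are isomorphic, $\CO\cong\cC(\cO')$, and Kochen-Specker contextuality is a property of the context category alone (Thm.~\ref{thm: CNC for OAs}, equivalently Lm.~\ref{lm: embeddings vs orthomorphisms vs order-preserving maps}). You instead reduce both sides to the realisation-independent Def.~\ref{def: NCOG} via Thm.~\ref{thm: KSNC = NCOG}, which invokes the heavier correlation-theoretic machinery (separating classical states, the $\Stab$ polytope) rather than the purely order-theoretic characterisation. Both arguments hinge on exactly the same substantive identification, which you correctly isolate: that unitality forces $\pi(V)=\POm$ and hence that $G$, together with its maximal cliques as resolutions of $\One$, determines the event algebra and the context category up to isomorphism, independently of the ambient Hilbert space. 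Once that is secured, the paper transports a flat context connection (or its absence) directly between $\CO$ and $\cC(\cO')$, while you transport a separating normalised correlation on $G$; the former is the more economical appeal and avoids needing $\cO$ to satisfy the hypotheses of Thm.~\ref{thm: KSNC = NCOG} (finitely generated partial algebra), though in the setting of a finite graph with a quantum realisation those hypotheses hold, so your argument goes through. Your closing remark that routing through Thm.~\ref{thm: n-colourability for non-maximal OAs} would require tracking the dimension is apt, but note that unital realisations of the same $G$ in fact force $\dim(\cH)=\omega(G)$-compatible dimension functions agreeing on $\dim(\One)$, so that bookkeeping is lighter than you suggest.
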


\begin{proof}
    Since $\CO\cong\cC(\cO')$ this follows immediately from Thm.~\ref{thm: CNC for OAs}.
\end{proof}

Prop.~\ref{prop: different unital realisations are KS equivalent} holds for all inequivalent realisations (in all dimensions). Note also that we do not need to restrict to rank-$1$ projections as if often done. Using Prop.~\ref{prop: different unital realisations are KS equivalent}, we can thus reduce Problem \ref{prob: inequivalent realisations} to the mere existence problem (cf. Ref.~\cite{HeunenFritzReyes2014}).

\begin{problem}\label{prob: unital realisation}
    Which simple graphs admit a unital quantum realisation?
\end{problem}

Problem~\ref{prob: unital realisation} remains generally open. Even if $G$ has a unital realisation in a (finite-dimensional) observable algebra, this does not imply that $G$ has a unital quantum realisation; indeed, Lm.~\ref{lm: single state} in App.~\ref{app: OAs with separating states} provides an explicit counterexample.\\

\textbf{Freely completable quantum realisations.} For realisations with faithful completion Problem~\ref{prob: unital realisation} can be further simplified. Specifically, in order to apply the characterisation of SI-C sets in terms of the chromatic number of the orthogonality graphs of their faithful completion $\CG^*$, one is faced with the following:

\begin{problem}\label{prob: freely completable realisation}
    Which simple graphs admit a freely completable quantum realisation?
\end{problem}

In Ref.~\cite{HeunenFritzReyes2014} it was shown that every finite simple graph $G$ admits a (faithful) quantum realisation $\pi:G\ra\PH$ in a Hilbert space of dimension $\dim(\cH)<|V|$. Clearly, any realisation $\pi:G\ra\PO$ must also satisfy the lower bound $\omega(G)\leq\dim(\One)$, where $\omega(G)$ is the size of the maximal clique in $G$. Moreover, both bounds are tight in general since they coincide for complete graphs, that is, maximally connected graphs $G=(V,V\times V)$. It would therefore be interesting to see if the construction in Ref.~\cite{HeunenFritzReyes2014} can be lifted to provide an answer to Problem~\ref{prob: freely completable realisation}.\\

\textbf{Minimal quantum realisations and reduction to simple algebras.} Since Thm.~\ref{thm: n-colourability for non-maximal OAs}, Thm.~\ref{thm: partial RH conjecture} and Thm.~\ref{thm: RH conjecture} all depend on the dimension (of the realisation of an orthogonality graph), it is natural to ask for the minimal quantum realisation of a simple graph.\footnote{Still, the respective results hold irrespective of whether the realisation is minimal or not.} To this end, note that a simple graph $G$ may have a realisation with respect to a non-simple quantum algebra of the form $\pi_m:G\ra\bigoplus_{n=1}^{N(\pi)}\bigoplus_{i_n=1}^{I_n(\pi)} M_{ni_n}(\C)$ with $\dim(\cH)=\sum_{n=1}^{N(\pi)} nI_n(\pi)$ and where $I_n(\pi)\in\mathbb{N}$ counts multiplicities. This motivates to consider a fine-grained (partial) ordering on the space of matrix algebras,
\begin{align}\label{eq: algebra ordering}
    \cA\preceq\cB\quad :\Longleftrightarrow\quad\exists\phi:\cA\hookrightarrow\cB\; ,
\end{align}
where $\phi$ is an embedding, that is, an injective ($C^*$-)algebra homomorphism.\footnote{The order in Eq.~(\ref{eq: algebra ordering}) can be visualised using Bratelli diagrams for $\cA$ and $\cB$.} Clearly, Eq.~(\ref{eq: algebra ordering}) includes the natural embedding of simple algebras $M_m(\C)=\cA\preceq\cB=M_n(\C)$ if and only if $m\leq n$. Moreover, it implies an ordering between non-simple algebras acting on the same Hilbert space, e.g. $M_1(\C)\oplus M_2(\C)\prec M_3(\C)$. Finally, the ordering is partial, for instance, let $\cA=M_2(\C)\oplus M_2(\C)$ and $\cB=M_1(\C)\oplus M_3(\C)$, then neither $\cA\prec\cB$ nor $\cB\prec\cA$. We now define a minimal realisation in the following way.

\begin{definition}\label{def: minimal realisation}
    Let $G$ be a simple graph. Then a (quantum) realisation $\pi:G\ra\cP(\cA)$ with respect to a finite-dimensional matrix algebra $\cA$ is called minimal if for any other realisation $\pi':G\ra\cP(\cB)$ there exists an injective homomorphism $\phi:\cA\ra\cB$.
\end{definition}

Note that the notion of minimal realisation in Def.~\ref{def: minimal realisation} does not distinguish between inequivalent realisations in the same (Hilbert space) dimension as in Problem \ref{prob: inequivalent realisations}. Since $\prec$ is a partial order, Def.~\ref{def: minimal realisation} a priori allows for minimal realisations with respect to incomparable algebras under the ordering in Eq.~(\ref{eq: algebra ordering}). Nevertheless, this is not the case.

\begin{lemma}\label{lm: restriction to simple algebras}
    Let $G$ be a simple graph with minimal quantum realisations $\pi:G\ra\cP(\cA)$ and $\pi':V\ra\cP(\cB)$. Then $\cA\cong\cB$.
\end{lemma}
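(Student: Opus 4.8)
\textbf{Proof proposal for Lm.~\ref{lm: restriction to simple algebras}.}

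The plan is to show that minimality of a quantum realisation pins down the target algebra up to isomorphism, by exploiting the partial order $\preceq$ in Eq.~(\ref{eq: algebra ordering}) together with antisymmetry of $\preceq$ on finite-dimensional matrix algebras. First I would unpack Def.~\ref{def: minimal realisation}: since $\pi:G\ra\cP(\cA)$ is minimal and $\pi':G\ra\cP(\cB)$ is a realisation, there exists an injective homomorphism $\phi:\cA\ra\cB$, that is, $\cA\preceq\cB$. Symmetrically, since $\pi'$ is minimal and $\pi$ is a realisation, there exists an injective homomorphism $\phi':\cB\ra\cA$, that is, $\cB\preceq\cA$. It thus remains to argue that $\cA\preceq\cB$ and $\cB\preceq\cA$ together imply $\cA\cong\cB$ for finite-dimensional ($C^*$-)matrix algebras.

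The key step is therefore the antisymmetry of $\preceq$. I would argue this via dimension counting: any injective algebra homomorphism $\phi:\cA\ra\cB$ is in particular an injective linear map, so $\dim_\R\cA\leq\dim_\R\cB$ (as real vector spaces, or equivalently as complex ones for $\C$-algebras). Hence $\cA\preceq\cB$ and $\cB\preceq\cA$ force $\dim\cA=\dim\cB$, so both $\phi$ and $\phi'$ are linear isomorphisms; an injective homomorphism between algebras of equal finite dimension is automatically surjective, hence an algebra isomorphism. This yields $\cA\cong\cB$. (Alternatively, one can invoke the Artin--Wedderburn classification: writing $\cA\cong\bigoplus_k M_{a_k}(\C)$ and $\cB\cong\bigoplus_l M_{b_l}(\C)$, an injective $C^*$-homomorphism induces, via Bratteli-diagram combinatorics, inequalities on the multiplicity data in each direction, which combined with equality of total dimension $\sum_k a_k^2=\sum_l b_l^2$ and the structure of the embedding matrices forces the two tuples of block sizes to coincide. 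The dimension-counting argument is cleaner, so I would present that.)

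I do not expect a genuine obstacle here; the only point requiring a little care is that Def.~\ref{def: minimal realisation} only asserts the \emph{existence} of an injective homomorphism into the target of \emph{every other} realisation, so one must make sure to apply the definition in both directions — once using minimality of $\pi$ against the realisation $\pi'$, and once using minimality of $\pi'$ against $\pi$ — before invoking antisymmetry. A secondary subtlety is to confirm that an injective homomorphism of finite-dimensional algebras is indeed a linear injection (immediate, since algebra homomorphisms are linear) and that equal dimension plus injectivity gives bijectivity (rank–nullity). Both are routine. One could remark in passing that Lm.~\ref{lm: restriction to simple algebras} is exactly what licenses speaking of \emph{the} minimal realisation (up to isomorphism and up to the inequivalences of Problem~\ref{prob: inequivalent realisations}), and hence the well-definedness of the dimension $\dim(\One)=d$ appearing in Thm.~\ref{thm: n-colourability for non-maximal OAs}, Thm.~\ref{thm: partial RH conjecture} and Thm.~\ref{thm: RH conjecture}.
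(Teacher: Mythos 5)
Your proof is correct, but it takes a genuinely different route from the paper's. You apply Def.~\ref{def: minimal realisation} twice — once with $\pi$ minimal against $\pi'$, once with $\pi'$ minimal against $\pi$ — to obtain mutual injective homomorphisms, and then conclude by antisymmetry of $\preceq$ via dimension counting; this is clean, elementary, and the only point needing care (applying minimality in both directions before invoking antisymmetry) is exactly the one you flag. The paper instead argues structurally: it decomposes $\cA=\oplus_k\cA_k$ into simple summands, associates to each summand a subgraph $G_k\subset G$ so that $G$ is (contained in) the join $G_k\times G_{\overline{k}}$, and uses minimality of $\pi'$ to force $\pi'(G_k)$ to land in a matching summand $\cB_k$ of $\cB$, concluding with a unitary intertwining the two algebras. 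Your argument is shorter and arguably more robust — the paper's step "by minimality of $\pi'$ we must have $\pi'(G_k)=\cB_k$" is terse, and its final unitary $u\in\UH$ tacitly assumes both algebras are represented on a common Hilbert space, neither of which your antisymmetry argument needs. What the paper's route buys in exchange is the structural correspondence between the direct-sum decomposition of a minimal realisation and the join decomposition of $G$, which is precisely what the surrounding discussion uses to reduce the search for contextuality scenarios to simple algebras (Problem~\ref{prob: graph-reduced KS problem}); your proof establishes the isomorphism $\cA\cong\cB$ without exhibiting that correspondence.
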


\begin{proof}
    For every summand $\cA_k=M_{n_k}(\C)$ in $\cA=\oplus_k\cA_k$, there exists a subgraph $G_k\subset G$ such that $\pi_k=\pi|_{G_k}:G_k\ra\cP(\cH_k)$ with $\dim(\cH_k)=n_k$ is a realisation of $G_k$. Moreover, we have $G=G_k\times G_{\overline{k}}$ if $\pi$ is faithful and constraint-preserving (and $G\subset G_k\times G_{\overline{k}}$ otherwise), and where $\times$ denotes the join of two graphs (defined from the union of the respective graphs, where every vertex in one factor is additionally adjacent to every vertex in the other). By minimality of $\pi'$, we must then have $\pi'(G_k)=\cB_k$ for some summand $\cB_k$ in $\cB=\oplus_k\cB_k$. Consequently, there exists a unitary $u\in\UH$ such that $u\cA u^*=\oplus_k u\cA_ku^*=\oplus_k \cB_k=\cB$.
\end{proof}

By decomposing a graph into parts $G_k$ according to the decomposition of a minimal realisation, Lm.~\ref{lm: restriction to simple algebras} therefore allows us to reduce the search for quantum contextuality scenarios to simple algebras. In particular, we arrive at the following graph-theoretic refinement of the ``Kochen-Specker problem" (defined in Eq.~($*$) in Ref.~\cite{Frembs2024a}), which asks to characterise all observable subalgebras $\cO\subset\LHsa$ that are KS contextual.

\begin{problem}\label{prob: graph-reduced KS problem}
    Characterise all simple graphs $G$ with minimal quantum realisation $\pi:G\ra\PH$ in a given dimension $3<\dim(\cH)$.
\end{problem}

Finally, minimal realisation in Prob.~\ref{prob: graph-reduced KS problem} will generally not be composed of rank-$1$ projections. Consequently, quantum realisations restricted to rank-$1$ projections will generally have larger dimension than minimal quantum realisations without this restriction. Moreover, even if a quantum realisation exists it is not clear whether $\dim^*(\One)=\dim^*(\mathbbm{1})$.
\section{Critical comparison with results in Ref.~\cite{Cabello2012b,RamanathanHorodecki2014,CabelloKleinmannBudroni2015}}\label{app: correction}

In this section, we show that Thm.~1 in Ref.~\cite{Cabello2012b} and Thm.~4 in Ref.~\cite{CabelloKleinmannBudroni2015} which also follow from Thm.~1 in Ref.~\cite{RamanathanHorodecki2014} are false without restricting to faithful graph completions. In short, $\chi(\oG^*)>d$ does not imply $\chi(G)>d$ for general realisations $\pi:G\ra\PHone$.

We first remark that Ref.~\cite{RamanathanHorodecki2014,CabelloKleinmannBudroni2015} consider the completion of orthogonality graphs and their (not necessarily unital) realisations implicitly.\footnote{We point out that Thm.~1 (specifically, the measure of contextuality defined in Eq.~(1)) in Ref.~\cite{RamanathanHorodecki2014} already (implicitly) considers the completion of an orthogonality graph $G$. In particular, it compares quantum and classical correlations with respect to $\oG$ - yet, note that, by Prop.~\ref{prop: no marginalisation}, classical correlations on $\oG$ are generally unrelated with classical correlations on $G$ (see also App.~\ref{app: correction}).\label{fn: implicitly complete}} In particular, the distance defined in Eq.~(1) in Ref.~\cite{RamanathanHorodecki2014} is with respect to classical states on the completion of the respective orthogonality graph. However, classical correlations on $\oG$ generally do not restrict to classical states on $G$ in a straightforward manner (see App.~\ref{app: orthogonality graphs vs OAs}), indicating that Thm.~1 in Ref.~\cite{RamanathanHorodecki2014} does not apply without considering completions.

An explicit counterexample is shown in Fig.~\ref{fig: Yu-Oh graph}, and is based on the SI-C set in Ref.~\cite{YuOh2012}, which is the completion of the following $13$ vectors (see Fig.~\ref{fig: Yu-Oh graph} (a))
\begin{align}\label{eq: 13 vectors}
    z_1&=(1,0,0) & z_2&=(0,1,0) & z_3&=(0,0,1) \nonumber\\
    y^+_1&=(0,1,1) & y^+_2&=(1,0,1) & y^+_3&=(1,1,0) \\
    y^-_1&=(0,1,\bar{1}) & y^-_2&=(1,0,\bar{1}) & y^-_3&=(1,\bar{1},0) \nonumber\\
    h_0&=(1,1,1) & h_1&=(\bar{1},1,1) & h_2&=(1,\bar{1},1) & h_3&=(1,1,\bar{1})\; ,\nonumber
\end{align}
where $\bar{1}=-1$. These vectors give rise to the following $16$ contexts:
\begin{align*}
    C(z_1,z_2,z_3) & &C(z_1,y^+_1,y^-_1) & &C(z_2,y^+_2,y^-_2) & &C(z_3,y^+_3,y^-_3) \\
    C(h_0,y^-_1,x_{01}) & &C(h_1,y^-_1,x_{11}) & &C(h_2,y^+_1,x_{21}) & &C(h_3,y^+_1,x_{31}) \\
    C(h_0,y^-_2,x_{02}) & &C(h_1,y^+_2,x_{12}) & &C(h_2,y^-_2,x_{22}) & &C(h_3,y^+_2,x_{32}) \\
    C(h_0,y^-_3,x_{03}) & &C(h_1,y^+_3,x_{13}) & &C(h_2,y^+_3,x_{23}) & &C(h_3,y^-_3,x_{33})
\end{align*}
where $C(v_1,v_2,v_3)$ denotes the context spanned by the projections onto the rays defined by the vectors $v_1$, $v_2$ and $v_3$. Note that $v_3$ is thus uniquely defined by $v_1,v_2$. It follows that the same SI-C set arises from the completion of the $15$ vectors in Fig.~\ref{fig: Yu-Oh graph} (b), where $h_0$ is replaced by the vectors $x_{01}=(-2,1,1)$, $x_{02}=(1,-2,1)$ and $x_{03}=(1,1,-2)$. Yet, for the respective orthogonality graphs (see Fig.~\ref{fig: Yu-Oh graph}) we find $\chi(G_\mathrm{YO})=4$ while $\chi(G'_\mathrm{YO})=3$. Consequently, $\chi(G)>d$ is not a necessary criterion for the graph completion $\overline{\pi(V)}=\langle\pi(V),\one\rangle$ of $G$ with respect to a generic realisation $\pi$ to be a SI-C set.

\begin{figure}[htb!]
    \centering
    \begin{subfigure}{0.49\textwidth}
        \scalebox{0.9}{\begin{tikzpicture}
    \node (0) at (0, 3.75) {\color{red}$\bullet$};

    \node(0L) [above=-0.1cm of 0]   {$z_1$};
    \node (1) at (-3, -1) {\color{blue}$\bullet$};
    \node(1L) [left=-0.1cm of 1]   {$z_2$};
    \node (2) at (3, -1) {\color{green}$\bullet$};
    \node(2L) [right=-0.1cm of 2]   {$z_3$};
    \node (3) at (0, -1) {\color{red}$\bullet$};
    \node(3L) [below=-0.1cm of 3]   {$h_2$};
    \node (4) at (0, 0.75) {$\bullet$};
    \node(4L) [below=-0.1cm of 4]   {$h_0$};
    \node (5) at (1.475, 1.425) {\color{blue}$\bullet$};
    \node(5L) [right=-0.1cm of 5]   {$h_3$};
    \node (6) at (-1.475, 1.4) {\color{green}$\bullet$};
    \node(6L) [left=-0.1cm of 6]   {$h_1$};
    \node (7) at (-1.425, -1) {\color{green}$\bullet$};
    \node(7L) [below=-0.2cm of 7]   {$y^-_2$};
    \node (8) at (-2.225, 0.225) {\color{red}$\bullet$};
    \node(8L) [left=-0.1cm of 8]   {$y^+_2$};
    \node (9) at (2.25, 0.225) {\color{red}$\bullet$};
    \node(9L) [right=-0.1cm of 9]   {$y^-_3$};
    \node (10) at (1.55, -1) {\color{blue}$\bullet$};
    \node(10L) [below=-0.2cm of 10]   {$y^+_3$};
    \node (11) at (0.8, 2.475) {\color{green}$\bullet$};
    \node(11L) [right=-0.1cm of 11]   {$y^+_1$};
    \node (12) at (-0.775, 2.475) {\color{blue}$\bullet$};
    \node(12L) [left=-0.1cm of 12]   {$y^-_1$};

    \draw [bend left=60] (1.center) to (0.center);
    \draw [bend left=60] (0.center) to (2.center);
    \draw [bend right=60] (1.center) to (2.center);
    \draw (8.center) to (7.center);
    \draw (12.center) to (11.center);
    \draw (9.center) to (10.center);
    \draw (4.center) to (9.center);
    \draw (12.center) to (4.center);
    \draw (7.center) to (4.center);
    \draw [bend left=330] (5.center) to (8.center);
    \draw [bend left, looseness=0.75] (11.center) to (3.center);
    \draw [bend right, looseness=0.75] (6.center) to (10.center);
    \draw (0.center) to (1.center);
    \draw (0.center) to (2.center);
    \draw (1.center) to (2.center);
\end{tikzpicture}}
        \subcaption{}
    \end{subfigure}
    \begin{subfigure}{0.49\textwidth}
        \scalebox{0.9}{\begin{tikzpicture}
    \node (0) at (0, 3.75) {\color{red}$\bullet$};

    \node(0L) [above=-0.1cm of 0]   {$z_1$};
    \node (1) at (-3, -1) {\color{blue}$\bullet$};
    \node(1L) [left=-0.1cm of 1]   {$z_2$};
    \node (2) at (3, -1) {\color{green}$\bullet$};
    \node(2L) [right=-0.1cm of 2]   {$z_3$};
    \node (3) at (0, -1) {\color{red}$\bullet$};
    \node(3L) [below=-0.1cm of 3]   {$h_2$};
    
    \node (41) at (-0.25, 1) {\color{green}$\bullet$};
    \node(41L) [left=-0.2cm of 41,yshift=-0.2cm]   {$x_{01}$};
    \node (42) at (0.25, 1) {\color{blue}$\bullet$};
    \node(42L) [above=-0.2cm of 42,xshift=0.2cm]   {$x_{03}$};
    \node (43) at (0, 0.5) {\color{red}$\bullet$};
    \node(43L) [below=-0.1cm of 43,xshift=0.2cm]   {$x_{02}$};
    
    \node (5) at (1.475, 1.425) {\color{blue}$\bullet$};
    \node(5L) [right=-0.1cm of 5]   {$h_3$};
    \node (6) at (-1.475, 1.4) {\color{green}$\bullet$};
    \node(6L) [left=-0.1cm of 6]   {$h_1$};
    \node (7) at (-1.425, -1) {\color{green}$\bullet$};
    \node(7L) [below=-0.2cm of 7]   {$y^-_2$};
    \node (8) at (-2.225, 0.225) {\color{red}$\bullet$};
    \node(8L) [left=-0.1cm of 8]   {$y^+_2$};
    \node (9) at (2.25, 0.225) {\color{red}$\bullet$};
    \node(9L) [right=-0.1cm of 9]   {$y^-_3$};
    \node (10) at (1.55, -1) {\color{blue}$\bullet$};
    \node(10L) [below=-0.2cm of 10]   {$y^+_3$};
    \node (11) at (0.8, 2.475) {\color{green}$\bullet$};
    \node(11L) [right=-0.1cm of 11]   {$y^+_1$};
    \node (12) at (-0.775, 2.475) {\color{blue}$\bullet$};
    \node(12L) [left=-0.1cm of 12]   {$y^-_1$};

    \draw [bend left=60] (1.center) to (0.center);
    \draw [bend left=60] (0.center) to (2.center);
    \draw [bend right=60] (1.center) to (2.center);
    \draw (8.center) to (7.center);
    \draw (12.center) to (11.center);
    \draw (9.center) to (10.center);

    \draw (12.center) to (41.center);
    \draw (9.center) to (42.center);
    \draw (7.center) to (43.center);
    
    \draw [bend left=330] (5.center) to (8.center);
    \draw [bend left, looseness=0.75] (11.center) to (3.center);
    \draw [bend right, looseness=0.75] (6.center) to (10.center);
    \draw (0.center) to (1.center);
    \draw (0.center) to (2.center);
    \draw (1.center) to (2.center);
\end{tikzpicture}}
        \subcaption{}
    \end{subfigure}
    \caption{Orthogonality graphs (a) $G_\mathrm{YO}$ of 13 three-dimensional vectors in Eq.~(\ref{eq: 13 vectors}) (taken from Ref.~\cite{YuOh2012}), and (b) $G'_\mathrm{YO}$ of the 15 vectors obtained by replacing $h_0$ with $x_{10}$, $x_{20}$ and $x_{30}$. Both graphs share the same completion $\overline{\pi(V_\mathrm{YO})}=\langle\pi(V_\mathrm{YO}),\one\rangle=\langle\pi(V'_\mathrm{YO}),\one\rangle=\overline{\pi(V'_\mathrm{YO})}$, yet $\chi(G_\mathrm{YO})=4$ while $\chi(G'_\mathrm{YO})=3$.}
    \label{fig: Yu-Oh graph}
\end{figure}
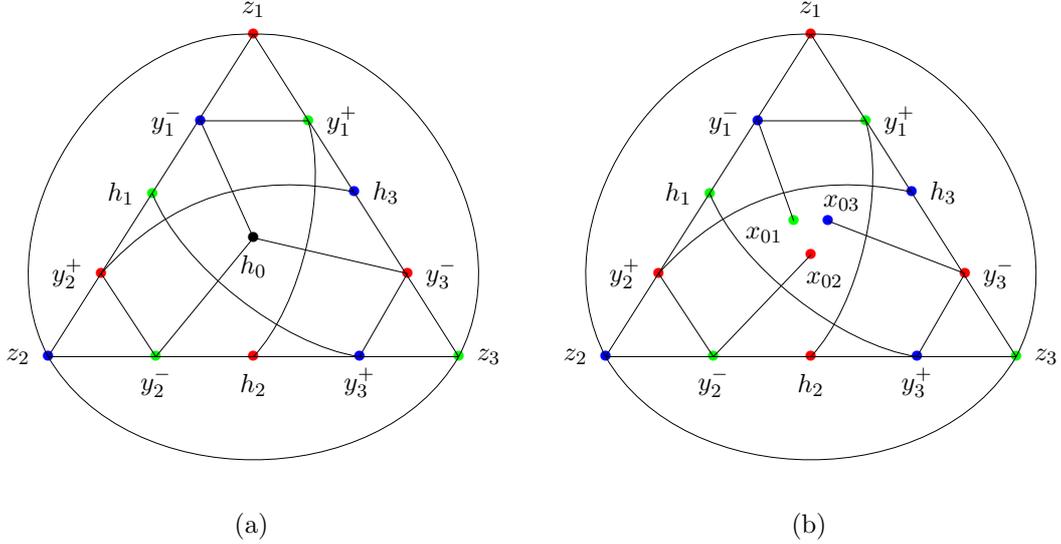

\end{document}